\definecolor{nggreen}{rgb}{0,0.5,0}
\definecolor{plbblue}{rgb}{0,0,0.7}
\definecolor{henrik}{rgb}{0.6,0,0}
\definecolor{darkgreen}{rgb}{0,.6,0}
\definecolor{darkyellow}{rgb}{.8,.6,.0}
\newcommand{\PreserveBackslash}[1]{\let\temp=\\#1\let\\=\temp}
\newcolumntype{C}[1]{>{\PreserveBackslash\centering}p{#1}}
\newcolumntype{R}[1]{>{\PreserveBackslash\raggedleft}p{#1}}
\newcolumntype{L}[1]{>{\PreserveBackslash\raggedright}p{#1}}
\newtheorem{theorem}{Theorem}[section]
\newtheorem{lemma}[theorem]{Lemma}
\newtheorem{definition}[theorem]{Definition}
\newtheorem{example}[theorem]{Example}
\newcommand{\mc}[1]{\mathcal{#1}}
\newcommand{\ms}[1]{\mathsf{#1}}
\newcommand{\id}{\mathbb{1}}
\newcommand{\ot}{\otimes}
\newcommand{\AC}[2]{\xhookrightarrow[{#1}]{#2}}
\newcommand{\cmark}{{\color{darkgreen} \ding{51}}}%
\newcommand{\nmark}{{\color{darkyellow} (\ding{51})}}%
\newcommand{\xmark}{{\color{red} \ding{55}}}%
\newcommand{\cDS}{\mc D (\ms S)	}
\newcommand{\mm}{\vec \pi}
\newcommand{\schmidt}{\bm{\lambda}}
\renewcommand{\vec}[1]{\bm{#1}}
\newcommand{\eig}{\vec\lambda}
\newcommand{\dist}{d}
\newcommand{\e}{\mathrm{e}}
\newcommand{\NN}{\mathbb{N}}
\newcommand{\locc}{\mathsf{LOCC}}
\newcommand{\losr}{\mathsf{LOSR}}
\renewcommand{\proj}[1]{|#1\rangle\langle#1|}
\definecolor{main}{HTML}{5989cf}    
\definecolor{sub}{HTML}{cde4ff}     
\newtcolorbox{boxH}{
	sharpish corners, 
	colback = sub, 
	colframe = main, 
	boxrule = 0pt, 
	leftrule = 2pt, 
	enhanced,
	fuzzy shadow = {0pt}{-2pt}{-0.5pt}{0.5pt}{black!35} 
}
\begin{document}
	
	\title{Catalysis in Quantum Information Theory}
	\author{Patryk Lipka-Bartosik}
	\affiliation{Department of Applied Physics, University of Geneva, 1211 Geneva, Switzerland}
	\author{Henrik Wilming}
	\affiliation{Leibniz Universit\"at Hannover, Appelstraße 2, 30167 Hannover, Germany}
	\author{Nelly H. Y. Ng}
	\affiliation{School of Physical and Mathematical Sciences, Nanyang Technological University, 639673, Singapore}
	\date{\today}
	\keywords{}
	
	\begin{abstract}
		Catalysts open up new reaction pathways that can speed up chemical reactions while not consuming the catalyst. A similar phenomenon has been discovered in quantum information science, where physical transformations become possible by utilizing a quantum degree of freedom that returns to its initial state at the end of the process. In this review, a comprehensive overview of the concept of catalysis in quantum information science is presented and its applications in various physical contexts are discussed.
	\end{abstract}
	
	\maketitle
	\tableofcontents
	\section{Introduction}
	
	Puzzles have been around since the dawn of human history, and have guided towards countless discoveries.  A good puzzle is easy to formulate but challenging to solve, while its solution can lead to deep insights. This review article is devoted to quantum catalysis, a puzzle that has propagated to various areas of quantum information theory. As a warm up, we will first consider another puzzle, known as the \emph{Towers of Hanoi}.  This brain-teaser will help us elucidate some of the fundamental concepts of catalysis, without having to deal with the complexities of quantum physics.
	
	The Towers of Hanoi is a mathematical puzzle that involves $k$ rods and $n$ disks of different sizes. At the beginning, the disks are arranged on a single rod in a decreasing order of size, with the smallest disk on the top (see Fig.~\ref{fig:hanoi}). The goal of the puzzle is to find the minimal number of moves needed to transfer the entire stack of disks from one rod to another, given that disks can only be placed on top of larger disks. 
	
	When the number of disks $n$ is greater than $1$, the puzzle requires at minimum $k = 3$ rods to be feasible, in which case the minimal number of moves needed to transfer the disks from one rod to another is $2^n-1$ \cite{petkovi_2009famous}. Removing one of the three rods makes the puzzle impossible to solve, since the set of allowed moves becomes too restrictive. On the other hand, adding a fourth rod allows to solve the puzzle in only six moves, which is less than the minimal number of moves (seven) required in the case of $n=3$ \cite{bousch2014quatrieme}. Interestingly, the minimal number of moves for more than four rods remains an open problem.
	
	We can refer to the arrangement of disks on each rod as the \emph{state} of the rod. When a new rod is introduced, the problem becomes more complex due to the increased number of potential states. Crucially, to solve the puzzle, the newly added rod must ultimately return to its initial state (i.e., remain empty). In more technical terms, adding extra degrees of freedom expands the configuration space of the problem. The key point to remember is that even though the additional degrees of freedom return to their initial states in the end, expanding the configuration space allows for solving the puzzle in a manner that was previously unattainable. This phenomenon can be seen as an instance of \emph{catalysis}. Its analogue in quantum information science is the main focus of this review.
	
	The Towers of Hanoi puzzle may serve as a simplified representation of a physical scenario. Yet, it contains several essential ingredients that we will encounter in this review. Firstly, the possible states of each rod in the puzzle are analogous to the potential states of a physical system (e.g. an atom). Secondly, the puzzle's rules correspond to limitations on how the physical system can be manipulated (e.g. due to energy conservation). Thirdly, the puzzle's objective can be viewed as a state transition problem, which involves determining whether a physical system in one state can be transformed into another (e.g. transforming a hot atom into a colder one). The rods in the puzzle represent various physical systems utilized to accomplish the task. Lastly, despite its seemingly straightforward description, finding the minimum number of moves required for $k>4$ remains an unsolved problem. This demonstrates the complexity and richness of the puzzle.
	
	The laws of physics place fundamental constraints on what is possible, much like the rules of the Towers of Hanoi puzzle. For example, conservation laws tell us that certain measurable properties of isolated physical systems remain constant as the system evolves over time. These laws are essential to our understanding of the physical world, in that they describe which processes can or cannot occur in nature. Since we cannot globally create or destroy conserved quantities, such as mass-energy, we can only change them locally by introducing additional degrees of freedom that store or inject these physical quantities. In the same way, the time-translation symmetry of physical processes means that we cannot create a state that violates time-translation invariance without accessing a system that already breaks it. Clocks are an example of such systems; They use an internal asymmetric flow of information to distinguish a preferred direction of time. 

	\begin{figure}[t]
		\includegraphics{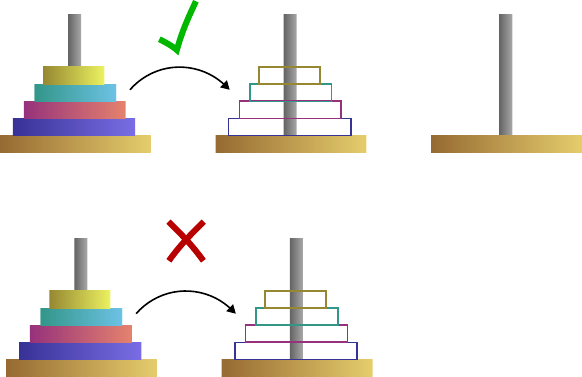}
		\caption{\textbf{Towers of Hanoi.} 
			The puzzle involves $k$ rods that can hold disks of different sizes. The objective is to move a stack of $n>1$ disks from the first rod to the second rod, one disk at a time, without placing a larger disk on top of a smaller disk. The puzzle is solvable when $k=3$, but it cannot be solved when $k=2$.}
		\label{fig:hanoi}
	\end{figure}
	
	The situation becomes more intricate when quantum theory is considered as the fundamental description of nature. When all the involved degrees of freedom are taken into account, the overall dynamics is ultimately unitary. This, combined with conversation laws, imposes non-trivial constraints that are not immediately evident from conservation laws alone. For instance, consider a probabilistic mixture of spin-up and spin-down of a spin-$1/2$ particle in a magnetic field. Although this mixture has the same average energy as a superposition of the two states, we cannot transform the mixture into the superposition using unitary transformations. This is true regardless of the fact that both states have the same expected energy, and whether we have access to a clock or not (see also Ref. \cite{Bartlett2006a}).
	
	To give another example, the relativistic principle of locality states that we cannot instantaneously affect space-like separated systems. Entanglement can be described as the property between such systems that cannot be created using only local actions and classical communication. To overcome the limitations imposed by the relativistic locality, additional resources (e.g. pre-shared entanglement) are required. Yet, even if a quantum state $\psi$ contains more entanglement than $\phi$, it does not necessarily mean that $\psi$ can be converted into $\phi$ using local operations and classical communication. In this case, the interplay between operational restrictions gives rise to a rich mathematical structure which requires more refined ways of quantifying entanglement, see Ref. \cite{Horodecki2009}.

	Using auxiliary degrees of freedom to lift restrictions imposed by physical laws often comes at a cost: Such degrees of freedom typically lose their ability to lift restrictions, as  seen in the previous examples. For instance, a clock used to break time-reversal symmetry ideally produces a series of evenly distributed ticks. However, this process has an intrinsic back-action on the clock, driving it closer to equilibrium with each tick and gradually decreasing its time-keeping potential. To counteract this mechanism, clocks must continuously consume non-equilibrium resources, see e.g. \cite{cao_free-energy_2015,Barato2016,Erker2017,Milburn2020}.
Similarly, the use of pre-shared entanglement in the form of an ancillary quantum state to entangle two distant physical systems results in the loss of entanglement of the ancillary state itself \cite{lo_concentrating_2001}.
	
	It is surprising that some restrictions imposed by unitarity of quantum mechanics can often be lifted using auxiliary degrees of freedom that \emph{do not} degrade, i.e. their quantum state remains identical before and after the process. Nevertheless, their very presence allows for physical transformations that would otherwise be impossible to accomplish. 
	
	This phenomenon has been dubbed \emph{quantum catalysis}, and demonstrates that the very presence of quantum states opens up new and interesting possibilities. The name ``quantum catalysis'' is derived from the analogy to chemistry, where catalysts allow for chemical reactions at higher rates by opening up new dynamical pathways with lower activation energies. In a similar manner, quantum catalysts open up new dynamical pathways in the Hilbert space, connecting quantum states that would otherwise be dynamically disconnected due to physical constraints. Understanding when such alternative connections can be formed not only helps us better understand the fundamental limitations of quantum physics, but also provides protocols exploiting quantum resources more efficiently.
	
	To this day, several reviews that discuss quantum catalysis in brevity have been published.  
	Ref. \cite{Horodecki2009} mentions the usage of catalysis in the context of entanglement transformations. Refs. \cite{goold_role_2016,Vinjanampathy_2016} and \cite{lostaglio_introductory_2019} discuss some aspects of catalysis from a thermodynamic point of view. 
	Refs. \cite{gour_resource_2015} and \cite{chitambar_quantum_2019} briefly discuss the subject for generic resource theories (see Section ~\ref{sec:rt}). 
	Importantly, what all of these works have in common is that they discuss quantum catalysis only as a supporting concept, i.e. an extension of existing protocols like resource conversion or work extraction. A similar treatment of quantum catalysis is manifested in numerous prominent works that use catalytic effects as a central proof technique, see e..g \cite{groisman_quantum_2005, bose_purification_1999,berta_quantum_2011,bartlett_reference_2007} and \cite{bennett_quantum_2014}. This approach, unfortunately, does not emphasize the role of catalysis as a stand-alone concept.
	
	This review article gathers our current understanding of quantum catalysis and its applications across various fronts of quantum physics. During the preparation of this review, another review on catalysis in the context of quantum resource theories appeared as a preprint \cite{Datta2022}. We invite the reader to consider it as a complementary reading.

	{\bfseries Structure of the review.} In the remaining part of this section we establish basic notation used throughout the review. Section~\ref{sec:concept} introduces the basic concept of catalysis as understood in this review in a general manner. This is followed by a discussion of the concept of resource theories and a summary of basic mathematical tools used throughout the review.
	In Section~\ref{subsec:types}, we provide a detailed introduction to the various types of catalysis that can arise by relaxing constraints discussed in Section~\ref{sec:concept} in different ways. We also offer an example to illustrate the different types in detail.
	Subsequently, Section~\ref{sec:constructions} explores the known ways to mathematically construct catalyst states. The core of the review is Section~\ref{sec:applications}, which collects and discusses the applications of catalysis in various physical settings. Section~\ref{sec:further-topics} collects additional applications and discussions that do not naturally fit into the previous sections. However, this does not imply that we consider them less important or less interesting.
	Open problems and avenues for future work are mentioned throughout the review. Due to their diverse nature it would be difficult to collect them in a conclusions section and we have intentionally refrained from doing so.

	\subsection{Quantum states and channels}\label{sec:preliminaries}
	This short section sets up the notation and prerequisites used throughout the review.
	We start by introducing some basic terminology used in quantum mechanics. Physical systems (or systems for short) will be denoted by Latin letters, e.g. $\ms{A, B, S}$. Sometimes describing the problem may require using multiple such systems, in which case the systems will be addressed by additional subscripts, such as $\ms S_1$ or $\ms S_2$. 
	To every physical system we associate a complex Hilbert space denoted by $\mc H_{\ms S}$ for a system $\ms S$, with $|\ms S| := \mathrm{dim}(\mathcal{H}_{\ms S})$ being its dimension. 
	For most part of the review it will be sufficient to consider finite-dimensional Hilbert spaces. 
	The set of (bounded) linear operators acting on $\mc H$ will be denoted with $\mc L(\mc H)$. 
	An operator $A \in \mc L(\mc H)$ is called positive semi-definite if it is self-adjoint and satisfies $ \bra{x} A \ket{x} \geq 0$ for all $\ket{x} \in \mc H$. 
	We will denote by ``$\geq$'' the L\"{o}wner partial order, i.e., for two linear operators $X$ and $Y$ the relation $X \geq Y$ means that $X - Y$ is positive semi-definite. 
	
	In quantum mechanics the possible states of a system $\ms S$ are described using density operators, i.e., positive semi-definite operators acting on $\mc H_{\ms S}$ with unit trace. We collect such operators in a set
	\begin{align}
		\mc D(\ms S) := \{\rho \in \mc L(\mc H_{\ms S})\ | \ \rho\geq 0,\  \tr[\rho]=1\},
	\end{align}
	To highlight that a density operator $\rho$ corresponds to a specific system $\ms S$ we will often write $\rho_{\ms S}$. The evolution (or dynamics) of quantum systems is fundamentally unitary if all involved degrees of freedom are incorporated. This means that a quantum system $\ms S$ prepared in a state $\rho_{\ms S}$ evolves as $\rho_{\ms S} \rightarrow U \rho_{\ms S} U^{\dagger}$,
	where $U$ satisfies $UU^\dagger = U^\dagger U = \id$. The effective unitary $U$ represents discrete-time dynamics obtained by integrating the Schrödinger equation up to a fixed time ($\hbar$ is set to 1).
	
	Often the system of interest $\ms S$ interacts with other systems, e.g. a thermal environment or a measurement apparatus. In such cases, 
	the effective dynamics on the system of interest may no longer be unitary.
	To model this, one introduces an environment which encompasses all of the auxiliary degrees of freedom. 
	Crucially, due to their role in the global evolution, in this review we differentiate two types of environments: an ordinary environment $\ms{E}$ and a catalytic environment (or catalyst for short) $\ms{C}$. 
	The environment $\ms{E}$ corresponds to all of the degrees of freedom that cannot be accessed or are seen as irrelevant. 
	This is usually the case, for example, for thermal environments. 
	On the other hand, the environment $\ms{C}$ captures all of the degrees of freedom that cannot, or should not, experience any irreversible backaction from the system. 
	In the next sections, we demonstrate that this distinction is well-motivated both from a physical and mathematical perspective.
	It is also often more convenient to consider the effective dynamics on $\ms S\ms C$ alone, instead of unitary dynamics on $\ms S\ms C\ms E$. 
	If the initial state of $\ms E$ is uncorrelated to $\ms{SC}$, then the resulting dynamics take the form of a \emph{quantum channel} $\mathcal{E}:D(\ms{SC})\rightarrow D(\ms{SC})$, which is a completely positive trace-preserving (CPTP) linear map.
	In most cases, when the space on which the channel acts is clear from the context, we avoid writing explicitly the domain and image of the channel. 
	
	Conversely, any quantum channel acting on $\ms{SC}$ may be realized by a suitable choice of environment state $\rho_{\ms E}$ and unitary $U$ \cite{Stinespring1955}.
	Given a quantum channel $ \mc E$, we refer to $(\rho_{\ms E},U)$ as a \emph{dilation} of $\mc E$. The channel is obtained by averaging over all the degrees of freedom of the environment $\ms{E}$. Mathematically, this is obtained by applying the partial trace $\tr_{\ms E}(\cdot)$ with respect to $\ms E$, which is a quantum channel $\tr_{\ms E}: \mc{D}(\ms{SCE}) \rightarrow \mc{D}(\ms{SC})$. The dilation $(\rho_{\ms E},U)$ is said to implement a quantum channel $\mc E$ when
	\begin{align}
		\mc E[\rho_{\ms{SC}}] = \tr_{\ms E}[U (\rho_{\ms{SC}}\otimes\rho_{\ms E}) U^\dagger],
	\end{align}
	holds for every $\rho_{\ms{SC}} \in \mc{D}(\ms{SC})$.  
	We also call $\mc E$ the quantum channel \emph{induced by} $(\rho_{\ms E},U)$.	
	Importantly, a quantum channel may be physically realized in different ways, having multiple dilations inducing the same channel locally. 
	However, if we restrict to pure states $\rho_{\ms E}$ there always exists a minimal Stinespring dilation in which $\ms E$ has minimal dimension. 
	Any other Stinespring dilation with pure state on a system $\ms E'$ is related to this one via an isometry from $\ms E$ to $\ms E'$.  
	
	\section{The concept of catalysis}
	\label{sec:concept}
	
	\noindent 
	In this section we formalize the concept of catalysis in a general manner. 
	We then describe the paradigm of resource theories, and argue that it provides a convenient set of tools allowing for a systematic study of catalytic effects. 
	Finally, we summarize the most relevant mathematical tools and techniques that will be used throughout this review. 
	We emphasize that this section focuses on generic features of catalysis which are valid irrespectively of the particular physical setting.  
	
	We begin by describing a simple motivating example from the early times of quantum information science, to illuminate some essential features of catalysis. 
	This example was first proposed theoretically by \cite{Phoenix_1993} and \cite{Cirac_1994} and the first experimental demonstration was reported in \cite{Hagley_1997}.  
	Consider an optical cavity with a field mode $\ms F$ in resonance with the transition frequency of \emph{two} two-level systems, such as two energy eigenstates $\ket{\downarrow},\ket{\uparrow}$ of two identical atoms $\ms A$ and $\ms B$.  
	We assume that we are restricted to performing the following actions: initializing the cavity in the vacuum state $\ket{0}_{\ms F}$, initializing the atoms in one of their two eigenstates, and turning on an energy-preserving interaction Hamiltonian between atom ${\ms A} ({\ms B})$ and cavity ${\ms F}$ for a chosen amount of time $t$.

	We will now observe how one can make use of the cavity to prepare a maximally entangled Bell state between atoms $\ms A$ and $\ms B$, while returning the cavity to its initial vacuum state. 
	First, prepare atom $\ms A$ in the excited state $\ket{\uparrow}_{\ms A}$ and let it interact with the cavity for time $t_{\ms A}$. 
	If the interaction time is short enough, we can assume that no dissipation occurs and the interaction is described by the Jaynes-Cummings Hamiltonian, which has the property that in the interaction picture $\ket{0}_{\ms F}\ket{\ms \downarrow}_{\ms A}$ is invariant while the states $\ket{1}_{\ms F}\ket{\downarrow}_{\ms A}$ and $\ket{0}_{\ms F}\ket{\uparrow}_{\ms A}$ experience Rabi-oscillations. 
	Here, $\ket{1}_{\ms F}$ is the state of the cavity with one photon.  
	If we choose $t_{\ms A}$ to correspond to quarter of a Rabi-oscillation, the state of $\ms F\ms A$ after the interaction is (in the interaction picture)
	\begin{align}
		U_{\ms {FA}}(t_{\ms A})\ket{0}_{\ms F}\ket{\ms \uparrow}_{\ms A} = \frac{1}{\sqrt 2}\left(\ket{0}_{\ms F}\ket{\uparrow}_{\ms A} - \ket{1}_{\ms F}\ket{\downarrow}_{\ms A}\right).
	\end{align}
	Now prepare the second atom $\ms B$ in the ground state $\ket{\downarrow}_{\ms B}$ and let it interact with the cavity for a time $t_{\ms B}$ such that $\ket{1}_{\ms F}\ket{\downarrow}_{\ms B}$ experiences half a Rabi-oscillation so that the states $\ket{1}_{\ms F}\ket{\downarrow}_{\ms B} \leftrightarrow \ket{0}_{\ms F}\ket{\uparrow}_{\ms B}$ are interchanged (up to a phase). 
	Since $\ket{0}_{\ms F}\ket{\downarrow}_{\ms B}$ is invariant under the dynamics, the final state of $\ms F \ms A \ms B$ is (up to a global phase)	
	\begin{align}
		\ket{0}_{\ms F}\otimes \frac{1}{\sqrt 2}\left(\ket{\uparrow}_{\ms A}\ket{\downarrow}_{\ms B} - \ket{\downarrow}_{\ms A}\ket{\uparrow}_{\ms B}\right).
	\end{align}
	The systems	$\ms A\ms B$ end up in a Bell state while $\ms F$ returns to its initial state, and can be used to entangle further pairs of atoms. 
	In the words of \cite{Hagley_1997}: "The field, which starts and ends up in vacuum and remains at the end of the process decorrelated from the atoms, acts as a `catalyst' for the atomic entanglement."
	Similar techniques can be used to implement a two-qubit \texttt{CNOT} gate between trapped ions, where the role of the cavity is now played by the center-of-mass mode of the trapped ions. This is the basis for trapped-ion quantum computers as envisioned by \cite{Cirac_1995}. At the same time it is clear that without the cavity, we cannot achieve this.  
	
	In comparison with the Tower of Hanoi puzzle from the introduction, the limited actions correspond to the allowed moves in the game and the empty cavity corresponds to an empty rod.
	
	The above simple example demonstrates that catalytic effects can be observed in generic quantum systems. While it is only a specific instance of catalysis, it contains some of its essential features. 
	For one, catalysis always requires at least two systems: a system of interest $\ms S$ and a catalyst $\ms C$. 
	In our example above, the field corresponds to catalyst $\ms C$ and the atoms to the system $\ms S$.
	Given a quantum system $\ms S$, we introduce a family of quantum channels $\mc O$ that describes all physical operations that can be implemented under the constraints of a particular physical setting.
	In the above example, the set $\mc{O}$ corresponds to all unitary evolutions that can be generated by turning-on the interaction between the cavity and one of the atoms for some time.
	More generally, any physical operation in $\mc O$ will be described by some quantum channel $\mc F \in \mc{O}$. 
	Limitations on $\mc O$ can originate in many ways, e.g.
	\begin{itemize}
		\item As a consequence of conservation laws,
		\item Due to limited resources, (e.g. finite power supply),
		\item As a result of finite control accuracy or complexity, (e.g. limited time of evolution),
		\item Due to the causality constraints in special relativity.
	\end{itemize}
	The limitations may be of fundamental nature or arise from practical limitations. While most works on catalysis in quantum information theory focus on fundamental limitations, throughout the review we don't distinguish between the two.
	Now imagine that our system $\ms S$ is prepared in a state $\rho_{\ms S}$, and we would like to bring it to a target state $\sigma_{\ms S}$:
	\begin{align}\label{eq:statetransition}
		\rho_{\ms S}\longrightarrow \sigma_{\ms S}. 
	\end{align}
	Then it may happen that, with the operations we can implement as described by the set $\mc O$, we are unable to perform the transformation. This means that there is no quantum channel $\mc F\in \mc O$ such that $\mc F[\rho_{\ms S}] = \sigma_{\ms S}$. 
	However, Eq.~\eqref{eq:statetransition} may become possible if we can act on additional degrees of freedom, represented by a quantum system $\ms C$. 
	The central idea of catalysis is that this type of activation may be possible even without disturbing the (quantum) state of subsystem $\ms C$, and hence one can \emph{reuse} $\ms C$ to ``catalyze'' the same state transition on another system $\ms S'$ prepared in the same state $\rho$. 
	In the ideal case, given a set of operations $\mc O$, we say that a state transition is achievable with a catalyst, and denote this by $ \rho_{\ms S}\AC{}~\sigma_{\ms S} $, 
	if there exists a state $ \omega_{\ms C} $ on $\ms C$ and a quantum channel $ \mc F\in\mc O $ such that 
	$	\mc F[\rho_{\ms S}\otimes\omega_{\ms C}] = \sigma_{\ms S}\otimes \omega_{\ms C}$.
	The central question is then:  

\begin{quote}
\centering{\textit{Which state transitions are made possible by catalysis?}}
\end{quote}	
	In approaching this question, we begin by presenting a simple Lemma which serves as a guideline for such considerations. 
	Imagine that we want to implement a quantum channel $\mc F$ using a unitary $U$ and auxiliary degrees of freedom $\ms E$ prepared in some state $\rho_{\ms E}$, as described in Section~\ref{sec:preliminaries}.  
	\begin{lemma}[Basic lemma]\label{lem:fundamental_nogo}
		Consider a finite dimensional Hilbert space $\mc H_{\ms{SEC}}=\mc H_{\ms{SE}}\otimes \mc H_{\ms C}$. 
		Let $U$ be a unitary on $\mc H_{\ms{SEC}}$, $\rho_{\ms{SE}} = \rho_{\ms S}\otimes\rho_{\ms E}\in \mc D(\ms{SE})$ a density operator on $\ms{SE}$ and $\omega_{\ms C}\in \mc D(\ms C)$ a density operator on $\ms C$. 
		If 
		\begin{align}\label{eq:fundamental_lemma}
			U(\rho_{\ms S}\ot \rho_{\ms E} \otimes \omega_{\ms C})U^\dagger = \sigma_{\ms{SE}}\otimes \omega_{\ms C},
		\end{align}
		for some density operator $\sigma_{\ms{SE}}\in \mc D(\ms{SE})$, then there exists a unitary operator $V$ on $\mc H_{\ms{SE}}$ such that $\sigma_{\ms{SE}} = V \rho_{\ms{SE}} V^\dagger$.
	\end{lemma}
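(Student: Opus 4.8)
The plan is to reduce the statement to a fact about spectra. Concretely, I would show that $\rho_{\ms{SE}}$ and $\sigma_{\ms{SE}}$ have the same eigenvalues counted with multiplicity, and then conclude via the elementary observation that any two density operators on the same finite-dimensional Hilbert space with identical spectra are related by a unitary conjugation: writing spectral decompositions $\rho_{\ms{SE}}=\sum_i \lambda_i \proj{e_i}$ and $\sigma_{\ms{SE}}=\sum_i \lambda_i \proj{f_i}$ with the same (decreasingly ordered) list $\lambda_i$ and orthonormal bases $\{\ket{e_i}\},\{\ket{f_i}\}$, the unitary $V=\sum_i \ket{f_i}\!\bra{e_i}$ satisfies $V\rho_{\ms{SE}}V^\dagger=\sigma_{\ms{SE}}$. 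It is worth noting that the product form $\rho_{\ms{SE}}=\rho_{\ms S}\ot\rho_{\ms E}$ plays no role in the argument — all that is used is that $\rho_{\ms{SE}}$ is a state on $\ms{SE}$ and $\omega_{\ms C}$ a state on $\ms C$.

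To extract the spectral information from Eq.~\eqref{eq:fundamental_lemma}, the idea is to compare traces of powers, which are invariant under unitary conjugation. For every integer $p\geq 1$,
\begin{align}
  \tr[(\rho_{\ms{SE}}\ot\omega_{\ms C})^p]=\tr[(\sigma_{\ms{SE}}\ot\omega_{\ms C})^p],
\end{align}
and since $(A\ot B)^p=A^p\ot B^p$ together with $\tr[X\ot Y]=\tr[X]\tr[Y]$, this factorizes as $\tr[\rho_{\ms{SE}}^p]\,\tr[\omega_{\ms C}^p]=\tr[\sigma_{\ms{SE}}^p]\,\tr[\omega_{\ms C}^p]$. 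Because $\omega_{\ms C}$ is positive semi-definite and non-zero, $\tr[\omega_{\ms C}^p]>0$, so it cancels, leaving $\tr[\rho_{\ms{SE}}^p]=\tr[\sigma_{\ms{SE}}^p]$ for all $p\geq 1$. These are exactly the power sums of the two eigenvalue lists; since $\rho_{\ms{SE}}$ and $\sigma_{\ms{SE}}$ act on a space of one and the same finite dimension $|\ms{SE}|$, Newton's identities convert equality of all power sums into equality of all elementary symmetric polynomials, hence of the characteristic polynomials, hence of the spectra — which is precisely what the first paragraph requires.

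The only step that is more than bookkeeping is this cancellation of $\omega_{\ms C}$, i.e. deducing that $\rho_{\ms{SE}}$ and $\sigma_{\ms{SE}}$ are isospectral from the fact that $\rho_{\ms{SE}}\ot\omega_{\ms C}$ and $\sigma_{\ms{SE}}\ot\omega_{\ms C}$ are; the trace-of-powers route above handles it cleanly and uses only $\omega_{\ms C}\neq 0$, with finite-dimensionality entering solely through Newton's identities. Should one wish to avoid power sums, an alternative is to pass to the supports of the operators, take logarithms of the non-zero eigenvalues, and observe that the characteristic function $\xi\mapsto\sum_j \e^{\mathrm{i}\xi\log w_j}$ of $\omega_{\ms C}$'s log-spectrum is a non-vanishing entire function and may therefore be divided out of the corresponding product of characteristic functions; but this is a more roundabout path to the same conclusion.
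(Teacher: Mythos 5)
Your proposal is correct and follows essentially the same route as the paper's proof: compare $\tr[(\cdot)^p]$ on both sides of Eq.~\eqref{eq:fundamental_lemma}, cancel the factor $\tr[\omega_{\ms C}^p]>0$ to get $\tr[\rho_{\ms{SE}}^p]=\tr[\sigma_{\ms{SE}}^p]$ for all $p$, and conclude that the spectra coincide so the states are unitarily related. You merely make explicit two steps the paper leaves implicit (the cancellation of $\omega_{\ms C}$ and the passage from equal power sums to equal spectra via Newton's identities), and your observation that the product form of $\rho_{\ms{SE}}$ is never used is accurate.
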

	\begin{proof}
		Due to the tensor product structure of the output state, raising Eq.~\eqref{eq:fundamental_lemma} to any power $k\geq 0$ and taking the trace yields $\tr[\rho_{\ms{SE}}^k]=\tr[{\sigma}^k_{\ms{SE}}]$. This is only possible for two matrices if their eigenvalues (including multiplicities) coincide. Hence $ \rho_{\ms{SE}} $ and $ \sigma_{\ms{SE}}$  are related by a unitary transformation. 
	\end{proof}
	
	\begin{figure}[t]
		\includegraphics[width=\textwidth]{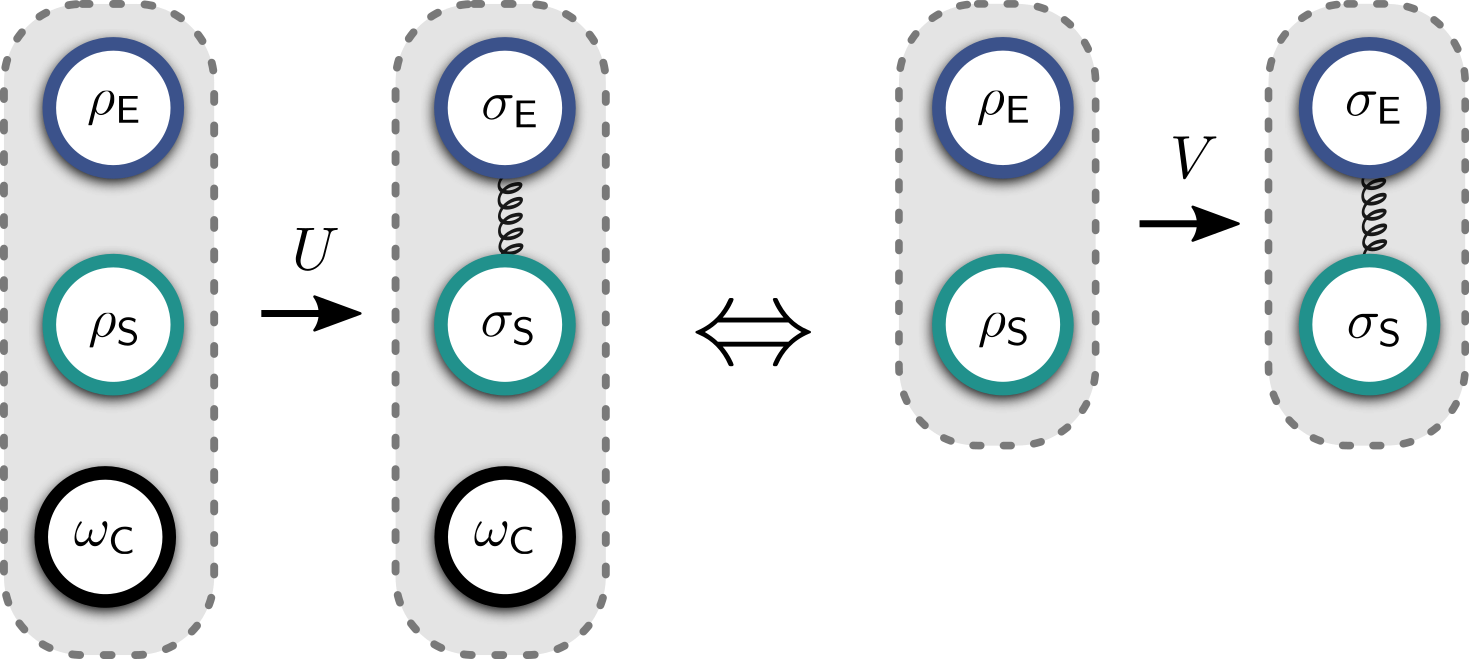}
		\caption{\textbf{Illustration of Lemma~\ref{lem:fundamental_nogo}}. Unitary $U$ uses $\ms E$ to implement a quantum channel $\mc F$ on $\ms{SC}$. If $\ms C$ is catalytic and remains uncorrelated to $\ms{ES}$, then the same state transition on $\ms S$ can be realized using a unitary $V$ acting only on $\ms{ES}$, implementing a quantum channel $\mc E$ on $\ms S$. The channels fulfill $\tr_C\circ \mc F[\rho_{\ms S}\otimes \omega_{\ms C}] = \mc E[\rho_{\ms S}] = \sigma_{\ms S}$.  }
		\label{fig:fundamental-lemma}
	\end{figure}
	
	Lemma \ref{lem:fundamental_nogo} is illustrated in Fig.~\ref{fig:fundamental-lemma}. It tells us that if we use a Stinespring dilation $(U, \rho_{\ms E})$ to implement $\mc F$ on $\ms{SC}$ such that the state on $\ms C$ does not change and remains uncorrelated to both $\ms S$ and $\ms E$, 
	then there exists a unitary operator $V$ on $\ms{SE}$ alone which leads to the same state transition on $\ms{SE}$ (and therefore on $\ms S$). In other words, the state transition $\rho_{\ms S}\rightarrow \sigma_{\ms S}$ can already be implemented using $\ms E$, and hence there is no need to use an additional catalyst $\ms{C}$. 	
	From this, we may observe three essential ways in which one can circumvent the assumptions of the lemma (and therefore enable catalysis):
	
	\begin{enumerate}[label=({\arabic*})]
		\item \label{cond:exactness} Catalysis is not exact, which means that
		\begin{align}
			\Tr_{\ms S}[\mc F[\rho_{\ms S}\otimes \omega_{\ms C}]]\neq \omega_{\ms C }.
		\end{align}		
		Therefore the state on $\ms C$ changes at least a little bit.
		\item \label{cond:correlation} The final state on $\ms C$ is correlated with $\ms{SE}$, that is
		\begin{align}
			U(\rho_{\ms S}\ot \rho_{\ms E} \otimes \omega_{\ms C})U^\dagger = \sigma_{\ms{SEC}} \neq \sigma_{\ms{SE}}\otimes \omega_{\ms C},
		\end{align}		
		where $\Tr_{\ms{SE}}\left[\sigma_{\ms{SEC}}\right] = \omega_{\ms{C}}$	. In particular, $\omega_{\ms{C}}$ is not pure.
		\item \label{cond:restrictions} The set of implementable operations $\mc O$ is sufficiently restricted. This means that, while the unitary $V$ in Lemma \ref{lem:fundamental_nogo} exists, the quantum channel it induces on ${\ms S}$,
		\begin{align}
			\mc E[\rho_{\ms S}]= \tr_{\ms E}[V \rho_{\ms S}\otimes\rho_{\ms E}V^\dagger],
		\end{align}
		is not an implementable operation, i.e. $\mc E\notin \mc O$.
	\end{enumerate}
	
	Different ways of relaxing Eq.~(\ref{eq:fundamental_lemma}), as captured by items $(1)$ and $(2)$ above, lead to different \emph{types of catalysis} (see Section \ref{subsec:types}). 	
	On the other hand, as captured by item $(3)$, different physical settings may lead to distinct sets of implementable operations $\mc O$. This, in turn, generally leads to different answers concerning the central question of catalysis. 
	In recent years, such operational restrictions have been formalized under the framework of \emph{resource theories}.
	In Sec.~\ref{sec:rt} we briefly discuss the main concepts of resources theories and describe three paradigmatic examples. 
	
	Before we close this section, let us discuss the \emph{reusability} of catalysts and how this is related to correlations catalysts establish with other systems. 
	As mentioned earlier, a catalyst $\omega_{\ms C}$ that catalyzes a state transition $\rho_{\ms S}\longrightarrow \sigma_{\ms S}$ via $\mc{F}[\rho_{\ms{S}} \ot \omega_{\ms{C}}] = \sigma_{\ms{S}} \ot \omega_{\ms{C}}$ may be reused to catalyze a further such state transition on a \emph{fresh copy} of the same state, in other words, on a \emph{different system} $\ms S'$ such that $\rho_{\ms S'}=\rho_{\ms S}$, and ${\ms S'}$ is uncorrelated to ${\ms C}$ prior to its interaction with $\ms C$.
	This is because the catalyst, by definition, is only guaranteed to act as a catalyst for a specific initial state $\rho$ and a specific operation $\mc F$. In other words, if we change the initial state from $\rho_{\ms{S}}$ to $\sigma_{\ms{S}}$, the catalyst will not, in general, remain catalytic, 
	\begin{align}
		\tr_{\ms S}\left[\mc F[\sigma_{\ms S}\otimes\omega_{\ms C}]\right] \neq \omega_C\quad\text{if}\quad \sigma_{\ms S}\neq \rho_{\ms S}.
	\end{align}
	In this sense a catalyst is always fine-tuned. However, a small error on the preparation only leads to a small error on the final states of $\ms S$ and $\ms C$. This will be discussed in more detail later.
	Since the main purpose of the catalyst is to \emph{change} the state of $\ms S$, there is not much sense in discussing the reusability of the catalyst on the same system. 
	Even so, reusing the catalyst on a different system also has its own drawbacks: As shown by Lemma~\ref{lem:fundamental_nogo}
	the catalyst can become correlated to either $\ms S$ or $\ms E$ in general.
	If the catalyst is now reused on $\ms S'$, these correlations, in general, will spread to $\ms{S}'$. Whether such an uncontrollable spread of correlations is problematic or not, in general depends on context. Furthermore, if ${\ms S}$ and ${\ms S'}$ were already initially correlated, then after the action of $\mc{F}$ on ${\ms {SC}}$, the resulting state $\hat\rho_{\ms {S'C}}$ may be correlated and therefore $	\tr_{\ms S}\left[\mc F[\hat\rho_{\ms {S'C}}]\right] \neq \omega_C$ even if $\hat\rho_{\ms{S'}} = \rho_{\ms{S'}}$ and $\hat\rho_{\ms{C}}=\omega_{\ms{C}}$.
	
	Despite the above discussion, one can easily imagine situations in which it is reasonable to demand that the catalyst remains invariant also for different states on $\ms{S}$. 
	For that, see Sec.~\ref{subsec:state-independent} where we discuss the closely related notion of \emph{state-independent catalysis}.

	\subsection{Resource theories}\label{sec:rt}
	We have seen above that catalysis is strongly related to the set of operations which obey certain constraints, that can be implemented on a physical system. 
	Such allowed operations can be formalized using the concept of a \emph{resource theory}, which we  introduce in this section, first in an abstract way and then illustrate them using three well-studied examples that will also be important in the remainder of this review. 
	An in-depth review on resource theories can be found at \cite{Chitambar2016}.
	
	A resource theory partitions the space of states and operations on physical systems into those that are either ``easy'' or ``hard'' to prepare or implement under given physical constraints. 
	The ``easy'' states and operations are typically called free. 
	We continue to use the symbol $\mathcal{O}$ to denote free operations, and use the symbol $\mc S$ for the set of free states. 
	A resource theory is specified by such a tuple $\mc R = (\mc S,\mc O)$. 
	For a fixed system $\ms S$ we denote by $\mc S(\ms S)= \mc{S} \cap \mc D(\ms S)$ the set of free states corresponding to that system. 
	One usually assumes the consistency condition that free operations always transform free states into free states, i.e
	\begin{align}
		\rho_{\ms S}\in \mc S,\ \mc F_{\ms S} \in \mc O\quad\Rightarrow\quad \mc F_{\ms S}[\rho_{\ms S}] \in \mc S.
	\end{align}
	Any state that is not a free state is interpreted as \emph{resourceful}, because it cannot be created using only operations from the set $\mc O$ acting on states in $\mc S$.  On the other hand, resourceful states can be used to implement non-free operations: Given a resourceful state  $\sigma_{\ms A}$ on an auxiliary system $\ms A$ and a free operation $\mc F_{\ms{SA}}$ on $\ms{SA}$, the operation defined by
	\begin{align}
		\mc E_{\ms S}[\rho_{\ms S}] = \tr_{\ms A}\left[\mc F_{\ms{SA}}[\rho_{\ms S}\otimes\sigma_{\ms A}]\right]
	\end{align}
	is typically not free, that is $\mc E_{\ms S}\notin \mc O$.	To denote when $\rho$ can be converted into $\sigma$, we use the standard notation
	\begin{align}
		\rho \xrightarrow[\mc O]{} \sigma\quad \Longleftrightarrow\quad  \exists\, \mc F\in \mc O~~ \text{such that} ~ \mc F[\rho] = \sigma. 
	\end{align}
	The relation ``$\xrightarrow[\mc O]{}$'' between quantum states is, in general, only a partial order, i.e. typically there are pairs of states for which neither $\rho \xrightarrow[\mc O]{}~ \sigma$ nor $\sigma\xrightarrow[\mc O]{}~\rho$ is true. Given a resource theory $\mc{R}$, the central question is: Which state transitions are made possible using free operations? 
	
	In general, it is difficult to characterize the partial order induced by a given resource theory. However, sometimes it is possible to specify a set of \emph{necessary} conditions for the state transition, in terms of so-called \emph{resource monotones}. A resource monotone $f$ with respect to a resource theory $\mc O$ is a function from quantum states to real numbers such that
	\begin{align}
		\rho\xrightarrow[\mc O]{}~ \sigma\quad \Rightarrow \quad f(\rho) \geq f(\sigma). 
	\end{align}
	In other words, since free operations cannot make the state more resourceful,  every resource monotone can be seen as measuring some abstract resource. Because of that, resource monotones are sometimes viewed as resource-theoretic analogues of the second law of thermodynamics.
	A typical example of a resource monotone is the (quantum) relative entropy with respect to the set of free states \cite{brandao_reversible_2015,berta_gap_2022}, i.e.
	\begin{align}\label{eq:relent-monotone}
		f(\rho_{\ms S}) = \inf_{\gamma_{\ms S}\in \mc{S}(\ms{S})} D(\rho_{\ms S}\| \gamma_{\ms S}), 
	\end{align}
	where $D(\rho\| \sigma) := \tr[\rho(\log(\rho)-\log(\sigma))]$
	is the relative entropy between two density matrices (see Eq.~\eqref{eq:relative-entropy} for a more formal discussion).
	Sometimes it is possible to find a \emph{complete} set of resource monotones, that is a family $\{f_\alpha\}$ of resource monotones which collectively characterizes both necessary and sufficient conditions for state transformations. More formally,
	\begin{align}
		\rho \xrightarrow[\mc O]{} \sigma \quad \Leftrightarrow \quad f_\alpha(\rho) \geq f_\alpha(\sigma)\quad\forall \alpha. 
	\end{align}
	It can be shown that, under some general assumptions on the resource theory $\mc{R}$, any complete set of resource monotones must be infinite \cite{Datta2022a}.
	This may be surprising, since in finite dimensions it should be clear that a finite number of parameters specify whether a state transformation is possible. 
		Neverthless the two statements are compatible, because a complete set of monotones allows to evaluate the possibility of state transitions by only checking whether none of the monotones increases.
		It is, however, often possible to determine the possibility of a state transition via more complicated functions of a finite number of monotones.	
	
	Most of the statements in this review are formulated for general quantum resource theories. However, in some cases we will need to restrict ourselves to a specific types of resource theories which we refer to as \emph{permutation-free} resource theories. In such resource theories permuting subsystems (i.e., physically swapping two subsytems) is allowed for free. In this case permuting subsystems can be considered simply as a relabelling of subsystems. To the best of our knowledge, the resource theories considered so far in the literature all fulfil this assumption either fully, or for specific subsystems, such as local subsystems in the context of LOCC discussed below.
		We emphasize, however, that permuting subsystems can not always simply be considered as a free relabelling: For example, swapping qubits in a quantum computer is an essential operations that requires non-trivial gates which pick up errors in general. In a fault-tolerant quantum computer we can expect that swapping \emph{logical qubits} may be done error-free using Clifford operations (see Sec.~\ref{subsec:computation}), but can still require significant computational time.

	With this general introduction in mind, let us turn to some examples.
	
	\subsubsection{Entanglement}
	\label{subsubsec:locc} The idea of resource theories originated with entanglement theory \cite{Bennett1996,vedral1997}, where it was realized that entanglement shared between multiple parties can be useful for certain tasks, and that it cannot be increased in the absence of quantum communication. The theory of quantum entanglement is now a mature field of study reviewed in \cite{Horodecki2009} and forms a central part of quantum information theory \cite{nielsen2002quantum,Wilde2009,watrous2018theory}. 
	The resource theory which captures these restrictions is known as the resource theory of local operations and classical communication (LOCC), denoted as $\mc{R}_{\mathrm{LOCC}} = (\mc{S}_{\mathrm{LOCC}}, \mc{O}_{\mathrm{LOCC}})$.
	The corresponding set of free operations $\mc O_{\mathrm{LOCC}}$ consists of any protocol that is composed of the following operations performed by the different parties:
	\begin{enumerate}[label=\roman*)]
		\item preparing arbitrary local quantum states, 
		\item applying local unitary operations and measurements, 
		\item exchanging classical messages between the parties, 
		\item discarding physical subsystems. 
	\end{enumerate}
	This choice of free operations encapsulates the idea that preparing quantum states locally and communicating classically is ``easy'', but exchanging (or communicating) quantum states in a coherent manner is ``hard''. 
	The set of free states $\mc S_{\mathrm{LOCC}}$ is simply given by all quantum states that can be prepared using free operations $\mc O_{\mathrm{LOCC}}$. 
	Clearly, all the free states can be prepared by using only classical shared randomness. 
	Consequently they are called \emph{classically correlated} states \cite{Werner_1989} or \emph{separable} states \cite{Horodecki_1996,Peres_1996}.
	Any state that is not classically correlated is called entangled.
	
	The amount of entanglement can be quantified using various resource monotones \cite{vedral1997,vidal2000entanglement,Horodecki2009}, which in this case are usually called \emph{entanglement monotones}. 
	For example, in the setting with only two parties Alice $\ms A$ and Bob $\ms B$, the monotone defined in Eq.~\eqref{eq:relent-monotone} translates into the \emph{relative entropy of entanglement}
	\begin{align}
		E_{\mathrm{rel.}}(\rho_{\ms A\ms B}) = \inf_{\gamma_{\ms A \ms B}\in \mc S_{\mathrm{LOCC}}(\ms A\ms B)} D(\rho_{\ms A\ms B}\| \gamma_{\ms A \ms B}).
	\end{align}
	In Sec.~\ref{subsec:entanglement} we review the role of catalysis in the resource theory of LOCC.
	
	\subsubsection{Asymmetry}
	\label{subsubsec:rt-asymmetry}
	Conservation laws place fundamental restrictions on the possible dynamics on physical systems. 
	To focus our attention, let us consider the case when the conserved quantity is the total energy. 
	Suppose we want to include in our quantum description all participating degrees of freedom, so that the dynamics is ultimately unitary. 
	The unitary operator $U$ that describes the dynamics must conserve total energy. 
	In other words, due to Noether's theorem, $U$ has to commute with the group of time-translations $t\mapsto t+s$, which on the Hilbert space is represented by $s\mapsto \exp(-\mathrm i s H)$ with $H$ denoting the total Hamiltonian of the system. 	
	
	The \emph{resource theory of asymmetry} with respect to a group $G$ ~\cite{Janzing2003,Marvian2012}, also known as the resource theory of reference frames  \cite{Bartlett2007,Gour2008, Vaccaro2008}, provides an idealized framework to study the restrictions imposed by such commutation conditions.
	In $\mc{R}_{\mathrm{asymm}} = (\mc{S}_{\mathrm{asymm}}, \mc{O}_{\mathrm{asymm}})$, every system $\ms S$ carries a (projective) unitary representation of the group $G$, given by $g\mapsto W_{\ms S}(g)$. 
	The free operations $\mc O_{\mathrm{asymm}}$ consist of all quantum channels that are \emph{covariant}: a quantum channel $\mc{F}: \mc{D}(\ms{S}) \rightarrow \mc{D}(\ms{S}')$  is covariant with respect to the given representations of the group $G$ if, $\forall g\in G$,
	\begin{align}
		\mc F[W_{\ms S}(g) \rho_{\ms S} W_{\ms S}(g)^\dagger] = W_{\ms S'}(g) \mc F[\rho_{\ms S}] W_{\ms S'}(g)^\dagger.
	\end{align}
	For two independent systems $\ms S_1$ and $\ms S_2$, we further require that they jointly carry the representation $g\mapsto W_{\ms S_1\ms S_2}(g) = W_{\ms S_1}(g)\otimes W_{\ms S_2}(g)$.
	The free states $\mc{S}_{\mathrm{asymm}}$ are all density matrices that are \emph{invariant} or \emph{symmetric} with respect to $G$, i.e.
	\begin{align}
		W_{\ms S}(g) \rho_{\ms S} W_{\ms S}(g)^\dagger = \rho_{\ms S}\quad\forall \rho_{\ms S}\in\mc S_{\mathrm{asymm}}(\ms S).
	\end{align}
	Thus the valuable resources in $\mc{R}_{\mathrm{asymm}}$ are all quantum states which are \emph{asymmetric} with respect to the group $G$.	
	In the particular $\mc{R}_{\mathrm{asymm}}$ with respect to time-translation, the free states are simply stationary states of the time evolution. Moreover, a state is resourceful if and only if it carries coherence between different energies.
	
	In the resource theory of asymmetry, Eq.~\eqref{eq:relent-monotone} translates into the \emph{relative entropy of asymmetry}, also known as the \emph{relative entropy of frameness} \cite{Vaccaro2008,Gour2009},
	\begin{align}\label{eq:REA}
		\mc A(\rho_{\ms S}) = \inf_{\gamma_{\ms S}\in \mc S_{\mathrm{asymm.}}}  D(\rho_{\ms S}\|\gamma_{\ms S}) =  H(\mc G[\rho_{\ms S}]) - H(\rho_{\ms S}),
	\end{align}
	where $H(\rho_{\ms S}) = -\tr[\rho_{\ms S}\log(\rho_{\ms S})]$ is the von~Neumann entropy \cite{vonNeumann1932}. 	
	The right-most equality holds if the infimum is attained, in which case $\mc G$ is the \emph{group-twirling channel} that maps a quantum state to the closest symmetric state. 
	For a compact group, this channel simply corresponds to a group average over its normalized, (unique) left-and right-invariant Haar measure $\mu$:
	\begin{align}
		\mc G[\rho_{\ms S}] = \int W_{\ms S}(g)\, \rho\, W_{\ms S}(g)^\dagger\, \mathrm{d}\mu(g).
	\end{align}
	Every covariant quantum channel can be realized by $(1)$ adding an ancillary system $\ms E$ prepared in a symmetric state, $(2)$ applying a covariant unitary (or, more generally, an isometry) $U$ mapping from $\mc{H}_{\ms{SE}}$ to $\mc{H}_{\ms{S'E}}$, and $(3)$ tracing out the ancillary system $\ms{E}$ \cite{Keyl1999}. This leads to a quantum channel of the form
	\begin{align}
		\mc F[\rho_{\ms S}]= \tr_{\ms E}[U\rho_{\ms S}\otimes \rho_{\ms E} U^\dagger],
	\end{align}
	with the conditions
	\begin{align}
		U W_{\ms S}(g)\otimes W_{\ms E}(g) &= W_{\ms S'}(g)\otimes W_{\ms E}(g) U, \\
		W_{\ms E}(g) \rho_{\ms E} W_{\ms E}(g)^\dagger &= \rho_{\ms E}. 
	\end{align}
	Moreover, the state $\rho_{\ms E}$  may be chosen to be pure.
	In Section~\ref{subsec:asymmetry} we discuss catalysis in the context of symmetries and conservation laws. 
	
	\subsubsection{Athermality}
	\label{subsubsec:athermality}
	Our last example is the resource theory of athermality $\mathcal{R}_{\mathrm{atherm}} = (\mc{S}_{\mathrm{atherm}}, \mc{O}_{\mathrm{atherm}})$, which can be seen as an idealized model of thermodynamics in the quantum realm, see e.g. Refs.  \cite{Janzing2000, Horodecki2013,brandao_resource_2013,brandao_second_2015,gour_resource_2015} and \cite{YungerHalpern2016}. 
	In this resource theory, every system $\ms S$ carries a Hamiltonian $\hat H_{\ms S}$, and $\mathcal{R}_{\mathrm{atherm}} $ is defined with respect to a fixed background temperature $T=\beta^{-1}$ (we set the Boltzmann constant $k_{\mathrm{B}}=1$). The free states are given by thermal (Gibbs) states, 
	\begin{align}
		\gamma_\beta(\hat H_{\ms S}) := \frac{\e^{-\beta \hat H_{\ms S}}}{Z_{\ms S}},\quad Z_{\ms S} = \Tr[\e^{-\beta \hat H_{\ms S}}].
	\end{align}
	Thus the set of free states of a system $\ms S$ consists of a single state, that is $\mc S_{\mathrm{atherm}}(\ms S) = \left\{\gamma_\beta(\hat H_{\ms S})\right\}$.
	There are multiple possibilities for choosing the set of quantum channels representing $\mc{O}_{\mathrm{atherm}}$. Importantly, all such channels must at the very least preserve $\mc{S}_{\mathrm{atherm}}$, and hence map thermal states to thermal states at the same temperature. Channels with this property are referred to as ``Gibbs-preserving maps" \cite{Faist2015}.
	Often it is additionally demanded that the free operations are covariant with respect to the group of time translations (see previous section). This set of operations was coined ``enhanced thermal operations'' \cite{Cwiklinski2015}.
	The above two classes of free operations lead to a valid resource theory, however, they lack a clear physical interpretation, and therefore a more restrictive set of free operations is commonly considered, known as \emph{thermal operations}. These are channels that can be written as 
	\begin{align}
		\mc F_{\ms S}[\rho_{\ms S}] = \tr_{\ms E}\left[U\big(\rho_{\ms S}\otimes \gamma_{\beta}(\hat H_{\ms{E}})\big) U^\dagger \right],
	\end{align}
	where $\gamma_{\beta}(\hat H_{\ms{E}})$ is a Gibbs state of inverse temperature $\beta$, and the unitary $U$ is strictly energy-preserving, i.e. it satisfies $[U, \hat H_{\ms S}+\hat H_{\ms E}]=0$. 
	It is known that thermal operations are a strict subset of enhanced thermal operations \cite{ding2021exploring}. Major progress in understanding the state-transition conditions for enhanced thermal operations was reported in \cite{gour_quantum_2018}, but no simple characterization is known.
	
	An important resource monotone in the resource theory of athermality is the \emph{non-equilibrium free energy}:
	\begin{align}\label{eq:free-energy}
		F^\beta (\rho_{\ms S},\hat H_{\ms S}) &:= \tr[\rho_{\ms S}\hat H_{\ms S}] - \frac{1}{\beta} H(\rho_{\ms S})\\
		&\phantom{:}=\frac{1}{\beta}\left[D\left(\rho_{\ms S}\|\gamma_\beta(\hat H_{\ms S})\right)  - \log Z_{\ms{S}}\right],
	\end{align}
	Since $\mc S_{\mathrm{atherm}}$ only consists of a single state, up to rescaling and a shift by the \emph{equilibrium free energy} $- \beta \log Z_{\ms S}$, the non-equilibrium free energy corresponds to the general monotone defined in Eq.~\eqref{eq:relent-monotone}. When free operations are given by (enhanced) thermal operations, the resource theory of athermality can be seen as the resource theory of asymmetry with respect to time translations, with the additional restrictions imposed on the sets of free states and free operations.
	In Secs.~\ref{subsub:CTOs} and \ref{subsubsec:batteries} we will discuss catalysis in the context of the resource theory of athermality.  
	
	We have discussed three major examples of resource theories. However, there exist many more, such as the resource theory of  contextuality (see Sec.~\ref{subsubsec:contextuality}), non-Gaussianity (see Sec.~\ref{subsubsec:CV}), or stabilizer operations (see Sec.~\ref{subsec:computation}) to name a few that we will encounter in this review. 
	
	\subsection{Basic mathematical tools}
	In this section we introduce some mathematical tools that are often used to describe the properties and the relationships between quantum states, e.g. to capture the partial order induced by a resource theoretic framework. These tools serve as a technical basis to study how such partial orders can change in the presence of catalysts.
	
	\subsubsection{Distinguishability measures}\label{subsubsec:data-processing}
	It is often necessary to measure how easy or difficult it is to distinguish two quantum states. In the context of this review, this will mainly be used for two purposes: $(i)$ quantifying how close the final state of the system $\ms S$ is to a given target state, and $(ii)$ quantifying how close the final state of the catalyst $\ms C$ is to its initial state.
	
	Suppose we receive two datasets (either classical or quantum) from two independent runs of some experiment, each of them derived from raw measurement data by the same post-processing technique. It is then intuitively clear that all the information that distinguishes the two datasets apart has already been present in the measurement data, i.e. before post-processing was applied. 
	In other words, post-processing two datasets in the same way cannot increase their distinguishability. Similarly, no post-processing, as represented generally by a quantum channel $\mc E$, should increase the distinguishability between two quantum states $\rho$ and $\sigma$. 
	Therefore, any operationally meaningful measure of distinguishability $\dist(\cdot, \cdot)$ between two quantum states should satisfy
	\begin{align}\label{eq:DPI}
		\dist(\mc E[\rho],\mc E[\sigma]) \leq \dist(\rho,\sigma)
	\end{align}
	for any two density operators $\rho$ and $\sigma$ and any quantum channel $\mc{E}$. The above inequality is called \emph{data-processing} inequality (DPI) and is a central concept in quantum information theory. An important measure which satisfies DPI is the \emph{trace distance} $\Delta(\cdot, \cdot)$, i.e.
	\begin{equation}\label{eq:tracedistance}
		\Delta (\rho,\sigma) := \frac{1}{2} \|\rho-\sigma\|_1,
	\end{equation}
	where $\norm{\cdot}_1$ is the Schatten $1$-norm. When the two density matrices commute, i.e. when $[\rho, \sigma]= 0$, trace distance reduces to the total variation distance (TVD) between the two probability vectors formed from the eigenvalues of $\rho$ and $\sigma$. Another relevant distinguishablity measure is the (Umegaki) \emph{quantum relative entropy} \cite{umegaki1962conditional}, which is defined as
	\begin{align}\label{eq:relative-entropy}
		D(\rho\| \sigma) := \tr[\rho(\log(\rho)-\log(\sigma))],
	\end{align}
	with $D(\rho\|\sigma)=+\infty$ if the support of $\rho$ is not contained in that of $\sigma$. For commuting density operators $\rho$ and $\sigma$, the quantum relative entropy reduces to the Kullback-Leibler divergence \cite{kullback1951information}. One can also consider generalizations of relative entropy, so called (quantum) R\'enyi divergences, see next section and Ref.~\cite{Tomamichel2016}. 
	While the trace distance is a metric, the quantum relative entropy is not. 
	More specifically, it is neither symmetric in its arguments, nor satisfies the triangle inequality.  
	Nevertheless, it satisfies the data processing inequality of Eq.~\eqref{eq:DPI}, and furthermore the quantum relative entropy has a strong operational relevance~\cite{hiai1991proper}: It gives rise to the logarithm of the minimal type-2 error in a quantum hypothesis testing scenario (involving $\rho$ and $\sigma$ as the null and alternative hypothesis), regularized in the asymptotic limit.
	The relative entropy can also be used to express the mutual information
	\begin{equation}\label{eq:mutual_info}
		I(\ms{A}:\ms{B})_{\rho} = D(\rho_\ms{AB}\|\rho_{\ms A}\otimes \rho_{\ms{B}}),
	\end{equation}
	which is a measure of the amount of correlations between subsystems $\ms{A}$ and $\ms{B}$.
	
	Another important measure of distinguishability is the \emph{fidelity} $F(\cdot, \cdot)$ defined as $	F(\rho,\sigma) = \norm{\sqrt{\rho}\sqrt{\sigma}}_1$,
	The classical counterpart of fidelity is known as Bhattacharyya distance \cite{bhattacharyya1943measure}. As opposed to previous distinguishability measures, fidelity fulfils a reverse DPI, i.e. 
	\begin{align}
		F(\rho,\sigma) \leq F(\mc E[\rho], \mc E[\sigma])
	\end{align}	 
	and is close to unity for states that are similar.
	Although fidelity is not a metric, other metrics derived from fidelity naturally satisfy the data-processing inequality \cite{Gilchrist_2005} and are therefore valid distinguishability measures.

	In the context of a resource theory $\mc R=(\mc S,\mc O)$, every distinguishability measure $\dist(\cdot, \cdot)$  that fulfils the DPI allows to define a resource monotone $f_\dist$ via \cite{Gonda_2023}
	\begin{align}
		f_\dist(\rho_{\ms S}) := \inf_{\sigma_{\ms S} \in \mc S(\ms S)} \dist(\rho_{\ms S},\sigma_{\ms S}).
	\end{align}
	The monotone $f_\dist(\rho_{\ms S})$ therefore measures how the state $\rho_{\ms S}$ can be distinguished from the free states $\mc{S}(\ms S)$ on $\ms S$ as measured by $\dist(\cdot, \cdot)$.

	\subsubsection{Entropic quantifiers}
	\label{subsubsec:renyi}
	Information encoded in physical systems can be conveniently characterized using various entropic quantifiers, most of them having well-established operational interpretations. Perhaps the most well-known entropic quantifier is the Shannon entropy $H(\bm{p})$ \cite{shannon1948mathematical}, which for a probability vector $\bm{p}$ is defined as $H(\vec p) := -\sum_i p_i \log(p_i)$, where $p_i$ denotes the $i$-th element of the vector $\bm{p}$.\footnote{{When $\bm{p}$ corresponds to a probability distribution, we will also sometimes use $p(i)$ to denote the $i$-th element of $\bm{p}$. }} Throughout the review we will use the logarithm of base $2$, i.e., $\log(2)=1$.
	The generalization of Shannon entropy to density operators is known as von~Neumann entropy \cite{vonNeumann1932}, 
	\begin{align}
		H(\rho_{\ms S}) = H(\eig(\rho)) \equiv -\Tr \rho \log \rho, 
	\end{align}
	where $\eig(A)$ denotes the vector of eigenvalues of $A$,  including multiplicities. Another generalization of the concept of entropy are the $\alpha$-R{\'e}nyi entropies \cite{renyi1961measures}, which, for a parameter~\footnote{More often the R{\'e}nyi entropies are defined in literature for $ \alpha\geq 0 $, since for negative values of $ \alpha $, $ H_\alpha(p) $ tends to infinity when $ p $ is not of full rank. However, this generalization proves to be useful in the context of catalysis as we will see later.} $ \alpha\in(-\infty,0)\cup(0,1)\cup(1,\infty)$, are defined as 
	\begin{equation}\label{eq:Renyi_entropy}
		H_\alpha (\vec p) := \frac{{\rm sign}(\alpha)}{1-\alpha} \log \sum_i p_i^\alpha.
	\end{equation}
	
	In the above, we use the convention that $\mathrm{sign}(0)=1$, $0^0=0$ and $0^{\alpha}=+\infty$ for $\alpha<0$.
	In particular, $H_\alpha(\vec p)=\infty$ for $\alpha<0$ if $\vec p$ has an entry $p_i=0$, whereas $H_\alpha(\vec p)$ for $\alpha>0$ only depends on the non-zero entries of $\vec p$.
	For $\alpha=\lbrace 0,1,\infty\rbrace$ the associated entropies are defined by continuity in $\alpha$, i.e.
	\begin{align}
		H_0(\vec p) &= \log(\mathrm{rank}(\vec p)),\\
		H_1(\vec p) &= -\sum_i p_i\log(p_i) = H(\vec p),\\
		H_\infty(\vec p) &= -\log(\max\{p_i\}),
	\end{align}
	where $\mathrm{rank}(\vec p)$ is defined as the number of non-zero elements of $\vec p$, so that $\mathrm{rank}(\eig(\rho_{}))=\mathrm{rank}(\rho)$ for any density operator $\rho$. 
	The $\alpha$-R\'enyi entropy of a density operator $\rho$
	is defined as $H_\alpha(\rho) = H_\alpha(\eig(\rho))$ \cite{Wehrl_1976,Thirring_1980,Ohya1993}, so that
	\begin{align}
		H(\rho) = \frac{\mathrm{sign}(\alpha)}{1-\alpha} \log(\tr[\rho^\alpha]).
	\end{align}
	The R\'enyi entropies are non-increasing in $\alpha$ so that, in particular, $H_0(\rho) \geq H_1(\rho) \geq H_{\infty}(\rho)$. All R\'enyi entropies derive from a family of parent quantities known as R\'enyi divergences \cite{renyi1961measures}. 
	These quantities generalise the Kullback-Leibler divergence (see below), hence they can also be seen as measures of distinguishability between probability distributions. R\'enyi divergences for two probability vectors $\vec p$ and $\vec q$, and parameter $\alpha \in \mathbb{R}$, are defined as
	\begin{align}
		D_\alpha(\vec p\| \vec q) := \begin{cases}
			\frac{\mathrm{sign}(\alpha)}{\alpha-1}\log(\sum_i q_i \big(\frac{p_i}{q_i}\big)^\alpha) &\text{if} \hspace{5pt} \vec p \ll \vec q , \\
			+\infty  &\text{otherwise},
		\end{cases}
	\end{align}
	where $\vec p \ll \vec q$ means that $q_i=0$ implies $p_i=0$ for all $i$. In the limit $\alpha\rightarrow 1$ we recover the Kullback-Leibler divergence \cite{kullback1951information}, or relative entropy
	\begin{align}
		D(\vec p\| \vec q) = \sum_i p_i \log(\frac{p_i}{q_i}).
	\end{align}
	R\'enyi entropies and R\'enyi divergences are linked via
	\begin{align}\label{eq:entropy-divergence}
		D_\alpha(\vec p\| \mm) = \mathrm{sign}(\alpha) \log(d) - H_\alpha(\vec p),
	\end{align}
	where $\mm \propto (1, 1, \ldots, 1)^\top/d$ is the uniform $d$-dimensional probability vector. In contrast to R\'enyi entropies, there is no unique quantum generalization for R\'enyi divergences. 
	Clearly, if $[\rho_{\ms S},\sigma_{\ms S}]=0$ (the quasi-classical case) we can simply diagonalize both density matrices in a common eigenbasis and consider $D_\alpha(\eig(\rho_{\ms S})\|\eig(\sigma_{\ms S}))$. 
	Perhaps the simplest and most commonly used candidate for the quantum generalization of R\'enyi divergences are the Petz-R\'enyi divergences \cite{petz1986quasi}, which for $\alpha\in[0,1)\cup(1,\infty)$ are defined as
	\begin{align}\label{eq:petz-divergence}
		D_\alpha(\rho\|\sigma) &= \begin{cases}
			\frac{1}{\alpha-1}\log(\tr[\rho^\alpha\sigma^{1-\alpha}])\quad&\text{if}\ \rho\ll\sigma\\
			+\infty &\text{otherwise},
		\end{cases}
	\end{align}
	where $\rho\ll\sigma$ means that the support of $\rho$ is contained in the support of $\sigma$, i.e. $\bra\psi\sigma\ket\psi=0$ implies $\bra\psi\rho\ket\psi=0$ for all vectors $\ket{\psi}$. 
	The Petz-R\'enyi divergences fulfill the data-processing inequality for $\alpha \in [0,2]$.
	Another commonly considered family of quantum R\'enyi divergences is the minimal (or sandwiched) R\'enyi divergence \cite{muller2013quantum}, defined as
	\begin{align}
		\tilde D_\alpha (\rho\|\sigma) &=\frac{1}{\alpha-1}\log\left\lbrace \tr[\left(\sigma^{\frac{1-\alpha}{2\alpha}}\rho\sigma^{\frac{1-\alpha}{2\alpha}}\right)^\alpha]\right\rbrace
	\end{align}
	for $\alpha\in(1/2,1)\cup(1,\infty)$, if $\rho\ll\sigma$ and $\tilde D_\alpha(\rho\|\sigma)=\infty$ otherwise\footnote{As is also the case for Petz-R\'enyi divergences, the condition $\rho\ll\sigma$ is only required for $\alpha>1$, which guarantees that the resulting quantity is finite.}. 
	The sandwiched R\'enyi divergence fulfills data-processing inequality~\eqref{eq:DPI} for the given range of $\alpha$ and, just as the Petz-R\'enyi divergence, coincides with the relative entropy $D(\rho_{\ms S}\|\sigma_{\ms S})$ in the limit $\alpha\rightarrow 1$.
	We recommend Ref.~\cite{Tomamichel2016} for detailed information about quantum generalizations of R\'enyi divergences.

	\subsubsection{Majorization}\label{subsubsec:maj}
	Majorization is a preorder between vectors: Given two vectors $ \vec p$ and $\vec q \in \mathbb R^d $, we say that $ \vec p $ majorizes $ \vec q $ if
	\begin{equation}\label{eq:majorization}
		\sum_{i=1}^k p^\downarrow_i \geq  	\sum_{i=1}^k q^\downarrow_i , \qquad \text{for all} \quad k \in \{1,\ldots, d\},
	\end{equation}
	and $\sum_{i=1}^d p_i = \sum_{i=1}^d q_i$,
	where $ \vec p^\downarrow$ and $\vec q^\downarrow $ denote vectors ordered non-increasingly. We will use $\vec p\succ \vec q$ to denote that $\bm{p}$ majorizes $\bm{q}$. For probability vectors, the normalization condition is automatically satisfied, and the partial sums appearing in Eq. \eqref{eq:majorization} can be interpreted as (discrete) cumulative distribution functions (CDFs) for $\vec p^\downarrow$ and ${\vec q}^\downarrow$. 
	Thus $\vec p\succ \vec q$ if and only if the CDF of $\vec p^\downarrow$ is at each point larger or equal than the CDF of ${\vec q}^\downarrow$. Moreover, any deterministic probability vector [e.g. $ \vec p = (1,0,\cdots,0)^\top $] majorizes all probability vectors, while the uniform one $ \mm = (1/d,\cdots,1/d)^\top $ is majorized by all probability vectors of dimension $d$. 
	
	Majorization can be extended to density matrices, in which case it can be seen as a preorder of their spectra. More specifically,  we say that $\rho\succ\sigma$ if $\vec \lambda(\rho) \succ \vec \lambda(\sigma)$, where $\vec \lambda(A)$ denotes the vector of eigenvalues of a matrix $A$ (including multiplicities). This seemingly simple preorder of density matrices can be used to characterize randomness/uncertainty in states, and 
	is tightly linked to the resource theories of (LOCC) entanglement (see Sec.~\ref{subsubsec:locc}) and noisy operations (see Sec.~\ref{sec:illustration}).
	A particularly useful theorem in the context of majorization is the Schur-Horn Theorem that uses majorization to relate the spectrum of a Hermitian matrix with its diagonal \cite{schur1923uber,horn1954doubly}.
	\begin{theorem}[Schur-Horn]\label{thm:schurhorm} 
		Let $H$ be any $d$-dimensional Hermitian matrix with a vector of eigenvalues $\vec \lambda (H)$. The following statements are equivalent:
		\begin{enumerate}
			\item There exists a unitary $U$ such that $\vec \lambda' = {\rm diag}(UHU^\dagger) $.
			\item $\vec \lambda(H) \succ \vec \lambda'$.
		\end{enumerate}
	\end{theorem}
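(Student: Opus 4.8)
The plan is to prove the two directions separately. For the direction $(1)\Rightarrow(2)$, suppose $\vec\lambda' = \mathrm{diag}(UHU^\dagger)$ for some unitary $U$. Write $H' = UHU^\dagger$, which is Hermitian with the same eigenvalues as $H$, so $\vec\lambda(H') = \vec\lambda(H)$ up to reordering. The claim then reduces to the statement that the diagonal of a Hermitian matrix is majorized by its spectrum. To see this, expand: if $H' = \sum_j \lambda_j \proj{e_j}$ is the spectral decomposition, then $\lambda'_i = \langle i | H' | i\rangle = \sum_j |\langle i | e_j\rangle|^2 \lambda_j = \sum_j D_{ij}\lambda_j$, where $D_{ij} = |\langle i|e_j\rangle|^2$ is a doubly stochastic matrix (rows and columns sum to $1$ by normalization of the $\ket{i}$ and $\ket{e_j}$ bases). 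By Birkhoff's theorem a doubly stochastic matrix is a convex combination of permutation matrices, and applying a doubly stochastic map to a vector always produces a vector that is majorized by it; hence $\vec\lambda(H)\succ\vec\lambda'$.

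For the direction $(2)\Rightarrow(1)$, I would argue by induction on the dimension $d$, which is the standard route to Horn's half of the theorem. Assume $\vec\lambda(H)\succ\vec\lambda'$, and without loss of generality order both vectors non-increasingly, $\lambda_1\ge\cdots\ge\lambda_d$ and $\lambda'_1\ge\cdots\ge\lambda'_d$. From the majorization inequalities one has $\lambda_1\ge\lambda'_1\ge\lambda'_d\ge\lambda_d$, so $\lambda'_1$ lies in the interval $[\lambda_d,\lambda_1]$. The key geometric input is that one can find a vector $\ket{v}$ in the plane spanned by the eigenvectors for $\lambda_1$ and $\lambda_d$ — more generally a suitable two-dimensional subspace — such that $\langle v|H|v\rangle = \lambda'_1$; this is just the intermediate value theorem applied to the Rayleigh quotient along a path of unit vectors interpolating between the top and bottom eigenvectors. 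Choosing the first column of the desired unitary to be $\ket{v}$ places $\lambda'_1$ in the top-left diagonal entry; the compression of $H$ to the orthogonal complement of $\ket{v}$ is a Hermitian matrix of dimension $d-1$ whose eigenvalues $\vec\mu$ interlace those of $H$, and one checks that $\vec\mu \succ (\lambda'_2,\dots,\lambda'_d)$, so the induction hypothesis finishes the job. The base case $d=1$ is trivial.

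The main obstacle is the inductive step in $(2)\Rightarrow(1)$: specifically, verifying that after peeling off the entry $\lambda'_1$, the spectrum $\vec\mu$ of the $(d-1)$-dimensional compression still majorizes the truncated target $(\lambda'_2,\dots,\lambda'_d)$. This requires combining the Cauchy interlacing inequalities for $\vec\mu$ with the majorization hypothesis on the full vectors, together with the trace constraint $\sum_i \mu_i = \sum_i \lambda_i - \lambda'_1 = \sum_{i\ge 2}\lambda'_i$; the partial-sum bounds need to be checked case by case depending on whether the relevant index falls before or after where $\lambda'_1$ sits in the ordering. Everything else — Birkhoff's theorem, the intermediate value argument for the Rayleigh quotient, and interlacing — is standard and can be invoked directly.
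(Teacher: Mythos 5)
Your direction $(1)\Rightarrow(2)$ (Schur's half) is correct and standard: the diagonal of $UHU^\dagger$ is the image of the spectrum under the doubly stochastic matrix $D_{ij}=|\langle i|e_j\rangle|^2$, the total sums agree because $\Tr(UHU^\dagger)=\Tr(H)$, and doubly stochastic maps produce majorized vectors (via Birkhoff, or directly via Hardy--Littlewood--P\'olya). The paper itself cites this theorem from the literature without proof, so the comparison below is against the standard argument.

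The gap is in the inductive step of $(2)\Rightarrow(1)$, and it is not merely a verification you have deferred: the choice of the two-dimensional subspace is wrong. If $\ket{v}$ lies in the span of the eigenvectors for $\lambda_1$ and $\lambda_d$ with $\bra{v}H\ket{v}=\lambda_1'$, the compression of $H$ to $\ket{v}^\perp$ is block diagonal with spectrum $\vec\mu=(\lambda_2,\dots,\lambda_{d-1},\,\lambda_1+\lambda_d-\lambda_1')$, and this need not majorize $(\lambda_2',\dots,\lambda_d')$. Concretely, take $\vec\lambda=(2,0,-2)$ and $\vec\lambda'=(1,1,-2)$, which satisfy $\vec\lambda\succ\vec\lambda'$; your construction gives $\vec\mu=(0,-1)$, which must next produce a diagonal entry equal to $1$ --- impossible, since by the Schur direction no Hermitian matrix with largest eigenvalue $0$ has a diagonal entry exceeding $0$. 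Cauchy interlacing cannot rescue the step either: interlacing is satisfied by the compressions along \emph{every} admissible $\ket{v}$, most of which fail the required majorization, so the argument must exploit which $\ket{v}$ is chosen. The standard fix is to pick the subspace spanned by \emph{consecutive} eigenvectors $\ket{e_k},\ket{e_{k+1}}$ where $k$ is chosen so that $\lambda_k\geq\lambda_1'\geq\lambda_{k+1}$ (such $k$ exists since $\lambda_1\geq\lambda_1'\geq\lambda_d$). Rotating within that plane places $\lambda_1'$ on the diagonal, and the principal submatrix obtained by deleting that row and column is still \emph{diagonal}, with entries $\vec\mu=(\lambda_1,\dots,\lambda_{k-1},\,\lambda_k+\lambda_{k+1}-\lambda_1',\,\lambda_{k+2},\dots,\lambda_d)$; no interlacing is needed, and the majorization $\vec\mu\succ(\lambda_2',\dots,\lambda_d')$ follows directly, since for $j<k$ the partial sums of $\vec\mu$ equal those of $\vec\lambda$ and one uses $\lambda_1'\geq\lambda_{j+1}'$, while for $j\geq k$ they equal $\sum_{i=1}^{j+1}\lambda_i-\lambda_1'\geq\sum_{i=2}^{j+1}\lambda_i'$. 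With that replacement your induction goes through.
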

	
	Majorization is an indispensable tool in the theory of statistical comparisons \cite{blackwell1953equivalent}. 
	Suppose that $\bm{p}$ and $\bm{q}$ describe the information encoded in a physical system (e.g. energy distribution). 
	Then we say that $\bm{p}$ is more informative than $\bm{q}$, in the absence of prior knowledge, when the latter can be obtained from the former by a bistochastic processing. 
	Due to the Hardy-Littlewood-Pólya theorem, this is equivalent to $\bm{p} \succ \bm{q}$ \cite{hardy1952inequalities}.  
	More generally, when prior knowledge is available in the form of probability distributions $\bm{p}'$ and $\bm{q}'$ (e.g., thermal distribution of energies), we say that a pair of probability
	distributions $(\bm{p}, \bm{p}')$, is more informative than $(\bm{q}, \bm{q}')$ when there exists a stochastic processing which maps $\bm{p}$ into $\bm{q}$, while also mapping $\bm{p}'$ into $\bm{q}'$. 
	When such a processing exists, then the first pair is said to relatively majorize the second \cite{hardy1952inequalities,ruch1978mixing}, see also \cite{Renes2016}. 
	Finally, we note that several extensions of relative majorization to density matrices have been proposed \cite{buscemi2017quantum,gour_quantum_2018}.
	
	An important generalization of majorization is approximate majorization  \cite{van_der_meer_smoothed_2017,horodecki2018extremal}. 
	Let $d(\cdot, \cdot)$ denote some distance measure between vectors. 
	Then, if $\bm{p} \succ \bm{q}$ does not hold, one can still ask whether $\bm{p}$ majorizes $\bm{q}$ approximately. 
	Formally, one asks whether there exists a sufficiently small error $\epsilon > 0$ and a probability vector $\bm{q}_{\epsilon}$ such that $d(\bm{q}, \bm{q}_{\epsilon}) < \epsilon$ and $\bm{p} \succ \bm{q}_{\epsilon}$. 
	One can also consider a related problem where the approximation error $\epsilon$ is located in the initial state, i.e. the existence of a probability vector $\bm{p}_{\epsilon}$ such that $d(\bm{p}, \bm{p}_{\epsilon})<\epsilon $ and $\bm{p}_{\epsilon} \succ \bm{q}$. 
	The two resulting relations are sometimes called, respectively, post and pre-majorization and are known to be equivalent \cite{Chubb2018,Chubb2019}. 
	One can also ask for the minimal approximation error $\epsilon$ such that either $\bm{p} \succ \bm{q}_{\epsilon}$
	or $\bm{p}_{\epsilon} \succ \bm{q}$ holds. 
	If $d(\cdot, \cdot)$ is the total variation distance, Ref.~\cite{horodecki2018extremal} proposed an algorithm which not only finds the optimal value of $\epsilon$, but also provides explicit constructions of the corresponding optimizers, i.e. so-called the \emph{steepest} and \emph{flattest} states. 
	These constructions can be generalized for relative majorization \cite{van_der_meer_smoothed_2017}, or used to address the majorization preorder in the asymptotic (i.i.d.) limit \cite{Chubb2018,boes_variance_2020}, as well as to address approximately catalytic transformations~\cite{ng_limits_2015,lipka-bartosik_all_2021}.

	\begin{figure}[t]
		\begin{tabular}{|C{0.23\textwidth}|C{0.13\textwidth}|C{0.13\textwidth}|C{0.13\textwidth}|C{0.13\textwidth}|C{0.13\textwidth}|}
			\cline{2-6}
			\multicolumn{1}{c|}{} & \hyperref[subsec:strict]{Strict} & \hyperref[sub:arb_strict]{Arb. Str.} & \hyperref[subsec:corr_cat]{Corr.} & \hyperref[subsec:app_cat]{Approx.} & \hyperref[subsubsec:embezzlement]{Emb.} \\ 
			\cline{2-6}
			\multicolumn{1}{c|}{}&$\AC{}{\hspace{10pt}}$ &$\AC{}{\,\,\,\,\text{arb.}\,\,\,\,}$&$\AC{}{\,\,\,\text{corr.}\,\,\,}$&$\AC{}{\text{approx.}}$& ---\\
			\hline
			\vspace{1pt}${\epsilon_\ms C} \hspace{3.5pt}>\hspace{3.5pt} 0$ &\xmark & \xmark & \xmark &\cmark &\cmark \\
			\vspace{1pt}$\epsilon_{\ms{A}}\hspace{3.5pt} \rightarrow \hspace{3.5pt} 0$ &\xmark &\cmark&\xmark&\nmark&\nmark\\
			\vspace{1pt}$I({\ms{S}}:\ms{C}) \hspace{1.5pt}> \hspace{3.5pt}0$ ~~ &\xmark &\xmark&\cmark&\xmark&\cmark\\
			\hline
		\end{tabular}
		\caption{\textbf{Overview of allowed errors and correlations for the catalytic types detailed in Section \ref{subsec:types}.} The condition specified in the left column is fulfilled for a given type of catalysis if \cmark appears and not fulfilled if \xmark\ appears. The errors are defined with respect to a state transformation $\rho_{\ms{S}} \ot \omega_{\ms{C}} \ot \omega_{\ms{A}} \rightarrow \sigma_{\ms{SCA}}$ as $\epsilon_{i} := d(\omega_{i}, \sigma_{i})$ for $i \in\{\ms{C}, \ms{A}\}$. 
			Subsystem ${\ms A}$ is only required for the formal definition of arbitrarily strict catalysis, but also can be incorporated into the categories of approximate catalysis and embezzlement by viewing $\ms A$ as part of $\ms C$ (indicated by \nmark). $I(\ms S:\ms C)$ denotes the mutual information between subsystems $\ms S$ and $\ms C$ in state $\sigma_{\ms{SC}}$. See Sec.~\ref{sub:arb_strict} for precise definitions of different catalytic types. }
		\label{fig:catalytic-types}
	\end{figure}
	
	\section{Catalytic types}\label{subsec:types}
	In the previous section we gave a general outlook on the concept of catalysis in quantum mechanics. We also described some basic mathematical tools 
	which will help us gain a better understanding of the mechanisms behind catalytic effects.
	Moreover, we saw in Lemma \ref{lem:fundamental_nogo} that catalytic effects can only emerge when at least one of the following conditions is satisfied: 
	\ref{cond:exactness} the catalyst becomes perturbed, \ref{cond:correlation} the catalyst develops some correlations with the other degrees of freedom, or \ref{cond:restrictions} the set of allowed operations is appropriately restricted. 
	The different ways in which these conditions can be combined and quantified lead to different types of catalysis, which we summarize in   
	Table~\ref{fig:catalytic-types}. Each type, in principle, induces a different set of transformation laws (i.e. what is possible and what is not) in a given resource theory. 
	While the variations between different types of catalysis may seem pedantic at first, we will see that these seemingly technical distinctions can give rise to a fundamentally different physical behaviour.
	
	In this section we define and discuss various types of catalysis arising from relaxing conditions \ref{cond:exactness} -- \ref{cond:restrictions} in the context of a general resource theory $\mc{R} = (\mc{S}, \mc{O})$. Later in Sec.~\ref{sec:applications}, we discuss their applications in different physical settings.

	\subsection{Strict catalysis}
	\label{subsec:strict}
	
	In the strictest formulation of catalysis, the catalyst $\ms{C}$ must be returned unperturbed and uncorrelated with main system $\ms{S}$ at the end of the process. Due to Lemma \ref{lem:fundamental_nogo}, non-trivial catalysis in this scenario is only possible if the set of allowed operations is sufficiently restricted. We refer to this category as \emph{strict catalysis}.
	
	\begin{definition}[Strict catalysis]
		A state transition from $\rho_{\ms S}$ to $\sigma_{\ms S}$ is called \emph{strictly catalytic} if there exists a free operation $\mc F\in \mc O$ and a quantum state $\omega_{\ms C}$ such that
		$\mc F[\rho_{\ms S}\otimes \omega_{\ms C}] = \sigma_{\ms S}\otimes \omega_{\ms C}$.
		We denote  a strictly catalytic state transition by
		\begin{equation} 
			\rho_{\ms S} \AC{\mathcal{O}}{} \sigma_{\ms S}.
		\end{equation}
	\end{definition}
	
	In this conservative type, catalysis results from the fact that the set of free operations $ \mathcal{O} $ is non-trivial, and therefore certain state transitions are forbidden. This allows to avoid the consequences of Lemma \ref{lem:fundamental_nogo} by relaxing, in particular, condition~\ref{cond:restrictions}. 
	Even so, one sees how correlations play a critical role:
	Firstly, despite the fact that the initial and final states of the system $\ms{S}$ and the catalyst  $\ms{C}$ are not correlated, they have to be correlated during the dynamical process that implements the state transformation.  In this sense, building temporary correlations between $\ms{S}$ and $\ms{C}$ is what actually enables the state transformation. 
	Secondly, when the set of free operations $ \mc{O} $ consists of non-unitary quantum channels (as is the case, e.g., in the resource theory of athermality), correlations can still build up between $\ms{S}$, $\ms{C}$ and an environment $\ms{E}$ which dilates the original non-unitary quantum channel. These correlations are of coursed formally lost upon discarding the environment.

	It is worth mentioning that any resource monotone $f$ with respect to $\mathcal{O}$ that is additive over tensor products, will also be a resource monotone for any strictly catalytic state transformation. More specifically,
	\begin{align}
		f(\rho_{\ms S}) + f(\omega_{\ms C}) = f(\rho_{\ms S}\otimes\omega_{\ms C})
		&\geq f(\sigma_{\ms S}\otimes\omega_{\ms C})\nonumber\\
		&=f(\sigma_{\ms S}) + f(\omega_{\ms C}).
	\end{align}
	Conversely, suppose $\mc O$ admits a \emph{complete family} of resource monotones: $f_\alpha(\rho_{\ms S}) \geq f_\alpha(\sigma_{\ms S})$ for all valid $\alpha$ implies $\rho_{\ms S} \xrightarrow[\mc O]{}  \sigma_{\ms S}$. 
	Then if the operations admit a \emph{single example} of non-trivial strict catalysis, at least one of the monotones $f_\alpha$ cannot be additive over tensor-products \cite{fritz_resource_2017}. 
	
	The earliest report of strict catalysis in a resource-theoretic framework was given in Ref.~\cite{jonathan_entanglement-assisted_1999} in the context of entanglement theory. 
	The authors affirmatively solved a conjecture of S. Popescu by showing, for the first time, that strict catalysis arises in LOCC (see Sec.~\ref{subsubsec:locc-proper}). 
	This result made clear that the mathematical structure of entanglement is much richer than previously expected.
	It was Ref.~\cite{jonathan_entanglement-assisted_1999} that started the systematic exploration of catalytic effects in entanglement theory and resource theories more generally.
	As discussed in our motivating example, previously a method of generating entangled states using strict catalysis was proposed in Refs. \cite{Phoenix_1993} and \cite{Cirac_1994}
	and connected to the term "catalyst" in \cite{Hagley_1997}. To the best of our knowledge, this is the first explicit appearance of the notion of catalysis in quantum information literature.
	\footnote{This prior example, however, falls outside of the resource theory of LOCC and is therefore little related to the structure of entanglement itself. Indeed, within LOCC it is not possible to generate new entanglement, not even using strict catalysis.}
	
	Naturally, one can also consider strictly catalytic transformations which produce the desired state on the system $\ms{S}$ \emph{approximately}. In this case, instead of transforming the system into the target state $\sigma_{\ms{S}}$, one transforms it into $\sigma_{\ms{S}}^{\varepsilon}$ which is $\varepsilon$-close to $\sigma_{\ms{S}}$. Typically, $\varepsilon$ is quantified using the trace distance in Eq.~\eqref{eq:tracedistance}, but, depending on the situation, other notions of distance can also be used.
	
	In general, relatively little is known on how to construct suitable catalyst states. Known constructions for various types of catalysis are discussed in Section~\ref{sec:constructions}.
	An extreme case arises with \emph{self-catalysis} \cite{duarte_self-catalytic_2016}, where $\omega_C = \rho_{\ms S}$, where a copy of the system catalyzes itself. 
	Finally, one can also consider variations of strict catalysis: 
	there are cases where two state-transitions impossible on their own mutually catalyze each other \cite{feng_mutual_2002}. 
	Instead of demanding that the catalyst is returned exactly, one could try to
	save as much of the resources that are being lost on $\ms S$ in the catalyst $\ms C$, thereby \emph{increasing} the resource content of $\ms C$ while facilitating a state conversion on $\ms S$. 
	This approach is called \emph{super-catalysis}, and was studied in the context of entanglement theory in \cite{Bandyopadhyay2002}. Both self-catalysis, super-catalysis as well as mutual catalysis remain little studied (in particular outside of LOCC), however, and provide an interesting avenue for further research.

	\subsection{Correlated catalysis}\label{subsec:corr_cat}

In strict catalysis, we allow for correlations between the environment $\ms E$ and the joint system $\ms {SC} $, but not between $\ms S$ and $\ms C$. Considering Condition \ref{cond:correlation}, a first step towards relaxing the requirements for strict catalysis is to allow for correlations between $\ms S$ and $\ms C$ to persist at the end of the process. 

	\begin{definition}[Correlated catalysis]
		A state transition from $\rho_{\ms S}$ to $\sigma_{\ms S}$ is called \emph{correlated catalytic} if there exists a free operation $\mc F\in \mc O$ and a quantum state $\omega_{\ms C}$ such that the state $ \sigma_{\ms{SC}} := \mc F[\rho_{\ms S} \ot \omega_{\ms C}]$ fulfils
		\begin{align}
			\tr_{\ms C} \left[\sigma_{\ms{SC}} \right] = \sigma_{\ms S}, \quad {\rm and~} 	\tr_{\ms S} \left[\sigma_{\ms{SC}} \right] = \omega_{\ms C}.
		\end{align}
		We denote a correlated catalytic state transition by
		\begin{equation} 
			\rho_{\ms S} \AC{\mathcal{O}}{\rm corr.} \sigma_{\ms S}.
		\end{equation} 
	\end{definition}
	
We will see in Sec. \ref{subsec:QM} that even when the set of operations $\mc{O}$ consists of all possible unitary operations, non-trivial catalytic effects persist.	

Allowing the catalyst to retain correlations with the system enables bypassing finite-size effects in state transitions 
	for many resource theories, see Sec.~\ref{subsec:partial-order-regularization}. 
	Specifically, the ability to correlate the system with the catalyst radically simplifies the state transition conditions, so that a single monotone is often sufficient to characterize all possible state transformations (see Sec.~\ref{subsubsec:noisy-corr}).
	If a catalyst $\ms C$ remains correlated with some system $\ms{S}$, and is then reused for the same state transition $\rho_{\ms S}\AC{\mc{O}}{\rm corr.} \sigma_{\ms S}$ on a different system $\ms S'$, then the resulting joint state of both systems $\sigma_{\ms{SS'}}$ will, in general, be correlated \cite{vaccaro_is_2018,boes_von_2019,boes_by-passing_2020}:
	\begin{align}
		\sigma_{\ms{SS'}} \neq \sigma_{\ms S}\ot \sigma_{\ms S'}.
	\end{align}
	As emphasized in Section~\ref{sec:concept}, whether such residual correlations are problematic, however, depends on the concrete physical context. In particular, one may envision situations where the ability to bypass finite-size effects overcomes the drawback of residual correlations. 
	In Sec.~\ref{subsec:partial-order-regularization} we will further see that, for a large class of resource theories, the residual correlations can be made arbitrarily small, when sufficiently large catalysts are being used. On the other hand, \cite{rubboli2021fundamental} shows that arbitrarily small residual correlations require arbitrarily large catalysts (as measured by resource content) in a wide class of resource theories. Recent works investigated the interplay between quantum and classical correlations that a catalyst $\ms{C}$ may develop with an external reference $\ms{R}$, under quantum channels that only have local access to $\ms{C}$. In particular, Ref.~\cite{lie2023catalysis} found that any such process must degrade genuinely quantum correlations between subsystems $\ms{C}$ and $\ms{R}$.
	
	
	Strict catalysis typically imposes strong constraints on state interconvertibility.
	A key guiding intuition of why correlations between $\ms S$ and $\ms C$ help to overcome these constraints is that, in any resource theory $\mc{O}$ which allows for permuting systems, we have that for any state $\rho_{\ms S_1 \ms S_2}$,
	\begin{align}
		\rho_{\ms S_1 \ms S_2}\AC{\mathcal{O}}{\rm corr.} \rho_{\ms S_1}\otimes \rho_{\ms S_2}.  
	\end{align}
	To see this, choose a copy of $\ms S_2$ as the catalyst, i.e. take $\omega_{\ms{C}} =\rho_{\ms{S}_2}$, and choose the free operation that simply swaps the catalyst $\ms{C}$ with $\ms S_2$. Thus, with the help of correlated catalytic transformations, one can freely decorrelate subsystems. Furthermore, any monotone for correlated catalytic transformations has to fulfil $f(\rho_{\ms S_1 \ms S_2})\geq f(\rho_{\ms S_1}\otimes \rho_{\ms S_2})$. 
	This often rules out constructions of monotones based on R\'enyi divergences, because the quantum relative entropy is the unique continuous and superadditive R\'enyi divergence.
	In other words, 
	\begin{equation} 
		D\left(\rho_{\ms A \ms B} \| \sigma_{\ms A} \otimes \sigma_{\ms B}\right) \geq D\left(\rho_{\ms A} \| \sigma_{\ms A}\right)+D\left(\rho_{\ms B} \| \sigma_{\ms B}\right)
	\end{equation} 
	holds for all states $ \rho_{\ms {AB}} $; while for any $D_\alpha$ with $ \alpha \neq 1 $, counter-examples to the inequality can be found. 
	
	Refs. 
	\cite{lostaglio_stochastic_2015,muller2016generalization} were the first works to make use of the fact above, in the context of a slightly different scenario called \emph{marginal-correlated catalysis}. In this situation, a multipartite catalyst is used, and a state transition is possible via marginal-correlated catalysis when there exists a free operation $ \mathcal{F} $, and an initially uncorrelated catalyst
	\begin{equation}
		\omega_{\ms{C_1\cdots C_n}} = \bigotimes_{i=1}^n \omega_{\ms{C_i}}
	\end{equation}
	such that
	$\mathcal{F}	[\rho_{\ms S}\otimes 	\omega_{\ms{C_1\cdots C_n}}   ] = \sigma_{\ms S} \otimes \omega_{\ms{C_1 \cdots C_n}}'$,
	where for each $ i = 1, \ldots, n$ we have $ \omega_{\ms{C_i}}' = \omega_{\ms{C_i}}$. 
	In other words, instead of allowing for final correlations to persist between system and catalyst, one allows for correlations to exist between different parts of the catalyst. 
	
	An immediate question arises as to the relationship between correlated catalysis and marginal-correlated catalysis. 
	If $\rho\AC{\mathcal{O}}{\rm corr.}\sigma$ using a catalyst $\omega_{\ms{C}_1}$, and the set $\mc O$ includes permutations of subsystems, then the state transition is also possible via marginal-correlated catalysis, using the catalyst $\widetilde\omega_{\ms C_1 \ms C_2} := \omega_{\ms C_1}\otimes \sigma_{\ms C_2}$.
	To see this, first use the $\ms{C_1}$ part of the catalyst in the same way as for the correlated-catalytic state transition $\rho_{\ms S} \AC{\mathcal{O}}{\rm corr.} \sigma_{\ms S}$, and then swap the system $\ms{S}$ with the second part of the catalyst ($\ms{C_2}$). Consequently, the set of marginal-correlated catalytic state transitions includes the set of correlated catalytic state transitions.
	Marginal-correlated catalysis often leads to similar (or even the same) state transition conditions as correlated catalysis. 
	However, its physical significance is less clear, since in general the catalyst cannot be reused, even when starting with a fresh copy of the system. 
	Moreover, in the resource theory of asymmetry for time-translation, marginal-correlated catalysis can essentially trivialize all state transition conditions \cite{Takagi2022}, see also Sec.~\ref{subsec:no-broadcasting}. This can be seen as a particular form of the general \emph{embezzlement} phenomenon, which we discuss in more detail in Sec.~\ref{subsubsec:embezzlement}.

	\subsection{Arbitrarily strict catalysis}\label{sub:arb_strict}
	So far, in our classification of catalysis, we have not allowed for any errors on the catalyst. However, from a physical point of view, this is likely a stringent restriction -- it is practically impossible for any physical system to undergo an evolution and be returned in \emph{exactly} the same state. Hence, for all practical purposes, it should be sufficient if the catalyst can be returned with a small perturbation. This, in view of Lemma \ref{lem:fundamental_nogo}, corresponds to relaxing condition \ref{cond:exactness}.
	
	A subtle point that can be easily missed is the distinction between two scenarios in which the error on the catalyst can be defined. In the first scenario, one allows for a small error for a \emph{fixed} state of the catalyst. In the second scenario, one first fixes the error, and then constructs the catalyst (which generally depends on the error). Perhaps surprisingly, these two scenarios are significantly different.
	In this Section we will discuss the former, and then address the latter in Sec.~\ref{subsec:app_cat}.
	
	For clarity of presentation we assume that the errors occur on a fixed subsystem of the catalyst.
	To emphasize this distinction, we denote this part with $ \ms A $ (for auxiliary system). 
	\begin{definition}[Arbitrarily strict catalysis] 
		A state transition from $\rho_{\ms S}$ to $\sigma_{\ms S}$ is called \emph{arbitrarily strictly catalytic} if there exists a quantum state $\omega_{\ms C}$, a sequence of quantum states $\lbrace\omega_{\ms{A}}^{(n)}\rbrace_{n=0}^\infty$ on a fixed, finite-dimensional quantum system $\ms A$, and a sequence of free operations $\mc F^{(n)}\in \mc O$ such that:
		\begin{enumerate}
			\item For each $n\geq 1$ we have
			\begin{align}
				\mc F^{(n)}[\rho_{\ms S}\otimes \omega_{\ms C}\otimes \omega_{\ms A}^{(0)}] = \sigma_{\ms S}\otimes \omega_{\ms C}\otimes \omega_{\ms A}^{(n)}.
			\end{align}
			\item The state of the auxiliary system $\ms A$ changes arbitrarily little as $n\rightarrow \infty$, i.e.
			\begin{align}
				\lim_{n\rightarrow \infty} \Delta(\omega_{\ms A}^{(n)},\omega_{\ms A}^{(0)})= 0.
			\end{align}
		\end{enumerate}
		We denote arbitrarily strictly catalytic state transition by
		\begin{equation} 
			\rho_{\ms{S}} \AC{~~~\mathcal{O}~~~}{\emph{arb.}} \sigma_{\ms{S}}.
		\end{equation}
	\end{definition}
	In the example of the resource theory of purity, an alternative, but equivalent, definition of arbitrarily strict catalysis was given in Ref.~\cite{gour_resource_2015} (see Def. $45$ therein). 
	Arbitrarily strict catalysis was first introduced as \emph{exact catalysis} in Ref.~\cite{brandao_second_2015}.
	It might be confusing as to why this catalytic type should be called exact, given that an error does occur on the catalyst. However, there is a good consensus that a state transition which achieves its target state with arbitrarily good precision should be referred to as being exact, if the systems involved are of fixed dimension  [see discussions in Refs. \cite{brandao_second_2015,gour_resource_2015}]. 
	Nevertheless, here we have decided to use the term "arbitrarily strict catalysis" to prevent misunderstandings.  
	
	It is natural to wonder when arbitrarily strict catalysis actually differs from strict catalysis. 
	For some resource theories, such as (LOCC) entanglement theory and athermality, the state transition conditions can be highly sensitive to the rank of the density matrices. Notice, however, that the rank is not a continuous function of the density operator. In particular, any density operator can be approximated arbitrarily well with a full rank operator. In this sense, arbitrarily strict catalysis often allows to regularize such instabilities. 
	To see this, consider two density matrices $\rho_{\ms{S}}$ and $\sigma_{\ms{S}}$, and let $\sigma^{\epsilon}_{\ms{S}}$ be a full-rank approximation of $\sigma_{\ms{S}}$, satisfying  $\Delta(\sigma_{\ms{S}},\sigma^{\epsilon}_{\ms{S}})  < \epsilon$. 
	Suppose that $\rho_{\ms{S}} \AC{\mathcal{O}}{} \sigma_{\ms{S}}^{\epsilon}$ holds for any arbitrarily small $\epsilon > 0$, but not for $\epsilon = 0$. 
	Then, whenever $\mc O$ allows for permuting subsystems, we have $\rho_{\ms S}\AC{\mathcal{O}}{\text{arb.}} \sigma_{\ms S}$. 
	More specifically, choose $\omega_{\ms{A}}^{(0)} = \sigma$ and $\omega_{\ms{A}}^{(n)} = \sigma^{\epsilon_n}$ with $\epsilon_n=1/n$ for all $n > 0$, so that  
	\begin{align}
		\rho_{\ms{S}} \ot {\sigma}_{\ms{A}}\quad \AC{\mathcal{O}}{} \quad\sigma_{\ms{S}}^{\epsilon_n} \ot \sigma_{\ms{A}} \quad\xrightarrow[\mathcal{O}]{} \quad\sigma_{\ms{S}} \ot \sigma_{\ms{A}}^{\epsilon_n},
	\end{align}
	where in the second step we swapped subsystem $\ms{A}$ with $\ms{S}$.  Thus, we see that with the help of arbitrarily strict catalytic transformations, one can avoid the instabilities resulting from changing the rank of a density operator.

	\subsection{Approximate catalysis}\label{subsec:app_cat}
	We now discuss the second scenario that corresponds to relaxation of \ref{cond:exactness} in Lemma \ref{lem:fundamental_nogo}. In this case, one fixes the magnitude of the allowed perturbation of the catalyst, and then construct its state. 
	Generally, this leads to dynamical constraints which are less tight than in the case of arbitrarily strict catalysis. In other words, approximate catalysis usually enables state transformations previously impossible under arbitrarily strict catalysis. Formally, we define
	\begin{definition}[Approximate catalysis]
		\label{eq:approx_cat_def}
		A state transition from $\rho_{\ms S}$ to $\sigma_{\ms S}$ is called \emph{$\epsilon$-approximate catalytic} with respect to the distance measure $\dist(\cdot, \cdot)$ if there exist two quantum states $\omega_{\ms C}$ and $\omega_{\ms C}'$, and a free operation $\mc F\in\mc O$ such that
		\begin{align}
			\label{eq:approx_cat_def2}
			\mc F[\rho_{\ms S} \ot \omega_{\ms C}] = \sigma_{\ms S} \ot \omega_{\ms C}' \quad \text{and}\quad \dist\left(\omega_C', \omega_C\right) &\leq \epsilon.
		\end{align}
		We refer to $\epsilon \geq 0$ as the ``smoothing parameter'' and denote an $\epsilon$-approximate catalytic state transition as
		\begin{align}
			& \rho_{\ms S} \AC{~~~\mathcal{O}~~~}{\emph{approx.}} \sigma_{\ms S}.
		\end{align}
	\end{definition}
	From a  physical perspective, approximate catalysis provides a more realistic framework for investigating catalytic effects, due to reasons argued in the earlier section. 
	Another perspective that motivates approximate catalysis is that some physical processes, due to fundamental reasons, must change the state of the catalyst (ancilla). 
	For example, a problem often studied in thermodynamics is that of minimizing the energy dissipated to the environment. 
	In this case, non-trivial dynamics on the system can only be achieved when the state of the environment changes. 
	From the perspective of approximate catalysis, this problem amounts to finding an appropriate catalytic environment that suffers minimal back-action from the system. (see Sec.~\ref{subsub:dissipation} for more details)
	
	From a mathematical perspective, approximate catalysis can be viewed as an interpolation between strict catalysis and generic activation phenomena. 
	Activation is a phenomenon demonstrating that quantum systems processed in assistance with another, are strictly more useful than when the systems are processed independently. 
	The standard example comes from thermodynamics where the so-called passive states, i.e. states which cannot perform any thermodynamic work, can provide work when processed collectively \cite{Lenard1978,Pusz1978,Alicki2013}. Similar examples can be found in entanglement theory, where ancillary states can reveal non-local properties of quantum states \cite{Masanes2006,Palazuelos2012,cavalcanti2013all,Yamasaki_2022}. 
	In this sense, approximate catalysis puts further restrictions on how the activator might be processed, with the aim of preserving the quality of the activator for future transformations. 
	The second extreme case of approximate catalysis is the already discussed case of (arbitrarily) strict catalysis (see Sec. \ref{sub:arb_strict}). 
	Therefore, for any $\epsilon \geq 0$, we obtain a different set of dynamic constraints, which translates into a different partial order between states. 
	In the language of resource theories this is often referred to as a ``regularization'' of the underlying partial order. In Sec.~\ref{subsec:partial-order-regularization} we describe this effect in more detail.
	We emphasize that even though the \emph{possible state-transitions} depend sensitively on the error $\epsilon$ in the chosen distance-measure, the realized final state of system and catalyst in a state-transformation is stable to small perturbations to their initial states. See also \cite{vidal_approximate_2000} for an early discussion of the robustness of catalysis in the context of LOCC.
	
	The choice of the distance measure $\dist(\cdot, \cdot)$ used in Def.~\ref{eq:approx_cat_def} is essential. Not only must it quantify the closeness of $\omega_{\ms C}'$ to $\omega_{\ms C}$, but should also preserve its catalytic properties (or resource content). 
	It is perhaps surprising that these two requirements are not always simultaneously satisfied. 
	For example, the trace-distance $\Delta(\cdot, \cdot)$ is often used as a default distance measure between two quantum states. This is because of its strong operational meaning in terms of distinguishing states via an optimal measurement \cite{Helstrom_1969,nielsen2002quantum}, and its desirable properties as a metric fulfilling the data-processing inequality (see Sec.~\ref{subsubsec:data-processing}). 
	However, even if an arbitrarily small (fixed) perturbation $\epsilon>0$ in trace-distance is allowed on the catalyst, then, in a large class of resource theories, \emph{any} state-transition can become possible. This phenomenon is known as "embezzlement" and we discuss it in Sec.~\ref{subsubsec:embezzlement} and Sec.~\ref{subsubsec:approx-cat-construction}. 
	One way of understanding this is that, although two states can be close to each other in terms of the trace-distance, extensive quantities can still grow with the logarithm of the Hilbert-space dimension, see also Sec.~\ref{subsubsec:non-continuity}. Therefore for any given target error $\epsilon$, one can always choose a large enough dimension such that the change in resource monotone is significant. 
	
	To prevent embezzlement, one can consider a stronger restriction, for example that the deviation in an extensive quantity must be small, e.g. $\dist(\rho,\tilde\rho) =|H(\rho) - H(\tilde\rho)|$.
	A more stringent candidate would be to ask for the largest deviation in a set of quantities (monotones) $\{f_\alpha\}_\alpha$, e.g.
	\begin{equation} 
		\dist(\rho,\tilde\rho) = \sup_{\alpha} |f_\alpha(\rho) - f_\alpha(\tilde\rho)|,
	\end{equation}
	or even a linear combination of several such distance measures.
	As an alternative to changing $d(\cdot, \cdot)$, one may also let the allowed error $\epsilon$ depend on different parameters such as the dimension of the state, see \cite{brandao_second_2015} and Sec.~\ref{subsubsec:noisy-approx}. 
	To summarize, the choice of $d(\cdot, \cdot)$ and allowed error $\epsilon$, in general, qualitatively change the landscape of possible state transitions. 
	
	When the state $\omega_{\ms C}$ of $\ms C$ changes, its ability to catalyze other transformations generally decreases, and therefore it becomes less useful. 
	Still, if the final state $\omega_{\ms C}'$ of the catalyst is sufficiently close to $\omega_{\ms C}$,
	it may be possible to re-use it without significantly affecting the transformation on ${\ms S}$. 
	Concretely, consider a free transformation $\mc F$ that leads to a perturbed catalyst $\omega_{\ms C}'$ satisfying Eq.~\eqref{eq:approx_cat_def2}, where $\dist(\cdot, \cdot)$ satisfies the triangle-inequality and data-processing inequality (see Sec.~\ref{subsubsec:data-processing}). Re-using the same catalyst on a new copy of the system $\ms S$ and $\mc{F}$ leads to a state $\widetilde\sigma_{\ms S} := \Tr_{\ms C}[\mc F[\rho_{\ms S} \ot \omega_{\ms C}']]$, where 
	\begin{align}
		\dist(\widetilde\sigma_{\ms S}, \sigma_{\ms S}) :&= \dist\left(\Tr_{\ms C} \mc F(\rho_{\ms S} \ot \omega_{\ms C}'), \sigma_{\ms S} \right) \nonumber\\
		&\leq \dist \left(\rho_{\ms S} \ot \omega_{\ms C}',\rho_{\ms S} \ot \omega_{\ms C}\right)\quad\leq \epsilon,
	\end{align}
	which follows from data-processing and triangle inequality, with Eq. (\ref{eq:approx_cat_def2}). 
	Suppose we now want to use the same catalyst $n$ times to implement $\mc F$ (each time with a fresh copy of $\rho_\ms{S}$). To look at the resulting state, let us introduce intermediate states for each step $0 \leq i \leq n$, 
	\begin{align}
		\sigma_{\ms S}^{(i+1)} &:= \Tr_{\ms C} \mc F\left[\rho_{\ms S} \ot \omega_{\ms C}^{(i)}\right], \\
		\omega_{\ms C}^{(i+1)} &:= \Tr_{\ms S} \mc F\left[\rho_{\ms S} \ot \omega_{\ms C}^{(i)}\right],
	\end{align}
	with identifications $\sigma_{\ms S}^{(1)} := \sigma_{\ms S}$ and $\omega_{\ms C}^{(0)}:= \omega_{\ms C}$. The state of the system and the catalyst after $k$ transformations satisfy
	\begin{align}
		\dist(\sigma_{\ms S}^{(k)}, \sigma_{\ms S}) \leq k \epsilon \quad \text{and} \quad \dist(\omega_{\ms C}^{(k)}, \omega_{\ms C}) \leq k \epsilon,
	\end{align}
	which is a again the consequence of triangle inequality, data-processing inequality, and Eq. (\ref{eq:approx_cat_def2}).  
	The catalyst can therefore be used approximately $k = \alpha/\epsilon$ times before it accumulates an error of size $\alpha$. 
	Intuitively, in approximate catalysis, we try to make $\epsilon$ as small as possible, so that the catalyst retains its catalysing properties. 
	From the above discussion it should be clear that quantifying the perturbation in the state of the catalyst, as well as identifying its usefulness for future applications, is at the core of approximate catalysis. 
	
	We close this section with a comment on nomenclature. 
	Of course one can also consider combinations of different types of catalysis that arise from lifting several restrictions at once. 
	For example, by allowing for an error on the catalyst $\ms{C}$ \emph{and} correlations between $\ms S$ and $\ms C$, one can study \emph{correlated approximate catalysis}. In fact, some authors simply use the term "approximate catalysis" for that setting, see for example Ref.~\cite{Datta2022}.  
	
	\subsection{Embezzlement}\label{subsubsec:embezzlement}
	In the last section, we saw that when the state of the catalyst is allowed to change, it is often possible to unlock transformations that are impossible under (arbitrarily) strict catalysis.
	This naturally comes at the cost of reducing the usefulness of the catalyst for future transformations. One might argue that in such cases, rather than using $\ms{C}$ catalytically, one is consuming it as a resource, non-free state.
	However, surprisingly sometimes it is possible for the ``catalyst" to completely lift all of the relevant dynamical constraints, but in the same time suffering almost no reduction of its catalytic capabilities. 
	This phenomenon was first reported in Ref.~\cite{van_dam_universal_2003}, where it was dubbed \emph{embezzlement}. 
	
	Embezzlement provides a mechanism for simulating forbidden dynamics on $\ms{S}$, using operations that are free on the composite system $\ms{SC}$, while perturbing $\ms{C}$ arbitrarily little in trace distance.
	Formally, it is defined as follows:
	
	\begin{definition}[Embezzlement]\label{def:emb}
		Given a system $ \ms S $ and a set of free operations $ \mathcal{O} $, a state $ \omega_{\ms C} $ is a $\delta$-embezzler if, for all $\rho_{\ms S}$ and $\sigma_{\ms S} \in \mathcal{D}(\ms{S})$, there exists $\mathcal{F} \in \mathcal{O}$ such that
		\begin{align}
			\label{eq:embezzlement_def_eq1}
			\Delta\left(\mathcal{F}[\rho_{\ms S} \ot \omega_{\ms C}],\sigma_{\ms S} \ot \omega_{\ms C}\right)\leq \delta.
		\end{align}
		An embezzling family is a sequence of $\delta_n$-embezzlers $ \lbrace \omega_{\ms C}^{(n)}\rbrace_{n}$ such that $\delta_n \rightarrow 0$ as $n \rightarrow \infty$.
	\end{definition}
	
	At first sight, embezzlement can be viewed as a particular instance of approximate catalysis (see Sec. \ref{subsec:app_cat}). However, it was initially surprising that approximate catalysis can fully trivialize \emph{arbitrary} state transitions, while using a single catalyst which is not fine-tuned for a particular initial system state.
	The very first example of an embezzlement family was introduced in $\mathcal{R}_{\text{LOCC}} $ in Ref.~\cite{van_dam_universal_2003}, see Sec.~\ref{subsubsec:LOSR}.
	In Sec.~\ref{subsubsec:approx-cat-construction}, we describe a general construction for embezzlers that works whenever permutations of subsystems are allowed. This construction first appeared in Ref.~\cite{Leung2013coherent}.

	Embezzlement results from the fact that trace distance is not sensitive enough to capture the difference in resource content between two quantum states. 
	In other words, when a quantum channel $\mathcal{E}$ transforms one state into another, it can happen that there is no optimal quantum measurement that can determine (signifcantly better than a random guess) whether the resource has been consumed during $\mathcal{E}$.  
	
	Let us describe a simple example illustrating the above idea. Consider a wave function of a harmonic oscillator that is spread over many energy levels, e.g. a coherent state of light with a large photon number. Such states are usually used to describe light emitted from a laser with a fixed frequency (see Sec.~\ref{subsec:circumventing_conservation_laws}). 
	Shifting the wave function in the energy space can change the expected energy (photon number) by a fixed amount, while in the same time perturbing the state arbitrarily little in terms of trace distance. In this case, even when a single photon disappears from the beam of light, there is no physical process that can determine with high probability whether this process actually happened. See Fig. \ref{fig:emb_energy} for details.
	
	\begin{figure}[t]
		\includegraphics[width=\linewidth]{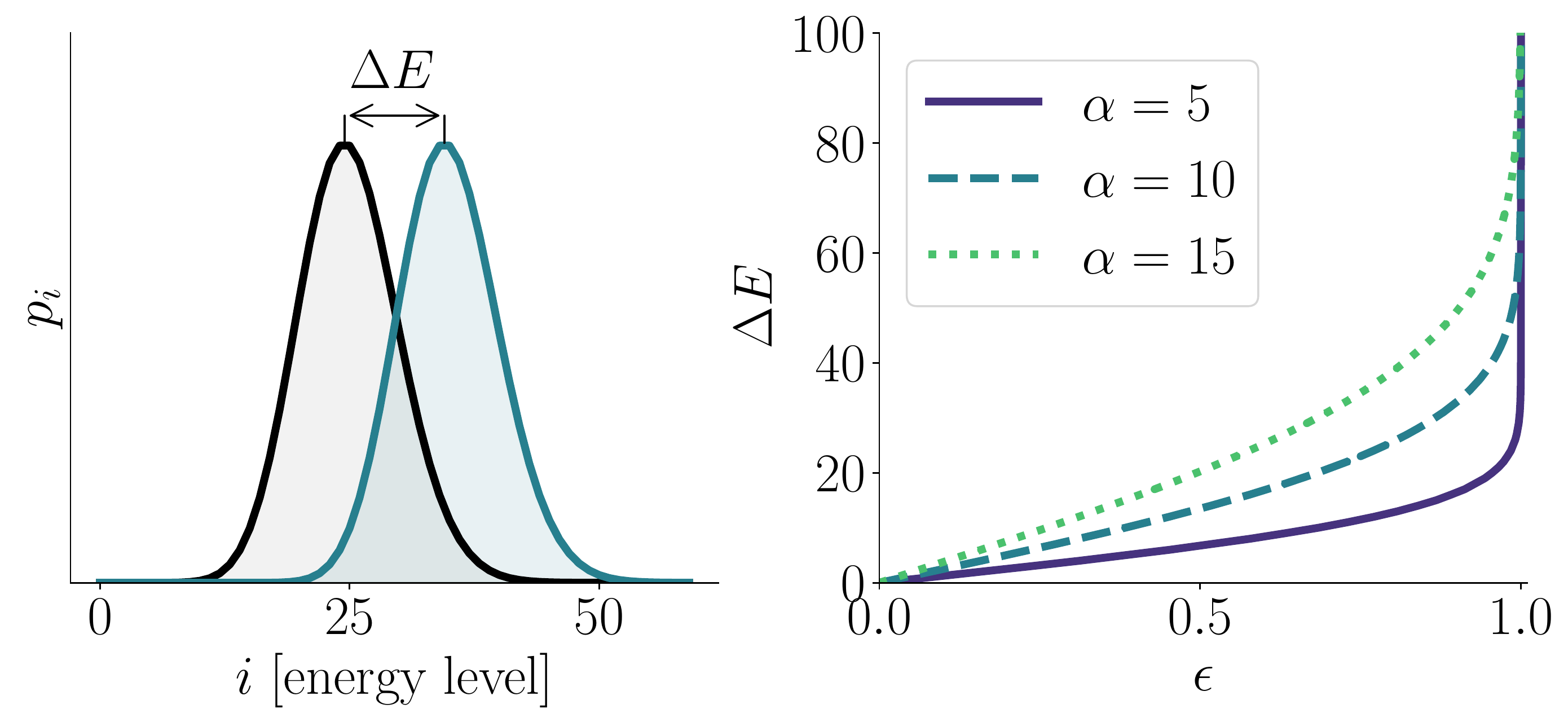}
		\caption{\textbf{Embezzlement of energy.} Consider a coherent state of light $\ket{\alpha}$ with energy levels $\ket{i}$ whose energy variance is given by $\text{Var}(E) = |\alpha|^2$ (see Eq. (\ref{eq:coh_state}) for details). The left panel shows the change in the occupation $p_i = \langle i |\alpha|i \rangle$ of the energy level $i$, obtained by shifting the state by $\Delta E$, i.e. $\ket{\alpha} \rightarrow \ket{\beta} = S \ket{\alpha}$ with $S = \sum_{i=0}^{\infty} \dyad{i+\Delta E}{i}$. The right panel shows the relationship between $\Delta E$ and trace distance $\epsilon := \frac{1}{2}\norm{\alpha-\beta}_1$ for different values of $\alpha$: A large energy variance allows to change the average energy signifcantly while maintaining a large overlap with the initial state.
			}
		\label{fig:emb_energy}
	\end{figure}
	
	In practice, the above mechanism can be used to implement (coherent) unitary operations on the system $\ms S$ that seemingly violate conservation of energy. This is achieved using an interaction between the system $\ms{S}$ and a laser field $\ms{C}$ which is globally energy-preserving, while perturbing the state of the field arbitrarily little. Indeed, as we discuss in Sec.~\ref{subsec:circumventing_conservation_laws}, whenever a large-dimensional system $\ms{C}$ is used to implement a unitary dynamics on $\ms{S}$, it can be done in such a way that the state on $\ms{C}$ is perturbed arbitrarily little. 
	Thus, in a sense, embezzlement is quite ubiquitous in physics, and occurs whenever a ``macroscopic'' quantum system is used to implement a controlled unitary dynamics on a smaller quantum system.
	In this case, the system $\ms C$ can be seen to act as a macroscopic \emph{reference frame} for the time-translation group, see also Sec.~\ref{subsec:asymmetry} for more results relating catalysis to quantum reference frames.
	In Sec.~\ref{subsubsec:non-continuity} we also discuss Ref.~\cite{Leung2019}, which shows how embezzlement 
	can lead to interesting \emph{mathematical} results in quantum theory, namely that there is a limit to how continuous extensive quantities can be. 

	\subsection{Infinite-dimensional catalysis}
	\label{subsec:infinite-dimensional}
	So far we have almost exclusively discussed quantum systems described by finite-dimensional Hilbert spaces. 
	While this is very natural from the perspective of quantum information theory, physics also requires infinite dimensional Hilbert spaces.
	This is exemplified by continuous-variable systems, such as quantum harmonic oscillators, or more generally, by quantum field theory.
	It is therefore reasonable to ask how the phenomenon of catalysis changes when we allow for infinite-dimensional catalysts. 
	
	In this discussion, we keep the system ${\ms S}$ finite-dimensional to highlight the key differences between finite and infinite-dimensional catalysis. For a discussion of catalytic effects in continuous-variable systems and quantum optics we refer to Sec.~\ref{subsec:optics}.  
	We also restrict our attention to strict and correlated catalysis, since embezzlement already shows that allowing for fixed errors (even though arbitrarily small) on the catalyst trivializes the state transition problem in infinite dimensions. 
	Furthermore, we will also discuss the notion of \emph{perfect embezzlement}, which sheds light on some fundamental questions about the formulation of quantum theory. 
	It was discovered early on in Ref.~\cite{daftuar_sumit_kumar_eigenvalue_2004} that there is a strict difference between catalysis with $(i)$ finite-dimensional catalysts, but of arbitrarily large dimension, and $(ii)$ infinite-dimensional catalysts. Specifically, in $\mathcal{R}_{\rm LOCC}$, there exist bipartite states $\ket\psi$ and $\ket\phi$ such that $\ket \phi$ is reachable from $\ket \psi$ when using $(ii)$, but not with $(i)$.
	Moreover, Ref.~\cite{aubrun_catalytic_2008,aubrun_stochastic_2009} showed that the target state $\ket{\phi}$ cannot even be reached with arbitrarily small error with finite-dimensional catalysts. The same reference also made progress towards characterizing the set of reachable states (of a fixed dimension) using infinite-dimensional catalysts. However, to our knowledge, no complete characterization of this set has been given. Since the results described above are all based on majorization, they immediately transfer to the resource theory of noisy operations (see Sec.~\ref{sec:illustration}).

	We close this section by discussing embezzlement using infinite-dimensional systems. 
	Based on the discussion in Sec.~\ref{subsubsec:embezzlement} one may be tempted to expect  
	that an infinite-dimensional catalyst ${\ms C}$ could be used to perform \emph{perfect} embezzlement -- no changes in the catalyst, while still any state on a finite-dimensional system ${\ms S}$ could be achieved (with an arbitrary accuracy). 
	However, as is usual with infinite-dimensional spaces, subtleties exist that must be carefully adressed. 
	Indeed, Refs. \cite{Luijk2024a,Luijk2024b,cleve_perfect_2017} show that \emph{local unitary} perfect embezzlement is impossible in the framework of LOCC if space-like separated
	parties are modelled by tensor products of Hilbert spaces, as is common in quantum information theory. 
	In contrast, perfect embezzlement becomes possible in a so-called \emph{commuting operator framework}, see also Sec.~\ref{subsubsec:non-locality}. 
	This is common in quantum field theory, where space-like separated parties are modelled by commuting operators on a single, infinite dimensional Hilbert space. 
	In this context, Ref.~\cite{cleve_perfect_2017} showed that, for every finite dimension $d$, there exists a quantum state $\ket{\Omega}$ on a separable and infinite-dimensional Hilbert space $\mc R$, such that, for every $\ket{\psi}\in \mathbb C^d\otimes \mathbb C^d$, there exist unitaries $u_\psi$ and $v_\psi$ on $\mathbb C^d\otimes \mathcal R$ and $\mathcal R\otimes \mathbb C^d$, respectively, which satisfy
	\begin{align}\label{eq:embezzling_infinite_cleve}
		(u_\psi\otimes \id)(\id\otimes v_\psi)\ket{0}\otimes \ket\Omega\otimes \ket{0} =  \ket{\psi}\otimes\ket{\Omega},
	\end{align}
	for some fixed $\ket{0}\in \mathbb C^d$ and such that $[u_\psi\otimes\id,\id\otimes v_\psi]=0$. Eq. \eqref{eq:embezzling_infinite_cleve} should be read with an implicit re-ordering of tensor factors.
	
		The construction by \cite{cleve_perfect_2017} has the drawback that in general $[u_\psi\otimes \id, \id\otimes v_\varphi]\neq 0$ if $\ket\psi$ and $\ket\varphi$ are different states (arising, e.g., because both parties attempt to embezzle different states). This shows that a proper bipartite structure is missing. 
		Refs.~\cite{Luijk2024a,Luijk2024b} show that if Minkowski spacetime is partitioned into two wedges $\mathsf L$ and $\mathsf R$, each of which is interpreted as one local system, then \emph{all pure states} in the vacuum representation of relativistic quantum field theories
		allow to embezzle \emph{any finite dimensional entangled state} $\ket{\psi}\in\mathbb C^d\otimes\mathbb C^d$ to \emph{arbitrary precision} in the sense that for every $\varepsilon>0$ there exists local unitaries $u_{\psi,\varepsilon},v_{\psi,\varepsilon}$ such that
		\begin{align}
			\norm{u_{\psi,\varepsilon}v_{\psi,\varepsilon}\ket{0}_{\mathsf A}\otimes\ket{0}_{\mathsf B} \otimes \ket{\Omega} - \ket\psi\otimes\ket\Omega} <\varepsilon,
		\end{align}
		where $u_{\psi,\varepsilon}$ acts on $\mathsf{LA}$ and $v_{\psi,\varepsilon}$ acts on $\mathsf{RB}$, respectively.
		Importantly, $[u_{\psi,\varepsilon},v_{\varphi,\varepsilon'}]=0$ for all states $\ket\psi,\ket\varphi$, because they are elements of commuting von Neumann algebras $\mathcal M_{\ms{LA}}$ and $\mathcal M_{\ms{RB}}$ associated to the different subsystems.  
		These results are shown by establishing a quantitative connection between embezzlement of entanglement and the classification of von Neumann algebras.

	In Sec.~\ref{subsubsec:non-locality}, we will see that the (non-)existence of perfect embezzlers is closely related to the fact that only certain kinds of correlations can exist in quantum theory. 
	Moreover, in Sec.~\ref{subsubsec:CV}, we also discuss catalysis in the context of Gaussian states and operations on continuous variable systems, where both the catalyst $\ms{C}$ and the system $\ms{S}$ are infinite-dimensional. Note, however, that the Gaussian framework is special because states and operations can be fully represented using finite-dimensional matrices.
	
	Finally, we are not aware of studies of perfect embezzlement outside of the pure-state LOCC framework. Still, it is possible to make the following general remark: If the resource theory in question has an additive monotone $f$, then it must diverge on a perfect embezzler: 
	Consider a transformation $\rho_{\ms S}\rightarrow \sigma_{\ms S}$ using a (hypothetical) perfect embezzler $\omega_{\ms C}$. Then 
	\begin{align}\label{eq:perfect-embezzlement-monotone}
		f(\rho_{\ms S}) + f(\omega_{\ms C}) \geq f(\sigma_{\ms S}) + f(\omega_{\ms C})
	\end{align}
	by monotonicity under free operations and additivity.
	But since we are considering embezzlement, we can choose $\rho_{\ms S},\sigma_{\ms S}$ so that $f(\rho_{\ms S}) < f(\sigma_{\ms S})$, which leads to a contradiction with \eqref{eq:perfect-embezzlement-monotone} unless $f(\omega_{\ms C})=\infty$.
	For example, a perfect embezzler in the resource theory of quantum thermodynamics would require an infinite amount of free energy if it were to exist (mathematically).
	
	\subsection{State-independent catalysis}\label{subsec:state-independent}
	
	It is natural to consider the possibility that a given catalyst $\ms{C}$ prepared in a state $\omega_{\ms C}$ remains a catalyst not only for a single state, but for a \emph{set} of input states. Given a free operation $\mc F: \mc D(\ms S\ms C)\rightarrow \mc D(\ms S'\ms C)$ in a resource theory $\mc{R}$, and a density operator $\omega_{\ms C}$, we can define the associated set of catalytic states $\mc C(\mc F,\omega_{\ms C})$ by
	\begin{align}
		\mc C(\mc F,\omega_{\ms C}):=\left\lbrace \rho_{\ms S} \big|	\tr_{\ms S'}\left[\mc F[\rho_{\ms S}\otimes\omega_{\ms C}]\right] = \omega_{\ms C}\right\rbrace.
	\end{align}
	The set $\mc C(\mc F,\omega_{\ms C})$ may of course be empty. The other extreme case occurs when $\ms C$ is a catalyst for any input density matrix. In this case we have the following definition.
	
	\begin{definition}[Catalytic quantum channel] 
		The pair $(\mc F,\omega_{\ms C})$ is called a \emph{catalytic quantum channel} if $\mc C(\mc F,\omega_{\ms C})=\mc D(\ms S)$.
	\end{definition}
	Despite the natural definition, state-independent catalysis has received comparably little attention so far. Ref.~\cite{vidal_catalysis_2002} show how certain unitary operations can be realized under the constraints imposed by LOCC, using a catalyst that cannot be realized otherwise. 
	Furthermore, Ref.~\cite{lie_randomness_2021} considers the special case where $\mc F[\cdot]=U(\cdot)U^\dagger$ is a unitary channel acting on $\ms S\ms C$. They refer to the effective quantum channel $\mc E(\cdot) := \Tr_{\ms{C}}[U((\cdot)_{\ms{S}} \ot \omega_{\ms{C}})U^{\dagger}]$ induced on $\ms{S}$ by the pair $(U,\omega_{\ms C})$ as a \emph{randomness-utilizing quantum channel}. 
	To see why this is an adequate description, note that such a randomness-utilizing quanum channel must be doubly-stochastic, i.e., it must leave the maximally-mixed state invariant.
	Moreover, it follows that $[U,\id_{\ms S}\otimes\omega_{\ms C}]=0$. 
	In particular, when $\omega_{\ms C}$ has no degenerate eigenvalues, i.e., all of its eigenspaces are one-dimensional, this implies that $U = \sum_i U_i \otimes \proj{i}_{\ms C}$, where $\ket{i}_{\ms C}$ is the eigenbasis of $\omega_{\ms C}$. 
	The resulting dynamics on $\ms S$ is then given by a mixed-unitary channel of the form
	\begin{align}\label{eq:mixed-unitary}
		\rho_{\ms S} \mapsto \sum_{i=1}^m p_i U_i \rho_{\ms S} U_i^\dagger,
	\end{align}
	where $p_i = \bra{i}\omega_{\ms C}\ket{i}$.  
	
	Channels of the form in Eq.~\eqref{eq:mixed-unitary} can be interpreted as a (non-selective) measurement of classical information represented by $\ms C$, and a unitary processing $U_i$ applied to $\ms S$, conditioned on the measurement-outcome $i$. They can also be interpreted as quantum channels that can be reversed by first measuring the environment appearing in the Stinespring dilation, and then performing a correcting unitary operation to recover $\rho_{\ms S}$ \cite{Gregoratti_2003}.

	It is currently an open problem how to characterize the set of all randomness-utilizing channels. Ref.
	\cite{lie_correlational_2021} show that they are a strict subset of doubly-stochastic quantum channels. Moreover, they also show that $U$ induces a randomness-utilizing quantum channel on $\ms{S}$ for some state $\omega_{\ms C}$ if and only if $U^{{T}_{\ms S}}$ is also a unitary, where ${T}_{\ms S}$ is a partial transpose on $\ms{S}$. In this case, the unitary $U$ also induces a (generally different) randomness-utilizing channel for the maximally-mixed catalyst state $\omega_{\ms C} = \id_{\ms C}/d_{\ms C}$.
	Moreover, it follows from a result in \cite{Haagerup_2011} that randomness-utilizing quantum channels are a strict superset of mixed-unitary channels \cite{Lie_private_2022}.
	
	\subsection{Illustrative example: Noisy Operations}\label{sec:illustration}

	\begin{figure}[h]
		\includegraphics[width=\linewidth]{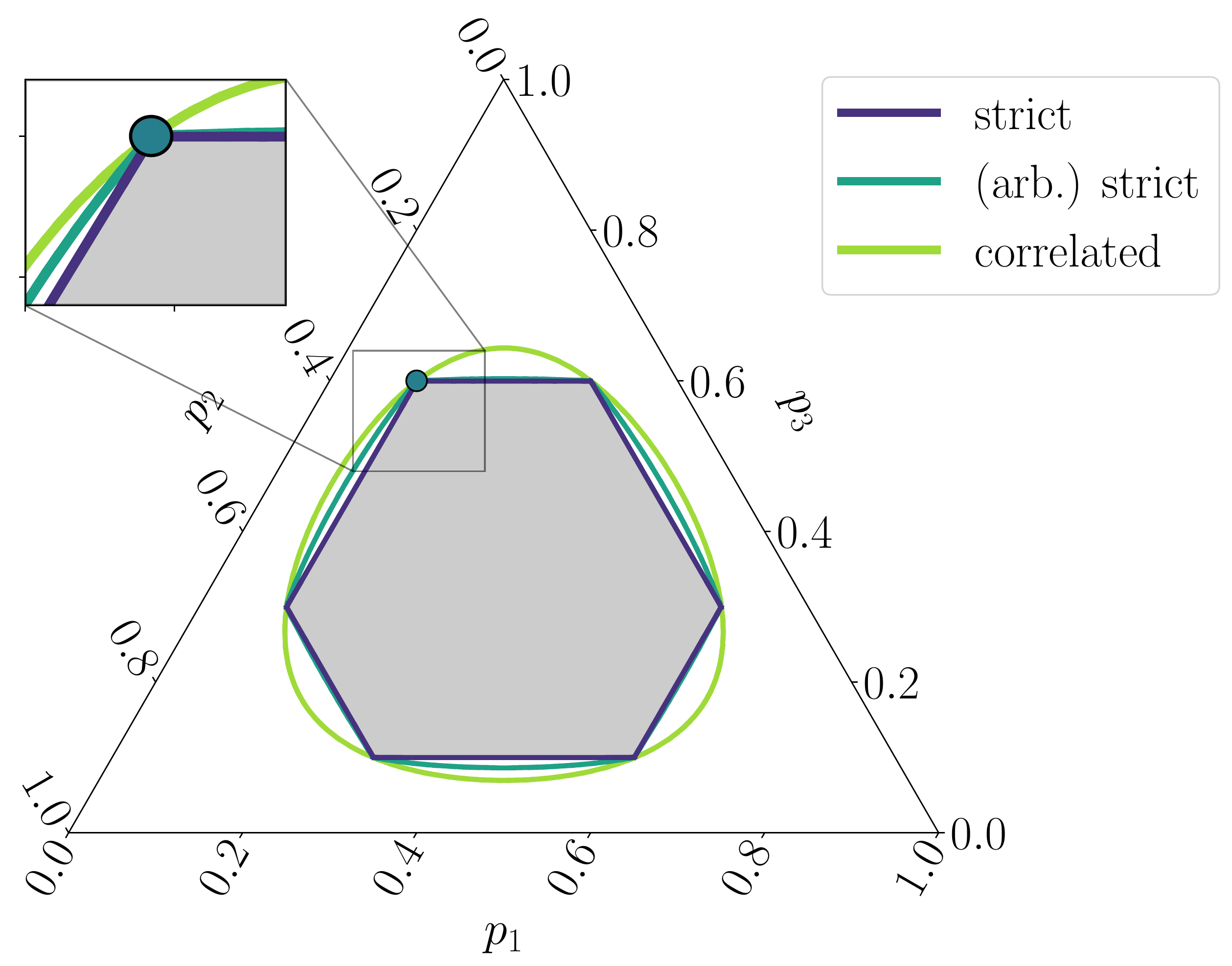}
		\caption{\textbf{States achievable with Noisy Operations under different catalytic types.} The three coloured sets indicate achievable states starting from a fixed state $\rho$ (turquoise dot) with eigenvalues $\eig(\rho) = [0.65, 0.2, 0.15]$ under various types of catalytic transformations. In this case ($d_{\ms{S}} = 3$) the set of states achievable under strict catalysis coincides with the set of states that are majorized by $\rho$, that is all states achievable without using a catalyst. Every state may be obtained to arbitrary precision with embezzlement.}
		\label{fig:noisy-example}
	\end{figure}
	
	In this section, we compare the various types of quantum catalysis that were formally defined in Sec. \ref{subsec:types}. We do so by focusing on a simple, yet illustrative example of majorization-based resource theories. In such resource theories the state transition conditions are fully characterized by majorization. 
	The paradigmatic example here is the resource theory of noisy operations (NO)~\cite{Horodecki2003a,Horodecki2003b},
	also known as the resource theory of purity or informational non-equilibrium \cite{gour_resource_2015}. In this resource theory the set of free operations $\mc{O}_{\mathrm{NO}}$ consist of all quantum channels that can be implemented with a maximally-mixed environment $\ms{E}$, i.e.
	\begin{align}
		T\in \mc{O}_{\mathrm{NO}} \quad \Rightarrow \quad	T(\rho) = \Tr_{\ms E} \left[U \left(\rho_{\ms S} \ot \frac{\mathbb{1}_{\ms E}}{d_{\ms E}}\right)U^{\dagger}\right]
	\end{align}
	for some dimension $d_{\ms E}$. The Schur-Horn lemma (Theorem \ref{thm:schurhorm}) implies that 
	\begin{equation}\label{eq:conditions-NO} 
		\rho \xrightarrow[\mc O_{\mathrm{NO}}]~\sigma \quad \Leftrightarrow\quad  \eig(\rho)\succ \eig(\sigma) .
	\end{equation}
	Below we summarize families of state-transition conditions that characterize different types of catalytic transformations. We will see that these conditions can be formulated in terms of progressively fewer entropic conditions. To illustrate this, 
	Figure~\ref{fig:noisy-example} shows quantum states that can be obtained using different types of catalysis when starting from a fixed initial state. 
	Nielsen's theorem (see Theorem~\ref{thm:nielsen}) implies that all the results equivalently apply to the case of LOCC operations restricted to pure states. In fact many of the results have first been obtained in this setting.

	For the following, note that the R\'enyi entropies $H_\alpha$ are \emph{anti-monotone}, as they are Schur-concave, i.e.:
	\begin{align}
		\eig(\rho)\succ \eig(\sigma)\quad\Rightarrow\quad H_\alpha(\rho) \leq H_\alpha(\sigma).
	\end{align}
	\eqref{eq:conditions-NO} then implies that the R\'enyi entropies can only \emph{increase} under free operations.
	
	\emph{Strict catalysis.} A strictly catalytic state-transformation requires a (finite-dimensional) density operator $\omega_{\ms C}$ such that
	\begin{align}
		\eig(\rho_{\ms S})\otimes\eig(\omega_{\ms C}) \succ \eig(\sigma_{\ms S})\otimes\eig(\omega_{\ms C}).
	\end{align}
	Conditions for strict catalysis were first derived in Ref.~\cite{klimesh_inequalities_2007,turgut_catalytic_2007} (see also the more recent works \cite{pereira_dirichlet_2013,kribs_trumping_2015,pereira_extending_2015-1}), and are known as \emph{trumping conditions}. Here we provide a simplified, but still equivalent, set of conditions expressed in terms of the R\'enyi entropies $H_\alpha$, as proposed in Ref.~\cite{brandao_second_2015}. More specifically, unless $\rho_{\ms S}$ and $\sigma_{\ms S}$ are unitarily equivalent, we have
	\begin{equation}\label{eq:noisy-strict} 
		\rho_{\ms S} \AC{\mathcal{O}_{\mathrm{NO}}}{} \sigma_{\ms S}\quad\Leftrightarrow\quad \begin{cases}
			H_\alpha(\rho_{\ms S}) < H_\alpha(\sigma_{\ms S})\quad \forall \alpha\in\mathbb R\setminus\{0\},\\
			H_0(\rho_{\ms S}) \leq H_0(\sigma_{\ms S}).
		\end{cases}
	\end{equation}
	If an arbitrarily small error on $\ms S$ (but not $\ms C$) is allowed, then the above strict inequalities 
	relax to non-strict ones. Interestingly, 
	strict catalysis can only enable new transformations if the dimension of the system is large enough, i.e. $d_{\ms S}>3$ \cite{jonathan_entanglement-assisted_1999}. 
	General bounds on the catalyst (e.g., its dimension) have been obtained by \cite{sanders_necessary_2009,grabowecky_bounds_2019}.
	
	\emph{Arbitrarily strict catalysis.} Since the R\'enyi entropies $H_\alpha$ for $\alpha>0$ are continuous, the conditions in Eq.~\eqref{eq:noisy-strict} are stable under small perturbations in the state of the catalyst, which is the case for arbitrarily strict catalysis. 
	The conditions corresponding to $\alpha<0$, on the other hand, may be removed. This can be achieved by introducing a qubit $\ms{A}$ initially prepared in a pure state $\omega_{\ms A}=\proj{\psi}$ that is returned in a (full-rank) state $\omega_{\ms{A}}'$ arbitrarily close to $\proj{\psi}$. 
	Now, observe that any pure state has $H_\alpha(\omega_{\ms A})=\infty $ for $\alpha<0$, and in the same time, any full-rank state must have finite R\'enyi entropies, i.e. $H_\alpha(\omega_{\ms A}') < \infty$. This simple observation effectively removes all entropic conditions corresponding to $\alpha<0$ in Eq. (\ref{eq:noisy-strict}). Hence, unless $\rho_{\ms S}$ and $\sigma_{\ms S}$ are unitarily equivalent, we have
	\begin{equation}\label{eq:noisy-arb-strict} 
		\rho_{\ms S} \AC{\mathcal{O}_{\mathrm{NO}}}{\emph{\rm arb.}}  \sigma_{\ms S}\quad\Leftrightarrow\quad \begin{cases}
			H_\alpha(\rho_{\ms S}) < H_\alpha(\sigma_{\ms 
				S})\quad \forall \alpha>0,\\
			H_0(\rho_{\ms S}) \leq H_0(\sigma_{\ms S}).
		\end{cases}
	\end{equation}
	\emph{Correlated catalysis.}\label{subsubsec:noisy-corr} The R\'enyi entropies $H_0$ and $H_1$ are the only R\'enyi entropies that are sub-additive, i.e., fulfil $H_{0/1}(\rho_{\ms S_1\ms S_2}) \leq H_{0/1}(\rho_{\ms S_1})+H_{0/1}(\rho_{\ms S_2})$.
	It is therefore easy to see that they cannot decrease under a correlated-catalytic state transformation. 
	Conversely any monotone under correlated catalysis that is additive over tensor-products must also be super-additive, see Sec.~\ref{subsec:corr_cat}.
	Ref.
	\cite{muller_correlating_2018} first showed that the simultaneous increase of both $H_0$ and $H_1 \equiv H$ is indeed the sole 
	criterion for the existence of a correlated catalytic transformation 
	in the case of noisy operations. 
	Specifically, if $\rho_{\ms S}$ and $\sigma_{\ms S}$ are not unitarily equivalent, then
	\begin{equation} 
		\rho_{\ms S} \AC{\mathcal{O}_{\mathrm{NO}}}{\rm corr.} \sigma_{\ms S}\quad \Leftrightarrow\quad \begin{cases}
			H_0(\rho_{\ms S})\leq H_0(\sigma_{\ms S})\quad \text{and}\\
			H(\rho_{\ms S}) < H(\sigma_{\ms S}).
		\end{cases}
	\end{equation} 
	Since any state can be approximated (up to an arbitrary accuracy) by a state with a full rank, then if state-transformations up to arbitrarily small error (on $\ms S$, not $\ms C$) are considered, the only condition that remains is $H(\rho_{\ms S})\leq H(\sigma_{\ms S})$.
	In fact, we will see in Sec.~\ref{subsec:QM} that already correlated-catalytic unitary operations (i.e. without the environment $\ms{E}$) yield the same set of state-transitions \cite{boes_von_2019,wilming_entropy_2020,Wilming2022a}. 
	
	\begin{figure}[h]
		\includegraphics[width=\linewidth]{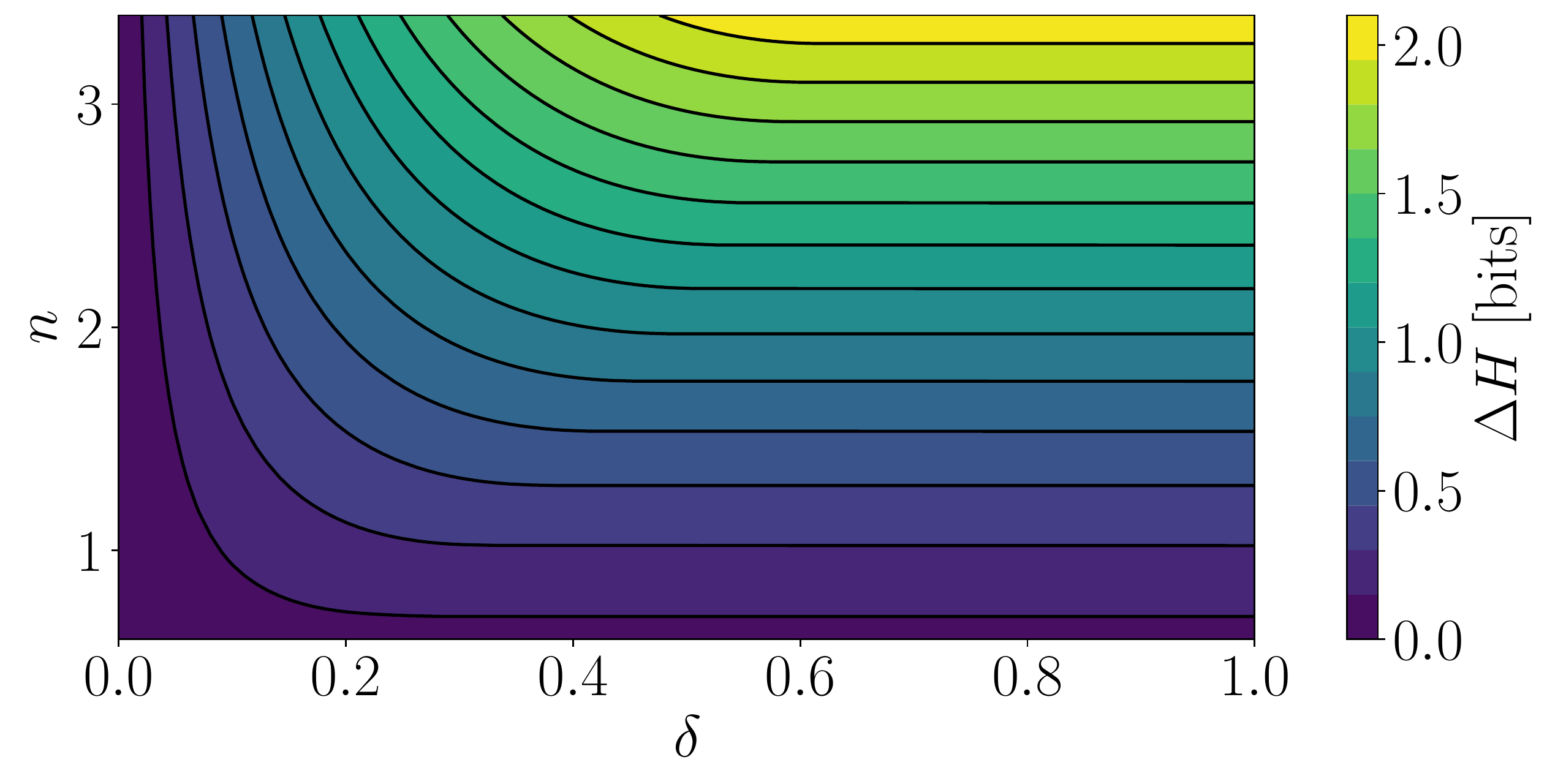}
		\caption{\textbf{Embezzlement in noisy operations}.  The maximal amount of entropy change $\Delta H$ between the state $\omega^{(n)}_{\ms{C}}$ (see Eq. (\ref{eq:van_dam_emb}) and its $\delta$-perturbation for different dimensions $n$.}
		\label{fig:emb_ent}
	\end{figure}
	
	\emph{Approximate catalysis.}\label{subsubsec:noisy-approx} As approximate catalysis depends on the distance measure used, no general statement can be made. If a finite error $\epsilon>0$ in terms of trace distance is allowed on the catalyst ${\ms C}$, then any state transition becomes possible (see Sec.~\ref{subsubsec:approx-cat-construction}). On the other hand, if an error of order $1/\log(d_{\ms C})$ is allowed, with $d_{\ms C}$ being the dimension of the Hilbert space of the catalyst, then $H(\rho_{\ms{S}}) < H(\sigma_{\ms{S}})$ is the only remaining condition for state transitions \cite{brandao_second_2015}. In other words, we essentially obtain the same state transition conditions as for correlated catalysis. The interplay between the error $\epsilon$ on the catalyst, and the resulting simplification of the state transition conditions was analysed in Refs. \cite{ng_limits_2015} and \cite{lipka-bartosik_all_2021}.
	
	\emph{Embezzlement.} Per definition of embezzlement, any state may be reached to arbitrary accuracy via embezzlement. A general construction of embezzling families that achieve this is given in Sec.~\ref{subsubsec:approx-cat-construction}.
	The first example of an embezzling family $\{\omega^{(n)}_{\ms{C}}\}$ was presented in Ref.~\cite{van_dam_universal_2003} in the context of LOCC. In the context of NO it takes the form
	\begin{align}\label{eq:van_dam_emb}
		\omega^{(n)}_{\ms{C}} = \frac{1}{C_n}\sum_{j=1}^n \frac{1}{j} \proj{j}, \quad C_n = \sum_{j=1}^n \frac{1}{j},
	\end{align}
	for some orthonormal basis $\{\ket j\}$, see also Figure~\ref{fig:emb_ent}. Ref.~\cite{george_revisiting_2023} provides a detailed discussion on the optimality of this construction.

	\emph{Infinite-dimensional catalysis.} The same remarks apply here as given in Sec.~\ref{subsec:infinite-dimensional} for the case of pure state LOCC.

	\emph{State-independent catalysis. } To our knowledge, the set of catalytic quantum channels for noisy operations has not been characterized. However, the randomness-utilizing channels from Sec.~\ref{subsec:state-independent} clearly constitute a subset.

	\section{Constructing catalysts}
	\label{sec:constructions}
	The transformation laws for quantum catalysis rarely tell us anything about the state of the catalyst that enables a desired transformation. 
	Therefore, most current results treat catalytic transformations as enigmatic ``black boxes''. While it is theoretically possible to achieve certain transformations with some catalyst, it is unclear how to identify the suitable catalyst state, or determine the {actual} transformation. This limitation severely hinders the practical applicability of catalysis. In this section, we describe the few existing results on the explicit constructions of catalysts.
	
	\subsection{From multi-copy transformations to strict catalysis}
	\label{subsubsec:multi-copy}
	In many resource theories, denoted with $\mc R=(\mc S,\mc O)$, one commonly {adopted assumption} is {the ability} to condition operations on classical randomness. 
	Consider a system $\ms A$ described by a state $\omega_{\ms A}$ diagonal in some fixed basis $\ket{i}_{\ms A}$~--- we view this system as classical due to the distinguished basis. 
	Let $\ms S$ be another system in state $\rho_{\ms S}$ and let $\lbrace\mc F_i\rbrace_i$ be a set of free operations on $\ms S$, i.e. $\mc F_i\in \mc O$ for every $i$. 
	Then the transformation 
	\begin{align}\label{eq:conditional}
		\rho_{\ms S}\otimes\omega_{\ms A}\mapsto \sum_i \mc F_i[\rho_{\ms S}]\otimes \proj{i}\omega_{\ms A}\proj{i}
	\end{align}
	is also free operation on $\ms S\ms A$. 
	The resource theories presented in Section~\ref{sec:rt} all permit such operations {for free}. Note that if $\omega_{\ms A}$ is diagonal in {the basis of} $\ket{i}_{\ms A}$, then system $\ms A$ retains its marginal density operator (but potentially builds up correlations with $\ms S$). 
	If the resource theory in question additionally allows to prepare classical randomness (i.e. any state diagonal in $\ket{i}_{\ms A}$ of arbitrary dimension) for free, and allows to discard subsystems, then the sets $\mathcal{S}$ and $\mathcal{O}$ must be convex.
	
	Now, consider the situation where a state $\rho_{\ms S}$ cannot be transformed to $\sigma_{\ms S}$, but for some sufficiently large $n\in\NN$, the multi-copy state $\rho_{\ms S}^{\otimes n}$ may be transformed to $\sigma_{\ms S}^{\otimes n}$. 
	This can be seen as a form of activation (see section~\ref{subsec:app_cat}) {and was first observed in the context of LOCC in Ref.~\cite{bandyopadhyay_classification_2002}}. A key observation is that if $\mathcal{R}$ allows to permute identical subsystems and condition operations on classical information, then $\rho_{\ms S}$ may be transformed to $\sigma_{\ms S}$ via strict catalysis: 
	\begin{align}\label{eq:multi-copy-implies-catalytic}
		\rho^{\otimes n}_{\ms{S}} \xrightarrow[\mc{O}]{} \sigma_{\ms{S}}^{\otimes n} \quad\Rightarrow \quad \rho_{\ms{S}} \AC{\mc O}{} \sigma_{\ms{S}}.
	\end{align}
	Therefore, {(strict)} catalysis can be used to reduce multi-copy transformations to catalytic transformations.
	{The converse of this statement is false in general, as demonstrated in Ref.~\cite{feng_relation_2006}, see also Refs. \cite{aubrun_stochastic_2009,Gupta2022}.}

	\begin{figure}
		\includegraphics[width=\linewidth]{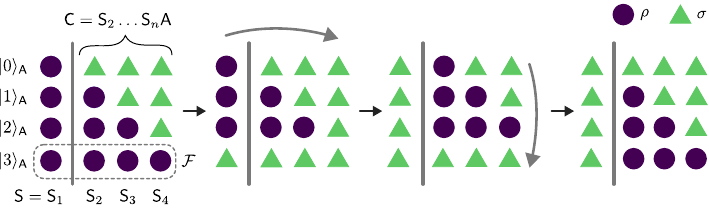}
		\caption{\textbf{Strict catalysis emulating multi-copy transformations.} Illustration of the operation implementing the transformation $\rho_{\ms S}\otimes \omega^{(n)}_{\ms C}(\rho,\sigma) \rightarrow \sigma_{\ms S}\otimes \omega^{(n)}_{\ms C}(\rho,\sigma)$. The dark purple dots signify $\rho$, the lighter green triangles $\sigma$, rows correspond to the states $\ket{i}_{\ms A}$. The free transformation $\mc F$ is applied to the dashed box followed by the cyclic permutations indicated by the grey arrows. }\label{fig:duan-state}
	\end{figure}

	The relation~\eqref{eq:multi-copy-implies-catalytic} is shown by providing a general construction for the required catalyst. This construction first appeared {in the context of LOCC transformations} in Ref.~\cite{duan_multiple-copy_2005}. Consider the following density operator
	\begin{align}
		\omega^{(n)}_{\ms C}(\rho,\sigma) := \frac{1}{n}\sum_{i = 1}^{n} \rho^{\ot i-1} \ot \sigma^{\ot n-i} \ot \dyad{i}_{\ms A},
	\end{align}
	acting on $n-1$ tensor-copies of the Hilbert space  $\mc{H}_{\ms S}$ and an $n$-dimensional Hilbert space $\mc{H}_{\ms A}$ representing classical information.  Thus, $\ms C= \ms S_2\cdots \ms S_n \ms A$, and we will use the convention $\ms S=\ms S_1$ below.
	We may envision the state as corresponding to $n$ distinguishable boxes, where the $i$-th box contains $i-1$ particles in the state $\rho$ and $n-i$ particles in the state $\sigma$. The boxes are distinguished by the classical label $i$, which is unknown.
	
	Suppose we can perform a free transformation $\mathcal{F} \in \mathcal{O}$ such that $\mathcal{F}[\rho^{\ot n}] = \sigma^{\ot n}$ for some $n \in \mathbb{N}$. Then the following protocol transforms a single copy of $\rho_{\ms S}$ into $\sigma_{\ms S}$ using $\omega^{(n)}(\rho,\sigma)$ as a strict catalyst  (see Fig.~\ref{fig:duan-state} for an illustration):
	\begin{enumerate}
		\item Apply a conditional operation as in \eqref{eq:conditional}: Perform $\mc F$ on systems $\ms S_1\cdots \ms S_n$ if $\ms A$ is in state $\proj{n}$ and the identity otherwise. 
		\item Cyclically permute the quantum part $\ms S_1\cdots \ms S_n$ of the system $\ms{SC}$, so that 
			\begin{align*}
				\rho^{\ot i} \ot \sigma^{\ot n-i} \otimes \proj i_{\ms A} \rightarrow \sigma \ot \rho^{\ot i} \ot \sigma^{\ot n-i-1}\otimes \proj i_{\ms A}.
			\end{align*}
			\item Cyclically relabel the classical register $\ms A$ of $\ms{C}$, i.e. map
			$\ket{i} \rightarrow \ket{i+1}$ for $i < n$ and $\ket{n} \rightarrow \ket{1}$.
	\end{enumerate}
	
	As a result, the joint state {of the system $\ms{SC}$} transforms as 
	\begin{align}
		\rho_{\ms S} \otimes \omega^{(n)}_{\ms C}(\rho,\sigma) \rightarrow \sigma_{\ms S}\otimes \omega^{(n)}_{\ms C}(\rho,\sigma).
		\label{eq:subroutine}
	\end{align}
	The dimension of the catalyst $\omega_{\ms C}^{(n)}(\rho,\sigma)$ increases exponentially with the number of necessary copies $n$ for multi-copy activation. Furthermore, the catalyst must also be finely tuned with respect to the initial and target states on ${\ms S}$. Despite these disadvantages, the construction still serves as a key technical tool in various proofs of catalysis.

	\subsection{From asymptotic transformations to correlated catalysis}
	\label{subsec:partial-order-regularization}

	The main problem addressed in any resource theory is determining whether there exists a free operation $\mathcal{F}\in \mathcal{O}$ such that $\mathcal{F}(\rho) = \sigma$.  This is generally a difficult problem that has to be addressed on a case-by-case basis. In general, there exists no systematic way of characterizing which transitions are allowed in a given resource theory. One potential approach to better understand what is achievable within a particular resource theory is to employ reasonable relaxations.

	The first relaxation we discuss is allowing for a small error in the final system state, {which is also called smoothing}. Specifically, given a fixed error $\epsilon$, we allow the channel to produce any state $\sigma_{\epsilon} = \mathcal{F}[\rho]$ $\epsilon$-close in trace distance to the target state $\sigma$. 
	Due to the definition of trace distance, all observables of $\sigma_{\epsilon}$ will not differ much from $\sigma$. 
	The second relaxation is {assuming an i.i.d. distribution}, that is, instead of considering a direct transformation from $\rho$ to $\sigma$, we ask if it is possible to convert $\rho^{\ot n}$ into $\sigma^{\ot {\lfloor r n \rfloor}}$ for some $r > 0$ and a large enough $n\in \mathbb{N}$. One key insight is that for large $n$, the multi-copy state $\rho^{\ot n}$ {when smoothed,} is almost indistinguishable from the maximally-mixed state spanned over its typical subspace \cite{Wilde2009}. 
	Therefore, characterizing state transformations between such states is much easier than solving the corresponding single-shot problem. 
	The standard quantity studied in this smooth, asymptotic limit is the conversion rate, which for any {set of free operations $\mc O$} can be defined as
	\begin{align}
		R_{\epsilon}^{n}(\rho, \sigma) := \sup \Big\{r \Big| \inf_{{\mc F\in \mc O}} {\Delta(\mc F[\rho^{\ot n}], \sigma^{\ot \lfloor n r \rfloor}) \leq \epsilon}\Big\}.
	\end{align}
	The rate $R_{\epsilon}^{n}(\rho, \sigma)$ represents how many copies of the target state can be (approximately) generated per copy of the initial state, when using $n$ copies of the initial state. Of particular importance is the asymptotic rate $R_{\epsilon}^{n}$ in the limit of infinite copies, i.e.
	\begin{align}
		R^{\infty}(\rho, \sigma) = \lim_{n \rightarrow \infty} R_{\epsilon}^{n}(\rho, \sigma).
	\end{align}
	Importantly, $R^{\infty}(\rho,\sigma)$ does not depend on $\epsilon$ as long as $\epsilon>0$. Note that the asymptotic rate $R_{\epsilon}^{\infty}$ allows to fully characterizing the approximate conversion problem in the asymptotic limit. 
	Indeed, when $R^{\infty}(\rho,\sigma) > 1$,  then there exists an $n\in\NN$ and a free operation mapping $\rho^{\ot n}$ into a state that is $\epsilon$-close to $\sigma^{\ot n}$. 
	Conversely, when $R^{\infty}(\rho,\sigma) < 1$, we are guaranteed that there is no free operation that can ever map $n$ copies of $\rho$ into $n$ copies of $\sigma$ for an arbitrary small error $\epsilon$. In this sense, the partial order of states becomes regularized, or simplified, in the approximate i.i.d. limit. All possible state transformations can now be characterized by a single quantity, the rate, as opposed to the single-shot case where they are usually characterized by a set of monotones.
	
	It was first observed in Ref.~\cite{shiraishi_quantum_2021} in the context of thermodynamics that asymptotic state transformations and correlated-catalytic state transformations are closely related using a generalization of the construction in Ref.~\cite{duan_multiple-copy_2005} presented in Sec.~\ref{subsubsec:multi-copy}.
	More specifically, the following lemma can be proven, which can be seen as a generalized summary of the results in Refs.~\cite{shiraishi_quantum_2021,wilming_entropy_2020,Kondra2021,lipka-bartosik_catalytic_2021,Char2021,Takagi2022,Datta2022a}.
	
	\begin{lemma}\label{lemma:partial-order-regularization}
		Let $\rho_{\ms S}$ and $\sigma_{\ms S}$ be two density operators and $\mc R=(\mc S,\mc O)$ a resource theory that allows for classical conditioning and permutation of identical subsystems. {Then
			\begin{align}
				R_\epsilon^{n}(\rho_{\ms S},\sigma_{\ms S}) \geq 1\quad \Rightarrow \quad \rho_{\ms S} \AC{\mathcal{O}}{\rm corr.} \sigma_{\ms S}^\epsilon
			\end{align}
			with $\Delta(\sigma_{\ms S}^\epsilon,\sigma_{\ms S})\leq \epsilon$. 
			In fact, the catalyst $\omega_{\ms C}$ can be chosen such that $\Delta(\eta_{\ms S\ms C},\sigma_{\ms S}\ot \omega_{\ms C})\leq 2\epsilon$, where $\eta_{\ms S \ms C}$ is the final state on $\ms S \ms C$.}
	\end{lemma}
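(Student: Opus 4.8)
The plan is to upgrade the Duan-type catalyst of Sec.~\ref{subsubsec:multi-copy} from the exact multi-copy setting to the approximate i.i.d.\ setting, arranging that the catalyst's reduced state is returned \emph{exactly} while the $\epsilon$-error is pushed entirely onto the system's reduced state. First I would unpack the hypothesis: by definition of $R_\epsilon^n$ --- composing, if necessary, the optimal free operation with a partial trace over any surplus output copies (which is free) --- the condition $R_\epsilon^n(\rho_{\ms S},\sigma_{\ms S})\geq 1$ supplies, for some $n\in\NN$, a free operation $\mathcal F\in\mathcal O$ with $\Delta\bigl(\mathcal F[\rho_{\ms S}^{\otimes n}],\sigma_{\ms S}^{\otimes n}\bigr)\leq\epsilon$. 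Write $\tilde\sigma:=\mathcal F[\rho^{\otimes n}]$; by data processing, each single-copy marginal $\tilde\sigma|_{\ms S_m}$ is then $\epsilon$-close to $\sigma$.

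Next I would set up the catalyst $\omega_{\ms C}=\tfrac1n\sum_{i=1}^n\zeta^{(i)}\otimes\dyad{i}_{\ms A}$ on $\ms C=\ms S_2\cdots\ms S_n\ms A$ (with $\ms A$ an $n$-level classical register) together with the three-step protocol of Sec.~\ref{subsubsec:multi-copy}: conditioned on $\ms A=\ket n$ apply $\mathcal F$ to $\ms S_1\cdots\ms S_n$ (identity otherwise), then cyclically permute $\ms S_1\cdots\ms S_n$, then relabel $\ket i\mapsto\ket{i+1\bmod n}$ on $\ms A$ --- all free under classical conditioning and permutation of identical subsystems. Demanding that this protocol return $\omega_{\ms C}$ exactly on $\ms C$ turns into the cyclic system $\zeta^{(j)}=\Psi(\zeta^{(j-1)})$ for $j=2,\dots,n$ together with a closing equation for $\zeta^{(1)}$ in terms of $\zeta^{(n)}$ and $\mathcal F$, where $\Psi$ is the channel that prepends a fresh copy of $\rho$, traces out the last tensor factor, and re-indexes. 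The crux of the whole argument is that $\Psi^{n-1}$ maps \emph{any} $(n-1)$-copy state to $\rho^{\otimes(n-1)}$, since after $n-1$ iterations every original factor has been discarded; hence, whatever $\zeta^{(1)}$ is, the recursion forces $\zeta^{(n)}=\rho^{\otimes(n-1)}$, so $\rho_{\ms S_1}\otimes\zeta^{(n)}=\rho^{\otimes n}$ and the $\mathcal F$-step acts on precisely $\rho^{\otimes n}$, whence the closing equation merely \emph{determines} $\zeta^{(1)}$ as a (permuted) partial trace of $\tilde\sigma$. The cycle thus closes with an explicit $\omega_{\ms C}$ --- no fixed-point or limiting argument is needed --- and I would then verify directly that the protocol returns $\omega_{\ms C}$ exactly on $\ms C$.

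It remains to bound the two errors. A direct evaluation of the protocol shows that the output reduced state on $\ms S$ equals the uniform average $\tfrac1n\sum_{m=1}^n\tilde\sigma|_{\ms S_m}$ of the single-copy marginals of $\tilde\sigma$; since each summand is $\epsilon$-close to $\sigma_{\ms S}$, convexity of the trace distance gives $\Delta(\sigma_{\ms S}^{\epsilon},\sigma_{\ms S})\leq\epsilon$. For the joint state $\eta_{\ms S\ms C}$ I would compare the actual run to the ``ideal'' one in which $\mathcal F$ is replaced by a map sending $\rho^{\otimes n}\mapsto\sigma^{\otimes n}$: there the construction collapses exactly to the Duan protocol of Eq.~\eqref{eq:subroutine}, with output $\sigma_{\ms S}\otimes\omega_{\ms C}^{\mathrm{Duan}}$. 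Our catalyst differs from $\omega_{\ms C}^{\mathrm{Duan}}$ by at most $\tfrac{n-1}{n}\epsilon$ --- it is obtained by applying the same non-expansive channels $\Psi$ to a seed $\zeta^{(1)}$ that is $\epsilon$-close to its ideal counterpart, and the two agree \emph{exactly} on branch $n$ --- while the $\mathcal F$-step contributes a further $\epsilon/n$; chaining these with the triangle inequality and data processing yields $\Delta(\eta_{\ms S\ms C},\sigma_{\ms S}\otimes\omega_{\ms C})\leq\bigl(2-\tfrac1n\bigr)\epsilon\leq 2\epsilon$. The step I expect to be the real obstacle is not any single estimate but the careful bookkeeping of which subsystem carries which marginal after the cyclic permutations and partial traces; it is precisely this accounting that exposes the $\Psi^{n-1}$-collapse, which is what upgrades ``approximate i.i.d.\ convertibility'' to ``\emph{exact} return of the catalyst.''
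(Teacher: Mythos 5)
Your proposal is correct and follows essentially the same route as the paper: the catalyst you obtain by solving the recursion $\zeta^{(j)}=\Psi(\zeta^{(j-1)})$, namely $\zeta^{(j)}=\rho^{\otimes j-1}\otimes\tr_{>n-j}[\mathcal F[\rho^{\otimes n}]]$, is exactly the paper's construction, which simply replaces the factors $\sigma^{\otimes n-i}$ in the Duan catalyst by the corresponding marginals of $\mathcal F[\rho^{\otimes n}]$, and your error bounds via data processing, convexity and the triangle inequality reproduce the stated $\epsilon$ and $2\epsilon$ estimates. The $\Psi^{n-1}$-collapse observation is a clean way of seeing \emph{why} the catalyst is returned exactly, but it is a derivation of the same state rather than a different method.
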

	
	The lemma shows that whenever it is possible to transform $n$ copies of $\rho_{\ms S}$ into $n$ copies of $\sigma_{\ms S}$ with a given accuracy, there also exists a correlated catalytic transformation that converts a single copy of $\rho_{\ms S}$ into $\sigma_{\ms S}$ with the same accuracy. 
	When the transformation can be performed without error, i.e. when $\epsilon = 0$, then the mapping becomes strictly catalytic and no correlations are established between catalyst and system, recovering the 
	setting of section~\ref{subsubsec:multi-copy}.
	
	Let $\mc F$ be a free operation such that $\Delta(\mc F[\rho_{\ms S}^{\otimes n}],\sigma_{\ms S}^{\otimes n})\leq \epsilon$, which exists whenever $R_\epsilon^{n}(\rho_{\ms S},\sigma_{\ms S})\geq 1$. 
	To prove Lemma~\ref{lemma:partial-order-regularization} by construction, one follows the same strategy as in section~\ref{subsubsec:multi-copy} but replaces the {states $\sigma_{\ms S}^{n-i}$ in the definition of $\omega^{(n)}_{\ms C}$ with the reduced state $\tr_{> n - i}[\mc F[\rho_{\ms S}^{\otimes n}]]$ on the first $n-i$} marginals. The result then follows by making use of the data-processing inequality for the trace-distance. 
	
	In certain resource theories (e.g. majorization-based resource theories), it is known that $R^\infty$ can be directly computed and expressed using an appropriate resource monotone. 
	Let $\mathcal{R}$ be a resource theory whose asymptotic rate is given by
	\begin{equation}
		\label{eq:asymp_rate_entr}
		R^{\infty}(\rho_{\ms S},\sigma_{\ms S}) = \frac{f(\rho_{\ms S})}{f(\sigma_{\ms S})}\qquad \text{for any}\,\, \epsilon > 0,
	\end{equation}
	where $f$ is a monotone of the resource theory. 
	For example, as we will see in Sec.~\ref{subsubsec:entanglement-distillation}, $f$ corresponds to the entanglement entropy in the context of pure state entanglement.
	Then Lemma~\ref{lemma:partial-order-regularization} shows that $f(\rho_{\ms S}) > f(\sigma_{\ms S})$ implies that a correlated-catalytic state transformation is possible. 
	{As far as we know, it} is currently unknown if the possibility of a correlated-catalytic state transition also implies that the asymptotic rate fulfills $R^{\infty}(\rho_{\ms S},\sigma_{\ms S})\geq 1$. 
	
	The above result not only shows that correlated catalytic state transformations are strictly more powerful than the corresponding {single-shot} free operations, but also that they sometimes have a simple mathematical characterization in terms of the asymptotic resource monotone determining the rate $R^{\infty}$. 
	In Sec.~\ref{sec:applications} we will review and describe applications of the above results in different physical settings. 
	
	\subsection{Constructions for approximate catalysts and embezzlers}
	\label{subsubsec:approx-cat-construction}
	In section \ref{subsec:partial-order-regularization} we showed that correlated catalysis enables state transitions that are otherwise possible only in an approximate sense in the asymptotic limit.
	In this section we describe a generic construction for approximate catalysis that allows to map between \emph{arbitrary states}, whenever the set of free operations $\mc O$ allows to permute identical subsystems. 
	We then generalize it to construct an embezzler for state transitions on a fixed finite-dimensional system $\ms S$. 
	The construction explicitly demonstrates that a fixed error in trace-distance is not sensitive enough to prevent embezzlement. 
	We follow Ref.~\cite{Leung2013coherent}, but the construction is closely related to the construction of strict catalysts to emulate multi-copy transformations presented in section~\ref{subsubsec:multi-copy}. 
	
	Suppose we wish to implement the state transition $\rho_{\ms S}\rightarrow \sigma_{\ms S}$ using a catalyst $\omega_C$ that is perturbed by at most $\epsilon$ in trace-distance.  
	Consider the states
	\begin{align}\label{eq:universal_catalyst}
		\omega_{\ms C} = \frac{1}{n-1} \sum_{k=1}^{n-1} \rho_{\ms S}^{\otimes k} \otimes {\sigma_{\ms S}}^{\otimes n-k},\\
		\omega'_{\ms C} = \frac{1}{n-1}\sum_{k=2}^n \rho_{\ms S}^{\otimes k} \otimes {\sigma_{\ms S}}^{\otimes n-k}.
	\end{align}
	By applying a cyclic permutation of subsystems represented by the unitary $U_\pi$, we find $U_\pi(\rho_{\ms S} \otimes \omega_{\ms C}) U_\pi^\dagger = \sigma_{\ms S}\otimes \omega'_{\ms C}$. 
	Since
	\begin{equation}\label{eq:omegaprime_and_omega}
		\omega'_{\ms C} - \omega_{\ms C} = \frac{1}{n-1}\left[\rho_{\ms S}^{\otimes n} - \rho_{\ms S} \otimes {\sigma_{\ms S}}^{\otimes n-1}\right]
	\end{equation}
	we find that $\Delta(\omega_{\ms C},\omega'_{\ms C})\leq 1/(n-1)$. 
	Thus, for fixed $\epsilon$ and sufficiently large $n$ any state transition is possible using the given construction for approximate catalysts. 
	
	{Note that the construction in Eq.~\eqref{eq:omegaprime_and_omega} currently depends on $\rho_{\ms{S}}$ and $\sigma_{\ms{S}}$. However,} using it, we can construct a universal embezzler for any fixed system $\ms S$ with finite-dimensional Hilbert space:
	Since the set of density operators on a finite-dimensional Hilbert-space is compact, for any $\delta>0$ there exists a finite collection of density operators $\{\chi^{(i)}_{\ms S}\}_{i=1}^N$ such that
	for any density operator $\rho_{\ms S}$ we have $\Delta(\rho_{\ms S},\chi^{(i)}_{\ms S})\leq \delta/4$ for some $1\leq i\leq N$. {The size of this set, $N$, depends on $\delta$.}
	For any $1 \leq i,j\leq N$, let $\omega^{(i,j)}_{\ms C_{ij}}$ be the approximate catalyst above implementing the state transition $\chi^{(i)}_{\ms S}\rightarrow \chi^{(j)}_{\ms S}$. 
	Then the embezzler consisting of $N^2$ copies $\ms C_{ij}$, with state
	\begin{align}
		\omega^{(\delta)}_{\ms C}= \bigotimes_{i,j=1}^N \omega^{(i,j)}_{\ms C_{ij}}
	\end{align}
	is a $\delta$-embezzler: 	Suppose we wish to transform $\rho_{\ms S}$ to $\sigma_{\ms S}$. 
	Then {we need only to identify $\chi^{(i)}_{\ms S},\chi^{(j)}_{\ms S}$} such that
	\begin{equation}
		\Delta(\rho_{\ms S},\chi^{(i)}_{\ms S})\leq \delta/4,\qquad\Delta(\sigma_{\ms S},\chi^{(j)}_{\ms S})\leq \delta/4.
	\end{equation}
	Now, by using $	\omega^{(\delta)}_{\ms C}$ as a catalyst, there is a free operation $\mc F$ consisting only of permutations involving $\ms S$ and the sub-catalyst $\ms C_{ij}$ such that
	\begin{align}
		\Delta(\mc F[\rho_{\ms S}\otimes \omega^{(\delta)}_{\ms C}],\sigma_{\ms S}\otimes \omega_{\ms C}^{(\delta)}) \leq \delta.
	\end{align}
	A related construction was used in  \cite{datta2022entanglement} to construct a correlated catalyst for all state transitions in LOCC between pure states where the entanglement entropy is non-increasing. In the context of LOCC, the structure of embezzling families has been studied in much more detail and can be characterized comprehensively, see \cite{van_dam_universal_2003,leung_characteristics_2014,zanoni_complete_2023}.
	
	We close the section with a word of caution on nomenclature: 
	According to our classification of catalysis the states $\omega_{\ms C}$ constructed above are approximate catalysts, and we reserve  the term universal embezzler for $\omega^{(\delta)}_{\ms C}$, because only the latter allow to implement arbitrary state transitions with {vanishing error (as $N\rightarrow\infty$)}. 
	Yet, sometimes already the states $\omega_{\ms C}$ are referred to as universal embezzlers in the literature, with the idea that the notion of "embezzlement" refers to the fact that approximate catalysis in terms of a fixed trace-distance allows for arbitrary state transitions. 
	
	In Sec.~\ref{subsubsec:non-continuity} we discuss how the above construction can be used to constrain continuity properties of resource monotones. 
	
	\subsection{Numerical construction of catalysts}
	In this section, we discuss a simple, yet useful, method for computing the exact state of the catalyst. More specifically, for a fixed quantum channel 
	$\mathcal{E}_\ms{SC}$ and a fixed input state on ${\ms{S}}$, this method determines the corresponding state of the catalyst, namely, the fixed point of the described quantum channel on $\ms{C}$. To the best of our knowledge, this approach was first used in the context of catalysis in Ref.~\cite{boes_by-passing_2020}.
	
	Let $\mathcal{E} \in \mathcal{L}(\mathcal{H}_{\ms S} \ot \mathcal{H}_{\ms C})$ be an arbitrary quantum channel acting on $\ms{SC}$, and let $\rho_{\ms S} \in \mathcal{D}(\mathcal{H}_{\ms S})$ be a fixed density operator. The effective channel acting on  $\ms{C}$ can be written as
	\begin{align}\label{eq:fixed-point}
		\mc E_{\ms C}(\cdot) = \Tr_{\mathsf S} \mc E[\rho_{\ms{S}} \ot (\cdot)_{\ms{C}}].
	\end{align} 
	
	Suppose we want to implement a correlated-catalytic transformation discussed in Sec. \ref{subsec:corr_cat}. This means that we have to find a density operator $\omega_{\ms{C}}$ which satisfies $\mathcal{E}_{\ms C}(\omega_{\ms C}) = \omega_{\ms C}$. This is equivalent to finding fixed points of the map $\mathcal{E}_{\ms C}$, {which exist due to Brouwer's fixed point theorem} \cite{wolf2012quantum}. This problem can be formulated as a semidefinite program (SDP), i.e.
	\begin{align} \label{eq:sdp_cc}
		\min_{X} \hspace{15pt} &0 \\
		\text{subject to} \hspace{15pt}& \mc{E}_{\ms{C}} [X] = X, \nonumber \\
		&X \geq 0, \hspace{5pt} \Tr X = 1.\nonumber
	\end{align}
	The solution of the above problem is a positive semidefinite operator $X$, which can be interpreted as the quantum state of the correlated-catalyst, i.e. $\omega_{\ms{C}} = X$. The main advantage of semidefinite programs stems from the fact that they can be efficiently solved numerically \cite{boyd2004convex}, for example using the modelling language CVX \cite{cvx}. Moreover, in certain cases, some essential features of the solution can be even inferred analytically, see e.g. \cite{Cavalcanti_2016,Napoli_2016,bavaresco2021strict}.
	To give an example in the context of catalysis, notice that the output state $\sigma_{\ms S} = \Tr_{\ms C}\mc E[\rho_{\ms S}\otimes \omega_{\ms C}]$, is, a priori, arbitrary. 
	However, if some additional information about $\mc E$ is available, it may be possible to deduce some essential features of $\sigma_{\ms S}$ even analytically. 
	In Ref.~\cite{boes_by-passing_2020} this technique was successfully used in the context of work extraction from multipartite systems, {to show that correlating catalysis allows for the surpassing of stringent conditions imposed by Jarzynski's inequality}. 
	
	The formulation in Eq.~(\ref{eq:sdp_cc}) allows to easily add additional constraints on the catalyst state, as long as they can be formulated in terms of semidefinite constraints. This freedom can be used to investigate other regimes of catalysis. To see this, consider a more general version of the problem in Eq.~(\ref{eq:sdp_cc}):
	\begin{align}
		\min_{X} \hspace{15pt} & F(X) \label{eq:general_SDP_construction}\\
		\text{subject to} \hspace{15pt}& \norm{\mc{E}_{\ms{C}}(X) - X}_1 \leq \epsilon, \nonumber \\
		&X \geq 0, \hspace{5pt} \Tr X = 1.\nonumber 
	\end{align}
	where $F$ is a linear function of $X$ and $\epsilon > 0$. The above problem is still an SDP, as can be seen by using a standard reformulation of the trace norm $\norm{\cdot}_1$ in terms of a semidefinite constraint \cite{vandenberghe1996semidefinite}. Eq.~\eqref{eq:general_SDP_construction} allows for determining approximate catalysts discussed in Sec. \ref{subsec:app_cat}, by taking $F(X) = 0$ and fixing $\epsilon > 0$ allows for finding the state of an approximate catalyst.

	Interestingly, one can also consider non-linear functions $F(\cdot)$, as long as they can be itself expressible by semidefinite programs. An example of such a function is the trace distance, i.e. $\norm{\cdot}_1$, see Ref.~\cite{watrous2009semidefinite}. Taking $F(X) = \norm{\mc{E}(\rho_{\ms{S}} \ot X_{\ms{C}}) - \sigma_{\ms{S}} \ot X_{\ms{C}}}_1$ allows to determine the state of the catalyst which ends up being the least correlated with the system $\ms{S}$.  When the SDP problem defined according to this recipe achieves $F(X^*) = 0$, the resulting optimal variable $X^*$ corresponds to a strict catalyst discussed in Sec. \ref{subsec:strict}. 
	
	At this point, we should note that from a resource-theoretic perspective, the method above is very limited, as it requires $\mathcal{E}_{\ms{SC}}$ and $\rho_{\ms{S}}$ to be fixed. 
	Therefore, it does not allow for finding the catalyst that would enable a given state transformation on the system. However, in many realistic applications, the available joint quantum channel $\mathcal{E}_{\ms{SC}}$ is either fixed, or can be parametrized using a small number of parameters. This happens, e.g., in the experimentally-relevant models of light-matter interactions like the Jaynes-Cummings \cite{Jaynes1963} or Dicke models \cite{hepp1973superradiant}. Moreover, due to experimental capabilities, the set of states $\rho_{\ms{S}}$ prepared is usually also restricted and efficiently parametrized. In such special cases, the method of determining the catalyst as described above performs reasonably well. This approach was used, e.g. in Ref.~\cite{deOliveira2023} to demonstrate the effect of (catalytic) activation of Wigner negativity (see also Sec. \ref{subsec:wigner_neg}) and in Ref.~\cite{lipka2023operational} to show a catalytic enhancement in cooling or heating using a single-mode optical cavity. 
	
	Finally, we would like to note that beyond the regime of analytical constructions and convex optimization, the avenue for determining whether $\rho\rightarrow \sigma$ is possible using a certain type of catalyst remains a space for much potential exploration. 
	For example, by training a neural network to learn majorization, machine learning techniques have been implemented \cite{acacio2022analysis} to see if they can successfully identify whether a transition may be achieved catalytically. 
	It is anticipated that similar techniques could be used to learn how to identify a catalyst that activates the desired transformation.

	\section{Applications of catalysis}
	\label{sec:applications}
	
	\subsection{Unitary quantum mechanics}
	\label{subsec:QM}
	One of the simplest resource theories one could imagine is given by unitary quantum mechanics $\mc R_{\mathrm{QM}} = (\mc S_{\mathrm{QM}}, \mc O_{\mathrm{QM}})$. The set of free states $\mc S_{\mathrm{QM}}$ is empty (every state is resourceful) and the free operations $\mc O_{\mathrm{QM}}$ are simply all unitary transformations. That means that all free operations are reversible and randomness (even classical) is costly. In particular, the resource theory is not convex, and the tools of Section~\ref{subsec:partial-order-regularization} do not directly apply. 
	While at first sight physically maybe less plausible, $\mc R_{\mathrm{QM}}$ serves as an interesting testbed to distinguish different types of catalysis.
	First, the law for state transitions is very simple: Two states $\rho_{\ms S}$ and $\sigma_{\ms S}$ on the same Hilbert space can be converted into each other if and only if their eigenvalues (including multiplicities) are identical.
	Second, Lemma~\ref{lem:fundamental_nogo} implies that strict catalysis is useless in this scenario. 
	On the other hand, correlated catalysis was shown to enlarge the set of states that can be reached in this scenario. More specifically, Refs.~\cite{boes_von_2019,wilming_entropy_2020,Wilming2022a} showed that the set of achievable states is described by a statement akin to the second law of thermodynamics.
	\begin{theorem} \label{thm:CEC}
		Let $\rho_{\ms S}$ and $\sigma_{\ms S}$ be density matrices on a finite-dimensional Hilbert space that are not unitarily equivalent. Then the following two statements are equivalent:
		\begin{enumerate}
			\item There exists a finite-dimensional density operator $\sigma_{\ms C}$ and a unitary operator $U$ on $\ms S\ms C$ such that
			\begin{align}\label{eq:unitary-catalysis}
				\tr_{\ms C}[U\rho_{\ms S}\otimes \sigma_{\ms C} U^\dagger] &= \sigma_{\ms S}\\
				\tr_{\ms S}[U\rho_{\ms S}\otimes\sigma_{\ms C} U^\dagger] &=\sigma_{\ms C}. \label{eq:unitary-catalysis2}
			\end{align}
			\item $H(\rho_{\ms S})< H(\sigma_{\ms S})$ and $\mathrm{rank}(\rho_{\ms S})\leq \mathrm{rank}(\sigma_{\ms S})$. 
		\end{enumerate}
	\end{theorem}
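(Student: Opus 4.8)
The plan is to prove the two implications separately. The forward direction (1)$\Rightarrow$(2) is a short exercise in entropy and rank bookkeeping, whereas (2)$\Rightarrow$(1) requires an explicit catalyst construction and is the substantive part; note that since $\mc R_{\mathrm{QM}}$ is not convex, the reduction of Sec.~\ref{subsec:partial-order-regularization} does not apply directly here.

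For (1)$\Rightarrow$(2), set $\sigma_{\ms{SC}} := U(\rho_{\ms S}\otimes\sigma_{\ms C})U^\dagger$. Unitary invariance and additivity of the von Neumann entropy give $H(\sigma_{\ms{SC}}) = H(\rho_{\ms S}) + H(\sigma_{\ms C})$, and subadditivity $H(\sigma_{\ms{SC}})\le H(\sigma_{\ms S}) + H(\sigma_{\ms C})$ then yields $H(\rho_{\ms S})\le H(\sigma_{\ms S})$. To promote this to a strict inequality I would argue by contradiction: equality in subadditivity forces $\sigma_{\ms{SC}} = \sigma_{\ms S}\otimes\sigma_{\ms C}$, and Lemma~\ref{lem:fundamental_nogo} (with trivial environment) then makes $\rho_{\ms S}$ and $\sigma_{\ms S}$ unitarily equivalent, contrary to assumption. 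For the rank condition I would use $\mathrm{rank}(\sigma_{\ms{SC}}) = \mathrm{rank}(\rho_{\ms S})\,\mathrm{rank}(\sigma_{\ms C})$ (unitary invariance and multiplicativity of rank under $\otimes$) together with the general support inclusion $\mathrm{supp}(\sigma_{\ms{SC}})\subseteq\mathrm{supp}(\sigma_{\ms S})\otimes\mathrm{supp}(\sigma_{\ms C})$, valid for any bipartite state; comparing ranks gives $\mathrm{rank}(\rho_{\ms S})\le\mathrm{rank}(\sigma_{\ms S})$.

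For (2)$\Rightarrow$(1), the natural entry point is that the two conditions in item~(2) coincide, via $H_0 = \log\mathrm{rank}$, with the conditions characterising correlated-catalytic transformations in the resource theory of noisy operations (Sec.~\ref{subsubsec:noisy-corr}). Thus one already obtains a noisy operation $T$ on $\ms{SC}$ and a catalyst $\omega_{\ms C}$ realising $\rho_{\ms S}\AC{\mc O_{\mathrm{NO}}}{\rm corr.}\sigma_{\ms S}$, and the remaining task is to dispense with the maximally-mixed ancilla $\id_{\ms E}/d_{\ms E}$ in its dilation by folding $\ms E$ into the catalyst. This last step is the main obstacle: the obvious candidate $\omega_{\ms C}\otimes(\id_{\ms E}/d_{\ms E})$ fails, because after the global unitary $\ms E$ is in general correlated with $\ms C$ and not returned maximally mixed, so the enlarged catalyst is not restored. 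Overcoming this requires a bespoke construction — morally a \emph{coherent} analogue of the cyclic Duan-type catalyst of Sec.~\ref{subsubsec:multi-copy}, with the classical index register replaced by a genuine quantum register advanced by a permutation unitary and a catalyst carrying an internal ``tower'' of blocks whose overall marginal is left invariant while the correlations it builds up with $\ms S$ soak up exactly the entropy gap, consistent with $I(\ms S:\ms C) = H(\sigma_{\ms S}) - H(\rho_{\ms S})$ in the final state.

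Finally, I expect the strict inequality $H(\rho_{\ms S}) < H(\sigma_{\ms S})$ to be precisely what supplies the slack needed to make this construction \emph{exact} rather than only asymptotically accurate, the rank inequality $\mathrm{rank}(\rho_{\ms S})\le\mathrm{rank}(\sigma_{\ms S})$ to guarantee that the supports are wide enough for the required isometries to exist (so that no support has to be ``destroyed''), and the boundary cases — $\rho_{\ms S}$ pure, or $\mathrm{rank}(\rho_{\ms S}) = \mathrm{rank}(\sigma_{\ms S})$ — to need separate checking. Finite-dimensionality of the catalyst should come for free, since the construction involves only finitely many auxiliary systems.
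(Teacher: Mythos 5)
Your direction (1)$\Rightarrow$(2) is correct and complete: writing $\sigma_{\ms{SC}}=U(\rho_{\ms S}\otimes\sigma_{\ms C})U^\dagger$, unitary invariance plus additivity and subadditivity of $H$ give $H(\rho_{\ms S})\le H(\sigma_{\ms S})$; the equality case of subadditivity forces $\sigma_{\ms{SC}}=\sigma_{\ms S}\otimes\sigma_{\ms C}$, and Lemma~\ref{lem:fundamental_nogo} with trivial $\ms E$ then contradicts the assumed non-unitary-equivalence, so the inequality is strict; and the support inclusion $\mathrm{supp}(\sigma_{\ms{SC}})\subseteq\mathrm{supp}(\sigma_{\ms S})\otimes\mathrm{supp}(\sigma_{\ms C})$, compared with $\mathrm{rank}(\sigma_{\ms{SC}})=\mathrm{rank}(\rho_{\ms S})\,\mathrm{rank}(\sigma_{\ms C})$, yields $\mathrm{rank}(\rho_{\ms S})\le\mathrm{rank}(\sigma_{\ms S})$. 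Nothing is missing in that half.

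The converse is where the theorem actually lives, and your proposal stops at exactly the point where a proof would have to begin. Invoking the noisy-operations correlated-catalysis result only restates the problem: removing the maximally mixed dilating environment, i.e.\ upgrading the mixed-unitary statement of \cite{muller_correlating_2018} to a single reversible unitary on $\ms{SC}$ with the catalyst marginal exactly preserved, \emph{is} the content of Theorem~\ref{thm:CEC} and of the cited works \cite{boes_von_2019,wilming_entropy_2020,Wilming2022a}. Concretely, since only a unitary is available, (2)$\Rightarrow$(1) is equivalent to exhibiting a finite-dimensional $\sigma_{\ms C}$ and a joint state $\sigma_{\ms{SC}}$ whose spectrum (with multiplicities) equals that of $\rho_{\ms S}\otimes\sigma_{\ms C}$ and whose marginals are exactly $\sigma_{\ms S}$ and $\sigma_{\ms C}$ — a spectrum-versus-marginals compatibility problem. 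Your sketch (a ``coherent Duan-type tower'' whose correlations with $\ms S$ soak up the entropy gap, the strict inequality ``supplying the slack'' for exactness, the rank condition ``guaranteeing wide enough supports'') names plausible ingredients but supplies neither the construction nor any argument that the approximation errors inherited from a typicality or asymptotic step can be absorbed \emph{exactly} while keeping the catalyst marginal \emph{exactly} invariant; that exactness argument is precisely the technical core of the original proofs and cannot be deferred as something you ``expect''. As written, the forward implication is proved, but (2)$\Rightarrow$(1) is a research plan rather than a proof, so the proposal has a genuine gap.
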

	The same results hold in the classical case where density matrices are replaced by probability vectors and unitary transformations by permutations.
	By continuity of von Neumann entropy, the theorem implies that any state $\sigma_{\ms S}$ can be reached from $\rho_{\ms S}$ to arbitrary accuracy using correlated catalysis if and only if $H(\sigma_{\ms S})\geq H(\rho_{\ms S})$.
	Thus we find a complete operational characterization of von Neumann entropy without any reference to thermodynamics, entanglement or information theory. 
	Previously, \cite{muller_correlating_2018} showed that Thm.~\ref{thm:CEC} holds if one allows for general mixed-unitary quantum channels instead of reversible unitary channels.
	\cite{boes_von_2019} used this to generalize a classic result of \cite{Aczel1974} for Shannon entropy to the quantum case:  
	If a continuous function on density matrices is a) invariant under unitary transformations, b) additive over tensor-products, and c) sub-additive, then it is given by von~Neumann entropy up to rescaling and (dimension-dependent) shift of the origin. 
	In particular, if appropriate normalization for maximally mixed states and pure states is required, this singles out von Neumann entropy uniquely. 
	
	In passing, we remark that the conditions in Eq.~\eqref{eq:unitary-catalysis} and ~\eqref{eq:unitary-catalysis2} also appear in Deutsch's analysis of quantum mechanics near closed time-like curves \cite{Deutsch1991}, where the condition that the state on $\ms C$ remains the same is a consistency requirement to prevent certain logical paradoxes.
	Deutsch already observed $H(\sigma_{\ms S})\geq H(\rho_{\ms S})$ and that \emph{some} $\sigma_{\ms C}$ and $\sigma_{\ms S}$ fulfilling Eq.~\eqref{eq:unitary-catalysis} and \eqref{eq:unitary-catalysis2} always exist once $\rho_{\ms S}$ and $U$ are specified. 
	
	Where does the entropy increase on $\ms S$ come from? It corresponds precisely to the correlations build up between system $\ms S$ and catalyst $\ms C$ when measured in terms of mutual information:
	\begin{align}
		H(\sigma_{\ms S}) - H(\rho_{\ms S}) = I({\ms S}:{\ms C})_{U\rho_{\ms S}\otimes \sigma_{\ms C} U^\dagger}.
	\end{align}
	
	How large does the system $\ms C$ need to be? When $\rho_{\ms S}\succ \sigma_{\ms S}$, the catalyst $\ms C$ can be chosen to be a maximally mixed state of dimension at most $\lceil \sqrt{d_{\ms S}}\rceil$ \cite{boes_catalytic_2018}, corresponding to a regime where the entropy increase on $\ms S$ can be considered large. However, as $H(\sigma_{\ms S})-H(\rho_{\ms S})$ becomes small, it can be shown that there exists catalytic state transitions requiring arbitrarily large catalyst dimensions \cite{boes_variance_2020}. 
	In particular if $H(\rho_{\ms S})=H(\sigma_{\ms S})$ either the two states are unitarily equivalent or there does not exist a finite-dimensional catalyst. 
	It is currently unknown whether an infinite-dimensional catalyst can be used in this case.  
	
	In Section~\ref{subsubsec:cooling} we discuss some applications of Theorem~\ref{thm:CEC} in the context of quantum thermodynamics.
	{\cite{gallego_what_2018} further used the previous result in \cite{muller_correlating_2018} to show that, in principle, a single catalyst may be used to bring arbitrary many many-body systems that are initially in equilibrium permanently out of equilibrium. }

	
	\subsection{Entanglement theory}
	\label{subsec:entanglement}
	Entanglement is perhaps the most striking manifestation of the non-classical nature of quantum mechanics \cite{EPR1935}. After its first experimental demonstration \cite{Aspect1982}, it was realized that entanglement may be used as a resource enabling new types of protocols, including new communication tasks \cite{Bennet1993} and unconditionally secure cryptographic schemes \cite{BB84,ekert1991quantum}.

	The most common approach to study entanglement is the so-called ``distant-lab'' paradigm, briefly introduced in Sec.~\ref{subsubsec:locc}. Consider the scenario where a multipartite quantum system is distributed to spatially-separated parties, who are restricted to act locally on their respective subsystems by performing local quantum operations. 
	A common assumption is that the parties can exchange classical information in order to enhance their measurement strategies. 
	Quantum operations implemented in this manner are known as local operations and classical communications (LOCC). 
	When communication is not allowed, e.g. in Bell non-locality, the relevant set of operations involves local operations and shared randomness (LOSR). 
	In this paradigm parties cannot communicate, however, they are allowed to share classical randomness. This captures the natural restrictions encountered in Bell-like experiments and non-local games \cite{Buscemi2012}. 
	In both cases, the set of free states $\mc S_{\mathrm{LOCC/LOSR}}$ is given by classically correlated states. 
	In the bipartite scenario involving Alice $({\ms A})$ and Bob $({\ms B})$, they take the form
	\begin{align}
		\rho_{\ms A\ms B} = \sum_i p_i\, \rho_{\ms{A}}^i \ot \rho_{\ms{B}}^i
	\end{align} 
	for some probability distribution $p_i$. Since LOSR operation is an LOCC protocol without communication, the LOSR set is a strict subset of LOCC. By definition, neither LOSR nor LOCC operations can create entanglement. 
	Both of these classes have a fairly intuitive physical description, however, they are known for being notoriously
	hard to characterize mathematically \cite{chitambar2014everything}.
	In this section, we review the role of catalysis LOCC (Sec.~\ref{subsubsec:locc-proper}) and LOSR (Sec.~\ref{subsubsec:LOSR}) transformations.
	
	\subsubsection{Local operations and classical communication (LOCC)}
	\label{subsubsec:locc-proper}
	A physical operation is called a $1$-way LOCC operation from $\ms{A}$ to $\ms{B}$
	when it can be implemented by applying arbitrary local quantum {operations} by Alice $(\ms{A})$ and Bob $(\ms{B})$, and a single round of classical communication from $\ms{A}$ to $\ms{B}$. 
	More generally, $n$-LOCC operations involve exchanging $n$ rounds of classical communication.
	The class of LOCC operations $\locc(\ms A:\ms B)$ between Alice and Bob is the union of all $n$-LOCC operations.
	It is known that $(n+1)$-LOCC operations are strictly more powerful than $n$-LOCC \cite{chitambar2014everything}.
	In general, the outcome of an LOCC operation on an initial state $\rho$ consists of classical measurement outcomes $x$, and the associated conditional quantum states $\sigma_x$.
	In other words, an LOCC operation corresponds to a \emph{quantum instrument} $\{\mc F_x\}$, where $\mc F_x$ are completely positive maps such that $\mc F:= \sum_x \mc F_x$ is a quantum channel and $\sigma_x = \mc F_x[\rho]/p_x$ with $p_x := \Tr[\mc F_x[\rho]]$.
	We say that an LOCC operation converts $\rho$ to $\sigma$ (deterministically) if $\sigma_x = \sigma$ for all $x$. 
	The success probability  to convert $\rho$ to $\sigma$ by the LOCC operation $\{\mc F_x\}$ is the total probability of the events $x$ that have $\sigma$ as outcome:
	\begin{align}
		p_{\text{succ}}(\rho\rightarrow \sigma|\{\mc F_x\}) := \sum_{x : \sigma_x=\sigma} p_x.  
	\end{align}
	The optimal success probability $p_{\text{succ}}(\rho\rightarrow \sigma)$ to convert $\rho$ into $\sigma$ via LOCC is obtained by maximizing the success probability over all LOCC instruments $\{\mc F_x\}$. 
	A subtlety easy to miss is the following: LOCC allows for discarding subsystems and in particular discarding subsystems storing classical measurement records, thereby effectively averaging over these outcomes. 
	Thus if $\sigma = \sum_x p_x \sigma_x$ is the average outcome of a given LOCC protocol acting on $\rho$, then there is a different LOCC protocol (represented by the quantum channel $\mc F = \sum_x  \mc F_x$) that converts $\rho$ into $\sigma$ deterministically. If $\sigma$ is pure, then all the $\sigma_x$ already need to be identical to $\sigma$.
	
	Interestingly, when restricted to bipartite pure states, then the most general LOCC transformation requires only one-way communication \cite{lo_concentrating_2001}. This is because any bipartite pure state admits a Schmidt decomposition, which is symmetric (up to local unitaries) under exchange of parts $\ms{A}$ and $\ms{B}$. Therefore, without loss of generality, an LOCC protocol involving pure states can always be executed by a single measurement on one party, followed by a local unitary operation on the second party conditioned on the measurement outcome.  Consequently, while characterizing general LOCC transformations for generally mixed states is difficult, for bipartite pure states a simple characterization exists and is known as Nielsen's theorem \cite{nielsen_conditions_1999}.
	\begin{theorem}[Nielsen's theorem]
		\label{thm:nielsen}
		State $\ket{\psi}$ can be converted into state $\ket{\phi}$ by means of LOCC if and only if
		\begin{align}
			\label{eq:ent_nielsen_majorization}
			\eig(\psi_{\ms A}) \prec \eig(\phi_{\ms A}),
		\end{align}
		where $\eig(\psi_{\ms A})$ is the vector of Schmidt coefficients of $\ket \psi$, i.e., the eigenvalues of the reduced state $\psi_{\ms A} := \Tr_{\ms{B}}\left[\dyad{\psi}_{\ms{AB}}\right]$. 
	\end{theorem}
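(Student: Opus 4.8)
\emph{Overview of the plan.} I would prove the two implications of Nielsen's theorem separately, relying throughout on the reduction recalled above (due to \cite{lo_concentrating_2001}): a deterministic LOCC conversion between bipartite pure states can always be taken to be one-way, i.e.\ Alice performs a single measurement $\{M_k\}$ with $\sum_k M_k^\dagger M_k=\id$, sends the outcome $k$ to Bob, who applies a local unitary $V_k$, such that the global output equals $\ket\phi$ for \emph{every} outcome $k$. Write $\vec p:=\eig(\psi_{\ms A})$ and $\vec q:=\eig(\phi_{\ms A})$ for the two Schmidt vectors.

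\emph{Necessity.} Suppose $\ket\psi\to\ket\phi$ via such a one-way protocol. Since Bob's correction (and any local unitary on Alice, which we may bundle into $M_k$) leaves Alice's marginal spectrum unchanged, the conditional marginal $(\psi_k)_{\ms A}=M_k\psi_{\ms A}M_k^\dagger/p_k$, with $p_k=\Tr[M_k^\dagger M_k\psi_{\ms A}]$ and $\sum_k p_k=1$, is unitarily equivalent to $\phi_{\ms A}$ for each $k$. I would then polar-decompose $M_k\psi_{\ms A}^{1/2}$ to write $M_k\psi_{\ms A}^{1/2}=\sqrt{p_k}\,V_k\,\phi_{\ms A}^{1/2}\,W_k$ with $V_k,W_k$ unitary (working on $\mathrm{supp}(\psi_{\ms A})$, which is all that matters since $\ket\psi$ has no amplitude elsewhere), substitute $M_k=\sqrt{p_k}\,V_k\phi_{\ms A}^{1/2}W_k\psi_{\ms A}^{-1/2}$ into $\sum_k M_k^\dagger M_k=\id$, and conjugate by $\psi_{\ms A}^{1/2}$ to obtain
\begin{align}
	\psi_{\ms A}=\sum_k p_k\,W_k^\dagger\,\phi_{\ms A}\,W_k .
\end{align}
Thus $\psi_{\ms A}$ is the image of $\phi_{\ms A}$ under a mixed-unitary (hence unital, trace-preserving) channel, and $\eig(\psi_{\ms A})\prec\eig(\phi_{\ms A})$ follows from the quantum Hardy--Littlewood--P\'olya/Uhlmann characterization of majorization \cite{hardy1952inequalities}. (Equivalently one may verify directly that each partial-sum functional $\vec\lambda\mapsto\sum_{i=1}^{k}\lambda_i^\downarrow$ of Alice's marginal cannot decrease under one-way LOCC.)

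\emph{Sufficiency.} Assuming $\vec p\prec\vec q$, I would construct the protocol explicitly. Work in Schmidt bases so $\psi_{\ms A}=\mathrm{diag}(\vec p)$ and $\phi_{\ms A}=\mathrm{diag}(\vec q)$; by Hardy--Littlewood--P\'olya there is a doubly stochastic $D$ with $\vec p=D\vec q$, and by Birkhoff's theorem $D=\sum_i t_i\,\Pi_i^\dagger$ for permutation matrices $\Pi_i$ and a probability distribution $(t_i)$. Let Alice measure with
\begin{align}
	M_i:=\sqrt{t_i}\,\mathrm{diag}(\vec q)^{1/2}\,\Pi_i\,\mathrm{diag}(\vec p)^{-1/2}.
\end{align}
One checks $\sum_i M_i^\dagger M_i=\mathrm{diag}(\vec p)^{-1/2}\,\mathrm{diag}\!\big(\sum_i t_i\Pi_i^\dagger\vec q\big)\,\mathrm{diag}(\vec p)^{-1/2}=\id$, so this is a legitimate measurement, and $M_i\,\mathrm{diag}(\vec p)\,M_i^\dagger=t_i\,\mathrm{diag}(\vec q)$, so outcome $i$ occurs with probability $t_i$ and leaves Alice's marginal equal to $\phi_{\ms A}$. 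Hence $(M_i\otimes\id)\ket\psi/\sqrt{t_i}$ has the same Schmidt vector $\vec q$ as $\ket\phi$, so it equals $\ket\phi$ up to local unitaries, which Bob (and, if needed, Alice) undo conditioned on $i$; this gives a deterministic conversion.

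\emph{Main obstacle.} The individual steps are short, and I expect the only real friction to be: (i) invoking correctly the reduction of arbitrary multi-round pure-state LOCC to one-way protocols, which I would cite rather than reprove; and (ii) the degenerate, rank-deficient bookkeeping -- when $\psi_{\ms A}$ has zero eigenvalues the inverse square roots are defined only on $\mathrm{supp}(\psi_{\ms A})$, so one must check that the completeness relation and the identity $\psi_{\ms A}=\sum_k p_k W_k^\dagger\phi_{\ms A}W_k$ still hold after restriction (using a rank/trace-normalization argument, noting $\mathrm{rank}(\phi_{\ms A})\le\mathrm{rank}(\psi_{\ms A})$), or else run a short continuity argument. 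The genuinely computational inputs are the $\sum_i M_i^\dagger M_i=\id$ verification and the two classical facts (Birkhoff's theorem; majorization $\Leftrightarrow$ image under a doubly stochastic/unital map).
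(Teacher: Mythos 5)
The paper does not prove Theorem~\ref{thm:nielsen}; being a review, it states the result and cites \cite{nielsen_conditions_1999}, so there is no in-paper argument to compare against. Judged on its own terms, your proposal is correct and is essentially Nielsen's original proof. The necessity direction (reduce to one-way LOCC, polar-decompose $M_k\psi_{\ms A}^{1/2}$, use completeness to get $\psi_{\ms A}=\sum_k p_k W_k^\dagger\phi_{\ms A}W_k$, then invoke Uhlmann's characterization of majorization via mixed-unitary/unital maps) and the sufficiency direction (Hardy--Littlewood--P\'olya plus Birkhoff to build the measurement $M_i=\sqrt{t_i}\,\mathrm{diag}(\vec q)^{1/2}\Pi_i\,\mathrm{diag}(\vec p)^{-1/2}$) both check out; I verified $\sum_i M_i^\dagger M_i=\id$ and $M_i\,\mathrm{diag}(\vec p)\,M_i^\dagger=t_i\,\mathrm{diag}(\vec q)$ on the support of $\psi_{\ms A}$. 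The two caveats you flag are the right ones and are genuinely routine: the one-way reduction is appropriately cited (the paper itself invokes \cite{lo_concentrating_2001} for this, and also notes that for a pure target every measurement branch must already equal $\ket\phi$, which is what your necessity step uses), and the rank-deficient case is handled by restricting to $\mathrm{supp}(\psi_{\ms A})$ together with your observation that $\vec p\prec\vec q$ forces $\mathrm{rank}(\phi_{\ms A})\le\mathrm{rank}(\psi_{\ms A})$, so the completeness relation can be closed with a Kraus element on the kernel that never fires on $\ket\psi$. One cosmetic point: in the final step of sufficiency, a correcting unitary may be needed on both sides (Alice conditions on her own outcome, Bob on the communicated one), which you already note; this keeps the protocol one-way.
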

	
	Nielsen's theorem provides an elegant connection between entanglement transformations and the theory of majorization \cite{marshall_inequalities_2011}. Moreover, the conditions given in Eq. (\ref{eq:ent_nielsen_majorization}) are easy to check numerically, and therefore provide a powerful tool for determining when one pure bipartite state can be converted into another via LOCC.
	Nielsen's theorem directly implies that there exist incomparable states, in the sense that neither $ \psi$ nor $ \phi $ can be directly transformed into another using LOCC. 
	To address this interconversion barrier, Vidal generalized Nielsen's work by characterizing probabilistic transformations in the LOCC framework \cite{vidal_entanglement_1999}.
	
	\begin{theorem}[Vidal's theorem]
		State $\ket{\psi}$ can be (conclusively) converted into $\ket{\phi}$ with probability $\mu$ by means of LOCC if and only if
		\begin{align}
			\eig(\psi_{\ms A}) \prec^w \mu \eig(\phi_{\ms A}).
		\end{align}
	\end{theorem}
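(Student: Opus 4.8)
The plan is to establish both implications by reducing to Nielsen's theorem (Theorem~\ref{thm:nielsen}), using the fact recalled above that for bipartite pure states every LOCC protocol may be taken one-way: Alice applies local Kraus operators $\{A_x\}$ with $\sum_x A_x^\dagger A_x=\id$, Bob applies a unitary conditioned on the outcome $x$, and a \emph{conclusive} conversion with probability $\mu$ means that some set of outcomes, of total probability $\mu$, each produces $\ket\phi$ up to a local unitary. Write $\vec p:=\eig(\psi_{\ms A})^\downarrow$ and $\vec q:=\eig(\phi_{\ms A})^\downarrow$ (Schmidt coefficients in non-increasing order, padded with zeros to a common length $d$), let $S_m(\rho):=\sum_{i\le m}\lambda^\downarrow_i(\rho)$ be the sum of the $m$ largest eigenvalues, and set $E_k(\rho):=1-S_{k-1}(\rho)=\sum_{i\ge k}\lambda^\downarrow_i(\rho)$. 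By definition, $\eig(\psi_{\ms A})\prec^w\mu\,\eig(\phi_{\ms A})$ is exactly the family of inequalities $E_k(\psi_{\ms A})\ge\mu\,E_k(\phi_{\ms A})$ for all $k$, equivalently $\mu\le\min_k E_k(\psi_{\ms A})/E_k(\phi_{\ms A})$.

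\emph{Necessity.} The key input is Vidal's lemma that every $E_k$ is an entanglement monotone, i.e.\ non-increasing on average under LOCC; equivalently, for any trace-preserving family $\{A_x\}$ of local Kraus operators, $\sum_x S_{k-1}(A_x\,\psi_{\ms A}\,A_x^\dagger)\ge S_{k-1}(\psi_{\ms A})$, which one proves from the variational formula $S_m(\rho)=\max\{\tr[X\rho]:0\le X\le\id,\ \tr X=m\}$ together with $\sum_x A_x^\dagger A_x=\id$. Granting this, apply the monotone inequality to a protocol realizing $\ket\psi\to\ket\phi$ with success probability $\mu$: the success branches all contribute $E_k(\phi_{\ms A})$ and carry total weight $\mu$, the failure branches contribute $\ge 0$, hence $E_k(\psi_{\ms A})\ge\mu\,E_k(\phi_{\ms A})$ for every $k$, which is $\eig(\psi_{\ms A})\prec^w\mu\,\eig(\phi_{\ms A})$.

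\emph{Sufficiency.} Assume the weak-majorization hypothesis; taking $k$ just past the support of $\vec p$ shows $\mathrm{rank}(\psi_{\ms A})\ge\mathrm{rank}(\phi_{\ms A})$, so the padding is harmless. Introduce the probability vector
\begin{equation}
	\vec r:=\mu\,\vec q+(1-\mu)\,(1,0,\dots,0)^\top=\big(\mu q_1+1-\mu,\ \mu q_2,\ \dots,\ \mu q_d\big),
\end{equation}
which is already ordered non-increasingly and has partial sums $\sum_{i\le k}r_i=(1-\mu)+\mu\sum_{i\le k}q_i$. Complementing partial sums, $\vec p\prec\vec r$ is \emph{equivalent} to $E_k(\psi_{\ms A})\ge\mu\,E_k(\phi_{\ms A})$ for all $k$, hence holds by assumption. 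By Nielsen's theorem there is an LOCC protocol mapping $\ket\psi$ deterministically to $\ket{\psi''}:=\sum_i\sqrt{r_i}\,\ket{ii}$. Now let Alice perform the two-outcome measurement with diagonal Kraus operators $A_{\mathrm s}=\mathrm{diag}(b_1,\dots,b_d)$ and $A_{\mathrm f}=\mathrm{diag}\!\big(\sqrt{1-b_1^2},\dots,\sqrt{1-b_d^2}\big)$, where $b_i^2:=\mu q_i/r_i$; since $r_1=\mu q_1+1-\mu$ and $r_i=\mu q_i$ for $i\ge2$, this is well defined with $b_1\le1$ and $b_i=1$ for $i\ge2$. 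The outcome ``$\mathrm s$'' occurs with probability $\sum_i b_i^2 r_i=\mu\sum_i q_i=\mu$ and leaves a state with Schmidt coefficients $b_i^2 r_i/\mu=q_i$, i.e.\ $\ket\phi$ up to local unitaries (Bob corrects after Alice broadcasts the outcome); the outcome ``$\mathrm f$'' leaves the product state $\ket{11}$. Composing the two steps realizes $\ket\psi\to\ket\phi$ conclusively with probability $\mu$.

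\emph{Main obstacle.} The substantive step is the necessity direction, specifically the proof that the $E_k$ are LOCC monotones: an arbitrary LOCC protocol may involve unboundedly many rounds and outcomes, and one must rule out that any of them beats $\mu\le\min_k E_k(\psi_{\ms A})/E_k(\phi_{\ms A})$. The reduction to one-way LOCC for pure states and the variational (Ky~Fan) formula for $S_m$ are what make this manageable. The sufficiency direction is by contrast a transparent two-step construction, once one observes that $\vec r$ is engineered precisely so that $\mu\,\vec q$ fits inside it entrywise (with equality except in the top slot), allowing a single diagonal filtering after the Nielsen step.
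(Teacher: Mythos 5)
Your proof is correct, and it is essentially Vidal's original argument, which the paper only states and cites (\cite{vidal_entanglement_1999}) without proving: necessity via monotonicity of the tail sums $E_k$ under LOCC, sufficiency via a Nielsen step to the intermediate vector $\mu\,\vec q+(1-\mu)\,e_1$ followed by a single local filter. One small presentational point: the monotonicity of $E_k$ is cleanest if you apply the Ky Fan variational formula to \emph{Bob's} marginal, since $\sum_x p_x\sigma_{\ms B}^x=\psi_{\ms B}$ is exactly what trace preservation of $\{A_x\}$ gives you (whereas $\sum_x A_x\psi_{\ms A}A_x^\dagger\neq\psi_{\ms A}$ in general), and then transfer back to Alice's side by Schmidt symmetry.
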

	The symbol $\prec^w$ denotes a general form of majorization called \emph{weak} majorization \cite{bhatia2013matrix} to compare unnormalized distributions. Vidal also presented an optimal protocol demonstrating that the transformation from $\ket{\psi}$ to $\ket{\phi}$ is always possible probabilistically, with success probability
	\begin{align}
		\label{eq:ent_vidal_conds}
		p_{\text{succ}}(\ket{\psi} \rightarrow \ket{\phi}) = \min_{1 \leq k \leq d} \frac{1 - \mathcal{L}_k(\psi)}{1 - \mathcal{L}_k(\phi)},
	\end{align}
	where $\mathcal{L}_k(\psi) := \sum_{i=1}^{k-1}\lambda^{\downarrow}(\psi_A)_i$ with $\lambda(\psi_A)_0 \equiv 0$. 
	See Example \ref{example:probabilistic} for an application.
	
	\begin{example} Consider the following bipartite pure states
		\label{example:probabilistic}
		\begin{align}
			\!\!\ket{\psi} &= \sqrt{0.4} \ket{00} + \sqrt{0.4} \ket{11} + \sqrt{0.1} \ket{22} + \sqrt{0.1} \ket{33}\!,\! \nonumber \\
			\label{eq:ent_state2}
			\!\!\ket{\phi} &= \sqrt{0.5} \ket{00} + \sqrt{0.25}\ket{11} + \sqrt{0.25} \ket{22}.
		\end{align}
		The eigenvalues of $\psi_{\ms A}$ and $\phi_{\ms A}$ are given by
		\begin{align}
			\eig(\psi_{\ms A}) &= [0.4, 0.4, 0.1, 0.1],\\
			\eig(\phi_{\ms A}) &= [0.5, 0.25, 0.25, 0].
		\end{align}
		Since $\eig(\psi_{\ms A}) \nprec  \eig (\phi_{\ms A})$, Nielsen's theorem precludes the existence of a deterministic LOCC transformation $\ket{\psi} \rightarrow \ket{\phi}$. Still, if one attempts to transform $\ket{\psi}$ into $\ket{\phi}$ probabilistically, Vidal's theorem from Eq. (\ref{eq:ent_vidal_conds}) implies that it is possible to achieve it with probability $p_{\text{succ}}(\ket{\psi} \rightarrow \ket{\phi}) = 4/5$.
	\end{example}
	
	It was later observed in Ref.~\cite{jonathan_entanglement-assisted_1999} that a catalytic version of LOCC can be considered, where Alice and Bob are allowed to use pre-shared entanglement in the form of auxiliary bipartite pure states. Importantly, one can do so in a way such that the auxiliary state is returned exactly in its initial state after the transformation. 
	To demonstrate this, Ref.~\cite{jonathan_entanglement-assisted_1999} proposed a particular example of this type of transformation, see Example \ref{example:jonathan_plenio} for details. 
	
	\begin{example}
		\label{example:jonathan_plenio}
		Consider the two states defined in Eq. (\ref{eq:ent_state2}) and a two-qubit catalyst prepared in the state 
		\begin{align}
			\ket{\omega} = \sqrt{0.6} \ket{00} + \sqrt{0.4}\ket{11}.
		\end{align}
		Nielsen's theorem applied to the transformation $\ket{\psi} \ot \ket{\omega} \rightarrow \ket{\phi} \ot \ket{\omega}$ implies that the transition can be realized deterministically via LOCC. 
	\end{example}
	
	The example demonstrates that an appropriately chosen catalyst can increase the success probability of a transformation, even from $ p<1 $ to $ p=1 $. This type of transformation was dubbed by Ref.~\cite{jonathan_entanglement-assisted_1999} as entanglement-assisted LOCC (ELOCC), and is the first example of strict catalysis in LOCC. The authors also show that two bipartite pure states $\ket{\psi}$ and $\ket{\phi}$ that are inter-convertible via ELOCC, i.e. each state is convertible to the other, are already equivalent up to local unitaries. In other words, strict catalysis does not arise for the question of inter-convertibility between pure states (cf. Sec.~\ref{subsubsec:dichotomies}). 
	The mathematical characterization of strict catalysis in the framework of LOCC was later examined in Ref.~\cite{daftuar_mathematical_2001}. 
	It was discovered that there does not exist an upper bound on the dimension of catalysts that should be considered. 
	More specifically, for most initial states, the set of final states achievable with catalysts of local dimension $d_{\ms{C}}$ is strictly larger than the set of states achievable with catalysts of smaller dimension, $d'_{\ms C} < d_{\ms C}$. 
	Moreover, there exist state-transitions that cannot be implemented with any finite-dimensional catalyst, but may be implemented with an infinite dimensional catalyst \cite{daftuar_sumit_kumar_eigenvalue_2004}.
	This poses serious difficulty in characterizing which state transformations are possible under ELOCC. 
	Furthermore, Ref.~\cite{daftuar_mathematical_2001} proved that any entangled state which has at least two non-zero and non-equal Schmidt coefficients can serve as a catalyst for some entanglement transformation. 
	It is clear that a separable state cannot be useful as a catalyst. However, counter-intuitively, the maximally entangled state cannot catalyze any transition as well, since taking a tensor product with this state would preserve the majorization structure.
	
	In general, not much is known about the structure of strict catalysis in LOCC in the case when the dimension of the catalyst is bounded. An exception to this unsatisfactory state of the art is Ref.~\cite{anspach2001two} which derived the necessary and sufficient conditions for the existence of a catalyst in the case when the systems $\ms{A}$ and $\ms{B}$ have local dimension $d_{\ms{A}} = d_{\ms{B}} = 4$ and the catalyst local dimension is $d_\ms{C} = 2$.  Ref.~\cite{xiaoming_sun_existence_2005} also presented a polynomial time algorithm to decide whether a given entanglement transformation can be deterministically catalyzed, by a pure bipartite catalyst with local dimension $k$. 
	
	The first full characterization of strict catalysis in entanglement theory (ELOCC) for the case of unbounded catalyst dimension was by Ref.~\cite{turgut_catalytic_2007}, using an infinite set of entropic conditions. 
	Independently, Klimesh \cite{klimesh_inequalities_2007} proposed a different description of a complete set of inequalities characterizing ELOCC. 
	These two sets of conditions are in fact equivalent, see also \cite{pereira_dirichlet_2013,kribs_trumping_2015,pereira_extending_2015-1}.
	Finally, the closure of the set containing all states reachable from a fixed entangled state $\ket{\psi}$ was studied in \cite{kribs_trumping_2015}, where it was found that the set can be fully described using a generalization of the majorization relation called \emph{power majorization} \cite{allen1988power}.
	The mathematical structure of probabilistic entanglement transformations was studied by Ref.~\cite{feng_catalyst-assisted_2005} who gave a necessary and sufficient condition for the existence of strict catalysts, that can increase the success probability of a state transformation{, see also the previous works \cite{feng_when_2004,duan_trade-off_2005}. \cite{feng_catalyst-assisted_2005} showed} that the maximal probability of success $p_{\text{succ}}$ for transformation $\ket{\psi} \ot \ket{\omega} \rightarrow \ket{\phi} \ot \ket{\omega}$ optimized over $\ket{\omega}$ depends only on the minimal Schmidt coefficients of $\ket{\psi}$ and $\ket{\phi}$.  Moreover, the authors of \cite{feng_catalyst-assisted_2005} have also derived necessary and sufficient conditions for a given $k$-dimensional bipartite pure catalyst state $\ket{\omega_k}$ to enable a given transformation $\ket{\psi} \rightarrow \ket{\phi}$.
	
	One of the main limitations of studying catalysis in LOCC arises from the complete state transition conditions being known only for bipartite pure states. Ref.~\cite{eisert_catalysis_2000} made first steps in generalizing catalysis to the case of mixed states. Specifically, they found an explicit example of two bipartite mixed states $\rho, \sigma \in \mathcal{D}(\mathcal{H}_{\ms A} \ot \mathcal{H}_{\ms B})$ such that $\rho\AC{\mathcal{O}_{\rm LOCC}}{}~\sigma$, but there exists no LOCC transformation turning $\rho$ into $\sigma$.
	
	\subsubsection{Entanglement distillation and formation}
	\label{subsubsec:entanglement-distillation}
	Suppose that Alice and Bob would like to communicate quantum information over a noisy quantum channel. The quality of their communication depends crucially on their ability of maintaining a high degree of entanglement \cite{lloyd1997}. Entanglement distillation and formation constitute two fundamental protocols that aim to enhance quantum communication by collectively processing multiple copies of quantum states \cite{Bennett1996, D_r_2007,Horodecki2009}. 
	
	In the paradigmatic setting of entanglement distillation, Alice and Bob share $n \gg 1$ copies of a bipartite state $\rho$, and the aim is to transform them into $m$ copies of the maximally-entangled state $\phi$ via LOCC, where the rate is given by $R := m/n$ \cite{Bennett1996}. Formally, the \emph{asymptotic distillation cost} $E_D(\rho)$ is defined as the largest rate $R^*$ such that 
	\begin{align}
		\lim_{ n \rightarrow \infty} \quad \inf_{\mathcal{E} \in \locc(\ms{A}: \ms{B})}   \norm{\mathcal{E}[\rho^{\ot n}] - \phi_+^{\ot m}}_1  = 0
	\end{align}
	While $E_D(\rho)$ measures the maximum rate of distilling $\phi$ from $\rho$ via LOCC, one can also ask about the inverse process: what is the optimal rate of creating $n$ copies of $\rho$ from $m$ maximally entangled states $\phi_+$. This has been termed \emph{entanglement
		formation} \cite{Bennet1996_mixedstate}, and the associated optimal conversion rate is the \emph{asymptotic formation cost} $E_F(\rho)$, i.e. the largest rate $R^* = n/m$ such that
	\begin{align}
		\label{eq:ent_form}
		\lim_{ n \rightarrow \infty} \quad \inf_{\mathcal{E} \in \locc(\ms{A}: \ms{B})}   \norm{\mathcal{E}[\phi_+^{\ot m}] - \rho^{\ot n}}_1  = 0
	\end{align}
	Generally $E_D(\rho) \neq E_F(\rho)$, leading to the fundamental irreversibility of entanglement transformations even in the asymptotic limit. However, for bipartite pure states, the two entanglement quantifiers coincide and are equal to the entanglement entropy of the state, i.e. for a pure state $\psi_{\ms{AB}}$,
	\begin{align}
		E_D(\psi_{\ms{AB}}) = E_F(\psi_{\ms{AB}}) = H(\psi_{\ms{A}}).
	\end{align}
	Outside of the asymptotic limit, i.e. when $n < \infty$, the quantifiers $E_D(\rho)$ and $E_F(\rho)$ lose their operational significance. 
	Recently it was discovered that {this operational significance is recovered when correlated catalysis is allowed.} 
	This approach was addressed in Ref.~\cite{Kondra2021} where a variant of Theorem \ref{thm:nielsen} was proven for correlated catalytic LOCC with bipartite pure states. More specifically, Ref.~\cite{Kondra2021} proved the following theorem, which can be seen as an instance of Lemma~\ref{lemma:partial-order-regularization}.
	
	\begin{theorem} Given bipartite pure states $\ket{\psi}_{\ms A\ms B}$ and $\ket{\phi}_{\ms A\ms B}$, then $H(\psi_{\ms A}) \geq H(\phi_{\ms A})$ iff
		$\ket{\psi}_{\ms A\ms B} \AC{\mathcal{O}_{\rm LOCC}}{\rm corr.}\ket{\phi}_{\ms A\ms B}$ to arbitrary accuracy in terms of trace-distance.
	\end{theorem}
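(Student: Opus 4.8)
\emph{Proof sketch.} The statement is an instance of Lemma~\ref{lemma:partial-order-regularization} in one direction and of an entanglement-monotone argument in the other, so I would prove the two implications separately. For the implication from $H(\psi_{\ms A})\geq H(\phi_{\ms A})$ to the catalytic transformation, the essential input is the asymptotic reversibility of bipartite pure-state entanglement under LOCC: combining entanglement concentration (distilling maximally entangled pairs from $\psi^{\ot n}$ at rate $H(\psi_{\ms A})$) with entanglement dilution (producing $\phi^{\ot m}$ at rate $1/H(\phi_{\ms A})$ per maximally entangled pair) gives $R^{\infty}(\psi,\phi)=H(\psi_{\ms A})/H(\phi_{\ms A})$. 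If $H(\psi_{\ms A})>H(\phi_{\ms A})$ strictly, then $R^{\infty}>1$, so for every $\epsilon>0$ there is a finite $n$ with $R^{n}_{\epsilon}(\psi,\phi)\geq 1$; since $\mathcal{O}_{\rm LOCC}$ allows classical conditioning and permutations of identical subsystems, Lemma~\ref{lemma:partial-order-regularization} yields a correlated-catalytic transformation $\psi_{\ms A\ms B}\AC{\mathcal{O}_{\rm LOCC}}{\rm corr.}\phi^{\epsilon}$ with $\Delta(\phi^{\epsilon},\phi_{\ms A\ms B})\leq\epsilon$, which is the transformation to arbitrary accuracy claimed in the statement. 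The borderline case $H(\psi_{\ms A})=H(\phi_{\ms A})$ I would handle by first replacing $\phi$ with a slightly less entangled pure state $\tilde\phi$ satisfying $H(\tilde\phi_{\ms A})<H(\psi_{\ms A})$ and $\Delta(\tilde\phi_{\ms A\ms B},\phi_{\ms A\ms B})\leq\epsilon/2$, running the argument to accuracy $\epsilon/2$, and invoking the triangle inequality.

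For the converse I would use the squashed entanglement $E_{\mathrm{sq}}$, which is simultaneously (i) an LOCC monotone, (ii) additive on tensor products, (iii) superadditive, i.e. $E_{\mathrm{sq}}(\rho_{\ms S_1\ms S_2})\geq E_{\mathrm{sq}}(\rho_{\ms S_1})+E_{\mathrm{sq}}(\rho_{\ms S_2})$, (iv) asymptotically continuous, and (v) equal to $H(\psi_{\ms A})$ on every bipartite pure state. Fix $\epsilon>0$, a finite-dimensional catalyst $\omega_{\ms C}$ and $\mathcal{F}\in\mathcal{O}_{\rm LOCC}$ with $\mathcal{F}[\psi_{\ms{AB}}\ot\omega_{\ms C}]=\eta_{\ms{SC}}$, $\tr_{\ms S}[\eta_{\ms{SC}}]=\omega_{\ms C}$ and $\Delta(\eta_{\ms S},\phi_{\ms{AB}})\leq\epsilon$. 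Chaining (v), (ii), monotonicity under $\mathcal{F}$, and (iii),
\begin{align}
H(\psi_{\ms A})+E_{\mathrm{sq}}(\omega_{\ms C}) &= E_{\mathrm{sq}}(\psi_{\ms{AB}}\ot\omega_{\ms C}) \geq E_{\mathrm{sq}}(\eta_{\ms{SC}})\nonumber\\
&\geq E_{\mathrm{sq}}(\eta_{\ms S})+E_{\mathrm{sq}}(\omega_{\ms C}),
\end{align}
and since $E_{\mathrm{sq}}(\omega_{\ms C})<\infty$ this gives $H(\psi_{\ms A})\geq E_{\mathrm{sq}}(\eta_{\ms S})$. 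Letting $\epsilon\to 0$ and using (iv) with (v), $E_{\mathrm{sq}}(\eta_{\ms S})\to E_{\mathrm{sq}}(\phi_{\ms{AB}})=H(\phi_{\ms A})$, and therefore $H(\psi_{\ms A})\geq H(\phi_{\ms A})$.

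\textbf{Main obstacle.} On the first implication the only delicate point is that Lemma~\ref{lemma:partial-order-regularization} requires $R^{n}_{\epsilon}\geq 1$ at a \emph{finite} $n$, not merely $R^{\infty}\geq 1$; the perturbation of $\phi$ above absorbs the finite-size corrections at the equality boundary $H(\psi_{\ms A})=H(\phi_{\ms A})$. On the converse all the effort is in selecting a monotone with the right combination of properties: because correlated catalysis can freely decorrelate subsystems the monotone is forced to be superadditive, and it must also be asymptotically continuous in order to send the error on $\ms S$ to zero. The relative entropy of entanglement is asymptotically continuous but not known to be superadditive, so it does not suffice directly; squashed entanglement does. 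One could instead run the converse with the distillable entanglement $E_D$ (superadditive, additive when one tensor factor is pure, equal to $H(\psi_{\ms A})$ on pure states), but then the continuity of $E_D$ near the pure state $\phi$ must be argued separately, whereas $E_{\mathrm{sq}}$ sidesteps this.
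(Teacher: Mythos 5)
Your proposal is correct, and the forward implication is exactly the route the paper indicates: the theorem is presented as an instance of Lemma~\ref{lemma:partial-order-regularization}, fed by the asymptotic reversibility of bipartite pure-state entanglement ($R^{\infty}(\psi,\phi)=H(\psi_{\ms A})/H(\phi_{\ms A})$ from concentration plus dilution), with the boundary case $H(\psi_{\ms A})=H(\phi_{\ms A})$ absorbed by an $\epsilon/2$-perturbation of the target. The review itself never spells out the converse (it defers to the cited reference), so your squashed-entanglement argument is a genuine addition rather than a restatement; it is sound, because $E_{\mathrm{sq}}$ does have all five properties you invoke — in particular the superadditivity you need is across the cut $\ms{AC_A}:\ms{BC_B}$, which is precisely the form established by Christandl and Winter, and since the output system $\ms{AB}$ has fixed dimension, ordinary (Alicki--Fannes-type) continuity at fixed $d$ suffices to take $\epsilon\to 0$, so the $\log d$ factor in asymptotic continuity causes no trouble. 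Your diagnosis of why the relative entropy of entanglement fails (no known superadditivity) and why $E_D$ would need a separate continuity argument near the pure target is also the right way to see why the choice of monotone is the whole content of the converse. Two small points worth making explicit: (i) if $H(\phi_{\ms A})=0$ the rate formula degenerates, but the transformation is then trivial (discard and locally prepare the product state $\phi$), and (ii) the catalyst in the bipartite setting is itself distributed as $\omega_{\ms{C_A C_B}}$, so both the additivity step $E_{\mathrm{sq}}(\psi\ot\omega)=H(\psi_{\ms A})+E_{\mathrm{sq}}(\omega)$ and the superadditivity step must be read with respect to the joint Alice/Bob partition — which they are, but the notation $\omega_{\ms C}$ slightly obscures this.
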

	More generally, with the help of a correlated-catalytic LOCC operation, it is possible to convert \emph{any} distillable state $\rho_{\ms {AB}}$ (to arbitrary accuracy) into a pure state $\psi_{\ms {AB}}$ with entanglement entropy $H(\psi_{\ms A}) \leq E_D(\rho_{\ms {AB}})$.
	Conversely, via correlated-catalytic LOCC it is also possible to create any bipartite quantum state $\rho_{\ms {AB}}$ from a pure entangled state $\psi_{\ms {AB}}$ with entanglement entropy $H(\psi_{\ms A}) \geq E_{\ms C}(\rho_{\ms {AB}}).$ 
	
	Furtermore, the authors of Ref.~\cite{Kondra2021} used similar techniques to show that in the tripartite setting, i.e. when three spatially separated parties $\ms{A}$, $\ms{B}$ and $\ms{C}$ share a pure quantum state $\ket{\psi}$, the analog of distillable entanglement extended by the use of correlated-catalysis is given by $\min(H(\psi_{\ms A}), H(\psi_{\ms B}))$, in close analogy with the asymptotic setting \cite{Smolin2005entanglement}.

	{The above result uses a catalyst to convert an asymptotic result to the single-shot setting. 
		Previsouly, \cite{bennett_exact_2000} defined a notion of catalysis directly in the asymptotic setting, i.e., also involving asymptotically many copies of the catalyst and allowing for an arbitrarily small error. 
		The authors showed that such transformations can even simulate LOCC transformations which additionally allow for a sub-linear amount of quantum communication. \cite{vidal_irreversibility_2001} argue that this class of catalytic transformations nevertheless cannot be used to distil \emph{bound entanglement} (see also Sec.~\ref{sec:PPT}). Bound entangled states are entangled states with $E_D(\rho)=0$ and hence their entanglement cannot be extracted using LOCC. 
		Recently, \cite{lami_catalysis_2023} conclusively shows that not even correlated or marginal-correlated catalysis can be used to distil bound entanglement, while \cite{ganardi_catalytic_2023} shows that the distillable entanglement of distillable states cannot be increased using correlated catalysis.} 
	
	\subsubsection{Quantum state merging}
	\label{subsubsec:state-merging}
	
	Quantum state merging, sometimes called coherent state transfer, is a communication task which enables transferring a known quantum state to a distant party who already holds part of the state \cite{Horodecki2005partial,Horodecki2006}. More specifically, consider 
	three parties: Alice ($\ms{A}$), Bob ($\ms{B}$) and Referee ($\ms{R}$) who share asymptotically many copies of some pure state $\ket{\psi_{\ms{ABR}}}$. The goal is to send the $\ms{A}$ part of the state to $\ms{B}$, while preserving all correlations with $\ms{R}$. It is further assumed that Alice and Bob can perform any LOCC protocol and on top of that, they may share an arbitrary number of maximally-entangled states. As was shown in \cite{Horodecki2005partial}, the rate at which maximally-entangled states have to be supplied in order to accomplish this process is given by the quantum conditional entropy, i.e.
	\begin{align}
		H(\ms A|\ms B)_{\psi} := H(\psi_{\ms A\ms B}) - H(\psi_{\ms B}).
	\end{align}
	In other words, for $H(\ms A|\ms B)_{\psi} > 0$ state merging is possible when consuming singlets at a rate $H(\ms A|\ms B)_\psi$. 
	For $H(\ms A|\ms B)_{\psi} < 0$, not only is state merging possible, but in addition, the process generates entangled pairs at a rate given by $-H(\ms A|\ms B)_{\psi}$. 
	This provides an operational meaning to the quantum conditional entropy. 
	Variants of the state merging task were analysed, including its single-shot version \cite{Berta2009} and an extension to multiple parties \cite{Dutil2010}.
	
	An extension of the state merging protocol allowing correlated catalysis was proposed in Ref.~\cite{Kondra2021}. The setup is similar to the single-shot setting in Ref.~\cite{Berta2009}: the parties $\ms{A}$, $\ms{B}$ and $\ms{R}$ share a single copy of $\ket{\psi}_{\ms{ABR}}$ and are allowed to use any entangled state as a correlated catalyst. 
	In this extended protocol, state merging can be performed as long as $H(\ms A|\ms B)_{\psi} > 0$. On the other hand, if $H(\ms A|\ms B)_{\psi} < 0 $, then catalytic state merging can be performed not only without extra entanglement, but also with Alice and Bob gaining an additional pure state with entanglement entropy $ - H(\ms A|\ms B)_{\psi}$.  Importantly, both procedures are optimal in the sense that state merging is not possible if a pure state with a smaller entanglement entropy is provided (when $H(\ms A|\ms B)_{\psi} > 0$) and state merging with entanglement gain exceeding $-H(\ms A|\ms B)_{\psi}$ is not possible when $H(\ms A|\ms B)_{\psi} < 0$. This procedure provides an operational interpretation of the quantum conditional entropy in the single-shot regime. 
	
	\subsubsection{Quantum teleportation}
	\label{subsubsec:teleportation}
	Quantum teleportation enables transferring quantum states using pre-shared entanglement and classical communication \cite{Bennet1993}. Alice ($\ms{A}$) and Bob ($\ms{B}$) share an entangled state $\rho_{\ms {AB}}$ 
	Alice is additionally given a quantum state $\varphi_{\ms A'}$,
	which is unknown to both parties. They then attempt to transfer the state on $A'$ from Alice to Bob using a quantum channel $\mathcal{E} \in \locc(\ms{A'A}:\ms{B})$, and the entangled state they share.  The goal of the protocol is to simulate a noiseless quantum channel from Alice to Bob, i.e. the identity map $\text{id}_{\ms{A'} \rightarrow \ms{B}}$. The quality of teleportation is usually quantified using the average fidelity of teleportation introduced in Ref.~\cite{Popescu1994}: 
	\begin{align}
		\label{eq:fid_tel}
		\langle F \rangle_{\rho} := \max_{\mathcal{E}} & \, \int \langle \varphi|_{\ms B}\tr_{\ms{A'A}}\mathcal{E}(\varphi_{\ms A'} \ot \rho_{\ms{AB}})|\varphi\rangle_{\ms B} \, \text{d} \varphi_{\ms B} \nonumber \\
		\text{s.t.}& \quad \mathcal{E}\in {\locc}(\ms{A'A}:\ms{B}).
	\end{align}
	The integral in Eq. (\ref{eq:fid_tel}) is computed over a uniform distribution of all pure input states $\varphi_{\ms B} = \dyad{\varphi}_{\ms B}$ according to a normalised Haar measure $\int \text{d} \varphi_{\ms B} = \mathbb{1}_{\ms B}$.   It can be easily verified that $0 \leq \langle F \rangle_{\rho} \leq 1$, where $\langle F\rangle_{\rho} = 1$ corresponds to perfect teleportation that is possible if and only if $\rho$ is maximally-entangled. As shown in \cite{Horodecki1999}, the fidelity of teleportation (\ref{eq:fid_tel}) can be conveniently expressed as
	\begin{align}
		\langle F \rangle_{\rho} = \frac{f(\rho) d_{\ms{A'}}+1}{d_{\ms{A'}}+1},
	\end{align}
	where $f(\rho) := \max \{{\langle \phi^+ |}_{\ms{AB}} \mathcal{E}(\rho_{\ms{AB}})  {|\phi^+\rangle_{\ms{AB}}}|\ \mathcal{E} \in {\rm LOCC}(\ms{A}:\ms{B})\}$ is called the entanglement fraction, $d_{\ms{A'}}$ is the dimension of ${\ms A}'$, and {$\ket{\phi^+}_{\ms{AB}}$} is a maximally entangled state on $\ms{AB}$.
	
	The protocol for quantum teleportation has been extended to the case when ancillary entanglement is used catalytically \cite{lipka-bartosik_catalytic_2021}. For that, let us assume that Alice and Bob, in addition to $\rho_{\ms {AB}}$, also have access to a quantum system $\ms {C_A C_B}$ prepared in $\omega_\ms{C_A C_B}$.  This additional system is distributed such that Alice has access only to $\ms{C}_{\ms A}$, and Bob only to $\ms{C_B}$. Alice is then given an unknown state $\varphi_{\ms{A'}}$, and the parties perform a protocol $\mathcal{E} \in \locc({\ms {A'AC_A}:\ms{BC_B}})$ acting on both shared systems and the input. For the protocol to be catalytic, one further demands that the local state of the system $\ms{C_A C_B}$ remains the same after $\mathcal{E}$, although it can become correlated with $\ms {AB}$. The protocol $\mathcal{E}$ can then be viewed as an instance of correlated catalysis in the LOCC framework that aims at performing teleportation of ${\ms A}'$.
	
	If we have freedom to choose the state of the catalyst $\omega_{\ms{C_A C_B}}$, it is then natural to define a benchmark that optimizes over all possible states of the catalyst. This leads to the average fidelity of catalytic teleportation defined as
	\begin{align}
		\nonumber
		\langle F_{\text{cat}} \rangle_{\rho} = \max_{\mathcal{E}\!,\, \omega}& \int \langle \varphi|\tr_{\ms{A'ACC'}}\mathcal{E}(\varphi_{\ms A'} \!\ot \!\rho_{\ms {AB}} \ot \omega_{\ms{C_AC_B}})|\varphi\rangle \, \text{d} \varphi \! \\
		\text{s.t.}& \quad \tr_{\ms{A'AB}} \mathcal{E}(\varphi_{\ms A'} \ot \rho_{\ms {AB}} \ot \omega_{\ms{C_A C_B}}) = \omega_{\ms{C_A C_B}}, \nonumber\\
		&\quad \omega_{\ms{C_A C_B}} \geq 0, \quad \tr \omega_{\ms{C_A C_B}} = 1, \nonumber \\ 
		&\quad \mathcal{E} \in \locc(\ms{A'AC}:\ms{BC'}).
		\label{eq:fid_cat_tel}
	\end{align}
	The main result of Ref.~\cite{lipka-bartosik_catalytic_2021} is the following (achievable) lower bound,
	\label{thm1}
	\begin{align}
		\label{eq:f_cat}
		\langle F_{\text{cat}} \rangle_{\rho}  \geq  \frac{f_{\text{reg}}(\rho) d_{\ms{A'}} + 1}{d_{\ms{A'}} + 1},
	\end{align}
	where $f_{\text{reg}}(\rho)$ is a regularization of entanglement fraction, $f_{\text{reg}}(\rho) := \lim_{n \rightarrow \infty} [f_n(\rho^{\ot n})/n]$, with $f_n(\sigma)$ given by 
	\begin{align}
		f_{n}(\sigma) :=  \max_{\mathcal{E}}& \,\,\, \sum_{i=1}^n \langle\phi^+ | \tr_{/i}\mathcal{E}(\sigma)|\phi^+\rangle, \nonumber \\
		\text{s.t.}&\,\,\, \mathcal{E} \in {\rm LOCC} (\ms{A_1 \ldots A_n}:\ms{B_1\ldots B_n}),
		\label{eq:fn_ent_frac}
	\end{align}
	where $\tr_{/i}(\cdot)$ is the partial trace performed over local parties $1\ldots i-1, i+1\ldots n$. Notice that by taking a sub-optimal guess $\mathcal{E} = \mathcal{E}_1 \ot \mathcal{E}_2 \ot \ldots \ot \mathcal{E}_n$ with $\mathcal{E}_1 = \mathcal{E}_2 = \ldots = \mathcal{E}_n$ we have $f_{\text{reg}}(\rho) \geq f(\rho)$ for all density operators $\rho$.
	
	Importantly, the proof of Eq. (\ref{eq:f_cat}) is constructive, that is, there exists a protocol $T \in {\rm LOCC}(\ms{A'AC_A}:\ms{BC_B})$ and a catalyst $\omega_{\ms{C_AC_B}} $ that achieves the bound from Eq. (\ref{eq:f_cat}). Moreover, for most pure bipartite states $\psi_{\ms{AB}}$ we have $f_{\text{reg}}(\psi) > f(\psi)$, implying that $\langle F_{\text{cat}}\rangle_{\psi_{\ms{AB}}} > \langle F\rangle_{\psi_{\ms{AB}}}$. As a consequence, the quality of teleportation can be improved when using ancillary entanglement catalytically. 
	
	\subsubsection{Local operations and shared randomness (LOSR)}
	\label{subsubsec:LOSR}
	
	Let us denote the set of all LOSR operations between $\ms{A}$ and $\ms{B}$ with $\losr(\ms{A}:\ms{B})$. Any operation $\mathcal{E} \in \losr(\ms{A}:\ms{B})$ can be written in Kraus representation as
	\begin{align}
		\mathcal{E}[\rho_{\ms{AB}}] = \sum_{i} p_i (A_i \ot B_i) \rho_{\ms{AB}} (A_i \ot B_i)^{\dagger}
	\end{align}
	where $\{A_i\}$ and $\{B_i\}$ are sets of local Kraus operators and $\lbrace p_i \rbrace_i$ a probability distribution. 
	In analogy with Nielsen's theorem, transformations between pure bipartite states via LOSR can be characterised using majorization. 
	
	\begin{theorem}
		\label{thm:losr_conversion}
		\cite{schmid2020understanding} A bipartite quantum state $\ket{\psi}$ shared between two parties can be converted into $\ket{\phi}$ by means of LOSR if and only if there is a bipartite state $\ket{\xi}$ such that
		\begin{align}
			\label{eq:app_losr_1}
			\bm{\lambda}(\psi_{\ms{A}})^{\downarrow} = [\bm{\lambda}(\phi_{\ms{A}}) \ot \bm{\lambda}(\xi_{\ms{A'}})]^{\downarrow}
		\end{align}
	\end{theorem}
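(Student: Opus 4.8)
The plan is to reduce the claim to a statement about a \emph{single product channel}: I will show that $\ket\psi$ can be converted to $\ket\phi$ by $\losr$ if and only if there exist local quantum channels $\mc A$ (on $\ms A$) and $\mc B$ (on $\ms B$) with $(\mc A\otimes\mc B)[\proj\psi]=\proj\phi$, and then characterize \emph{this} condition via Stinespring dilations. The backward implication of the theorem turns out to be immediate, so essentially all the work is in the forward implication. As a sanity check, the condition $\bm\lambda(\psi_{\ms A})^\downarrow=[\bm\lambda(\phi_{\ms A})\otimes\bm\lambda(\xi_{\ms{A'}})]^\downarrow$ is strictly stronger than Nielsen's majorization $\bm\lambda(\psi_{\ms A})\prec\bm\lambda(\phi_{\ms A})$ (tensoring a probability vector into $\bm\lambda(\phi_{\ms A})$ always produces something majorized by it), consistent with $\losr$ being a strict subset of LOCC.

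First I would reduce an arbitrary $\losr$ conversion to a single product channel. An $\losr$ operation is a convex combination $\mc E=\sum_i p_i\,\mc A_i\otimes\mc B_i$ of products of local channels, applied conditioned on a shared random variable $i$; this is the structure underlying the Kraus form in the setup above. If $\mc E[\proj\psi]=\proj\phi$, then $\sum_i p_i\,(\mc A_i\otimes\mc B_i)[\proj\psi]=\proj\phi$ expresses the pure state $\proj\phi$ as a convex combination of density operators, and since pure states are extreme points of $\mc D(\ms{AB})$ every branch with $p_i>0$ must already obey $(\mc A_i\otimes\mc B_i)[\proj\psi]=\proj\phi$. Conversely any product channel is a (trivial) $\losr$ operation, so the two notions coincide and it suffices to analyze $(\mc A\otimes\mc B)[\proj\psi]=\proj\phi$.

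Next I would dilate. Using the Stinespring dilation recalled in Sec.~\ref{sec:preliminaries}, write $\mc A[\cdot]=\Tr_{\ms{E_A}}[W_{\ms A}(\cdot)W_{\ms A}^\dagger]$ and $\mc B[\cdot]=\Tr_{\ms{E_B}}[W_{\ms B}(\cdot)W_{\ms B}^\dagger]$ with isometries $W_{\ms A}\colon\mc H_{\ms A}\to\mc H_{\ms A'}\otimes\mc H_{\ms{E_A}}$ and $W_{\ms B}\colon\mc H_{\ms B}\to\mc H_{\ms B'}\otimes\mc H_{\ms{E_B}}$, and set $\ket\Psi:=(W_{\ms A}\otimes W_{\ms B})\ket\psi$, a pure state on $\ms{A'E_AB'E_B}$ (up to reordering of tensor factors) with $\Tr_{\ms{E_AE_B}}\proj\Psi=\proj\phi$. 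Since $\proj\phi$ is rank one, $\ket\Psi$ has Schmidt rank one across the cut $(\ms{A'B'}):(\ms{E_AE_B})$, i.e. $\ket\Psi=\ket\phi_{\ms{A'B'}}\otimes\ket\eta_{\ms{E_AE_B}}$ for some pure $\ket\eta$. Now observe that $W_{\ms A}\otimes W_{\ms B}$ is a \emph{local} isometry for the \emph{other} cut, in which everything built from $\ms A$ (namely $\ms{A'E_A}$) is one side and everything built from $\ms B$ (namely $\ms{B'E_B}$) the other; local isometries preserve the nonzero Schmidt coefficients, so $\bm\lambda(\Psi_{\ms{A'E_A}})=\bm\lambda(\psi_{\ms A})$ up to ordering and padding by zeros. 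On the other hand, the Schmidt vector of the product $\ket\phi_{\ms{A'B'}}\otimes\ket\eta_{\ms{E_AE_B}}$ across $(\ms{A'E_A}):(\ms{B'E_B})$ is $[\bm\lambda(\phi_{\ms A})\otimes\bm\lambda(\eta_{\ms{E_A}})]^\downarrow$. Equating the two and choosing $\ket\xi$ to be any bipartite pure state with Schmidt vector $\bm\lambda(\eta_{\ms{E_A}})$ yields $\bm\lambda(\psi_{\ms A})^\downarrow=[\bm\lambda(\phi_{\ms A})\otimes\bm\lambda(\xi_{\ms{A'}})]^\downarrow$, which is the forward implication.

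For the converse, if $\bm\lambda(\psi_{\ms A})^\downarrow=[\bm\lambda(\phi_{\ms A})\otimes\bm\lambda(\xi_{\ms{A'}})]^\downarrow$ then $\ket\psi$ and $\ket\phi\otimes\ket\xi$ have the same Schmidt coefficients, hence are interconvertible by local unitaries; Alice and Bob apply these (extending by local ancillas if needed) to bring $\ket\psi$ into the form $\ket\phi\otimes\ket\xi$ and then each traces out their half of $\ket\xi$, leaving exactly $\ket\phi$ — a valid $\losr$ protocol, in fact one using no shared randomness at all. I expect the main obstacle to be the forward direction, and within it two points: first, noticing that \emph{determinism} forces each branch of the $\losr$ protocol to individually output $\ket\phi$, so that one is genuinely left with a product \emph{channel} rather than a mere local filtering operator (a single filtering would only preserve Schmidt \emph{rank}, which is far weaker than the claimed condition); and second, the tensor-factor bookkeeping in the dilation step — tracking which factors are ``$\ms A$-side'' and which are ``$\ms B$-side'' so that $W_{\ms A}\otimes W_{\ms B}$ is manifestly a local isometry and the environment state $\ket\eta$ can be read off as the required $\ket\xi$. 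The converse and the extremality argument are routine.
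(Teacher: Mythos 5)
Your proof is correct. The paper itself does not prove Theorem~\ref{thm:losr_conversion} — it is quoted from \cite{schmid2020understanding} — but your argument (extremality of pure states to force every branch of the shared-randomness average to be a single product channel, Stinespring dilation of the two local channels, purity of the output forcing $\ket\Psi=\ket\phi\otimes\ket\eta$, and Schmidt-coefficient bookkeeping across the $(\ms{A'E_A}):(\ms{B'E_B})$ cut) is precisely the standard proof from that reference, and the converse via local isometries plus discarding $\ket\xi$ is likewise the intended one.
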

	The above theorem already implies that strict catalysis cannot provide an advantage for LOSR between pure bipartite states. 
	More specifically, Ref.~\cite{schmid2020understanding} shows that if $\ket{\psi}$ cannot be converted into $\ket{\phi}$ via LOSR, then $\ket{\psi} \ot \ket{\omega}$ also cannot be converted into $\ket{\phi} \ot \ket{\omega}$ for any bipartite state $\ket{\omega}$. 
	To see this, note that due to Theorem \ref{thm:losr_conversion}, {$\ket{\psi}  \AC{\mathcal{O}_{\rm LOSR}}~ \ket{\phi} $} is possible if and only there exist exist a pure bipartite state $\ket{\xi}$ with Schmidt coefficients $\schmidt(\xi_{\ms{A'}})$ such that  
	\begin{align}
		\label{eq:app_losr_2}
		[\schmidt(\psi_{\ms{A}}) \ot \schmidt(\omega_{\ms{C_A}})]^{\downarrow} = [\schmidt(\phi_{\ms{A}}) \ot  \schmidt (\omega_{\ms{C_A}}) \ot \schmidt(\xi_{\ms{A'}})]^{\downarrow}.
	\end{align}
	We can read this equation as $[\vec p\otimes \vec r]^{\downarrow} = [\vec q\otimes \vec r]^{\downarrow}$ with probability vectors $\vec p = \schmidt(\psi_{\ms{A}}),\vec q = \schmidt(\phi_{\ms A})\ot \schmidt(\xi_{\ms A'}), \vec r = \schmidt(\omega_{\ms{C_A}})$.  
	But this is only possible if $\vec p^\downarrow = \vec q^\downarrow$: The largest entry gives $p^\downarrow_1 r^\downarrow_1=q^\downarrow_1 r^\downarrow_1$, so that $p^\downarrow_1= q^\downarrow_1$. We can now remove all equations involving $p^\downarrow_1$. But then we similarly obtain $p^\downarrow_2 = q^\downarrow_2$. Repeating these steps we get $\vec p^\downarrow = \vec q^\downarrow$, which corresponds to \eqref{eq:app_losr_1}. 
	This means that any LOSR transformation that can be achieved via strict catalysis can also be achieved without using any catalyst{, in strong contrast to LOCC.}
	We will also see in Sec.~\ref{subsec:no-broadcasting} that a similar distinction in terms of catalysis arises when considering different ways to formalize the concept of coherence as a resource theory. 
	
	The above reasoning relies crucially on two crucial assumptions: the tensor product structure of strictly catalytic transformations and the purity of the catalyst state. 
	Therefore, the proof ceases to hold when applied directly to either correlated-catalytic transformations or when one is dealing with mixed states.
	It is an interesting open question whether correlated catalysis might be suitable for the LOSR framework. On the other hand, embezzling has been studied very early on in LOSR -- {\cite{van_dam_universal_2003} start from the state transition conditions of LOCC, i.e. majorization of Schmidt vectors in Eq.~\eqref{eq:ent_nielsen_majorization}, and show that embezzling not only trivializes the state transition conditions (i.e. enabling the preparation of a maximally entangled state from a product state for free), but also does this in such a way that the process can be accomplished using only LOSR. All this is achieved only at the cost of a disturbance $\delta > 0$ induced on the embezzler (in terms of trace distance), which can be made as small as possible by choosing a large enough dimension. }

	\subsubsection{(Non-)closure of quantum correlations}
	\label{subsubsec:non-locality}
	Bell's theorem \cite{Bell1964} states that no local hidden-variable theory can account for all predictions of quantum theory under the assumption that the hidden variables are statistically independent from the measurement settings. 
	At the core of this theorem is the notion of a Bell experiment: a source distributes two physical systems to distant observers, Alice and Bob, {so that they both share a bipartite quantum state described by a density matrix $\rho_{\ms{AB}}$}. Each observer performs a randomly chosen measurement on their {part of the} system, labelled by $x,y$ and obtains outcomes $a,b$. 
	The experiment is characterized by the joint distribution $p(ab|xy)$.

	We say that the experiment has a local (hidden-variable) model if there exists a hidden random variable $\lambda$ with an associated distribution $p(\lambda)$, and two local response functions, $p(a|x, \lambda)$ for Alice and $p(b|y, \lambda)$ for Bob, such that $p(ab|xy)$ can be achieved using $\lambda$,
	\begin{align}\label{eq:lhv}
		p(ab|xy) =  \sum_{\lambda} p(\lambda) p(a|x, \lambda) p(b|y, \lambda).
	\end{align}
	The distribution $p(ab|xy)$ is called non-local when it cannot be described in the form of Eq.~\eqref{eq:lhv}. Every possible action taken by Alice and Bob can be viewed as an LOSR operation and vice versa \cite{Buscemi2012}. 
	
	Bell non-locality is often formalized in terms of \emph{non-local games}, 
	which have been extensively studied in computer science, where they are a special instance of interactive proof systems \cite{Cleve2004}.
	In such games, Alice ($\ms{A}$) and Bob ($\ms{B}$) play against a referee $(\ms{R})$. 
	The referee chooses a question $x \in \mathcal{X}$ for Alice and $y \in \mathcal{Y}$ for Bob according to some probability distribution $p(x,y): \mathcal{X} \times \mathcal{Y} \rightarrow [0, 1]$, where $\mathcal{X}$ and $\mathcal{Y}$ denote finite sets of questions. 
	Without communicating, Alice (Bob) returns an answer $a \in \mathcal{A}$ $(b \in \mathcal{B})$ from a finite set of possible answers $\mathcal{A}$ $(\mathcal{B})$. 
	In quantum mechanics, this corresponds to applying local measurements $\{M_{a|x}\}$ for Alice and $\{N_{b|y}\}$ for Bob. 
	Based on the received answers and according to a pre-arranged set of rules, the referee decides whether the players win or lose the game. 
	These rules are typically expressed using a function $V:$ $\mathcal{A} \times \mathcal{B} \times \mathcal{X}\times \mathcal{Y} \rightarrow \{0, 1\}$, where $V_{ab}^{xy} = 1$ if and only if Alice and Bob win the game when answering $a$ and $b$ for questions $x$ and $y$. 
	The probability of Alice and Bob winning the game $\mathcal{G} = \{p(x,y), V\}$, maximized over all  measurement strategies, is given by
	\begin{align}
		\label{eq:bell_beh}
		p_{\text{guess}}(\mathcal{G},\rho_{\ms A\ms B}) =\max\sum_{a,b,x,y} p(x,y)  p(a,b|x,y)V_{ab}^{xy},
	\end{align}
	where 	$p(a,b|x,y) = \tr[\left(M_{a|x} \ot N_{b|y}\right)\rho_{\ms A\ms B}]$.
	Here, we made explicit that we work in the \emph{tensor-product framework}, where Alice's and Bob's quantum system are described by local Hilbert spaces $\mc H_{\ms A}$ and $\mc H_{\ms B}$, respectively, and the joint Hilbert space is given by their tensor-product. 
	In the more general \emph{commuting operator framework} it is simply assumed that Alice's and Bob's measurement operators are defined on a global Hilbert space $\mc H_{\ms A\ms B}$ and that $[M_{a|x},N_{b|y}]=0$, see also Sec.~\ref{subsec:infinite-dimensional}.
	The tensor-product framework and the commuting operator framework are equivalent for finite-dimensional Hilbert spaces.
	
	Bell inequalities provide upper bounds on Eq.~\eqref{eq:bell_beh} with which the players can win the game using the best classical strategy, i.e. when $\rho_{\ms A\ms B}$ is separable.
	A violation of a Bell inequality indicates there is a quantum strategy which outperforms the best classical strategy. 
	In other words, Bell’s theorem \cite{Bell1964} asserts that there exist games where players who share entanglement can outperform
	players who do not. 
	The most famous example of this is the CHSH game $\mathcal{G}_{\text{CHSH}}$ \cite{Clauser1969}, where Tsirelson's theorem provides an upper bound on the guessing probability $p_{\text{guess}}(\mathcal{G}_{\text{CHSH}}, \rho_{\ms{AB}})$ for classical strategies \cite{cirel1980quantum}. 
	For a more complete account of non-local games see Ref.~\cite{watrous2018theory}.
	
	An immediate application of non-local games is the task of witnessing the dimension of entanglement. 
	More specifically, a non-local game $\mathcal{G}$ can be used as an (entanglement) dimension witness if a certain $p_{\text{guess}}(\mathcal{G}, \rho_{\ms{AB}})$ can only be achieved with entangled states $\rho_{\ms{AB}}$ on a Hilbert space with a given minimal dimension \cite{Brunner2008testing}. 
	In this spirit, \cite{coladangelo_two-player_2020} proposed a non-local game $\mathcal{G}_{\text{Coladangelo}}$ in which the players’ maximum achievable guessing probability increases monotonically with the allowed Hilbert space dimension. 
	In the following, let
	\begin{align}
		p^*_{\text{Coladangelo}} := \sup_{\rho_{\ms A\ms B}}  p_{\text{guess}}(\mathcal{G}_{\text{Coladangelo}}, \rho_{\ms{AB}}),
	\end{align}
	where the maximization is performed over quantum states $\rho_{\ms A \ms B}$ on Hilbert spaces with finite (but arbitrarily large) dimension.
	On one hand, \cite{coladangelo_two-player_2020} shows that
	$p_{\text{guess}}(\mathcal{G}_{\text{Coladangelo}}, \rho_{\ms{AB}}) < p^*_{\text{Coladangelo}}$
	within the tensor-product framework for both finite-dimensional and infinite-dimensional quantum states $\rho_{\ms A\ms B}$ \footnote{The set of correlations achievable by infinite-dimensional density matrices in the tensor-product framework is included in the closure of the set of correlations achievable by finite-dimensional quantum states \cite{scholz_tsirelsons_2008}.}.
	The proof of this statement is based on choosing $\rho_{AB}$ to be an (approximate) embezzler, see Sec.~\ref{subsubsec:embezzlement}. Moreover, there it is further shown that a strategy achieving $p^*_{\text{Coladangelo}}$ requires a perfect embezzler, which requires a commuting operator framework, see also \ref{subsec:infinite-dimensional}.
	An approximate embezzler allows Alice and Bob to simulate an approximately coherent strategy, i.e. perform measurements in a way that retains a high degree of coherence between the parties. 
	High coherence of the process, on the other hand, leads directly to a high score in the non-local game. 
	The main idea behind using embezzlers in such a coherent state exchange process is described in Example \ref{ex:coherent_state_exchange}. 
	\begin{example}[Coherent state exchange]
		\label{ex:coherent_state_exchange} 
		Consider a bipartite pure state $\ket{\phi}_{\ms{AB}}$ shared between Alice $(\ms A)$ and Bob $(\ms B)$. The goal is to transform $\ket{\phi}_{\ms{AB}}$ into $\ket{\psi}_{\ms{AB}}$ under two assumptions: ($i$) no communication is allowed, and ($ii$) the process must be implemented in a coherent way, that is
		\begin{align}
			\label{eq:coh_state_exch_example}
			\alpha \ket{00}_{\ms{A'B'}}\ket{\gamma}_{\ms{AB}} &+ \beta \ket{11}_{\ms{A'B'}} \ket{\phi}_{\ms{AB}} \nonumber\\
			&\rightarrow \alpha \ket{00}_{\ms{A'B'}} \ket{\gamma}_{\ms{AB}} + \beta \ket{11}_{\ms{A'B'}} \ket{\psi}_{\ms{AB}},\nonumber
		\end{align}
		where {$\ms{A'B'}$} represent control systems, one held by each player. 
		Nielsen's theorem implies that Eq.~\eqref{eq:coh_state_exch_example} is impossible when $\schmidt (\psi_{A}) \succ \schmidt (\phi_{A})$. However, this is no longer true when catalysis is involved: 
		using an auxiliary state
		\begin{align}
			\ket{\omega_n} = \frac{1}{\sqrt{n}} \sum_{i=1}^n \ket{\phi}^{\ot i} \ket{\psi}^{\ot (n-i+1)},
		\end{align}
		and cyclically permuting local subsystems transforms $\ket{\phi}_{\ms{AB}}\ket{\omega_n}$ into $\ket{\psi}_{\ms{AB}} \ket{\omega_n'}$ with $\braket{\omega_n'}{\omega_n} = 1-\frac{1}{n}$, so that the auxiliary state is approximately preserved. 
		This protocol is almost identical to the universal embezzlement construction in Sec.~\ref{subsubsec:approx-cat-construction}.
		That the auxiliary system is approximately catalytic is the key property which ensures that the transformation is done coherently. Controlling the cyclic permutation on the control systems yields 
		\begin{align}
			\alpha \ket{00}_{\ms{A'B'}}&\ket{\gamma}_{\ms{AB}}\ket{\omega_n} + \beta \ket{11}_{\ms{A'B'}} \ket{\phi}_{\ms{AB}}\ket{\omega_n} \nonumber \\ &\rightarrow \alpha \ket{00}_{\ms{A'B'}} \ket{\gamma}_{\ms{AB}}\ket{\omega_n} + \beta \ket{11}_{\ms{A'B'}} \ket{\psi}_{\ms{AB}}\ket{\omega_n'}\nonumber
		\end{align}
		and hence the parties can approximately implement the coherent state exchange since $\ket{\omega_n'}\approx \ket{\omega_n}$. 
		If the state $\ket{\omega_n'}$ was far from $\ket{\omega_n}$, then discarding the ancilla would effectively decohere the primary system. 
		Indeed, the requirement $(ii)$ can only be satisfied if the process does not change much the auxiliary state. 
		The players cannot, e.g. swap $\ket{\phi}_{\ms{AB}}$ with an initially shared copy of $\ket{\psi}_{\ms{AB}}$ without losing coherence.
	\end{example}
	
	From the perspective of Bell non-locality, \cite{coladangelo_two-player_2020} proposes an embezzlement-based Bell inequality that cannot be maximally violated within the tensor-product framework, but where the limiting value can be achieved arbitrarily well. This shows that the set of quantum correlations from a Bell experiment described in the tensor-product framework is not closed. In fact, the existence of a suitable game has been proven first using representation theory of finitely-presented groups in Refs.~\cite{slofstra_tsirelsons_2019,Slofstra2019set}, with subsequent alternative proofs in Refs.~\cite{dykema_non-closure_2019,musat_non-closure_2020}. 
	In the case of \cite{coladangelo_two-player_2020}, however, embezzlement-based techniques allowed for a simpler proof using only basic linear algebra and an explicit non-local game. Previously, \cite{cleve_perfect_2017} showed a similar result for \emph{coherent embezzlement games} \cite{regev_quantum_2013}. 
	Ref. \cite{coladangelo_two-player_2020} transparently demonstrates the relationship between the dimension of the used entangled state and the associated score in a non-local game.  A similar technique involving embezzlement was used to prove that any number of parties can coherently exchange any pure quantum state for another, without communication, given prior shared entanglement \cite{Leung2013coherent}.

	\subsubsection{Embezzlement and the Reverse Coding Theorem}
	The (classical) channel capacity is the central concept in information theory. It quantifies the maximum rate at which classical data can be transmitted in the limit of many uses of the channel \cite{cover1999elements}. The Noisy Channel Coding Theorem is a milestone result that provides a closed formula for the channel capacity \cite{shannon1948mathematical}. It states that the (classical) capacity $C$ of a memoryless classical communication channel is given by the supremum of mutual information between the input and output of the channel. More specifically, let $X$ be a random variable distributed according to a probability distribution $\bm{p}_X$ and $\mathcal{E}$ be a classical and memoryless channel. Then the classical capacity of $\mathcal{E}$ is given by
	\begin{align}
		C(\mathcal{E}) = \sup_{\bm{p}_X}\,\, I(X:Y)_{\mathcal{E}[\bm{p}_X]},
	\end{align}
	where $Y \sim \mathcal{E}[\bm{p}_X]$ denotes the output of the channel and $I(X:Y)_{\mathcal{E}[\bm{p}_X]} := H(X) + X(Y) - H(X, Y)$. After the appearance of Shannon's theorem, a closely related (classical) Reverse Coding Theorem was proven \cite{bennett2002entanglement}. The theorem states that every classical channel $\mathcal{E}$ can be simulated exactly using  $C(\mathcal{E})$ bits of classical communication and free shared randomness between the sender and the receiver. 
	In other words, when given access to shared randomness and local operations, any classical channel $\mathcal{N}$ can be simulated using $\mathcal{E}$ at a rate given by $C(\mathcal{E})/C(\mathcal{N})$. Therefore, every classical channel is completely characterized by its capacity, in the limit of many uses of the channel. 
	
	The communication problem becomes more complex in quantum theory, where channels are known to exhibit different types of capacities \cite{holevo1998capacity, shor2003capacities, devetak2005private}. Presumably the most natural analogue of classical channel capacity in the quantum regime is the \emph{entanglement-assisted classical capacity}, $C_E$. It is defined as the highest rate at which \emph{classical} information can be transmitted when the sender and receiver share unlimited, noiseless entanglement \cite{Bennett_1999}. An analogue of the (classical) Noisy Channel Coding Theorem was proven in \cite{bennett2002entanglement} and provides a closed-form formula relating $C_E$ to the quantum mutual information (see Eq. \eqref{eq:mutual_info}),
	\begin{align}
		\label{eq:ent_ass_capacity}
		C_E(\mathcal{E}) = \max_{\rho \in \mc D(\ms X)} I(\ms Y:\ms X')_{(\mathcal{E}\ot \id)\Phi_{\rho}},
	\end{align}
	where $\mc E$ is a quantum channel from $\ms X$ to $\ms Y$, $\Phi_{\rho}$  is a purification of $\rho$ on $\ms{XX'}$ and the optimization ranges over all input states $\rho$ on $\ms X$. Ref.~\cite{bennett2002entanglement} conjectured a quantum version of the Reverse Coding Theorem, which was subsequently proven \cite{Bennett_2014}. 
	
	The Quantum Reverse Coding Theorem states that any quantum channel $\mathcal{E}$ can be simulated using $C_E(\mathcal{E})$ bits of classical communication under unlimited shared entanglement. 
	As a consequence, $C_E$ suffices to characterize the quantum channel when pre-shared entanglement is allowed for free. 
	The Reverse Coding Theorem conveys the idea that entanglement cannot be easily discarded when access to communication is limited. 
	Specifically, when two entangled states $\ket{\phi_1}_{\ms{AB}}$ and $\ket{\phi_2}_{\ms{AB}}$ are part of a superposition, i.e. $\ket{\psi}_{\ms{ABC}} = (\ket{\phi_1}_{\ms{AB}}\ket{0}_{\ms C} + \ket{\phi_2}_{\ms{AB}}\ket{1}_{\ms{C}}) / \sqrt{2}$, then it is impossible to retain the entanglement in one branch of the superposition and remove it from the other, without either using classical communication or causing decoherence. This can also be understood as a consequence of the fact that changing the \emph{entanglement spread} of an entangled pure quantum state always requires communication \cite{harrow2010entanglement}. 
	
	In quantum information theory, free entanglement usually takes the form of maximally entangled states. Surprisingly, even if one has an infinite supply of maximally entangled states, the optimal entanglement-assisted capacity $C_E$ from Eq. (\ref{eq:ent_ass_capacity}) cannot be reached \cite{Bennett_2014}. Specifically, a communication protocol can only attain Eq.~\eqref{eq:ent_ass_capacity} if it can be implemented coherently,
	similar to the coherent state exchange protocol in Example \ref{ex:coherent_state_exchange}. 
	Interestingly, the Quantum Reverse Coding Theorem can be achieved by using entanglement in the form of embezzlers, instead of maximally entangled states.  
	This allows to overcome the constraints imposed by entanglement spread, and ultimately leads to the rate of communication achieving the capacity $C_E$. 
	
	{In the above example, embezzlement was used to generalize a classical information processing task to the quantum case. 
		Recently, \cite{george_one-shot_2023} also used embezzlement to generalize the classical task of distributed source simulation to a fully quantum setting, which they call \emph{embezzling source simulation}.}
	
	\subsubsection{Cryptography}
	In many situations, communication protocols require an authentication step to ensure secure communication between the two parties (Alice and Bob). This is necessary to protect against potential security breaches from an impostor (Eve). To authenticate Bob to Alice, the following protocol involving strict catalysis was proposed in Ref.~\cite{barnum1999quantum}:
	\begin{enumerate}
		\item Alice and Bob select a pair of incomparable states, e.g. $|\psi\rangle_{\ms{AB}}$ and $|\phi\rangle_{\ms{AB}}$. Alice then prepares $\ket{\psi}_{\ms{AB}}$ in her lab.
		\item They also share a strict catalyst $|\omega\rangle_{\ms{A'B'}}$ that enables the forbidden transition $|\psi\rangle_{\ms{AB}} \rightarrow |\phi\rangle_{\ms{AB}}$.
		\item  Alice sends one part of the state $|\psi\rangle_{\ms{AB}}$ to Bob.
		\item Alice and Bob use the catalyst to perform the transition $|\psi\rangle_{\ms{AB}} \otimes |\omega\rangle_{\ms{A'B'}} \rightarrow |\phi\rangle_{\ms{AB}} \otimes |\omega\rangle_{\ms{A'B'}}$.
		\item Bob sends his part of $|\phi\rangle_{\ms{AB}}$ to Alice, who measures it.
	\end{enumerate}
	The security of this protocol relies on the incomparability of the states $|\psi\rangle_{\ms{AB}}$ and $|\phi\rangle_{\ms{AB}}$. If Eve intercepts the communication and receives the $\ms{B}$ part of $|\psi\rangle_{\ms{AB}}$, she will not be able to transform it into $|\phi\rangle_{\ms{AB}}$ without access to the catalyst. 
	
	A second cryptographic scenario involving catalysis has been discussed in Ref.~\cite{boes_catalytic_2018}. Here, Alice wants to communicate a (unknown) quantum state $\rho_{\ms D}$ on a data-register $\ms D$ to Bob over a public quantum channel, in such a way that no information about $\rho_{\ms D}$ is revealed to the public. Alice and Bob are assumed to share a number of maximally entangled qubits (ebits). 
	Alice now applies a suitable unitary (independent of $\rho_{\ms D}$) on $\ms D$ and her part of the shared ebits with the result that the state on $\ms D$ is maximally mixed.
	
	Afterwards she sends $\ms D$ to Bob using a public quantum channel.
	By applying a suitable unitary, Bob can recover the initial state on $\ms D$ and, moreover, restore the ebits to their initial state.
	The catalytic ebits act as a secret key that is used to encrypt the quantum data for the transmission over the public quantum channel.

	For a thorough review on quantum cryptographic schemes we refer to Refs. \cite{Gisin2002} and \cite{Pirandola_2020}, as well as the more recent review \cite{Portmann2022}.
	
	\subsubsection{Beyond local operations and classical communication}
	\label{sec:PPT}
	So far we have discussed two classes of quantum operations in entanglement theory: LOCC and LOSR. The relevance of these sets of operations is two-fold: first, they reflect typical physical restrictions imposed by many basic protocols in quantum information theory \cite{alber2003quantum}, and second, they characterize operationally entangled states as those that cannot be prepared via LOCC/LOSR \cite{chitambar2014everything}. 
	
	Entanglement transformations are challenging to study, due to the fact that the mathematical structure of LOCC/LOSR is far from being well-understood. This motivates the investigation of other operations that are potentially more easily characterized. One such example is PPT-preserving (PPTP) operations \cite{Rains1999}, defined as quantum channels that map the set of states with a positive partial transpose back to itself. It is known that PPTP operations are strictly more powerful than LOCC, e.g. they allow for creating bound entangled states from product states \cite{Eggeling2001}. See Refs. \cite{Rains2001,Ishizaka2005} for a study of the properties of PPTP operations.
	The significance of PPTP operations lies mainly in the fact that they can be efficiently characterized in terms of semi-definite constraints. Since PPTP operations form a strict superset of LOCC operations, one can then define and determine relevant quantifiers under PPTP operations, such as entanglement cost \cite{audenaert2003entanglement} or distillable entanglement \cite{Ishizaka2005}. This approach allows to obtain meaningful bounds on the corresponding quantifiers under LOCC. Similarly, one can consider catalytic PPT-preserving operations in order to understand the limitations of catalytic LOCC. 
	
	In this line, Ref.~\cite{matthews_pure-state_2008} investigated transformations of pure bipartite states under PPTP operations. In particular, they demonstrated that, in analogy to LOCC, strict catalysis also enlarges the set of possible transformations under PPTP operations. {Surprisingly, in contrast to the LOCC paradigm, even a maximally-entangled quantum state can be a useful catalyst under strictly catalytic PPTP operations.} 
	More specifically, Ref.~\cite{matthews_pure-state_2008} shows that a PPTP channel between two pure bipartite states $\ket{\psi}_{\ms {AB}}$ and $\ket{\phi}_{\ms {AB}} $ can exist only if 
	\begin{align}
		H_{\alpha}(\schmidt (\psi_{\ms A})) \geq H_{\alpha}(\schmidt (\phi_{\ms A})) \hspace{5pt} \text{for} \hspace{5pt} \alpha \in \left\{\frac{1}{2}, 1, \infty\right\}.
	\end{align}
	Ref.~\cite{matthews_pure-state_2008} also found necessary and sufficient condition for transforming a rank-$k$ maximally entangled state into any other pure state via PPTP operations, assisted with a maximally entangled catalyst. That is, they found that a transformation
	
		\begin{align}
			\ket{\Phi^{+,k}}_{\ms {AB}}  \AC{\mathcal{O}_\textsc{PPT}}{}~ \ket{\phi}_{\ms {AB}}
		\end{align}
is possible if and only if the Renyi entropy of order $\alpha = 1/2$ (strictly) decreases,
\begin{align}\label{eq:renyi2_pptp}
	H_{\frac{1}{2}}(\schmidt(\Phi_{\ms{A}}^{+,k})) > H_{\frac{1}{2}}(\schmidt(\phi_{\ms{A}})).
\end{align}
Note that here, both the initial system state and catalyst are maximally entangled.
Because of Eq.~\eqref{eq:renyi2_pptp}, it is possible to reach states that \emph{increase} the Shannon entropy of Schmidt coefficients (i.e the Renyi entropy of order $\alpha = 1$), in stark contrast with strictly catalytic LOCC operations that can never increase any of the Renyi entropies, or correlated-catalytic LOCC operations that never increase the Shannon entropy. As a consequence, catalytic PPTP operations can increase the asymptotic entanglement content of quantum states and therefore loses some of its operational significance when catalysis is allowed.

\subsubsection{Multipartite entanglement}
\label{subsubsec:multipartite}
So far our discussion of entanglement was mainly restricted to scenarios involving two parties (bipartite scenarios). For pure states, these scenarios can relatively easily be studied using majorization, due to the Nielsen's Theorem~\ref{thm:nielsen}. However, this not the case for genuine multipartite scenarios, i.e. involving more than two parties. In this case, we do not have a simple criterion to decide which state-transitions are possible. Consequently, studying catalysis in these scenarios is challenging. 

One of the rare results for multipartite scenarios can be found in Ref.~\cite{chen2010tensor}. 
Given a $N$-partite pure quantum state $\ket{\psi}\in\bigotimes_{i=1}^N \mathcal{H}_i$, the \emph{tensor rank} $\mathrm{rk}(\psi)$ of $\ket{\psi}$ is defined as the smallest number of product states $\lbrace \bigotimes_{i=1}^N \ket{\psi'}_i \rbrace_{j=1}^{\mathrm{rk}(\psi)}$ such that its linear span contains $\ket{\psi}$. 
For $N=2$, the tensor rank reduces to the Schmidt rank; however, for $N > 2$, computing this quantity is generally NP-hard \cite{johan1990tensor}. Importantly, it is known that the tensor rank is a monotone under LOCC operations \cite{bengtsson_zyczkowski_2006}. 
The converse, however, is only true for certain classes of states; in particular, it is true for states which are reversibly interconvertible w.r.t. the $N$-partite GHZ state $\ket{{\rm GHZ}_N^d} =\frac{1}{\sqrt{d}} \sum_{i = 1}^d \ket{i}^{\otimes N}$. 
For such states, the possible state transitions are fully characterized by the tensor rank. 
Ref.~\cite{chen2010tensor} made use of this knowledge to study catalysis in multipartite entanglement. They found examples of non-trivial strict catalysis in the multipartite scenario under probabilistic LOCC. 
More specifically, they found that strict catalysis allows for an increase of the success probability of a state transition from zero to a strictly positive value. 
More recently, Ref.~\cite{neven2021local} reported the first examples of strict catalysis under deterministic LOCC in a genuinely multipartite scenario. 

An important distinction between bipartite and multipartite scenarios is that the latter leads to multiple (inequivalent) classes of entanglement: States of one class cannot be converted to states of the other class with a non-zero success probability \cite{dur_three_2000}. In this context, Ref.~\cite{ghiu_entanglement-assisted_2001} showed that strict catalysis does not help to convert between distinct classes.

\subsubsection{Contextuality}
\label{subsubsec:contextuality}
The notion of contextuality aims to characterize the property of quantum theory that it is impossible to assign pre-determined values (hidden variables)
to observables, such that the functional relationship between compatible observables is conserved \cite{Bell_1966,Kochen_1967}.
More specifically, let $M$ be a set of $n$ observables associated to a physical setting. 
In general, not all of these observables will be jointly measurable, but for any subset $C=\{A_1,\ldots,A_m\} \subset M$ of jointly measurable observables, called \emph{measurement context},
there is an associated probability distribution $p_C(a_1,\ldots,a_m)$
for the outcomes $a_j$ of the measurement of observable $A_j$.
A set of observables is jointly measureable in quantum theory if and only if the observables arise as a coarse-graining of a parent observable \cite{guhne2021incompatible}.
A \emph{non-contextual hidden-variable model} for a physical setting is given by a random variable $\lambda$ with distribution $q(\lambda)$ and an assignment of outcomes $a_1,\ldots,a_n$ to all $n$ observables with probability $q(a_1,\ldots,a_n|\lambda)$ such that:
\begin{enumerate}
	\item  For each value of $\lambda$ and every context $C$, the assignment of outcomes preserves the functional relationship between the observables. 
	That is, if $A_i,A_j\in C$ and $A_i = f(A_j)$ for some function $f$, then $a_i = f(a_j)$.
	\item The assignment reproduces the distributions $p_C$,
	\begin{align}
		p_C(a_1,\ldots,a_m) = \sum_\lambda q_C(a_1,\ldots,a_m|\lambda) q(\lambda),
	\end{align}
	where $q_C(a_1,\ldots,a_m|\lambda)$ is the restriction of $q(\cdot|\lambda)$ to the given measurement context $C$. 
	If the distributions $p_C$ cannot be reproduced by a non-contextual hidden-variable model, the physical setting is called \emph{contextual}. 
\end{enumerate}

Bell's theorem and the Kochen-Specker theorem \cite{Kochen_1967} showed that quantum theory is contextual for Hilbert space dimensions $d \geq 3$. 
It has been found that contextuality is closely related to quantum computational speedups \cite{raussendorf2013contextuality,Howard_2014,Bermejo_Vega_2017}. 
This further suggests to formulate a resource theory, where the free objects correspond to non-contextual hidden-variable models, and free operations correspond to consistent wirings of models. See \cite{amaral2019resource} and \cite{Abramsky_2019} for two different approaches to formulating a resource theory of contextuality. 
Ref.~\cite{karvonen2021neither} showed that, in the resource-theoretic framework of \cite{Abramsky_2019}, strict catalysis is not useful,
mirroring the case of Bell non-locality, see Section~\ref{subsubsec:LOSR}. Since the set of free operations considered by \cite{Abramsky_2019} strictly larger than that of \cite{amaral2019resource} (see \cite{Budroni_2022}), it is not clear whether strict catalysis can be useful in the latter. 
Furthermore, as far as we know, it is currently unknown whether other types of catalysis, such as correlated catalysis, can be useful in contextuality.


\subsection{Thermodynamics}\label{subsub:appl_thermo}

Quantum thermodynamics is an interdisciplinary field that adopts the tools of multiple frameworks like stochastic thermodynamics \cite{esposito2009nonequilibrium,seifert2012stochastic}, open quantum systems \cite{breuer2002theory,kosloff2013quantum}, and quantum information \cite{goold_role_2016,Vinjanampathy_2016} to extend thermodynamic considerations to the quantum realm. 
Although quantum thermodynamics often uses different frameworks to make its considerations quantitative, the central questions it raises are universal across complementary frameworks. Most of these questions can be formulated in a unified way by analysing the closed evolution of the system and environment. 
For that, we consider the interaction of a system of interest $\ms{S}$ prepared in some state $\rho_{\ms{S}}$ with a thermal environment $\ms{E}$ in the state $\gamma_{\ms{E}} = e^{-\beta \hat H_{\ms{E}}}/Z_{\ms{E}}$ with $Z_{\ms{E}} = \Tr e^{-\beta \hat H_{\ms{E}}}$, by means of a global unitary $U$.  
The final state of the composite system $\ms{SE}$ after interaction is given by
\begin{align}
	\label{eq:general_thermo_map}
	\sigma_{\ms{SE}} = U(\rho_{\ms{S}} \ot \gamma_{\ms{E}})U^{\dagger}.
\end{align}
The unitary $U$ encodes all physically-plausible types of interactions with respect to their strength (weak or strong coupling), complexity (local
or collective), or duration (short or long relative to natural time scales). It also enables us to study time-dependent Hamiltonians and work protocols, and furthermore makes no assumptions about the structure of $\ms{E}$, which can be both macroscopic or have dimensions comparable to those of $\ms{S}$. The map Eq.~(\ref{eq:general_thermo_map}) is therefore remarkably general and captures most of the effects encountered in thermodynamics.

Despite its generality, Eq.~(\ref{eq:general_thermo_map}) does not explicitly include one of the most ubiquitous assumptions in classical thermodynamics: auxiliary systems used cyclically in the thermodynamic process. These can be seen as an implicit way to use catalysis in textbook thermodynamics: For example, the operation of a piston when compressing an ideal gas can be seen as a catalytic transformation where the catalyst (piston) assists in transforming the system (ideal gas). 

The cyclic auxiliary systems can be modelled by adding a catalytic environment $\ms{C}$ interacting with both the system $\ms{S}$ and the thermal environment $\ms{E}$. The two environments are fundamentally distinct: While $\ms{E}$ describes an environment whose properties can freely change, $\ms{C}$ captures all of the degrees of freedom which must be preserved. We can therefore write the final state of the composite system $\ms{SCE}$ after an arbitrary unitary interaction $U$ as
\begin{align}
	\label{eq:general_thermo_map_cat}
	\sigma_{\ms{SCE}} = U(\rho_{\ms{S}} \ot \omega_{\ms{C}} \ot \gamma_{\ms{E}})U^{\dagger},
\end{align}
where we further require that $\Tr_{\ms{E}}[\sigma_{\ms{SCE}}] \approx \sigma_{\ms{S}} \ot \omega_{\ms{C}}$. 
Furthermore, unless stated otherwise, we will denote the local Hamiltonians of the three subsystems by $\hat H_{\ms{A}}$ with $\ms{A} \in \{\ms{S}, \ms{E}, \ms{C}\}$.  Catalysts evolving according to Eq. (\ref{eq:general_thermo_map_cat}) can be used to model various components of thermodynamic processes. Some particular examples include:

\begin{itemize}
	\item \textit{Finite environments.} 
	When the environment is significantly larger than the system, the back-action it experiences is usually negligible. However, in certain situations (e.g. when the environment is finite) it is important to quantify how it reacts due to the interaction with the system. This is relevant, for example, when one is interested in minimizing heat dissipation (see Section \ref{subsub:dissipation}). In such cases one can think about the finite thermal environment as an approximate catalyst.
	\item \textit{Heat engines. }Heat engines are machines that make use of the temperature gradient to generate useful work. The machinery of an engine can be viewed as a catalyst that facilitates the conversion of heat into work, while remaining unchanged during the operation. In fact, Clausius' statement of the second law of thermodynamics is formulated for a system undergoing a cyclic process, therefore a catalyst. For details see Sec. \ref{subsub:workext}.

	\item \textit{Clocks.} Control operations w.r.t. time are required for turning on and off a time-dependent interaction term in the Hamiltonian, thus allowing to run thermodynamic protocols in an autonomous manner. Any irreversible change in the clock system can deteriorate the protocol, and thus minimizing its disturbance is of key importance. Because of this, clock systems are modelled explicitly as catalysts, see Sec. \ref{subsub:CTOs} and \ref{subsubsec:batteries}.
	
	\item \textit{Apparatus.} Laboratory equipment usually facilitates experiments by augmenting control or improving performance, without undergoing change themselves (clocks being a special example). Such apparatus can be modelled as a catalyst when one wants to avoid using it as a source of energy or entropy (see Sec. \ref{subsubsec:cooling} and \ref{subsubsec:CV}).
	
\end{itemize}

\subsubsection{Minimizing dissipation in thermodynamic protocols}\label{subsub:dissipation}
Due to the seminal work of Landauer \cite{landauer1961irreversibility} it has been recognized that a logically irreversible erasure of information from a memory system leads to an unavoidable increase in the entropy of its environment. 
More specifically, let $\ms{S}$ be the memory system prepared as $\rho_{\ms{S}}$ with Hamiltonian $\hat H_{\ms{S}}$, and $\ms E$ be its thermal environment with Hamiltonian $\hat H_{\ms{E}}$. The erasure process is modelled using Eq. (\ref{eq:general_thermo_map}) and leads to a joint state $\sigma_{\ms{SE}}$. The heat transferred to the environment $\ms{E}$ is defined as the change in average energy, i.e. $Q := \Tr[\hat H_{\ms E}(\sigma_{\ms E} - \rho_{\ms E})]$. Erasing information from $\ms S$ is equivalent to transforming it into a deterministic (pure) state, which causes an inevitable increase in the entropy of the environment equal to $\Delta H_{\ms{E}} := H(\sigma_{\ms{E}}) - H(\rho_{\ms{E}})$  \cite{bennett1982thermodynamics}. Due to the unitarity of the underlying dynamics, the entropy change on the system is given by $\Delta H_{\ms{S}} := H(\sigma_{\ms S}) - H(\rho_{\ms S}) = - \Delta H_{\ms{E}}$. The heat $Q$ and $\Delta H_{\ms S}$ are therefore related by
\begin{align}
	\label{eq:landauer_bound}
	\beta  Q \geq  - \Delta H_{\ms S},
\end{align}
which can be seen as a fundamental bound on the minimal amount of heat dissipated to the environment \cite{landauer1961irreversibility}. Ref.~\cite{reeb2014improved} derived a sharpened (equality) version of the above formula, i.e.
\begin{align}
	\label{eq:landauer_exact}
	\beta Q = -\Delta H_{\ms S} + I(\ms S:\ms E)_{\sigma_{\ms SE}} + D(\sigma_{\ms E}\| \rho_{\ms E}),
\end{align}
which, due to the non-negativity of mutual information and relative entropy, implies Eq. (\ref{eq:landauer_bound}). The quantity $Q_{\text{diss}} :=  Q + \Delta H_{\ms{S}}/\beta$ is known as the \emph{dissipated heat} and captures the irreversible character of the thermodynamic process \cite{jarzynski2011equalities}. In the case when $\Delta H_{\ms S} \leq 0$, Ref.~\cite{reeb2014improved} also proved the following lower bound
\begin{align}
	\label{eq:reebs_wolf_bound_opt}
	\beta Q_{\text{diss}} \geq \frac{2 (\Delta H_{\ms S})^2}{\log^2(d_{\ms E}-1) + 4}.
\end{align}
This is a strict improvement over the Landauer's bound $Q_{\text{diss}} > 0$ whenever the environment involved in the process has a finite Hilbert space dimension $d_{\ms E}$. The bound was further shown to be achievable for specific states $\rho_{\ms S}$ and Hamiltonians $\hat H_{\ms S}$. Interestingly, it is currently not known whether Eq. (\ref{eq:reebs_wolf_bound_opt}) is tight for general quantum processes. Furthermore, as Eq. (\ref{eq:reebs_wolf_bound_opt}) results from mathematical properties of the relative entropy, it is not clear whether there exists a physical process which achieves $Q_{\rm diss}\propto \left( \log d_{\ms E} \right)^{-2}$. 
More specifically, let us consider the minimal \emph{achievable} heat dissipation $Q^*_{\text{diss}}$
\begin{align}
	\label{eq:dissipation_achievable}
	Q_{\text{diss}}^* := \min_{\hat H_{\ms{E}}, U} Q_{\rm diss},
\end{align}
where $Q = Q(\rho_S, \beta, \hat H_{\ms{S}}, \hat H_{\ms{E}}, U)$, and similarly for $\Delta H_{\ms{S}}$. 
The best known protocols that aim to minimize $Q_{\text{diss}}^*$ were proposed in \cite{reeb2014improved}, and further analyzed in \cite{skrzypczyk2014work,Baumer_2019}. In all these cases heat dissipation decreases linearly with $\log d_{\ms E}$, i.e.
\begin{align}
	\label{eq:dissipation_scaling_logdE}
	\beta  Q_{\text{diss}}^* = \mathcal{O}\left(\frac{1}{\log d_{\ms E}}\right).
\end{align}
Let us now describe a protocol that achieves this scaling. Consider $\ms E$ to be a system composed of $n = \log d_{\ms E} / \log d_{\ms S}$ subsystems, $\ms E = \ms E_1 \ms E_2 \ldots \ms E_n$, where each $\ms{E}_i$ has the same dimension $d_{\ms S}$ as the system of interest $\ms S$ and its own Hamiltonian $\hat H_{\ms{E}_i}$. The global unitary, denoted $U_{\pi}$, is a sequential swap between $\ms S$ and each of $\ms{E}_i$, leading to the overall action
\begin{align}\label{eq:unitaryemb}
	U_{\pi} \ket{i_0}_{\ms S} \ket{i_1}_{\ms E_1}  \ket{i_2}_{\ms E_2} \ldots \ket{i_n}_{\ms E_n} &= \\ \ket{i_n}_{\ms S} \ket{i_0}_{\ms E_1} & \ket{i_1}_{\ms E_2} \ldots \ket{i_{n-1}}_{\ms E_n}\nonumber 
\end{align}
This is the same unitary used in describing the relationship between correlated-catalytic and multi-copy transformations (Sec. \ref{subsec:partial-order-regularization}), and in the construction of universal approximate catalysts (Sec. ~\ref{subsubsec:approx-cat-construction}). To further perform the minimization in Eq.~\eqref{eq:dissipation_achievable}, one optimizes over local Hamiltonians $\hat H_{\ms{E}_i}$, which yields the scaling in Eq. (\ref{eq:dissipation_scaling_logdE}). In fact, for any thermal environment composed of non-interacting subsystems, the scaling of $Q_{\text{diss}}$ is at most linear in $\log d_{\ms{E}}$ \cite{reeb2014improved}.

The above protocol for minimizing $Q_{\text{diss}}^*$ can be viewed, alternatively, as the problem of finding the least-disturbed approximate catalyst where the thermal environment itself is treated as the approximate catalyst or embezzler. 
This follows from the correspondance $\gamma_{\ms{E}} \propto \log \hat H_{\ms{E}}$, which makes optimizing Eq. (\ref{eq:dissipation_achievable}) over $\hat H_{\ms{E}}$ equivalent to optimizing it over all possible density operators. To demonstrate this, let us consider a thermal environment whose Hamiltonian $\hat H_{\ms{E}}$ is chosen such that its thermal state coincides with the universal approximate catalyst described in Sec.~\ref{subsubsec:approx-cat-construction}. Specifically, choose $\hat H_{\ms E} = -\frac{1}{\beta} \log \rho_{\ms E}$, where $ \rho_{\ms E}$ is given by

\begin{align}
	\rho_{\ms E} = \frac{1}{n-1} \sum_{i=1}^{n-1} \rho^{\ot i}_{\ms S} \ot \sigma_{\ms S}^{\ot (n-i)}.
\end{align}
This means that $\gamma_{\ms{E}} = \rho_{\ms E}$ and furthermore $d_{\ms E} = d_{\ms S}^{n}$. Applying the unitary $U_{\pi}$ from Eq.~\eqref{eq:unitaryemb} to $\rho_{\ms{S}}\otimes \rho_{\ms E}$ gives 
\begin{align}
	\sigma_{\ms{SE}} = U_{\pi} (\rho_{\ms S} \ot \rho_{\ms E})U^{\dagger}_{\pi} = \sigma_{\rm S} \ot \sigma_{\ms E},
\end{align}
with  $\Delta(\rho_{\ms E},\sigma_{\ms E}) \leq 1/(n-1)$, as in the analysis leading to Eq.~\eqref{eq:omegaprime_and_omega}.
For generic states $ \rho_{\ms E}, \sigma_{\ms E}$ of the above structure, numerical analysis shows that $ D(\sigma_{\ms E}\|\rho_{\ms E}) $ scales as $O(1/\log d_{\ms E})$ as in \eqref{eq:dissipation_scaling_logdE}. 
This brings us to the conclusion that processes in which heat dissipation obeys this scaling are actually performing thermal embezzlement on the thermal environment. This connection between heat dissipation and thermal embezzlement, to the best of our knowledge, has not yet been appreciated in the literature. Furthermore, it is an interesting open question whether one can use stricter notions of catalysis to engineer better, i.e. less dissipative, thermodynamic protocols.

An alternative approach for lowering heat dissipation $Q_{\text{diss}}$ was proposed in Ref.~\cite{Henao2022}. 
There, one includes an additional system $\ms{C}$ (the catalyst) so that the joint system $\ms{SCE}$ evolves according to Eq. \eqref{eq:general_thermo_map_cat}. 
The unitary $U$ is then composed of two steps, i.e. $U = V_{\ms{SCE}}(V_{\ms{SE}}\ot \mathbb{1}_{\ms{C}})$. 
That is, one first implements a unitary $V_{\ms{SE}}$ on the system and the thermal environment, obtaining $\sigma_{\ms{SE}} = V_{\ms{SE}}(\rho_{\ms{S}} \ot \gamma_{\ms{E}})V_{\ms{SE}}^{\dagger}$. 
Then one applies a second unitary $V_{\ms{SCE}}$ to the joint system $\ms{SCE}$, leading to the final global state $\sigma'_{\ms{SCE}} = V_{\ms{SCE}}(\sigma_{\ms{SE}} \ot \omega_{\ms C})V_{\ms{SCE}}^{\dagger}$. 
The second interaction (which involves the catalyst) is introduced to mitigate the heat $Q_{\text{diss}}$ dissipated to the thermal environment during the first interaction. 
Ref.~\cite{Henao2022} showed that, for any {correlated} state $\sigma_{\ms{SE}}$ obtained in the first step of the protocol,  i.e., if $I(\ms S:\ms E)_{\sigma}>0$, there always exists a unitary $V_{\ms{SCE}}$  that can reduce the local entropy of the environment, without changing the local states of the system and the catalyst, that is $\sigma'_{\ms S} = \sigma_{\ms{S}}$ and $\sigma'_{\ms C} = \sigma_{\ms{C}}$. 
As a consequence, the final dissipated heat $Q'_{\text{diss}}$ computed for the state $\sigma_{\ms{SCE}}'$ is lower than $Q_{\text{diss}}$. Based on this observation, they argued that the use of a catalyst allows to mitigate heat dissipation in the process of information erasure (i.e. when $\Delta H_{\ms{S}} < 0$). 
This reduction in heat dissipation can be understood as a consequence of broadcasting correlations from $\ms{SE}$ to the joint system $\ms{SCE}$ which results in lowering the entropy of the enviornment, i.e. $H(\sigma_{\ms{E}}') - H(\sigma_{\ms{E}}) < 0$.

Let us conclude this section by mentioning another situation in which the thermal environment is formally treated as a (strict) catalyst. 
In the theory of open quantum systems, the Born-Markov approximation gives rise to a Markovian master equation for the dynamics on the thermal environment $\ms{E}$ \cite{breuer2002theory}.  
Moreover, when combined with a rotating wave approximation, it implies that the steady-state of the dynamics on $\ms{E}$ is a thermal state. 
The Born-Markov approximation is often stated as assuming that the thermal environment remains thermal at all times, and does not build up correlations with the system $\ms{S}$, i.e. that the evolution from Eq. (\ref{eq:general_thermo_map}) can be approximated as
\begin{align}
	U(\rho_{\ms{S}} \ot \gamma_{\ms{E}})U^{\dagger} \approx \sigma_{\ms{S}} \ot \gamma_{\ms{E}}.
\end{align}	
From the point of view of this review it may hence be seen as a form of catalysis. 
We emphasize, however, that the validity of the approximation in fact only requires that \emph{(a)} the correlations between subsystems $\ms{S}$ and $\ms{E}$ are negligible as measured by the interaction, and \emph{(b)} that the two-time correlation functions of the operators of the interaction term on the thermal environment match the thermal ones with high accuracy. These requirements may already be fulfilled for a thermal environment that is only locally thermal, for example in the sense of the eigenstate thermalization hypothesis, see Refs.~\cite{deutsch1991quantum,Srednicki1994,Polkovnikov2011,Gogolin2016,DAlessio2016}. 
Strict catalysis is therefore not actually required for the approximation to holds. However, the fact that the thermal environment has to remain locally thermal may be seen as a form of approximate catalysis of the resource content (the local thermality) of the bath.

\subsubsection{Work extraction}\label{subsub:workext}
Consider the general map from Eq. (\ref{eq:general_thermo_map}) specified to the case when the thermal environment is not included in the global dynamics, i.e. $U = U_{\ms{S}} (t) \ot \mathbb{1}_{\ms{E}}$. 
For that, consider a quantum system $\ms S$ that evolves under a Hamiltonian $\hat H_{\ms S}(t) = \hat H_{\ms S} + \hat V(t)$, where $\hat V(t)$ is a time-dependent external potential. 
From control theory we know that any interaction $\hat V(t)$ can be formally written using a unitary $U_{\ms{S}} = \mathcal{T} \exp\left(-\frac{i}{\hbar} \int_{0}^t \hat H_{\ms{S}}(s) \,\text{d}s  \right) $, where $\mathcal{T}$ is the time-ordering operator. 
We are going to consider cyclic process during which the system $\ms{S}$ performs work on an external source.  
A thermodynamic process is called \emph{cyclic} when the system is coupled at time $t = 0$ to an external source of work, and fully decouples at time $t = T$, i.e. $V(0) = V(T) = 0$. Since the system generally does not return to its initial state at time $T$, this allows to perform thermodynamic work on the external source. 
The maximal amount of work that can be extracted under a cyclic process is known as the \emph{ergotropy}, and can be written as \cite{Allahverdyan2004}
\begin{align}
	\label{eq:ergotropy_def}
	W_{\mathcal{U}}(\rho_{\ms S}) := \max_{U \in \mathcal{U}_{\ms{S}}} \Tr \left[\hat H_{\ms S} (\rho_{\ms S} - \sigma_{\ms{S}} )\right],
\end{align}
where $\sigma_{\ms{S}} := U\rho_{\ms S} U^{\dagger}$ and $\mathcal{U}_{\ms{S}}$ stands for the set of all unitary operators on $\cDS$. We also identified $\rho_{\ms S} \equiv \rho_{\ms S}(0)$. Quantum states from which no work can be extracted using unitary operations are called \emph{passive states}. 
The concept of passivity is fundamental in thermodynamics: for instance, it singles out Gibbs thermal states as the only passive states that remain passive when taking arbitrary many uncorrelated copies ~\cite{Pusz1978,Lenard1978}.

The above scenario can be extended to the case when an additional system (a catalyst) is used to increase the amount of work extracted in the process. Notice that due to the Basic Lemma of catalysis (see Lemma \ref{lem:fundamental_nogo}), no strict catalyst exists that can be used to increase the ergotropy in Eq. \eqref{eq:ergotropy_def}. 
However, this is no longer the case for the other, less stringent, notions of catalysis. To see this, we will now consider an extension of ergotropy to the case of correlated-catalysis. For that, one can define the \emph{correlated-catalytic ergotropy} as 
\begin{align}
	W_{\mathcal{U}}^{\,\text{CC}}(\rho_{\ms S}) &:= \max_{U \in \mathcal{U}_{\ms{SC}}}\, \max_{\omega_{\ms C}}\,\, \Tr[\hat H_{\ms S} ( \rho_{\ms S}- \sigma_{\ms{S}})],
\end{align}
where $  \sigma_{\ms{SC}} = U(\rho_{\ms S}\otimes\omega_{\ms{C}})U^\dagger $ and $\sigma_{\ms{C}}= \omega_{\ms C}$. It was first observed in \cite{sparaciari_energetic_2017} that passive states undergoing unitary evolutions can be activated when a suitable catalyst is used. 
In particular, \cite{sparaciari_energetic_2017} showed for three-level systems and \cite{wilming_entropy_2020,lipka-bartosik_catalytic_2021} showed in general that
\begin{align}
	W_{\mathcal{{U}}}^{\,\text{CC}}(\rho_{\ms S}) = \Tr[\hat H_{\ms S}(\rho-\gamma_{\beta^*}(\hat H_{\ms S})) ],
\end{align}
where $\gamma_{\beta^*}(\hat H_{\ms S})$ is the unique Gibbs state at inverse temperature $\beta^*$ satisfying $H(\rho_{\ms S}) = H(\gamma_{\beta^*}(\hat H_{\ms S}))$.
What one observes here is that the catalyst allows to evolve the main system in a way that doesn't keep its spectrum invariant, which allows to reach a Gibbs state with the same entropy. This final state is optimal, in the sense that among all states of the same entropy $ \gamma_{\beta^*}(\hat H_{\ms S}) $ always achieves the minimum of average energy. 

The above results were further generalized in \cite{lipka-bartosik_catalytic_2021} to general resource theories $\mc R=(\mc S,\mc O)$ that satisfy certain reasonable assumptions \footnote{The result holds for any resource theory $\mc R=(\mc S,\mc O)$ in which the free operations $\mc{O}$ allow for permuting identical subsystems and condition operations on classical information (see also Sec. \ref{subsubsec:multi-copy}).} More concretely, one can consider two very different notions of generalized ergotropy, one being achievable under correlated-catalytic operations, that is
\begin{align}
	\hspace{-8pt} W^{\,\text{CC}}_\mathcal{O}(\rho_{\ms S}) := &\max_{\mathcal{F} \in \mathcal{O}}\, \max_{\omega_{\ms C}}\,\, \Tr[\hat H_{\ms S} ( \rho_{\ms S}- \sigma_{\ms{S}})]\\
	&\,\text{such that} \,\, \sigma_{\ms{SC}} \!=\!	\mathcal{F} (\rho_{\ms S}\otimes\omega_{\ms{C}}) \,\text{and}\, \sigma_{\ms{C}}= \omega_{\ms C},\!
\end{align}
and the other being the asymptotic rate of extractable work, $W_{ \mathcal{O}}^\infty(\rho_{\ms S}) := \lim_{n \rightarrow \infty} W_{\mathcal{O}}(\rho^{\ot n})/n$,
which is computed using the total Hamiltonian of $n$ independent copies of the quantum system $\ms{S}$, that is $\hat H_{\ms{S}}^n := \sum_{i=1}^n \hat H_{\ms{S}_i}$ with $ \hat H_{\ms{S}_i} \equiv \mathbb{1}_{/\ms{S}_i} \ot \hat H_{\ms{S}_{i}}$.   

It then follows from the partial-order regularization using correlated catalysis described in Sec.~\ref{subsec:partial-order-regularization} that $ W^{\,\text{CC}}_\mathcal{O}(\rho_{\ms S}) = W_{ \mathcal{O}}^\infty(\rho_{\ms S})$. In other words, the use of an appropriate catalyst allows to extract exactly the same amount of work from a passive state as on average (i.e. per copy) in the limit of asymptotically many copies.
In fact, even though $W_{\mc{O}}^{\text{CC}}$ and $W_{\mc{O}}^{\infty}$ have been defined according to the Hamiltonian of the system $\hat H_{\ms{S}}$, the techniques used to prove this fact are also applicable to other observables, like particle number or the overlap with a fixed quantum state. In this sense, correlated catalysis allows to activate passive states with respect to arbitrary observables, as long as the corresponding asymptotic rate is larger than the single-copy one \cite{lipka-bartosik_catalytic_2021}.

At this point, one should note that the ergotropy $W_{\mathcal{U}}$ defined in Eq. (\ref{eq:ergotropy_def}) is not the sole existing quantification of extractable work in thermodynamics. While it quantifies thermodynamic work as the difference in average energy induced on the main system, this energy difference is not stored explicitly in another physical system (a battery). In fact, due to the conservation of energy, the surplus energy must be implicitly released into some external environment which, in general, can be very difficult to access in a future use.

The natural question that arises is whether the energy extracted from a quantum system can be stored in a way that is fully usable in the future. Ref.~\cite{gallego_thermodynamic_2016} argued from operational principles, that one should always model this "battery" or "work-storage device" explicitly in a way compatible with the intended future use of the energy. 
In a similar spirit, explicit battery models are often used to define work on a single-shot level in the context of the resource theory of athermality, see e.g. \cite{brandao_second_2015,faist2015minimal}.
Here, storing an amount of energy $E$ corresponds to preparing an appropriate battery system in an energy eigenstate of energy $E$, modelling a deterministic energy storage. For more details see section \ref{subsubsec:batteries}.

\subsubsection{Thermal operations}
\label{subsub:CTOs} 
Let us now consider the system of interest $\ms{S}$ interacting with both the thermal ($\ms{E}$) and the catalytic ($\ms{C}$) environments according to Eq. (\ref{eq:general_thermo_map_cat}). Assume furthermore that there are no external sources of work, so that the total energy is conserved and the global unitary $U$ satisfies $[U, \hat H_{\ms{S}} + \hat H_{\ms{E}} + \hat H_{\ms{C}}] = 0$. 
This scenario is the main subject of the resource theory of athermality, which we briefly introduced in Sec. \ref{subsubsec:athermality} 
\footnote{The resource theory of athermality was originally introduced with a number of additional technical assumptions about the spectrum of the  thermal environment (i.e., its Hamiltonian $H_\ms{B}$), see \cite{Janzing2000} and \cite{Horodecki2013,brandao_second_2015}. 
	However, one can equivalently consider all possible Hamiltonians for the thermal environment and then show that no advantage is obtained  in terms of possible state transformations.}

The first steps of explicitly studying catalysis in the resource theory of athermality were done in Ref.~\cite{brandao_second_2015}. 
There the authors raised the question of whether it is possible to characterize all possible state transformations on the system of interest $\ms{S}$ when both the catalytic and thermal environments can be chosen arbitrarily (i.e. up to their respective constraints). 
The authors started out from the state transition conditions for thermal operations (thermo-majorization) introduced in Section \ref{subsubsec:athermality}  and showed that it can be relaxed by strict catalysis into a set of monotonic entropic conditions, known as the \emph{generalized second laws} of quantum thermodynamics.  

More concretely, for energy-incoherent states that fulfill $[\rho_{\ms S},\hat H_{\ms S}]=0$, they identified a family of monotones 
\begin{equation}\label{eq:genfreeE}
	F_\alpha^\beta \left(\rho_{\ms S}, \hat H_{\ms S}\right):= F^{\beta}_{\text{eq}}(\hat H_{\ms{S}}) +  \frac{1}{\beta} D_\alpha\left(\rho_{\ms S} \| \gamma_\beta(\hat H_{\ms S})\right),
\end{equation}
where $ D_\alpha(\cdot\|\cdot)$ are the (classical) R{\'e}nyi divergences, $F_{\text{eq}}^{\beta}(H) := -\frac{1}{\beta}\log Z_\beta(H)$  is the (equilibrium) Helmholz free energy and $Z_\beta(H) := \Tr e^{-\beta H}$ is the partition function. 
When the initial state $\rho_{\ms S}$ and target state $ \sigma_{\ms{S}} $ are energy-incoherent, the necessary and sufficient conditions for a strictly catalytic state transition $\rho_{\ms{S}} \AC{{\rm TO}}~ \sigma_{\ms{S}}$ are then given by 
\begin{equation}\label{eq:genfreeE_conds}
	F_\alpha^\beta \left(\rho_{\ms S}, \hat H_{\ms S}\right) > F_\alpha^\beta \left(\sigma_{\ms S}, \hat H_{\ms S}\right) \quad \text{for} \quad  \alpha\in \mathbb{R}\setminus\{0\}.
\end{equation}

By relaxing the notion of catalysis from strict to arbitrarily strict catalysis (recall Sec.~\ref{sub:arb_strict}), all conditions with $\alpha<0$ can be dropped. 
Refined alternative statements and proofs of this result have since been done in Refs. \cite{gour2021entropy,rethinasamy_relative_2020}. The fully-quantum case where $\rho_{\ms{S}}$ or $\sigma_{\ms S}$ are energy-coherent is still open, even in the case when no catalyst is used. 
In the non-catalytic case several necessary monotones are known \cite{brandao_second_2015,lostaglio2015description,lostaglio_quantum_2015} but the minimal sufficient set of complete monotones remains undetermined, with the exception of when the system $\ms{S}$ is a qubit \cite{Cwiklinski2015}. 

For energy-incoherent states in the i.i.d. limit, the generalized free energies from Eq. (\ref{eq:genfreeE}) for all $\alpha \in \mathbb{R}$ converge to a single quantity, namely the non-equilibrium free energy 
(see also \eqref{eq:free-energy}),
\begin{align}
	F^\beta (\rho_{\ms S},\hat H_{\ms S}) &= \lim_{\alpha\rightarrow 1} F_\alpha^\beta(\rho_{\ms S},\hat H_{\ms S}).
\end{align}
As a consequence, both thermo-majorization conditions and the generalized second laws converge to a single inequality, namely the monotonicity of the non-equilibrium free energy.
This convergence is consistent with our understanding from macroscopic thermodynamics, strengthening the role of thermal operations as a physical description of the microscopic regime.

Historically, the resource theory of thermodynamics served as a natural starting ground for considering different relaxations to the notion of catalysis. 
It was envisioned in \cite{brandao_second_2015} that catalysts describe the apparatus used to control the thermodynamic process, such as a clock (a reference frame) that keeps track of the duration in which an interaction Hamiltonian is switched on (see also Sec.~\ref{subsec:circumventing_conservation_laws}).

At that time the existing research on quantum reference frames highlighted the central issue of their inevitable degradation, i.e. that the state of the reference frame does not return exactly to its original state whenever information about the main system is inferred from the reference frame \cite{poulin2007dynamics}, as opposed to the ideal setting~\cite{Bartlett2007,PhysRevA.82.032320}. 
This was part of the original motivation to investigate the robustness of the state transition conditions (\ref{eq:genfreeE_conds}) under small errors in the catalyst. 
It was realized that earlier results by \cite{van_dam_universal_2003,leung_characteristics_2014} imply the emergence of embezzlement in catalytic thermal operations, resulting in the breaking down of the generalized second laws of thermodynamics -- even the macroscopic second law singled out in the i.i.d regime. 
Indeed, the construction of Sec.~\ref{subsubsec:approx-cat-construction} also applies to thermal operations. This undesirable effect is unphysical and indicates that more care needs to be taken when allowing for an error in the catalyst. 

Ref. \cite{brandao_second_2015} then identified three different regimes of catalysis, corresponding to 1) embezzling, 2) approximate catalysis where the remaining monotone is the non-equilibrium free energy, and 3) strict catalysis.
In particular, the second regime can be seen as the intersection of approximate \emph{and} correlated catalysis (which, interestingly, used the same catalyst construction as discussed in Sec. \ref{subsubsec:multi-copy} and Sec. \ref{subsubsec:approx-cat-construction}). 
In this regime \cite{brandao_second_2015} assumed that the disturbance of the catalyst, in trace distance, decreases with the size of the catalyst. 
This allows to show that the non-equilibrium free energy $F^{\beta}$ is the only relevant monotone that fully characterizes state transitions under this relaxation of catalysis. 
On the other hand, the problem of thermal embezzlement was later picked up in Ref.~\cite{ng_limits_2015}, which derived the minimal achievable error as a function of the catalyst dimension. Ref.~\cite{ng_limits_2015} also showed that additional physical constraints on the catalyst, such as an upper bound on the average energy, allowed for the recovery of various monotones.

A conceptually different functioning of the catalyst was probed in Ref.~\cite{woods_resource_2019}, which analysed the problem of implementing energy-preserving unitaries for thermal operations in an \emph{autonomous} manner, i.e. via a time-independent global Hamiltonian, {see also \cite{Malabarba2015,Silva2016,Erker2017,woods2019autonomous} and Sec.~\ref{subsec:asymmetry}}.
To achieve this, they explicitly modelled a quantum clock that provides the necessary timing information, and consequently degrades in the process. In the language of this review such systems can be viewed as approximate catalysts. The authors of Ref.~\cite{woods_resource_2019} start by showing that some amount of back-action on the clock system will always be inevitable, unless the clock has an unphysical  Hamiltonian (such as those without a ground state). 
The authors then proceed by 
bounding the maximal error (in trace distance) that can be induced on the clock, denoted by $\varepsilon_{\rm emb}$, and show that such a back-action would necessarily affect the accuracy of the implemented state transition on the system. 
With this, they derived an upper bound on the resolution error $ \varepsilon_\mathrm{res} $ of achieving the desired transformation on the system, i.e. the minimum achievable trace distance between a final system state with the original target state. 
The upper bound on $ \varepsilon_\mathrm{res}$ in particular depends on the dimension of the system, the catalyst used, and the clock considered, and also the allowed error $ \varepsilon_\mathrm{emb}$. The main result of Ref.~\cite{woods_resource_2019} is that both errors $\varepsilon_\mathrm{res}$ and $\varepsilon_\mathrm{emb} $ can together vanish even when the clock system has a physical Hamiltonian (what the authors coin as a quasi-ideal clock). This result implies that all CTOs (i.e. thermal operations with an exact catalyst) can be 1) implemented in a fully autonomous manner with an inexact catalyst, where 2) the state transition conditions in Eq.~\eqref{eq:genfreeE} remain robust.

While strict catalysis was first investigated in the context of entanglement transformations under LOCC, the concept of correlated catalysis is much better founded in the framework of thermal operations.  This is because majorization relation, via Nielsen's theorem, fully describes state transitions only in the case of pure bipartite systems. In contrast, its thermodynamic analog, relative majorization, characterizes state transitions under thermal operations also between mixed states. This implies that it can also be used to address the extensions of exact catalysis, e.g. allowing for residual correlations between $\ms S$ and $\ms C$ (correlating catalysis) or within different parts of $\ms C$ (marginal-correlated catalysis), even after tracing out the environment $\ms{E}$.
In \cite{lostaglio_stochastic_2015,muller2016generalization}, an explicit multipartite catalyst was constructed such that by allowing for final correlations to be created between the partitions, it enabled the bypassing of generalized second laws from Eq. (\ref{eq:genfreeE_conds}) for all $\alpha$ except {$\alpha=0$ and $\alpha =1$.} The guiding intuition behind this work is the fact that the relative entropy $D_{1}$ is the unique quantity out of the whole family of R{\'e}nyi divergences $D_{\alpha}$ which is super-additive (with the exception of $ D_0 $ which is unstable under perturbations of the state). 
Subsequently, Ref.~\cite{gallego_thermodynamic_2016} argued from operational principles that one should consider the notion of correlated catalysis.
Ref.~\cite{henrik_wilming_axiomatic_2017} showed that both correlated catalysis and marginal-correlated catalysis single out the non-equilibrium free energy $F^{\beta}$ as the unique continuous and additive monotone.
A natural question that emerged from these considerations is whether monotonicity of the non-equilibrium free energy alone could be sufficient for the convertability via correlated catalysis.
In other words, it was conjectured that (when allowing for an arbitrarily small error on the final state, which eliminates the constraint arising from $F^\beta_0$)
\begin{align}\label{eq:correlating-catalysis-free-energy}
	F^\beta(\rho_{\ms S},\hat H_{\ms S}) \geq F^\beta(\sigma_{\ms S}, \hat H_{\ms S}) \Rightarrow\rho_{\ms S} \AC{\rm TO}{\rm corr.} \sigma_{\ms S}. 
\end{align}
For energy-incoherent states, this was proven by \cite{muller_correlating_2018} and \cite{rethinasamy_relative_2020}, while it is wrong for general coherent states, since thermal operations cannot build up coherences. 
This can be circumvented by enlarging the set of free operations from thermal operations to Gibbs-preserving (GP) operations \cite{Faist2015}. \cite{shiraishi_quantum_2021}, using the construction discussed in Sec.~\ref{subsec:partial-order-regularization}, showed that Eq. \eqref{eq:correlating-catalysis-free-energy} is true for general quantum states if $\mathrm{TO}$ is replaced by $\mathrm{GP}$. 
These results show that correlating catalysis can lift the infinite family of second laws \eqref{eq:genfreeE_conds} to just the monotonicity of the standard non-equilibrium free energy. 
It is worth noting that in these results the catalyst can remain arbitrarily little correlated to the system of interest provided that its dimension is large enough.

As we conclude this section, it is important to note that identifying an appropriate state of a catalyst for thermal operations is generally a challenging task. To date, there is no comprehensive method that can determine the state of the catalyst required by a given state transformation. This issue was partially addressed in Ref. \cite{lipka-bartosik_all_2021}. Specifically, for a given pair of energy-incoherent states $\rho_{\ms{S}}$ and $\sigma_{\ms{S}}$ that satisfy Eq. \eqref{eq:genfreeE_conds} for $\alpha \geq 0$, they observed that a quantum state with randomly distributed occupations in the energy basis can, with high probability, serve as an approximate catalyst for the transition $\rho_{\ms{S}} \xrightarrow{\text{TO}} \sigma_{\ms{S}}$. This success probability furthermore increases for catalysts with a larger dimension. Moreover, Ref. \cite{lipka-bartosik_all_2021} claimed to formally prove that for any pair of energy-incoherent states satisfying the second laws of Eq. \eqref{eq:genfreeE_conds} for $\alpha \geq 0$, any energy-incoherent quantum state can act as an approximate catalyst for the transition, as long as enough copies of the catalyst are available. Unfortunately, it was later discovered that the proof of this statement has a gap (i.e. Lemma $4$ therein is not valid), which leaves this claim unresolved.

\subsubsection{Thermal operations with limited control}

Thermal operations capture generic energy-preserving interactions between a system and its surroundings. 
However, sometimes it can be appealing to study specific classes of thermal operations which can admit more straightforward experimental implementations, or better reflect actual control capabilities. 
For this reason, two main classes of thermal processes have been studied in the recent years. 
The first class is known as \emph{elementary thermal operations} (ETOs)~\cite{lostaglio2018elementary}, which are system-bath interactions that can be decomposed into sequences of two-level operations on the system. 
The second class is called Markovian thermal processes (MTPs)~\cite{lostaglio2022continuous}, which is a hybrid approach that combines Markovian master equations with resource-theoretic formulations to describe memoryless system dynamics. While these two classes of operations are formally different from each other, for energy-incoherent initial states, it was shown in Ref.~\cite{lostaglio2022continuous} that the set of states reachable via MTPs is a subset for that of ETOs. 

Strictly catalytic versions of the above classes have been studied~\cite{son2022catalysis}. 
The first point of observation is that in both ETOs and MTPs, while a lot of freedom is still given in preparing and using thermal Gibbs states, such states are technically no longer free states in the usual resource-theoretic sense. Take ETOs for example: For each $2$-level operation on the system of interest, one is allowed to append any Gibbs state and turn on the system-bath interaction; however before proceeding to the next step, this Gibbs state is traced out and replaced with a fresh copy of another Gibbs state. Because of this, in contrast to thermal operations, even a catalyst in the Gibbs state can be useful \cite{korzekwa2022optimizing,son2023hierarchy}.   
As for the second point, since the gap between ETO/MTPs and TOs exist due to a restricted form of Markovianity of the former, intuitively this gap should be closed when a proper memory system is allowed (e.g. in the form of a catalyst in a Gibbs state).

The above idea was formalized and rigorously examined in two studies using different approaches. Ref.~\cite{czartowski2023thermal} developed a protocol  using MTPs, combined with an explicit modelling of a memory acting as a (strict) catalyst. They showed that, in the infinite temperature limit, MTPs can approximately reach \emph{all} states that are reachable by TOs, with the precision increasing with the size of the catalyst. They also conjectured that the same holds at finite temperatures. 
Ref.~\cite{son2023hierarchy} took a different approach, by focusing on ETOs. The authors show that, for any TO, it is possible to decompose the global energy-conserving unitary into a sequence of 2-level unitaries acting on the system up to an arbitrary precision using a Suzuki-Trotter expansion.  They then prove that any TO
can be implemented with arbitrary precision using ETOs with a strict catalyst prepared in a Gibbs state. 
For energy-incoherent initial states, they also show that any final state achievable via strictly catalytic TOs is also achievable via strictly catalytic ETOs and MTPs.


\subsubsection{Work extraction with explicit batteries}
\label{subsubsec:batteries}
The concept of work is not easy to define in the formalism of resource theories. This is because the resource theoretic approach requires us to explicitly model the physical mechanism of storing and supplying thermodynamic work. Recall that in Sec. \ref{subsub:workext} work was modelled via an external potential that performs work on the system. This approach is not sufficient in the resource theoretic mindset where all external resources must be accounted for explicitly. The usual approach of extending the scenario from Eq. (\ref{eq:general_thermo_map_cat}) is to add an explicit battery system $\ms{B}$ that supplies and stores the thermodynamic work associated with the process.

With this in mind, Ref.~\cite{brandao_second_2015} modelled the battery system $\ms{B}$ using a two-level system with an energy gap $w$ initialized in a pure energy eigenstate $\ketbra{0}{0}_{\ms{B}}$. They defined (deterministic) work as the maximum value of $w$ such that the following state transition is possible
\begin{equation}\label{eq:workext_in_CTO}
	\rho_{\ms{S}} \otimes \ketbra{0}{0}_B\AC{{\rm TO}}{\text{arb.}}~ \sigma_{\ms{S}} \otimes \ketbra{1}{1}_B,
\end{equation}
where a positive value of $w$ corresponds to extracting work from the system, while a negative value  corresponds to supplying work for the transformation. Using the generalized second laws Eq.~\eqref{eq:genfreeE_conds} in the above state transition it can be found that the optimal value of $w$ that allows to bring the system from one energy-incoherent state $\rho_{\ms{S}}$ to another one $\sigma_{\ms{S}}$ is given by
\begin{equation}\label{eq:work_distance}
	W_{\text{dist}} =\inf_{\alpha\geq 0} \left[F^\beta_\alpha\left(\rho_{\ms S}, \hat H_{\ms S}\right)-F^\beta_\alpha\left(\sigma_{\ms S}, \hat H_{\ms S}\right)\right],
\end{equation}
which is also known as the \emph{work distance} \cite{brandao_second_2015}. 
The immediate question arises as to whether the exact form of $W_{\text{dist}}$ has to depend strongly on the battery model. 
\cite{brandao_second_2015} shows that $W_{\text{dist}}$ can be defined more generally as the ability of extracting and storing work in a pure (i.e. zero entropy) state. 
They illustrate that by considering another battery model, namely the purity battery \cite{bennett1982thermodynamics,faist2015minimal}, one arrives at Eq.~\eqref{eq:work_distance} when applying Landauer's erasure to relate purity of the battery with thermodynamic work. 

Building on these results, Ref.~\cite{woods_maximum_2019} studied the scenario in which catalysts are used as controls/machines undergoing cyclic process in the presence of two baths at different inverse temperatures $\beta_{\text{c}}$ and $\beta_{\text{h}}$. More specifically, they considered a setting where the hot bath acts as the background that defines the set of catalytic thermal operation to be $\mathrm{TO}(\beta_{\text{h}})$, and the other bath is of a finite size, initialized in the Gibbs state of inverse temperature $\beta_{\text{c}} $ with its respective Hamiltonian $H_{\ms S}$.

Casting the problem of work extraction into a question of battery state preparation allowed the authors of Ref.~\cite{woods_maximum_2019} to establish a generic framework that provides systematic tools for investigating the quality of extracted energy. 
They characterized numerous types of extracted work, according to how the entropy of the battery compares with the extracted average energy. 
The first type, which the authors dubbed perfect work, includes the situation in \eqref{eq:workext_in_CTO}, where there is strictly zero increase in entropy of the battery state. 
A slight relaxation of this condition, called near-perfect work, allows for an increase in entropy of the battery as long as it is arbitrarily small compared to the amount of extracted work. 
\cite{woods_maximum_2019} analyzed the impact of the generalized free energies (\ref{eq:genfreeE}) on such a heat engine setting, and concluded that they place fundamental limitations on the maximal efficiency $\eta$ for the case of a single qubit of the system, where
\begin{equation}
	\eta = \frac{W_{\rm ext}}{\Delta H},
\end{equation}
$W_{\rm ext}$ being extracted, near-perfect work, and $ \Delta H $ being the amount of energy change in the hot bath.
In particular, the existence of generalized free energies emerging as monotones apart from the well-known non-equilibrium free energy implied that for certain qubit Hamiltonians, the maximum achievable efficiency falls strictly below the Carnot limit $\eta_{\rm C} = 1-\frac{\beta_{\text{h}}}{\beta_{\text{c}}}$. 
It was furthermore shown that allowing for correlations to build up between the system and the catalyst (i.e relaxing the notion of catalysis from strict to correlated) cannot be used to bypass this limitation.
This seems to contradict the insight from Sec. (\ref{subsubsec:multi-copy}) which states that correlated catalysis completely regularizes the single-shot partial order into the asymptotic one. 
However, it should be noted that \cite{woods_maximum_2019} derived the optimal efficiency of work extraction in the limit of vanishing extractable work (Carnot efficiency limit). 
In this limit the correlations have to vanish sufficiently quickly relative to the extracted work in order to achieve the Carnot efficiency. 
In particular, they have to vanish quicker than the scaling we discussed in Sec. (\ref{subsubsec:multi-copy}) when analyzing the single-shot regularization. 
These additional requirements imply that the single-shot effects, as captured by the generalized second laws, may still persist even when correlations between the system and the catalyst are considered.

{As discussed in Sec.~\ref{subsub:CTOs}, correlating catalysis can lift the second laws to essentially just the monotonicity of the free energy.}
{Therefore using a correlating catalyst, the amount of extractable work (in the sense of the battery model from Eq. Eq.~\eqref{eq:workext_in_CTO}) can be made arbitrarily close to
	\begin{equation}\label{eq:work-correlating-catalysis}
		W = F^\beta\left(\rho_{\ms S}, \hat H_{\ms S}\right)-F^\beta \left(\sigma_{\ms S}, \hat H_{\ms S}\right), 
	\end{equation} 
	which can be significantly larger than $W_\textrm{dist}$. This is true either for incoherent states when considering thermal operations or for arbitrary states when considering the more general Gibbs-preserving maps as free operations.}

Another perspective on catalysis in thermodynamics was unveiled in the context of fluctuation theorems \cite{Jarzynski_1997,Crooks_1999,Tasaki2000}, where the notion of work is defined by the two-point measurement scheme, namely the difference between energy measurement outcomes before and after a thermodynamic process,
\begin{equation}
	W = E_f - E_i.
\end{equation} 
More specifically, fluctuation theorems rule out the possibility of extracting macroscopic (i.e. scaling extensively with the number of copies of the system) amount of work with non-negligible probability in the case of unitary evolution of a system prepared in a Gibbs state. 
Ref.~\cite{boes_by-passing_2020} proposed a protocol that uses correlated-catalysis in thermal operations to extract a macroscopic amount of work  with a non-negligible probability, in such as way that the averaged second law still holds. Crucially, in order to achieve macroscopic work, the logarithm of the dimension of the catalyst and its free energy must also grow linearly with the number of system's copies. The general results of Ref.~\cite{rubboli2021fundamental} applied to this particular scenario imply that this extensive scaling of the catalyst's dimension and free energy is necessary.

{Moving away from resource theoretic considerations, \cite{rodriguez_catalysis_2023} considered the charging of quantum batteries in a concrete physical model, where a quantum battery (harmonic oscillator) is charged by coupling it to a harmonic oscillator driven by a (classical) laser-field.
	They showed that adding an intermediate coupler can enhance the energy transfer from charger to battery while at the same time removing the requirement to fine-tune the laser-frequency to the involved coupling strengths.  
	At the same time the intermediate coupler itself stores almost no energy and hence effectively works as a catalyst.}

\subsubsection{Cooling}
\label{subsubsec:cooling}
The efficient cooling of quantum systems has been a central topic in quantum thermodynamics, given both its fundamental importance stemming from the third law of thermodynamics~\cite{nernst1907experimental}, as well as its significance for quantum technologies~\cite{bloch2008many,langen2015ultracold}.

Let us once again consider the situation described by Eq. (\ref{eq:general_thermo_map_cat}), in the special case of thermal operations, where the unitary $U$ conserves global energy. The problem of cooling down a quantum system $\ms{S}$ prepared in some quantum state $\rho_{\ms{S}}$ can be cast as a state transition problem, i.e.
\begin{align}
	\rho_{\ms{S}} \AC{\mathcal{O}_{{\rm TO}(\beta)}}{\rm arb.}~ \sigma_{\ms{S}},
\end{align}	
where if one sets $ \sigma_{\ms{S}} = \gamma_{\beta'}(\hat H_{\ms{S}})$, then the goal of cooling is defined as achieving a large inverse temperature $\beta'$. Moreover, the assumption that $ \sigma_{\ms{S}}$ must necessarily be thermal can be dropped -- generic measures for cooling can also be used, such as simply maximizing the overlap of $\sigma_{\ms{S}}$ with the system ground state, or minimizing its average energy. 

To understand the ultimate limits of cooling under the framework of thermal operations, one can apply the generalized second laws from Eq. (\ref{eq:genfreeE_conds}) to the above state transition and study the amount of (non-equilibrium) resources required to perform this task. This approach was taken in Ref.~\cite{wilming2017third}, which showed that while general state transitions are governed by a continuous family of monotones indexed by $\alpha \in \mathbb{R}$,
in the context of cooling down to $T' := (\beta')^{-1} \rightarrow 0$, only one monotone remains relevant, called vacancy:	
\begin{equation}
	\mathcal{V}_\beta(\rho_{\ms S}, \hat H_{\ms S}):=D\left(\gamma_\beta(\hat H_{\ms S}) \| \rho_{\ms S}\right).
\end{equation}
The vacancy is related to the (Petz) R\'enyi divergences by 
\begin{equation}
	\left.\frac{\partial}{\partial_\alpha}\right|_{\alpha=0} D_\alpha\left(\rho_{\ms S} \| \gamma_\beta(\hat H_{\ms S})\right)=V_\beta(\rho_{\ms S}, \hat H_{\ms S}).
\end{equation}
Since $\gamma_\beta(\hat H_{\ms S})$ is of full rank, the vacancy diverges for states $\rho_{\ms S}$ which do not have full rank (in particular, for states approaching zero temperature). 
Using a resourceful state $ \rho_{\ms{R}} $ with Hamiltonian $\hat H_{\ms R}$, one can cool an initially thermal system $\ms{S}$ to a generic target state $\sigma_{\ms{S}}$ only if 
\begin{equation}\label{eq:thirdlaw}
	\mathcal{V}_\beta\left(\rho_{\ms R}, \hat H_{\ms R}\right) \geq \mathcal{V}_\beta\left(\sigma_{\ms S}, \hat H_{\ms S}\right).
\end{equation}
Furthermore, Eq.~\eqref{eq:thirdlaw} is achievable up to a correction factor that vanishes when $\sigma_{\ms{S}}$ is a thermal state of some temperature $T_{\ms S}\rightarrow 0$,  meaning that this inequality fully characterizes the fundamental limits of cooling. 
Similar results were previously established in Ref.~\cite{Janzing2000}, for the special case where the target system is composed of qubits, and assuming that the resource $\rho_{\ms R}$ has an i.i.d. structure. 
In this respect, \cite{wilming2017third} establishes the third law of thermodynamics in the fully single-shot regime by allowing strict catalysis.  
\cite{wilming2017third} also discusses the robustness of Eq. \eqref{eq:thirdlaw} under approximate catalysis with respect to errors in the catalyst measured by a vacancy change. 
In particular, closeness of the catalyst to its original state in terms of vacancy change is a sufficiently strong measure to maintain the robustness of Eq.~\eqref{eq:thirdlaw}.

The advantages of correlated catalysis can also be seen in cooling. These were discussed in \cite{boes_von_2019}, specially for the special case where the system Hamiltonian is irrelevant. 
In such situations, cooling usually then refers to the process of preparing states of high purity. 
We have seen in Sec.~\ref{subsec:QM} that correlated catalysis within unitary quantum mechanics allows for a system $\ms S$ to go from $\rho_{\ms{S}}$ to $\sigma_{\ms{S}}$ as long as $H(\rho_{\ms{S}}) > H(\sigma_{\ms{S}})$. 
Ref.~\cite{boes_von_2019} observes that this result can be applied to cooling in an interesting way. 
To illustrate this, consider $\ms S$ to contain two uncorrelated systems (in general, states of full rank), $\rho_{\ms{S}} = \rho_{\ms{S1}}\otimes \rho_{\ms{S2}}$, such that both $ H(\rho_{\ms{S1}}), H(\rho_{\ms{S2}}) < (\log d_{\ms S})/4$, and $d_{\ms S1} = d_{\ms S2} = \sqrt{d_{\ms S}}$. Then one can use a unitary transformation involving a correlating catalyst to bring the system to any final state of the form
\begin{equation}
	\sigma_{\ms{S}} = \sigma_\ms{S1}\otimes \frac{\mathbb{1}}{d_{\ms S2}},
\end{equation}
where $\sigma_{\ms{S1}}$ has to be a state of full rank, but importantly can have arbitrarily small amount of entropy.
In particular, $\sigma_{\ms{S1}}$ can be arbitrarily close to a pure state.
Similar cooling rates were obtained using  protocols based on the idea of data compression ~\cite{schulman1999molecular,boykin2002algorithmic,dahlsten2011inadequacy}, for the special case of qubits where instead of a catalyst, \emph{many identical copies} of the system are used, e.g. taking $ \rho_{\ms S}^{\otimes n} $ qubits and distilling roughly $Rn$ many close-to-pure qubits in the asymptotic limit, where
\begin{equation}
	R \approx 1-H(\rho_{\ms S})
\end{equation}
and we use $\log(2)=1$. 
In contrast, with the help of a correlating catalyst (albeit probably with high dimension), one can perform the distillation and reach the optimal rate using as few as 2 copies of the system. 
The authors in \cite{boes_von_2019} cautioned that the end result, while similar to the asymptotic limit in terms of rates, has a subtle difference -- a repeated use of the catalyst creates correlations among all the final cooled states. 
Therefore, if the catalytic cooling protocol is repeated $n$ times with the same catalyst, the final state on the resulting $n$ copies of $\ms{S1}$ is not given by ${\sigma_{\ms{S1}}}^{\otimes n}$. 

Moving towards a more practical mindset, explicit protocols for catalytic cooling have been proposed \cite{henao_catalytic_2021}, for systems of low dimensions. This work relates the capacity for cooling to the passitivity of non-equilibrium states, when joined with a bath or a catalyst. Such a setting has also been studied in the context of ergotropy \cite{sparaciari_energetic_2017}.
The protocols designed in Ref. \cite{henao_catalytic_2021} are particularly interesting because of their use of catalysts of small dimension (a qubit) and relatively simple operations (up to three-body interactions). 

The starting point of the investigation in Ref. \cite{henao_catalytic_2021} is as follows: assuming that $\rho_{\ms S} = \sum_i p_{\ms S}{(i)} \ketbra{i_{\ms S}}{i_{\ms S}}$ is a passive state with respect to Hamiltonian $\hat H_{\ms S}$ (hence expressed diagonally in its energy eigenbasis $\lbrace\ket{i_{\ms S}}\rbrace_i$), under what conditions does the bipartite state $\rho_{\ms S}\otimes \rho_{\ms B}$ remain passive w.r.t.  $\hat H_{\ms S}$? Here, $ \rho_{\ms B} $ represents a thermal state/bath that is used in the process of cooling. The authors show that the passitivity of such a bipartite state is determined fully by

\begin{equation}\label{eq:passitivity_bipartite}
	\frac{p_{\ms S}(i)}{p_{\ms S}(i+1)} \geq \frac{p_{\ms B}{(1)}}{p_{\ms B}{(d_{\ms B})}}\qquad \text{ for all } i,
\end{equation}
where $ p_{\ms B}{(i)} $ are the descendingly ordered eigenvalues of the state $ \rho_{\ms B} $. From this analysis, one can observe that if $ \rho_{\ms B} $ is set to be a Gibbs state with temperature $T$, then the higher $T$ is, the smaller the RHS of Eq.~\eqref{eq:passitivity_bipartite}, in particular it converges to 1 in the infinite temperature limit. On the other hand, the left hand side is always greater or equal to 1 due to passitivity of $\rho_{\ms S}$. In summary, Eq.~\eqref{eq:passitivity_bipartite} will eventually be satisfied for some high enough temperature $ T $, making $\rho_{\ms S}\otimes\rho_{\ms B}$ passive w.r.t. $ \hat H_{\ms S} $ -- disallowing further cooling of ${\ms S}$. 

Eq.~\eqref{eq:passitivity_bipartite} spells out the limitations of achievable cooling by means of unitary operations, having access only to a thermal reservoir. The authors then proceed to show that an extension of Eq.~\eqref{eq:passitivity_bipartite} holds for an \emph{ancilla-assisted} cooling process: in other words, passitivity including a general ancilla $ \rho_{\ms C} $ is again fully determined by 
\begin{equation}\label{eq:passitivity_bipartite_catalytic}
	\frac{p_{\ms S}{(i)}}{p_{\ms S}{(i+1)}} \geq \frac{p_{\ms B}{(1)}}{p_{\ms B}{(d_{\ms B})}}\cdot \frac{p_{\ms C}{(1)}}{p_{\ms C}{(d_{\ms C})}}\qquad \text{ for all } i,
\end{equation}
where $ p_{\ms C}{(1)}$ and $p_{\ms C}{(d_{\ms C})} $ are the maximum and minimum eigenvalues of $ \rho_{\ms C} $ respectively. Now, if we had a cooling protocol that preserves the ancilla, we then also have a catalytic cooling process. Hence, \cite{henao_catalytic_2021} focus on a particular subset of cooling processes that can be decomposed in a two-step process: the first step corresponds to identifying a unitary $U_{\rm cool}$, that decreases the average energy of system $ {\ms C} $ whenever Eq.~\eqref{eq:passitivity_bipartite_catalytic} is satisfied. This process $ U_{\rm cool} $ in general alters $ \rho_{\ms C} $, and hence is subsequently followed up by a restoring unitary, i.e. a unitary $ V_{\rm res} $ that brings the local reduced state of the catalyst back to $ \rho_{\ms C} $. In particular, the authors show that $ U_{\rm cool} , V_{\rm res} $ can be chosen such that they act on orthogonal subspaces. More importantly, the authors showed that the question of whether an ancilla can truly be recovered is fully characterized: by constructing an explicit $ V_{\rm res} $ that consists of a series of partial 2-level swaps, and checking if each of these swaps satisfy a simple condition that can be checked algorithmically.


\subsection{Asymmetry and coherence}
\label{subsec:asymmetry}
In section~\ref{sec:rt} we introduced the resource theory of asymmetry $\mathcal{R}_{\text{asym}}$, which models the physical constraints arising from conservation laws and the corresponding symmetries present in the system. To recap, the resource theory of asymmetry deals with scenarios where every physical system $\ms S$ carries a unitary representation $W_S$ of some group $G$ (e.g. $G = \textsf{SO}(3)$ corresponds to rotations in three dimensional Euclidean space $\mathbb{R}^{3}$), and
the allowed dynamics has to be covariant with respect to this group.
If $G$ is a (connected) Lie group, then the representations of its generators can be physically interpreted as conserved quantities, as any covariant unitary channel has to leave them invariant.
This can be seen as an expression of Noether's theorem. 
Indeed, if $W_S(\exp(h)) = \exp(-\mathrm i \hat H_{\ms S})$ for a generator $h$ in the Lie algebra of $G$, and $\mc F[(\cdot)] = U(\cdot)U^\dagger$ is a covariant unitary quantum channel on $\ms S$, then 
\begin{align}\label{eq:energy-conservation}
	{	\mc F^*[\hat H_{\ms S}]	= U^\dagger \hat H_{\ms S} U = \hat H_{\ms S}.}
\end{align}
For example, if $G=\mathbb R$ is the group of time-translations, then $\hat H_{\ms S}$ can be identified with the Hamiltonian of the system and \eqref{eq:energy-conservation} expresses energy-conservation. 
A quantum system $\ms S$ is then said to be resourceful if its state $\rho_{\ms S}$ is asymmetric with respect to the group representation, meaning that there exists at least one $g\in G$ such that $W_{\ms S}(g) \rho_{\ms S} W_{\ms S}(g)^\dagger \neq \rho_{\ms S}$.

Closely related to the resource theory of asymmetry is the \emph{resource theory of coherence} $\mathcal{R}_{\rm coh}$, introduced  in Refs. \cite{Baumgratz2014,Levi2014}, and reviewed in Ref.~\cite{Streltsov2017}.
From the point of view of the resource theory of coherence, the degree of coherence of a quantum state measures the magnitude of the off-diagonal terms in some fixed basis $\mathbb H = \{\ket i\}$.
For example, a qubit in a state $(\ket 0 + \ket 1)/\sqrt{2}$ is highly coherent with respect to the basis $\mathbb{A}=\{\ket 0,\ket 1\}$, but completely incoherent with respect to the basis $\mathbb{B}=\{\ket+,\ket-\}$. 
From the point of view of the resource theory of asymmetry, we can consider a representation of the group $G = \mathbb R$ of time-translations induced by a non-degenerate Hamiltonian $\hat H_{\ms S}$ with eigenbasis $\mathbb H$ via $t\mapsto \exp(-\mathrm i \hat H_{\ms S} t)$. 
Then a quantum state is said to be coherent if it is asymmetric with respect to the group representation.  Consequently, in this case the set of free states in $ \mathcal{R}_{\rm coh} $ coincides with the set of free states in $\mathcal{R}_{\rm asym} $. 

Despite the above similiarity of  $\mathcal{R}_{\text{coh}}$ and $\mathcal{R}_{\text{asym}}$ for the case $G=\mathbb R$, a crucial distinction arises in the allowed free operations. 
While in $\mathcal{R}_{\text{asym}}$ only covariant operations are allowed, in $\mathcal{R}_{\text{coh}}$ the so-called \emph{incoherent operations} \cite{Baumgratz2014,Levi2014,Streltsov2017} are allowed,
which is a strictly larger set than the corresponding covariant operations \cite{Marvian2016a}.
More specifically, in the resource theory of asymmetry, the set of free operations depends on the concrete representation of time-translations (induced by the Hamiltonian $\hat H_{\ms S}$ above), and not just on the basis $\mathbb H$. 
This distinction has significant consequences for catalysis, as we will review in section~\ref{subsec:no-broadcasting}. 
For critical comparisons of different approaches to establishing a physically consistent resource theory of coherence, see Refs. \cite{Marvian2016} and \cite{Marvian2016a, Chitambar2016}.

Quantitative measures of coherence and asymmetry can be defined in multiple ways. 
The most common one is perhaps the \emph{relative entropy of asymmetry} defined in Sec.~\ref{subsubsec:rt-asymmetry}. 
In the special case of coherence with respect to a basis $\mathbb H=\{\ket i\}$ it reduces to the  \emph{relative entropy of coherence}, given by \cite{Aberg2006}
\begin{align}\label{eq:REC}
	\mc A_{\mathbb H}(\rho) = H(\mc G_{\mathbb H}[\rho]) - H(\rho), 
\end{align}
where the twirling channel is given by $\mc G_{\mathbb H}=\sum_i \proj{i}\rho\proj{i}$.
The relative entropy of coherence is monotonic under incoherent operations, satisfies several desireable properties of a proper coherence measure \cite{Baumgratz2014}, and has a plausible operational  interpretation as the maximal distillable coherence from the state \cite{Winter2016}. 

\subsubsection{Apparent violations of conservation laws} 

\label{subsec:circumventing_conservation_laws}
Coherence (or asymmetry) present in a quantum state can be viewed as a resource that enables implementing coherent operations on quantum systems. 
More specifically, in order to bring an atom $\ms S$ from the ground state $\ket{g}_{\ms{S}}$ to the excited state $\ket{e}_{\ms{S}}$ using time-translation covariant (energy-conserving) operations, we need an auxiliary system $\ms{E}$ from which the missing energy can be taken. 
However, if we wanted to turn $\ms S$ into a coherent superposition of the ground and excited states, i.e. $(\ket{g}_{\ms S} + \ket{e}_{\ms S})/\sqrt{2}$, then providing the necessary energy is not enough to perform this transformation. 
Under the constraint of energy conservation, the unitarity of quantum mechanics forbids this type of operation if the atom and the energy reservoir $\ms E$ have initially definite energies. 
More specifically, there does not exist a unitary operator $U$ for which
\begin{align}
	U\ket{g}_{\ms S}\otimes\ket{E}_{\ms E} = \frac{1}{\sqrt 2}(\ket{e}_{\ms S} + \ket{g}_{\ms S})\otimes\ket \psi_{\ms E}
\end{align}
and at the same time $[U,\hat H_{\ms S}+\hat H_{\ms E}]=0$, where $\ket{E}_{\ms E}$ is an eigenstate of $\hat H_{\ms E}$ and $\ket{\psi}_{\ms E}$ is arbitrary. This problem can be circumvented by introducing a reservoir of coherence, i.e. a large quantum system prepared in coherent superposition of many energy levels. 
In the case of equally spaced energy levels, such a reservoir of coherence constitutes a \emph{phase reference} and is usually achieved with the help of electromagnetic fields, e.g. laser beams \cite{mandel_wolf_1995} or radio-frequency fields \cite{Vandersypen2005}. 
The coherence between orthogonal states of the field is a resource that enables mixing different energies in a coherent, rather than probabilistic, way. 
Similarly, instead of energy preservation, we can consider the more general notion of covariance with respect to some group $G$ as constraint (such as spatial rotations). 
In this case the coherence reservoir needs to be replaced by a more general (quantum) reference frame for the respective group (see the review \cite{Bartlett2007} and references therein). 

It has been observed that coherence reservoirs are useful in the context of thermodynamics \cite{Janzing2003a,Vaccaro2008,Janzing2006,lostaglio2015description,Malabarba2015,mitchison2015coherence,lostaglio_quantum_2015,korzekwa_extraction_2016,woods_resource_2019}, 
where a coherence reservoir is often understood as a clock that provides timing-information.
For example, the non-equilibrium free energy of a quantum system in a state $\rho_{\ms S}$ and Hamiltonian $\hat H_{\ms S}$ (see \eqref{eq:free-energy}) naturally splits into an incoherent and a coherent part:
\begin{align}
	F^\beta (\rho_{\ms S},\hat H_{\ms S}) = F_\beta(\mc G_{\mathbb H}[\rho_{\ms S}],\hat H_{\ms S}) + \frac{1}{\beta} \mc A_{\mathbb H}(\rho_{\ms S}),
\end{align}
{with $\mc A_{\mathbb H}(\rho_{\ms S})$ as in \eqref{eq:REC} for the eigenbasis $\mathbb H$ of $\hat H_{\ms S}$}. Without the coherence reservoir only the incoherent part, i.e. $F^\beta(\mc G_{\mathbb H}[\rho_{\ms S}],\hat H_{\ms S})$, can be extracted as work \cite{Janzing2006}. A coherence reservoir can thus be seen as a thermodynamic resource.
Similarly, general quantum reference frames are thermodynamic resources for other conserved quantities such as angular momentum or spin.

A "good" quantum reference frame $\ms C$ can be used in such a way that its state changes only minimally while at the same time lifting all conservation laws. Indeed, it was proven first in Refs.~\cite{Tajima2018,Tajima2020} (and later independently in Refs.~\cite{Chiribella_2021,Yang_2022}) for the case of coherence, that whenever a quantum reference frame $\ms C$ in a pure state can be used to approximately implement a unitary dynamics on $\ms S$ to high precision using covariant dynamics on $\ms S\ms C$, then it can be done in such a way that the state on $\ms C$ only changes little. Ref.~\cite{Luijk2023} treats the same problem for the resource theory of asymmetry with respect to arbitrary groups. See also \cite{Tajima2018,Tajima2020,takagi_universal_2020,Tajima2021,Tajima2022} for trade-off relations between the precision of the implemented unitary and the required resources on the quantum reference frame.

The underlying reason for these results is that the (approximately) coherent dynamic on $\ms S$ requires that (almost) no information about the state on $\ms S$ flows to the environment, since otherwise $\ms S$ would necessarily become entangled to the environment. 
This is the \emph{information-disturbance tradeoff} in quantum mechanics \cite{Fuchs1996a,Fuchs1996,Kretschmann2008}. 
Using the terminology we introduced in Sec. \ref{subsec:app_cat} the coherence reservoir essentially acts as an approximate catalyst. Moreover, if it can be used to implement \emph{arbitrary} state-transformations with high precision, it is even an embezzler: 
By Stinespring's theorem, any quantum channel on $\ms S$ can be formally implemented to high precision using unitaries acting on an extended Hilbert space of systems $\ms E\ms S$, where $\ms E$ is an auxiliary environment with dimension $d_{\ms S}^2$.
Thus, if $\ms C$ can be used to implement arbitrary unitaries on $\ms S \ms E$ to high precision, it is an embezzler. 
We will now present a simple example illustrating these considerations.

Suppose we wish to implement some quantum channel $\mc E$ on a two-level system $\ms S$ with Hamiltonian $\hat H_{\ms S}=\omega \dyad{1}_{\ms S}$, where $\ket{0}_{\ms S}$ and $\ket{1}_{\ms S}$ denote its ground and excited states, respectively. 
Moreover, suppose we are only able to apply energy conserving unitaries, but have access to a harmonic oscillator $\ms C$ with matching frequency $\omega$ (in resonance with $\ms{S}$) and energy eigenstates $\ket{n}_{\ms C}$. 
We now provide a construction that manages to apply $\mc E$ to arbitrary accuracy on $\ms S$ if the initial state on the oscillator is sufficiently coherent. 
The discussion follows \cite{aberg_catalytic_2014}, see also \cite{messinger_coherence_2020}.
Consider a family of subspaces $\mc H_n$ 
spanned by the "logical" states $\ket{\bar 0}_{n},\ket{\bar 1}_{n}$ defined for all $n \geq 1$ by
\begin{align}
	\ket{\bar 0}_n := \ket{0}_{\ms S}\ket{n}_{\ms C},\qquad \ket{\bar 1}_n := \ket{1}_{\ms S}\ket{n-1}_{\ms C}.
\end{align}
Any unitary acting separately on subspaces $\mc H_n$ 
is energy-preserving (note that $\ket{0}_{\ms S}\ket{0}_{\ms C}$ must be an eigenstate of the unitary). 
However, within each subspace $\mc H_n$, the unitary is unconstrained. 
Let us now further introduce an auxiliary system $\ms E$ with basis $\{\ket{\alpha}_{\ms E}\}$ that serves as the dilating system in the Stinespring dilation of $\mc E$ with unitary $V$ on $\ms E\ms S$.
For simplicity we assume that it has a trivial Hamiltonian (or we only access to one energy subspace of some larger system). 
{To implement $V$, and thereby $\mc E$,} we define an energy-preserving unitary $U$ on $\ms E\ms S\ms C$ by its matrix elements via
\begin{align}
	\bra\alpha_{\ms E}\bra{\bar k}_n U \ket\beta_{\ms E}\ket{\bar l}_n := \bra\alpha_{\ms E} \bra k_{\ms S} V \ket \beta_{\ms E}\ket l_{\ms S}
\end{align}
for all $n$ and $\bra\alpha_{\ms E}\bra 0_{\ms S} \bra 0_{\ms C} U \ket\beta_{\ms E} \ket 0_{\ms S}\ket 0_{\ms C} = \delta_{\alpha\beta}$. 
In other words, $U$ acts as $V$ on subspaces $\mc H_{\ms E}\otimes \mc H_n$ and trivially on $\mc H_{\ms E} \otimes \mathrm{span}\{\ket{0}_{\ms S}\ket{0}_{\ms C}\}$.

Let us now write a general pure product state on $\ms E \ms S\ms C$ as 
\begin{align}
	\ket\chi_{\ms E}	\ket{\psi}_{\ms S}\ket{\phi}_{\ms C} &= \ket\chi_{\ms E}\left(\alpha \ket{0}_{\ms S}+\beta \ket{1}_{\ms S}\right)\left(\sum_{n=0}^\infty c_n \ket{n}_{\ms C}\right)\nonumber \\
	&= \sum_{n=1}^\infty c_n\left( \alpha\ket\chi_{\ms E}\ket{\bar 0}_n + \beta \frac{c_{n-1}}{c_n}\ket\chi_{\ms E}\ket{\bar 1}_n\right)\nonumber\\ 
	&\quad+ c_0 \alpha\ket\chi_{\ms E}\ket{0}_{\ms S}\ket{0}_{\ms C}.
\end{align}
Then if $c_{n-1} \approx c_n$ for many values of $n$ (which requires each $c_n$ to be small) we get 
\begin{align}
	\label{eq:asym_coh_actual}
	U\ket\chi_{\ms E}\ket{\psi}_{\ms S}\ket{\phi}_{\ms C} \approx \left(V\ket\chi_{\ms E}\ket{\psi}_{\ms S}\right)\ket{\phi}_{\ms C}. 
\end{align}
In other words: if the state on $\ms C$ is spread smoothly over many energy levels, then we can implement the desired unitary on $\ms S\ms E$ to an arbitrary accuracy.
Concretely, we may choose a coherent superposition of $M$ energy levels occupied starting from the energy level $\ket{n_0}_{\ms{C}}$, that is $c_n = 1/\sqrt{M}$, where $n_0+M \geq n\geq n_0>0$, and $c_n = 0$ otherwise. 
The fidelity between the {two sides of} Eq. (\ref{eq:asym_coh_actual}) is then given by
\begin{align}\label{eq:assym_coh_flat_2}
	\left|\bra\chi_{\ms E}\bra\psi_{\ms S}\bra\phi_{\ms C}U^\dagger (V\ket\chi_{\ms E}\ket\psi_{\ms S})\ket\phi_{\ms C}\right|^2 \geq 1 - \frac{2}{M}. 
\end{align}
Similarly, we can choose the initial state on $\ms C$ to be a coherent state
\begin{align}
	\label{eq:coh_state}
	\ket\alpha_{\ms C} := \e^{-\frac{|\alpha|^2}{2}}\sum_{n=0}^\infty \frac{\alpha^n}{\sqrt{n!}}\ket{n}_{\ms C},
\end{align}
and obtain an error that decreases with the mean photon number $|\alpha|^2$, which is also the variance of the photon number. 
We emphasize that this effect is made possible because the state of a large quantum system may change arbitrarily little, as measured by the norm on vectors in the Hilbert space, despite the fact that the mean of some observable changes by a finite amount (here the energy). 
In this way, a finite amount of energy may be transferred coherently to $\ms S$ while perturbing the state on $\ms C$ arbitrarily little. 
In this sense, conservation laws can be \emph{apparently} violated. 
This is closely related to the continuity properties of extensive quantities described in more detail in Sec. (\ref{subsubsec:non-continuity}).

The seminal paper \cite{aberg_catalytic_2014} further showed that if the flat state of \eqref{eq:assym_coh_flat_2} is used, then despite the fact that the state on $\ms C$ is changed, it can be reused a finite number of times $n_0$ to implement \emph{exactly the same} quantum channel on $n_0$ uncorrelated copies of $\ms S$ (while also making use of $n_0$ copies of $\ms E$; see also Sections \ref{sec:concept} and \ref{subsec:corr_cat}).
In this special sense the coherence stored in $\ms C$ is catalytic (or \emph{repeatable}, as dubbed in Ref.~\cite{korzekwa_extraction_2016}), while its quantum state is not.
\cite{aberg_catalytic_2014} referred to this phenomenon as "catalytic coherence".
However, as emphasized by \cite{vaccaro_is_2018,messinger_coherence_2020}, the resulting state on the $n_0$ copies of $\ms S$ is not uncorrelated. Thus this procedure does not implement the quantum channel $\mc E^{\otimes n_0}$. 
The reason for this is that the system $\ms C$ mediates correlations that are being build up between all the copies of $\ms E \ms S$, so that their final joint-state is \emph{not} given by $(V\rho_{\ms E\ms S}V^\dagger)^{\otimes n_0}$, see also Section~\ref{subsubsec:cooling}.  

\subsubsection{Correlations and broadcasting of quantum information}
\label{subsec:no-broadcasting}

The no-cloning theorem \cite{Park1970,Wootters1982,Dieks1982} states that it is impossible to prepare exact, uncorrelated copies of an unknown quantum state: There is no quantum channel $\mc E$ such that $\mc E[\rho_{\ms S}] = \rho_{\ms S}\otimes\rho_{\ms S}$ for all states $\rho_{\ms S}$ on some fixed system $\ms S$.
More generally, the no-broadcasting theorem \cite{Barnum1996} implies that there is no quantum channel $\mc E$ from $\ms S$ to two copies of $\ms S$, such that $\tr_{\ms S_i}[\mc E[\rho_{\ms S}]]=\rho_{\ms S}$ for $i=1,2$.  
That is, it is impossible to convert an unknown quantum state $\rho_{\ms S}$ to one where both marginals coincide with $\rho_{\ms S}$ ("broadcasting $\rho_{\ms S}$"), but the two marginals are possibly correlated. (Indeed, if this was possible, one could clone pure quantum states!) 
In fact, this feat is not even possible when $\mc E$ is required to implement broadcasting for a pair of non-commuting quantum states (instead of all quantum states).

One may ask whether, instead of broadcasting the full quantum state, one could broadcast only some aspect of it, such as some amount of its coherence or, more generally, its asymmetry with respect to some group $G$. 
For that, in the language of the resource theory of asymmetry, we ask whether it is possible to implement the state transition 
\begin{align}\label{eq:asym-broadcasting}
	\rho_{\ms C} \xrightarrow[\mathcal{O}_{\rm asymm.}]{} \sigma_{\ms C_1\ms C_2},
\end{align}
such that $\sigma_{\ms C_1} = \rho_{\ms C}$ and $\sigma_{\ms C_2} \notin \mathcal{S}_{\rm asymm.}$, i.e. the final state on subsystem $\ms C_2$ is not symmetric with respect to $G$. 
Observe that if this were possible, then there would have to exist a covariant quantum channel $\mc F$ on $\ms C_1 \ms C_2$, and a \emph{symmetric} state $\rho_{\ms C_2}$ such that
\begin{align}
	\mc F[\rho_{\ms C} \otimes \rho_{\ms C_2}] = \sigma_{\ms C_1 \ms C_2}.
\end{align}
Indeed, if $\mc F_1$ was the covariant quantum channel implementing \eqref{eq:asym-broadcasting}, then the covariant quantum channel $\mc F = \mc F_1 \circ \tr_{\ms C_2}$ would work for any state $\rho_{\ms C_2}$. 
But, in this case, we can see that $\ms C=\ms C_1$ acts as a catalyst that becomes correlated to $\ms S=\ms{C}_2$.
Therefore we can equivalently ask whether there is a correlated-catalytic transformation of the form
\begin{align}\label{eq:no-broadcasting-corr}
	\rho_{\ms S} \AC{\mathcal{O}_{\rm asymm.}}{\rm corr.} \sigma_{\ms S},
\end{align} 
where $\rho_{\ms S}\in\mc S_{\rm asymm.}$ and $\sigma_{\ms S}\notin \mc S_{\rm asymm.}$.
We refer to the above task as \emph{broadcasting of asymmetry}: the possibility to use an asymmetric quantum state (a quantum reference frame) in a catalytic manner to transform a symmetric state into an asymmetric state. 
Similarly, we call the special case of strict catalysis \emph{cloning of asymmetry}:
\begin{align}\label{eq:no-broadcasting-strict}
	\rho_{\ms S} \AC{\mathcal{O}_{\rm asymm.}}{} \sigma_{\ms S}.
\end{align} 
It has been established that broadcasting and cloning of asymmetry are both impossible in the case where $G$ is a connected Lie group \cite{lostaglio_coherence_2019,marvian_no-broadcasting_2019}, see also \cite{Janzing2003}. In particular, broadcasting of coherence is impossible. 
With this, it is tempting to think that the power of correlating catalysis is the same as that of strict catalysis in the resource theory of asymmetry of a connected Lie group. 
Interestingly, this is not true: there are states $\rho_{\ms S}, \sigma_{\ms S} \notin  \mathcal{S}_{\rm asymm.} $ such that Eq.~\eqref{eq:no-broadcasting-corr} is possible while \eqref{eq:no-broadcasting-strict} is not \cite{ding_amplifying_2021}. 
That is, when acting on asymmetric states, it is generally useful to build up correlations between the system and the catalyst.
Moreover, when allowing for marginal-correlated catalysis (see section~\ref{subsec:corr_cat}), catalysts can essentially lift all restrictions on state transitions in the case of coherence \cite{Takagi2022}. Refs.~\cite{Shiraishi2023,Kondra2023} show that essentially the same statement is true for correlated catalysis while still allowing for arbitrary small correlations between system and catalyst. 
	Thus, as long as an arbitrary small amount of coherence is present in the initial state, correlated catalysis can lift \emph{all restrictions} imposed by demanding covariance under time-translation. 
	In other words, an arbitrary small amount of initial coherence can completely circumvent the no-broadcasting theorem for quantum coherence. 
	It is an open problem to generalize the result to non-commutative Lie groups.

In fact, establishing correlations between the catalyst and \emph{some} other degrees of freedom is not only useful, but necessary. 
Ref.~\cite{Luijk2023} shows that, if a system $\ms C$ acts as a useful catalyst for a state transition on $\ms S$ in the resource theory of asymmetry for a connected Lie group (e.g., in the case of coherence), then it must necessarily become correlated either to $\ms S$ or to the environmental degrees of freedom $\ms E$ that dilate the covariant quantum channel on $\ms S\ms C$ to covariant unitary dynamics on $\ms E\ms S\ms C$, just as in Lemma~\ref{lem:fundamental_nogo}.
(In fact, the same conclusion can be derived for the case of thermal operations, see Section~\ref{subsubsec:athermality}.)
In particular, this implies that a useful catalyst can never be in a pure state, a result previsouly also shown in Ref.~\cite{ding_amplifying_2021} for the special case of coherence, and in Ref.~\cite{marvian_theory_2013} for the case where the states of interest on $\ms S$ are pure, but the group in question is an arbitrary compact connected Lie group.

This situation is distinctly different from the resource theory of coherence where the full class of incoherent operations are allowed, which renders pure catalysts useful and make it possible to characterize catalytic state transitions on pure states using R\'enyi entropies, just as in the case of LOCC \cite{bu_catalytic_2016}{, because the state-transition conditions are characterized by majorization \cite{du_conditions_2015}. {In the context of incoherent operations Refs. \cite{Liu2020} and \cite{xing2020catalytic} further study strictly catalytic transformations between mixed states}.
	Ref.~\cite{chen_one-shot_2019} study the one-shot distillation of coherence using catalysts in this framework, relate the distillable coherence to the dimension of the catalyst and show that coherence can be embezzled. Somewhat surprisingly, \cite{lami_catalysis_2023} show that neither the (asymptotic) coherence cost nor the distillable coherence under incoherent operations change when strict, correlated or marginal-correlated catalysts are allowed for.}

Ref.~\cite{rubboli2021fundamental} further provides tools to lower bound the dimension of correlated catalysts in certain limiting cases in this setting. 
The distinction in terms of catalysis with pure states between the resource theory of asymmetry and the resource theory of coherence hence mirrors the one between LOSR and LOCC in the context of entanglement (see section~\ref{subsubsec:LOSR}). 

We emphasize that the restriction to connected groups (relevant for the constraints imposed by conservation laws) is necessary for the above results to hold. 
As shown in Ref.~\cite{marvian_theory_2013}, if $G$ is a finite group, and $\rho_{\ms S},\sigma_{\ms S}$ are arbitrary states on the system $\ms S$ with a unitary representation $W_{\ms S}(g)$ of $G$, then there exists a system $\ms C$ with a unitary representation $W_{\ms C}(g)$ of $G$ and a pure state $\proj{\phi}_{\ms C}$ together with a covariant quantum channel $\mc F$ such that
\begin{align}
	\mc F[\rho_{\ms S}\otimes \proj\phi_{\ms C}]=\sigma_{\ms S}\otimes\proj\phi_{\ms C}.
\end{align}
In other words, strict catalysis with pure catalysts can lift all constraints on possible state transitions. 
This is possible because the state $\ket\phi_{\ms C}$ can be chosen to be a perfect quantum reference frame for the group $G$, i.e. a state satisfying $\bra\phi W_{\ms C}(g)\ket\phi = 0$ for all $g\neq 1$. The channel $\mc F$ can then be defined as
\begin{align}
	\mc F[X] &= \sum_{g\in G} \tr[(\id \otimes W_{\ms C}(g)\proj{\phi}_{\ms C}W_{\ms C}(g)^\dagger)X] \times \cdots \nonumber \\
	&\quad \cdots\times W_{\ms S}(g)\, \rho_S'\, W_{\ms S}(g)^\dagger \otimes W_{\ms C}(g)\proj{\phi}_{\ms C}W_{\ms C}(g)^\dagger.
\end{align}
A similar construction can be used to implement any unitary transformation $V$ on $\ms S$ (or $\ms E\ms S$) perfectly and not just approximately as in Section~\ref{subsec:circumventing_conservation_laws}.

\subsubsection{Locality of interactions and conservation laws}

Locality of interactions is a fundamental property of physical systems. In the case of short-rangeed interactions, it implies a finite speed of propagation of information, as highlighted by the Lieb-Robinson bound \cite{lieb1972finite}. In this sense locality puts restrictions on the short-time dynamics and implies that certain unitary evolutions. 
Nevertheless, any unitary time evolution of a composite physical system can be implemented with a time-dependent Hamiltonian using only local interactions, as long as it is allowed to evolve for a sufficiently long time. 
This is the essence of a fundamental result in quantum computing which states that unitary transformation on a composite system can be represented (to arbitrary accuracy) by a quantum circuit consisting only of $2$-local unitary transformations, i.e. unitary transformations only acting non-trivially on at most two subsystems \cite{lloyd1995almost,divincenzo1995two,deutsch1995universality}.

As we have discussed throughout this section, the presence of symmetries in physical systems also puts constraints on their time-evolution as any realizable unitary has to respect the associated symmetry. In view of this, a natural question to ask is whether all symmetric unitaries on a composite system can be generated using local symmetric unitaries, in analogy with the above universality result that holds for general evolutions. 
This question was posed and subsequently answered in the negative in Ref.~\cite{marvian2022restrictions}:
The universality of interactions is no longer valid in the presence of continous symmetries, such as $\mathsf{U}(1)$ or time-translation covariance (energy conservation). 
Generic symmetric unitaries cannot be implemented, even approximately, using local symmetric unitaries.
In other words, a global unitary that obeys a certain symmetry in general cannot be decomposed as a combination of $2$-local unitaries, where each local unitary obeys the corresponding symmetry constraint. 
This implies that, in the presence of locality, symmetries of the Hamiltonian impose extra constraints on the time evolution of the system, which are not captured by the Noether’s theorem.

Interestingly, in certain cases the above no-go theorem can be circumvented using auxiliary systems. 
Such systems have to be prepared in a fixed state, and return to their initial states at the end of the process. 
In the terminology of this review this is an instance of strict catalysis. 
More specifically, Ref.~\cite{marvian2022restrictions} discussed the case of energy-conservation and demonstrated that the constraints resulting from the interplay of locality and energy-conservation can be circumvented provided that the composite system is allowed to interact with a qubit catalyst. 
In this case the use of a catalytic ancilla allows to lift the locality constraint even in the presence of symmetries. 
In this sense, the catalytic system manifestly opens up new dynamical pathways which could not be reached previously. 
A similar result was shown in \cite{Marvian2022} for the case of rotationally symmetric dynamcis and a catalyst of two qubits, see also \cite{Marvian2021qudit}.


\subsection{Continuous-variable systems and quantum optics}
\label{subsec:optics}
In this section, we temporarily move our attention to continuous-variable (CV) systems \cite{lloyd1999quantum,Braunstein_2005}. More specifically, we will consider an $n$-mode bosonic quantum system $\ms{S}_n$ with continuous degrees of freedom. 
A finite set of $n$ (continuous) degrees of freedom can be represented by $n$ pairs of Hermitian operators $\hat{x}_i$ and $\hat{p}_i$ fufilling the canonical commutation relations $[\hat x_i,\hat p_j]=\mathrm i \delta_{ij} \mathbbm{1}$. We also define the (bosonic) annihilation and creation operators $a_i$ and $a_i^{\dagger}$ via $a_i := (\hat{x}_i + i \hat{p}_i)/\sqrt{2}$. 
Following the usual nomenclature of quantum optics, we will refer to the labels $i$ as ``modes'', since they typically correspond to modes of the electromagnetic field.

Of special importance in quantum optics are so-called quadratic Hamiltonians, i.e. Hamiltonians that can be expressed as polynomials of order two in the canonical operators. A standard example of such a Hamiltonian is the $n$-mode quantum harmonic oscillator, i.e. $\hat{H}_{\text{HO}} = \frac{1}{2} \sum_{i=1}^n (\hat{p}_i^2 + \omega_i^2 \hat{x}_i^2) = \sum_{i=1}^n \omega_i (a_i^{\dagger} a_i + \frac{1}{2})$. 
Quadratic Hamiltonians are very common and provide a consistent approximation of quantum dynamics in many experimentally-relevant situations, e.g.  ion traps \cite{paul1990electromagnetic,bruzewicz2019trapped}, opto-mechanical systems \cite{hansch1975cooling,stenholm1986semiclassical}, nano-mechanical oscillators \cite{aspelmeyer2014cavity}, as well as many other systems \cite{bogoliubov1947theory,itzykson2012quantum}. 
Unitaries that can be implemented using quadratic Hamiltonians are known as Gaussian unitaries.

A common tool used to represent quantum states of continuous variable systems is the Wigner function. 
This is a quasiprobability distribution that assigns a value to each point in phase space, allowing to visualize quantum states and observables in a manner that is similar to classical probability distributions in the classical phase space. The Wigner function $W(x, p)$ of a quantum state $\rho$ of a single continuous variable is defined as
\begin{align}\label{eq:w_fun}
	W(x, p) = \frac{2}{\pi} \int_{\mathbb{R}} e^{i 2 p x'} \langle x-x'|\rho|x+x' \rangle \, \text{d}x'.
\end{align}
It is normalized, $\int_{\mathbb{R}^2} W(x, p) \,\text{d}x\, \text{d} y = 1$, and its integral over one canonical coordinate gives the probability of measuring the conjugate coordinate, e.g. $\int_{\mathbb{R}} W(x,p) \,\text{d} p = \langle x|\rho|x \rangle$.
The Wigner function can be defined in a similar manner also for multi-mode systems.

An important family of states encountered in quantum optics are Gaussian states, whose Wigner function is a Gaussian. A typical example are the \emph{coherent states} $\ket{\alpha}$, which are eigenstates of the annihilation operator $a$ and whose wave-function in phase-space is a Gaussian centered at the complex number $\alpha$.
Any Gaussian state has a non-negative Wigner function. Therefore a Wigner function which is negative somewhere neccesarily corresponds to a non-Gaussian state. An important example are Fock states. For a single mode they are the energy eigenstates $\ket{k}$ of the quantum harmonic oscillator $\hat{H}_{\text{HO}}$ and more generally correspond to states with a fixed number of photons. 
In fact, all pure quantum states with a non-negative Wigner function correspond to Gaussian states \cite{Hudson1974,soto1983wigner}.
In other words, all non-Gaussian pure states exhibit Wigner negativity. 
This is not true for mixed states, as indicated by numerous examples \cite{walschaers2021non}. While Gaussian states on multiple modes can exhibit genuine quantum phenomena, such as Bell non-locality \cite{nha2004proposed,garcia2004proposal}, they are in general easier to prepare than non-Gaussian quantum states \cite{Braunstein2005,Weedbrook2012}. 
Certain quantum phenomena, such as entanglement and coherence, can result in negative values of the Wigner function in some regions of phase space, indicating genuine non-classical features of the quantum state \cite{kenfack2004negativity,spekkens2008negativity}. 

Of special importance for quantum optics are Gaussian operations, which map Gaussian states to Gaussian states.
They are usually defined as quantum channels that can be implemented using linear quantum optics, which also makes them relatively easy to manipulate experimentally \cite{Weedbrook2012}. 
A particular example are the Gaussian unitaries mentioned above. Ref.~\cite{gagatsos_majorization_2013} considers the utility of a beam-splitter in creating bipartite pure entangled states, and examines the majorization structure of the corresponding Schmidt coefficients generated. With this, they reveal sets of incomparable states via LOCC which can be activated by catalysis.

The experimental feasibility of Gaussian operations comes naturally with the complementary weakness that Gaussian operations may only simulate a small subset of all quantum channels. 
Since any Gaussian operation acting on Gaussian states can be simulated efficiently on a classical computer \cite{Mari2012}, Wigner negativity is also closely related to computational speed-ups, see e.g. Refs. ~\cite{Veitch2012,PhysRevA.71.042302,PhysRevLett.88.097904}.
 
This mirrors the situation for Clifford operations on discrete systems, see Section~\ref{subsec:computation}. In order to implement more complex quantum operations one therefore has to use non-Gaussian states, motivating to investigate potential  tools to prepare non-Gaussian states.

\subsubsection{Multi-photon catalysis}
When only Gaussian operations are available, e.g. in setups involving linear optical systems, non-Gaussian quantum states become valuable resources \cite{Kok_2007,slussarenko2019photonic}. Because of that, one might wonder whether such expensive states can be prepared only once, and then reused multiple times in some specific task. In other words, one might ask whether non-Gaussian states can serve as useful catalysts. This approach was investigated in Ref.
\cite{Lvovsky2002} who considered a phenomenon they termed "quantum optical catalysis", which has been further generalized to "multi-photon catalysis" in later works \cite{Scheel_2003,Bartley2012,xu_enhancing_2015,Hu2017}, see also Refs. \cite{,zhouentanglement2018,Birrittella_2018,Zhang2021}. 
The basic idea behind (multi-) photon catalysis is that one can build a linear optical process acting on two modes, i.e. $\ms{S}$ and $\ms{C}$, which takes a coherent (Gaussian) state $\ket{\alpha}_{\ms{S}}$ in one mode, and a Fock state $\ket{k}_{\ms{C}}$ of $k$ photons in the second mode, and outputs a non-classical state of light contingent upon measuring $k$ photons in the output mode $\ms{C}$ of the experiment. While very interesting from an experimental point of view, such a process cannot be seen as catalytic in the sense described in this review for two main reasons: 
$(i)$ The density operator describing the measured output port $\ms{C}$ is not given by $\ketbra{k}_{\ms{C}}$, so that there is a non-zero probability that $l \neq k$ photons are measured instead, and $(ii)$ even if $k$ photons are measured on $\ms{C}$, the measurement is typically destructive, so that the photons cannot be reused for further processes. 
Interestingly, these two issues can be resolved when the $k$-photon state $\ket{k}_{\ms{C}}$ is replaced by a two-level system in resonance with the coherent state of light $\ket{\alpha}_{\ms{S}}$. In the next section we describe such an alternative method of preparing non-classical states of light based on generic light-matter interactions. 

\subsubsection{Activation of non-classicality}
\label{subsec:wigner_neg}
Let us now describe another approach for preparing non-classical states of light. The analysis presented here is based on Ref.~\cite{deOliveira2023}. Consider a two-level system interacting with a single mode of light trapped in an optical cavity. Specifically, let $\ms S$ denote the electromagnetic field in the cavity with a bosonic annihilation operator $a$, and let $\ms C$ be a two-level system with energy levels $\ket{g}$ and $\ket{e}$, and the raising/lowering operators $\sigma_+ = \dyad{e}{g} = \sigma_{-}^{\dagger}$. These two systems interact via the Jaynes-Cummings Hamiltonian \cite{Jaynes1963,Larson2021}, which reads 
\begin{align}
	\hat{H}_{\ms{SC}} &= \gamma_{\ms S} a^{\dagger} a + \gamma_{\ms C}  \dyad{e} + \hat H_{\text{int}}\\
	&:=\hat H_{\ms S}+\hat H_{\ms C} + \hat H_{\text{int}},
\end{align}
where $\hat{H}_{\text{int}} := g \left(\sigma_+ a + \sigma_- a^{\dagger} \right)$. The quantity $\gamma_{\ms S}$ is the angular frequency of the mode and $\gamma_{\ms C}$ is the transition frequency of the two-level system.  
We assume that the two-level system is driven on resonance, meaning that $\gamma_{\ms S} = \gamma_{\ms C}=\gamma$. 
This ensures that the unitary evolution $U(t) := e^{-i \hat H_{\ms{SC}} t}$ generated by $\hat{H}_{\ms{SC}}$ conserves the total energy of both systems when considered non-interacting: $[U(t), \hat H_{\ms S}+\hat H_{\ms C}] = 0$.

Let us assume that the electromagnetic field $\ms S$ is initialized in a coherent state $\rho_{\ms S} = \dyad{\alpha}_{\ms S}$, where $\ket{\alpha}$ is defined in Eq. (\ref{eq:coh_state}). 
Suppose that the atom $\ms{C}$ starts in a state $\omega_{\ms C}(\tau)$ being the solution of the operator equation
\begin{align}
	\label{eq:xt}
	\omega_{\ms C}(\tau) = \Tr_{\ms{S}}[U(\tau) \rho_{\ms{S}} \ot \omega_{\ms C}(\tau) U(\tau)^{\dagger}],
\end{align} 
where $\tau \in \mathbb{R}$. In other words, the atom is prepared in a state chosen such that at time $t = \tau$ it returns to its initial state. 

\begin{figure}[t]
	\centering
	\includegraphics[width=\textwidth]{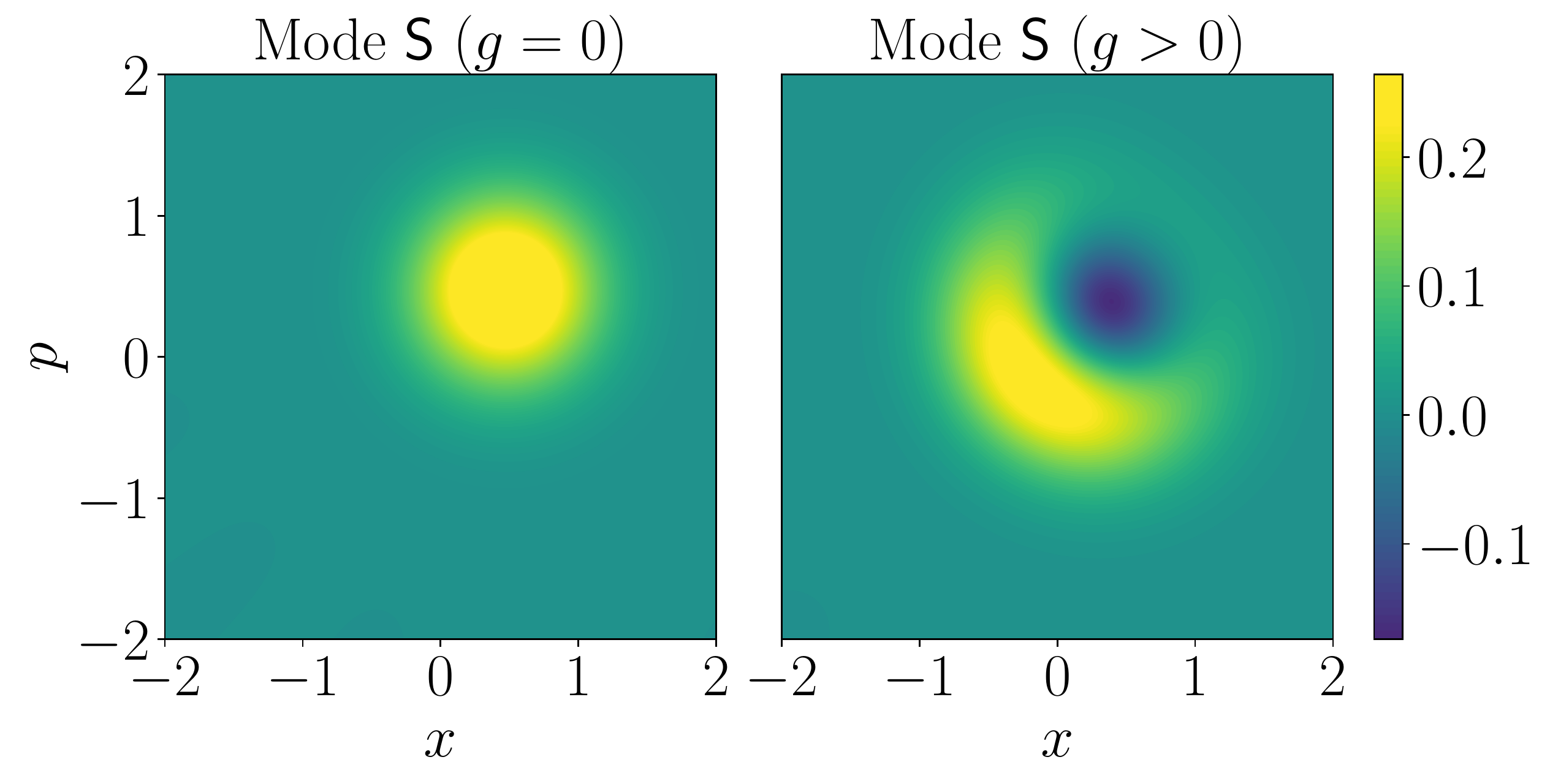}
	\caption{\textbf{The Wigner function of the mode $\ms S$ at time} $t = \tau$. The left panel shows the evolved state of the cavity in the case of no-coupling $(g = 0)$. The right panel shows the evolved state in the case of non-zero coupling $(g > 0)$. Non-classicality is clearly generated, while the process is (correlated-) catalytic. The evolution was computed with parameters $\alpha = (1+i)/2$, $\gamma = 2 \pi$, $\tau = 6$ and $g = 0.05 \gamma$.
	}\label{fig:2}
\end{figure}

The strength of interaction between the cavity and the atom is specified by the coupling constant $g$ of the interaction Hamiltonian $\hat{H}_{\text{int}}$. From now on we will focus on the state of the cavity and analyse two different scenarios: \emph{(a)} when there is no interaction between the cavity and the atom $(g = 0)$, and \emph{(b)} when the systems interact, i.e. $g \neq 0$. We will be interested in investigating the non-classical properties of the state of the electromagnetic cavity at time $t = \tau$. For that we will analyse the Wigner function $W(x, p)$ of the mode $\ms{S}$.

Consider the case when $g = 0$. In this scenario $U(t) = U_{\ms{S}}(t) \ot U_{\ms{C}}(t)$ and both systems evolve independently. The Wigner function of the cavity remains positive at all times, because $U_{\ms S}(t)$ is a Gaussian unitary, which cannot produce Wigner negativity \cite{Weedbrook2012}.

When  $g \neq 0$, the field $\ms{S}$ and the atom ${\ms C}$ exchange energy over time. 
By construction, after time $t = \tau$, the atom returns to its initial state, $\omega_{\ms C}(\tau) = \omega_{\ms C}(0)$. However, the same is not true for the state of the electromagnetic field, which in general can evolve into a state with a negative Wigner function, as shown in Fig. \ref{fig:2}.  
Importantly, this is achieved by starting with a coherent state for which the Wigner function is everywhere positive, while the atom returns exactly to its initial state. This means that it cannot be responsible for delivering any resources (such as energy) to the cavity. 
If we could decouple the atom from the field and have it interact with another matching mode in the same initial coherent state for time $\tau$, then the second mode would end up in the same state with the same (negative) Wigner function. Importantly, in this procedure the second mode will generally end up correlated with the first one.

\subsubsection{Continuous variable quantum computation}
A related effect to the one in the preceding subsection has been previously discussed in Refs.~\cite{lau_universal_2016,lau_universal_2017} in the context of quantum computation. Suppose one has access to a qubit $\ms C$ (e.g., an atom) and is able to 
\begin{enumerate}[label = \alph*)]
	\item implement coherent rotations of the form $\exp(\mathrm i \theta X_{\ms C})$, 
	\item implement a controlled interaction of the qubit with each of a set optical modes $a_i$ in the form $U_i=\exp(\mathrm i (\pi/2) (\mathbb{1}_{\ms C}-Z_{\ms C}) a_i^\dagger a_i)$, and 
	\item implement passive linear mode transformations, i.e. beam-splitters and phase-shifters. 
\end{enumerate}
Here, $X$ and $Z$ refer to the corresponding Pauli matrices.
The authors of Refs.~\cite{lau_universal_2016,lau_universal_2017} show that, under these conditions, it is possible to implement universal quantum computation using the qubit $\ms{C}$ in a strictly catalytic way. More specifically, up to initialization and final measurements, the qubit remains in the state $\ket{+}_{\ms C}$ throughout the computation. Remarkably, this scheme is possible for a large variety of encodings of the logical states of computation into the quantum states of the optical modes without changing the implemented operations.  See Section \ref{subsec:computation} for a further discussion on catalysis in the context of computation.

\subsubsection{Gaussian thermal operations}
\label{subsubsec:CV}
Gaussian systems, due to their experimental feasibility, provide an interesting platform for investigating thermodynamic properties of quantum systems. Applications of the Gaussian toolkit to thermodynamic problems involve, among many others, work extraction  \cite{brown2016passivity,singh2019quantum,Francica2020}, battery charging \cite{friis2018precision}, optimizing thermodynamic protocols \cite{mehboudi2022thermodynamic} or quantifying the information gain from a continuously monitored system \cite{belenchia2020entropy}.

In this section we discuss the thermodynamic analogue of catalysis within the Gaussian setting. Sometimes it's reasonable to impose additional limitations on the allowed Gaussian operations., e.g., when one is interested in the thermodynamic cost of implementing specific operations.  In this line, Ref.~\cite{serafini2020gaussian,narasimhachar2021thermodynamic} introduced a class of quantum channels they termed Gaussian thermal operations (GTOs). These channels form a subset of thermal operations (see Sec. \ref{subsubsec:athermality}), restricted to the case when the Hamiltonians $\hat{H}_{\ms S}$ and $\hat{H}_{\ms B}$ are quadratic and the systems $\ms S$ and $\ms B$ interact via a Gaussian energy-preserving unitary.    
The authors of Ref.~\cite{serafini2020gaussian,narasimhachar2021thermodynamic} formulated GTOs in terms of maps acting on covariance matrices, which is a standard representation in the Gaussian regime, and determined the necessary and sufficient conditions for the existance of a GTO state transformation between Gaussian states. 
Subsequently, Ref.~\cite{Yadin2021} investigated two types of catalysis within the class of GTOs: strict and correlated catalysis, which Ref.~\cite{Yadin2021} refers to as strong and weak catalysis respectively. They found that strict catalysis provides no advantage over the non-catalytic GTOs, mirroring the no-go results for LOSR entanglement (see Sec.~\ref{subsubsec:LOSR}), contextuality (see Sec. ~\ref{subsubsec:contextuality}) and the resource theory of asymmetry (see Sec.~\ref{subsec:no-broadcasting}). More specifically, they showed that all possible state transitions under strictly catalytic GTOs on the mode $\ms{S}$ can be described as thermalisations towards the bath mode $\ms{B}$, which can be easily achieved without the help of any catalyst mode $\ms{C}$. On the other hand, when the system and the catalyst form a single-mode Gaussian state, the authors of Ref.~\cite{Yadin2021} found that a correlated catalyst can significantly enlarge the set of achievable states. 
More specifically, they discuss the power of correlating catalysis in terms of their ability to concentrate thermodynamic resources in a subset of modes. Without a catalyst, or with a strict catalyst, the Gaussian resources in each mode need to be monotonic under GTOs, leading to severe restrictions. However, a correlating catalyst allows for moving of thermodynamic resources from one mode to another, under the condition that the \emph{total} thermodynamic resource across different modes is monotonic.
Finally, for the case of multiple modes they found explicit necessary conditions on state transformations, which they subsequently expressed using the majorization relation.


\subsection{Reversible computation and quantum computation}
\label{subsec:computation}
We now turn to the topic of computation and discuss catalysis in the context of (classical) reversible computation and fault-tolerant quantum computation. {In a seminal work \cite{landauer1961irreversibility}, Landauer showed that reversible (one-to-one) logical operations, such as \texttt{NOT}, can be performed without dissipating heat. On the other hand, irreversible (many-to-one) operations, such as erasure, always dissipate heat proportional to the number of bits of information lost} (see also Section~\ref{subsub:dissipation}). This naturally leads to the question of whether thermodynamically irreversible processes are needed to perform arbitrary computations. 

Remarkably, it turns out that any classical computation (i.e. one that can be realized by a Turing machine) can be performed in a logically reversible manner. {This is possible by either saving the entire history of the process in ancillary systems \cite{Bennett1973}, or by embedding the irreversible mapping into a more complex (and reversible) mapping \cite{fredkin1982conservative}}. Therefore, in principle, any classical computation can be performed with arbitrarily little heat dissipation. 

In order to implement a logically irreversible function \mbox{$f: \{0,1\}^n\rightarrow \{0,1\}$} taking $n$ bits as input, one makes use of $m$ 
auxiliary bits initialized to $0$. Then, one constructs a reversible (invertible) function $R_f:\{0,1\}^{n+m} \rightarrow \{0,1\}^{n+m}$ such that
for any $x\in\{0,1\}^n$
\begin{align}\label{eq:rev_map}
	R_f(x,0,\ldots,0) = f(x),g(x)
\end{align}
where $,$ denotes concatenation of bit-strings and $g(x)$ corresponds to some "garbage" bit string on $n+m-1$ bits. 
It turns out that it is always possible to construct such a reversible function $R_f$ if sufficiently many auxiliary bits are available.  
To see this, note that a classical bit can always be reversibly copied to an empty register via the \texttt{CNOT} gate. Therefore, by appending an additional bit to the string in Eq. (\ref{eq:rev_map}), one can implement $f$ reversibly by $(i)$ copying the result of the computation $f(x)$ into the appended bit, and $(ii)$ undoing $R_f$ on the first $n+m$ bits. This leads to the process
\begin{align}
	x,0\cdots 0,0 &\xlongrightarrow{R_f} f(x),g(x),0 \xlongrightarrow{\text{\texttt{CNOT}} } f(x),g(x),f(x)\\
	&\xlongrightarrow{R_f^{-1}} x,0\cdots 0,f(x)
\end{align}
The above approach is known as \emph{uncomputation} \cite{Bennett1997}.
Instead of a string of $0$'s, the auxiliary bits could be initialized in any other fixed state $y$ (but the function $R_f$ depends on $y$). 
From the resource theoretic perspective, we can view reversible computations as a restricted class of (free) operations $\mc O_{\mathrm{Rev}}$ and the $m$ auxiliary bits in a fixed state $y$ as a catalyst. 
In this context, the catalyst implements the reversible catalytic transformation
\begin{align}
	x,0 \AC{\mc O_{\mathrm{Rev}}}{} x,f(x).
\end{align}
If the input $x$ is not deterministic 
then the auxiliary bits remain perfectly catalytic and uncorrelated to the remaining bits. However, in this case the data-bit {(corresponding to the output of $f$)} and the original input register become correlated.
{The amount of correlations between a random input and the output of a function $f$ can be seen as a measure for how irreversible $f$ is: For a reversible function the input can be reconstructed from the output, hence the input and output are perfectly correlated.  Conversely, for the constant function the output is independent of the input.}

We now turn our attention to quantum computation {in the circuit model} \cite{nielsen2002quantum}. Since {this} is a reversible model for computation, 
{most of our previous remarks on reversible computation are applicable to quantum computation as well.} However, due to the no-cloning theorem, quantum data cannot be simply copied, which leads to subtleties regarding uncomputation \cite{Bennett1997}.

Even more interesting effects related to catalysis appear due to the need for quantum error correction: Since quantum computation is inherently fragile to errors,
it requires fault-tolerance techniques, in particular quantum error correcting codes.
The most common approach to quantum error correction uses {so-called} stabilizer codes \cite{Gottesman1996,Calderbank1997,Gottesman1997}: 
{Let $\mc P_n$ be the group of operators that can be written as a "word" of single-qubit Pauli operators on a set of $n$ qubits, possibly with an additional phase $\pm 1$ or $\pm \mathrm i$. We refer to any $P\in \mc P_n$ as a Pauli operator. 
	Consider a set of $m$ independent\footnote{{Independence means that none of the operators $S_j$ can be expressed as a product of the remaining operators $\mc S\setminus\{S_j\}$.}}, {mutually} commuting and hermitian Pauli operators $\mc S = \{S_j\}_{j=1}^m$ with $S_j\neq -\mathbb{1}$.}
Their common eigenspace {corresponding to eigenvalue} $+1$ has dimension $2^{n-m}$ and is called a \emph{stabilizer subspace}. 
The vectors in a stabilizer subspace represent the logical states of a stabilizer code which encodes $k=n-m$ qubits.  
Unitary operators that map Pauli operators to Pauli operators are called "Clifford operators", and they are generated by the set of unitary operator $\lbrace H,S, \mathrm{CNOT} \rbrace$. Here, $H$ is the Hadamard gate, $S$ is the phase gate and CNOT is the controlled-not gate. They typically correspond to unitary operations that can be implemented directly on the logical states represented by stabilizer states.  
Unfortunately, Clifford unitaries are insufficient to implement universal quantum computation. 
In fact, any {circuit} consisting of Clifford unitaries acting on stabilizer states and followed by measurements {of Pauli operators on some qubits} can be efficiently simulated on a classical computer \cite{Gottesman1998a,Aaronson2004,Veitch2012}. The  above formulation closely resembles the situation for Gaussian states discussed in Section~\ref{subsubsec:CV}, see also Refs. \cite{Gross2006,Mari2012}.
This situation naturally leads to a resource theoretic approach, where stabilizer states are the free states and Clifford unitaries combined with measurements {of Pauli operators} and classical feed-forward correspond to free operations. As can be expected, the operations such defined map stabilizer states to stabilizer states \cite{Veitch2014}.
We refer to this set of free operations simply as "Clifford operations" or "Stabilizer operations" $\mc O_{\mathrm{Stab}}$. 
Resource states, corresponding to non-stabilizer states, can be used to implement  non-Clifford unitaries such as the single-qubit $T$-gate, defined as
\begin{align}
	T = \begin{pmatrix}
		1&0\\
		0 & \mathrm{e}^{\mathrm i \pi/4}
	\end{pmatrix}
\end{align}
in the $\{\ket 0,\ket 1\}$ basis.
One way to implement a $T$-gate is via the technique known as gate teleportation (also known as state injection) \cite{Gottesman1999}. Specifically, by first preparing a suitable non-stabilizer state $\ket T$, one can then use Clifford operations to simulate a $T$-gate. The pure states required to implement a $T$-gate have been coined \emph{magic states} and can be distilled using Clifford operations from noisy magic states \cite{Knill2004,Bravyi2005}. Examples of single-qubit magic states are the eigenstates of the Clifford operator $H$, that is
	\begin{align*}
		H \ket{H_0} &= \ket{H_0}, \qquad \qquad H \ket{H_1} = - \ket{H_1},
	\end{align*}
	as well as the state $\ket{T} := T\ket+$.
	A binary vector $\bm{v} = (v_1, \ldots, v_n)$ further specifies the magic state $H_{\bm{v}} = \bigotimes_{i=1}^n \ket{H_{v_i}}$ on $n$ qubits. Several resource measures have been introduced to quantify the amount of \emph{magic} contained in a quantum state. These measures can be used to bound the classical simulation cost of quantum circuits \cite{Veitch2012,Veitch2014,Pashayan2015,Bravyi2016,Howard2017,Ahmadi2018,Bravyi2019,Wang2019,Seddon2019,Wang2020,Raussendorf2020}.
In particular, the classical simulation cost of a quantum circuit acting on stabilizer states scales exponentially with the number of $T$ gates \cite{Aaronson2004}. 
Hence, if we implement a quantum circuit using non-Clifford unitaries via gate teleportation, its classical simulation cost will scale with the number of required magic states.


It is therefore natural to ask how one can reduce the number of required resource states to implement certain non-Clifford unitaries.
Since different non-Clifford gates require different resource states, it is useful to know whether catalysis can be used to convert certain
resource states into others more efficiently. This would reduce the number of magic states that need to be distilled in order to implement a given quantum circuit. We summarize two notable results in this direction. An early result was Ref.~\cite{campbell2011catalysis} which first showed that catalysis indeed happens in the resource theory of magic states. More specifically, for a state $\ket{\psi} = (\ket{H_{000}} + \ket{H_{111}})/\sqrt{2}$ they presented a Clifford circuit implementing the strictly catalytic transformation
	\begin{align}
		\ket{\psi} \ket{H_0} \rightarrow \ket{H_0} \ket{H_0}.
	\end{align}
	Notice that $\ket{\psi}$ cannot be by itself converted into $\ket{H_0}$ via Clifford operations, see Ref. ~\cite{campbell2011catalysis} for details. They furthermore showed that strictly catalytic transformations are also possible when $\ket{\psi}$ is replaced by a mixed state. A more recent paper \cite{Gidney2018} focused on the T-count of the adder circuit $A$, which is characterized by its action on the $n$-qubit computational basis:
	\begin{equation}
		A \ket{i}\ket{j} \rightarrow \ket{i}\ket{i+j ~{\rm mod}~ 2^n},
	\end{equation}
	which is a critical subroutine in quantum Fourier transform. By using catalysis, \cite{Gidney2018} was able to introduce a circuit decomposition that reduced the number of T-counts from $8n+O(1)$ in previous constructions to $4n+O(1)$.

Building on the above investigations, Ref.~\cite{Beverland2020} provided a more systematic study of catalysis for magic states, and generalized previous constructions from Refs. \cite{Selinger2013,Gidney2018} and \cite{Gidney2019}. In particular, they observed that catalysis can be used to provide lower-bounds on the implementation cost of {general unitaries}. Furthermore, they show that for a large class of state conversions, some degree of catalysis is necessary. \cite{Beverland2020} then proceeds to characterize a broad class of magic state conversions that can be achieved via strict catalysis, and in particular demonstrate that catalysts can increase multi-copy conversion rates to and from certain magic states. 

	\begin{figure}
		\includegraphics{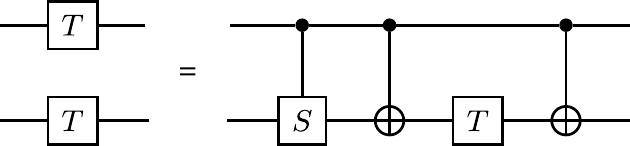}
		\caption{{\bfseries Example of magic state catalysis:} The circuit identity shows that $\ket{T}$ catalyzes the transition $\ket{CS}\rightarrow\ket{T}$.}
		\label{fig:circuit-identity}
	\end{figure}

We close this section with a further example of catalysis from \cite{Beverland2020}. 
	While the $S$-gate acting as $\ket0 \mapsto \ket 0,\ket 1\mapsto \mathrm i \ket 1$ is a Clifford gate, the controlled $S$-gate, denoted with $CS$, is not. It can be nevertheless implemented using the resource state $\ket{CS} = CS\ket+\ket+$. A no-go theorem in \cite{Beverland2020} implies that $\ket{CS}$ cannot be converted to $\ket T=T\ket+$ using Clifford operations. 
		Consider, however, the circuit equality shown in Fig.~\ref{fig:circuit-identity}:
	The circuit on the right corresponds to $\ket{CS}\ket{T}$ and Clifford operations.
	Together they yield $T\otimes T$ corresponding to two $\ket T$ states when applied to $\ket+\ket+$. 
	Therefore $\ket T$ catalyzes the otherwise impossible transition $\ket{CS}\rightarrow \ket T$: 
	A single $\ket{T}$-state together with $n$ copies of $\ket{CS}$ can be converted to $n+1$ copies of $\ket T$ using Stabilizer operations.

\section{Further topics and related areas}
\label{sec:further-topics}
In this section we discuss several topics that are related to catalyis, but either do not fit into the previous sections or do not precisely fit into the definition of catalysis that we use throughout the rest of the review.

\subsection{Catalytic transformations of  dichotomies}
\label{subsubsec:dichotomies}
A central task in statistics is to infer a parameter $\theta\in\Theta$ from experimental samples. 
The experiment can described by a $\theta$-dependent family of probability distributions $\{p_\theta\}$ that is typically assumed to be known.
In the simplest case of binary experiments, $\Theta=\{0,1\}$, the task amounts to distinguishing between two hypotheses. 
The two hypotheses are represented by a pair $(p_0,p_1)$ of probability distributions, which is often called a \emph{dichotomy}.
A dichotomy $(p_0,p_1)$ is said to be more informative than a second dichotomy $(q_0,q_1)$ if the latter arises from the former by stochastic processing \cite{blackwell1953equivalent,hardy1952inequalities,le_cam_comparison_1996,cohen1998comparisons}.   
We also say that the first dichotomy \emph{relatively majorises} the second when there exists a stochastic map $T$ such that $T p_\theta = q_\theta$~\cite{hardy1952inequalities}.

In quantum mechanics, the objects to be compared are \emph{quantum dichotomies}, denoted by $(\rho, \sigma)$ for two density operators $\rho$ and $\sigma$. 
{We say that the dichotomy $(\rho_0, \rho_1)$ is more informative than $(\sigma_0, \sigma_1)$ if there exists a quantum channel $\mathcal{E}$ such that $\sigma_\theta = \mathcal{E}(\rho_\theta)$ and say they are equivalent if there also exists a (quantum) channel $\mc R$ such that $\rho_\theta = \mc R(\sigma_\theta)$ \cite{petz_sufficient_1986,petz_sufficiency_1988,Ohya1993,shmaya2005comparison,jencova_sufficiency_2006,chefles2009quantum}.}
Importantly, when the two density operators forming a quantum dichotomy commute, they can be simultaneously diagonalised, and thus can be treated classically. 
When the two dichotomies do not commute, the inference task becomes genuinely quantum, and is notoriously harder to characterise. 

Some authors considered a generalization of the problem of comparing dichotomies by using a dichotomy that acts as a catalyst. More specifically, Ref.~\cite{rethinasamy_relative_2020} introduced the notion of catalytic transformations between dichotomies of probability vectors (which they termed catalytic relative majorization). 
In this task, the catalyst consists of a pair $(\vec{c}_0, \vec{c}_1)$ of probability vectors $\vec{c}_\theta$. 
The catalyst is then used in conjunction with the original input pair $(\vec{p}_0, \vec{p}_1)$ to generate the output $(\vec{q}_0,\vec{q}_1)$ by means of a stochastic matrix $T$ acting as
\begin{align}
	\label{eq:cat_stoch_proc_dich}
	T (\vec{p}_\theta \ot \vec{c}_\theta) = \vec{q}_\theta \ot \vec{c}_\theta. 
\end{align}
In the language of this review, this corresponds to strict catalysis. However, in complete analogy to Eq. (\ref{eq:cat_stoch_proc_dich}), one can consider here other types of catalysis, e.g. correlated catalysis. 
While the physical interpretation of such a catalytic transformation is less clear, the problem is well-defined and interesting from the mathematical perspective.

The main result of Ref.~\cite{brandao_second_2015}, discussed in Sec.~\ref{subsub:CTOs}, can be formally translated into the language of dichotomies by interpreting thermal state as one of the probability distributions in a dichotomy, thereby providing necessary and sufficient conditions for (arbitrarily) strict catalytic transformations of dichotomies in terms of R\'enyi divergences. 
\cite{Mu_2021} independently obtained these results for generic dichotomies.  
Ref.~\cite{farooq2023asymptotic} generalized the results from dichotomies to the case where $\Theta$ is a general finite set and allowing for an arbitrarily small error on the system (but not the catalyst).  
Ref.~\cite{rethinasamy_relative_2020} obtained the necessary and sufficient conditions for transforming dichotomies via correlated catalysis, showing that the Kullback-Leibler divergence is the essentially unique relevant monotone.  

The problem of (catalytically) transforming generic quantum dichotomies is more difficult. 
Ref.~\cite{shiraishi_quantum_2021} showed that whenever $D(\rho_0\|\rho_1) \geq D(\sigma_0\|\sigma_1)$ there is a correlated catalytic state transition from $(\rho_0,\rho_1)$ to $(\sigma_0^\epsilon,\sigma_1)$ and when allowing for an arbitrarily small error on system (but not catalyst).

One can also ask whether catalysis is relevant for the equivalence of dichotomies. For example, could it be true that $(\rho_0,\rho_1)$ and $(\sigma_0,\sigma_1)$ are not equivalent, while $(\rho_0\otimes \omega_0,\rho_1\otimes \omega_1)$ is equivalent to $(\sigma_0\otimes \omega_0,\sigma_1\otimes \omega_1)$? Ref. \cite{galke_sufficiency_2023} showed that this form of non-trivial strict catalysis does not occur for classical dichotomies (or commuting quantum dichotomies) and conjectured that the same holds true in the quantum case. 
In particular, this shows that there is no catalysis for inter-convertibility of incoherent quantum states via thermal operations (cf. Sec.~\ref{subsub:CTOs}), mirroring the situation for pure state LOCC (see Sec.~\ref{subsubsec:locc-proper}).

\subsection{Continuity of extensive quantities and embezzlement}
\label{subsubsec:non-continuity}
We have emphasized throughout that embezzlement may be seen as arising from the fact that trace-distance can change arbitrarily little while a resource measure changes by some finite amount. 
In other words, it is a consequence of the lack of continuity of resource measures. 
More generally, we now discuss how approximate catalysis can be used in a simple way to give bounds on how continuous \emph{extensive} resource measures can be.
Let $\mc D(\star)$ denote the set of density matrices of any finite dimension, i.e. $\mc D(\star) = \bigcup_{d} \mc D(\mathcal{H}_d)$. Given an arbitrary function $f$ from density matrices to real numbers, we define

\begin{definition}[Asymptotic continuity \cite{Donald_2002}] Let $f: \mc D(\star) \rightarrow \mathbb{R}_+$. We say that $f$ is asymptotically continous if there exist a (Lipshitz) constant $K$ such that, for any $d$ and $\rho, \sigma \in \mc D(\mathcal{H}_d)$, 
	\begin{align}
		\label{eq:asym_cont_def}
		|f(\rho) - f(\sigma)| \leq K \norm{\rho-\sigma}_1 \log d + \eta(\norm{\rho-\sigma}_1),
	\end{align}
	where $\eta$ does not depend on $d$ and satisfies $\eta(0) = 0$.
\end{definition}
A standard example of an asymptotically-continuous quantity is the von Neumann entropy. In this case Eq. (\ref{eq:asym_cont_def}) reduces to the well-known Fannes-Adenauert  inequality \cite{Audenaert_2007,Fannes1973}. In the case of von Neumann entropy, the property of asymptotic continuity often leads to asymptotic continuity of many other entanglement measures. This, on the other hand, is often a key step in deriving expressions or bounds for the asymptotic rates in quantum information processing tasks, see e.g. \cite{Terhal_2002,Horodecki_1998} for examples.

Following \cite{Leung2019}, the above definition of continuity can be generalized to capture the maximal change in the function $f$ with respect to the dimension of the density operator that takes the role of its argument. Let us consider the following definition.

\begin{definition}[``More than asymptotic'' continuity] 
	\label{def:masymp_cont}
	Let $f: \mc D(\star) \rightarrow \mathbb{R}_+$. We say that $f$ is more than asymptotically continous if there exist $\alpha < 1$ and a (Lipshitz) constant $K$ such that, for any $d$ and $\rho, \sigma \in \mc D(\mathcal{H}_d)$, 
	\begin{align}
		\label{eq:more_than_asym_cont_def}
		|f(\rho) - f(\sigma)| \leq K \norm{\rho-\sigma}_1 (\log d)^{\alpha} + \eta(\norm{\rho-\sigma}_1),
	\end{align}
	where $\eta$ does not depend on $d$ and satisfies $\eta(0) = 0$.
\end{definition}

Let us now characterize three properties of reasonable resource measures. We say that $f$ is non-constant if there exist two density matrices $\rho$ and $\sigma$ such that $f(\rho) \neq f(\sigma)$. For tensor product Hilbert spaces corresponding to composite quantum systems, a function $f$ is said to be permutationally-invariant if it does not depend on the particular labelling of the subsystems. 
More precisely, for any $n \in \mathbb{N}$ and any density operator $\rho$ on $\ms R = \ms R_1 \ot \ms R_2 \cdots \ms R_n$ and any permutation $\pi \in \mathcal{S}_n$, it holds that $f(U_{\pi} \rho U_{\pi}') = f(\rho)$, where $U_{\pi}$ is the unitary that permutes registers $\ms R_i$ according to permutation $\pi$. 
Third, a function $f$ is additive over tensor products if $f(\rho\otimes \sigma) = f(\rho) + f(\sigma)$ holds for any density matrices $\rho$ and $\sigma$. 
These three are very natural properties of resource measures. 
For example, the total magnetization number of a collection of identical spins and many entanglement measures (e.g. entanglement entropy) satisfy them naturally. 
In this context, Ref.~\cite{Leung2019} proved that any function $f$ that is non-constant, permutation-invariant and additive can never be more than asymptotically continuous according to Def. \ref{def:masymp_cont}. 
This is a direct consequence of the general construction in Section~\ref{subsubsec:approx-cat-construction}, substantiating the relationship between embezzlement and asymptotic continuity. 

Let us briefly summarize the proof of \cite{Leung2019}.
For two density matrices $\rho_{\ms S},\sigma_{\ms S}$ such that $c = |f(\rho_{\ms S})-f(\sigma_{\ms S})|>0$, 
consider the approximate catalyst states $\omega_{\ms C}$ and $\omega'_{\ms C}$ on $n-1$ copies of $\ms S$ from section~\ref{subsubsec:approx-cat-construction}. They fulfill $\norm{\omega_{\ms C} -\omega'_{\ms C}}_1 \leq \frac{2}{n-1}$ and
\begin{align}
	U_{\pi} (\rho_{\ms S} \ot \omega_{\ms C}) U_{\pi}^{\dagger} = \sigma_{\ms S} \ot \omega'_{\ms C},
\end{align}
where $U_{\pi}$ is the unitary implementing a cylic permutation of all sub-systems. 
The permutation invariance property of $f$ implies that $f(\rho_{\ms S} \ot \omega_{\ms C}) = f(\sigma_{\ms S} \ot \omega'_{\ms C})$. Since $f$ is furthermore additive, we have
\begin{align}
	c = |f(\rho_{\ms S}) - f(\sigma_{\ms S})| = |f(\omega'_{\ms C}) - f(\omega_{\ms C})|.
\end{align}
Let us now assume, by contradiction, that $f$ is more than asymptotically continuous. 
Then, there exist a constant $K$ and $\alpha < 1$ such that
\begin{align}
	|f(\omega'_{\ms C}) - f(\omega_{\ms C})| &\leq K \norm{\omega_{\ms C}-\omega'_{\ms C}}_1 (n \log d)^{\alpha} + \eta(\norm{\omega_{\ms C}-\omega'_{\ms C}}_1) \nonumber\\
	&\leq 2K \frac{n^{\alpha}}{n-1} (\log d)^{\alpha}.
\end{align} 
Since $\alpha < 1$ by assumption, the RHS tends to zero as $n$ goes to infinity, which contradicts our initial assumption $c>0$.

\subsection{Quantum channel catalysis}
Throughout we have focussed on catalysis on the level of quantum states. 
But one can also consider catalysis on the level of quantum channels instead of quantum states:
Suppose we can implement some channel $\mc E$ on some system $\ms C$ that is not a free operation: $\mc E\notin \mc O$. 
Then there may be free operations $\mc F^{(\text{enc})}_{\ms S\ms C\rightarrow \ms S'\ms C}, \mc F_{\ms S'},$ and $\mc F^{(\text{dec})}_{\ms S'\ms C\rightarrow \ms S\ms C}$ such that
\begin{align}\label{eq:channel-catalysis}
	\mc F^{(\text{dec})}_{\ms S'\ms C\rightarrow \ms S\ms C}\circ(\mc F_{\ms S'}\otimes \mc E_{\ms C})\circ \mc F^{(\text{enc})}_{\ms S\ms C\rightarrow \ms S'\ms C} = \mc E_{\ms S}\otimes \mc E_{\ms C},
\end{align}
and where $\mc E_{\ms S}$ is not a free operation. Here, we allowed the \emph{encoding} map $\mc F^{(\text{enc})}_{\ms S\ms C\rightarrow \ms S'\ms C}$ and the \emph{decoding} map $\mc F^{(\text{dec})}_{\ms S'\ms C\rightarrow \ms S\ms C}$ to map between different systems $\ms S'\ms C$ and $\ms S\ms C$.
Then the quantum channel $\mc E_{\ms C}$ catalyzes the non-free quantum channel $\mc E_{\ms S}$.
A simple example is given when $\ms S'=\ms S$, $\mc E_{\ms C}$ simply prepares a fixed state $\omega_{\ms C}$, $\mc F_{\ms S}$ prepares a fixed free state $\rho_{\ms S}$ and the encoding map is trivial.
Then \eqref{eq:channel-catalysis} reduces to $\mc F[\rho_{\ms S}\otimes\omega_{\ms C}] = \sigma_{\ms S}\otimes\omega_{\ms C}$ with $\sigma_{\ms S} = \mc E_{\ms S}[\rho_{\ms S}]$, i.e. we recover strict catalysis for quantum state transitions.

An example of similar behaviour is discussed in \cite{Brun2006,Brun2014} in the context of quantum error correction: An ideal quantum channel $\mc E_{\ms C}=\mathrm{Id}_{\ms C}$ can be used to catalyze a noisy quantum channel into an ideal quantum channel using an entanglement-assisted quantum error correcting code.
One should note however that the (unassisted) quantum capacity of a noisy channel together with a noiseless channel is simply the sum of the two \cite{Bennet1996_mixedstate}. 
Therefore this form of channel catalysis cannot be used to effectively increase the rate of the noisy channel beyond its usual capacity.  
Systematic studies of quantum channel catalysis in various resource theories provide an interesting opportunity for further research.  

\subsection{Catalytic decoupling and resource erasure}
It is well known that it is impossible to construct a quantum channel that achieves perfect cloning of unknown quantum states on a given system $\ms S$ (see also Sec.~\ref{subsec:no-broadcasting}).
Similarly, it is impossible to construct a quantum channel $\mc E$ on a bipartite sytems $\ms S_1\ms S_2$ that removes the correlations for arbitrary quantum states $\rho_{\ms S_1\ms S_2}$ in the sense that
\begin{align}
	\mc E[\rho_{\ms S_1\ms S_2}] = \rho_{\ms S_1}\otimes\rho_{\ms S_2}\quad \forall \rho_{\ms S_1\ms S_2}.\quad \text{(impossible)}
\end{align}
\emph{Decoupling} refers to a process where one tries to remove the correlations between a system $\ms A$ from a system $\ms E$.
Decoupling protocols play an important role in quantum information theory \cite{groisman_quantum_2005,Horodecki2005partial,Hayden2008,Abeyesinghe2009,Dupuis2010,Dupuis2014,majenz_catalytic_2017,berta_conditional_2018,Li2021reliability} and are closely related to quantum state merging protocols (see Sec.~\ref{subsubsec:state-merging}). 
The amount of resources one has to invest to achieve decoupling can be seen as a measure for the correlations between $\ms A$ and $\ms E$ \cite{groisman_quantum_2005}.
One approach to decoupling is to try to unitarily concentrate all correlations between $\ms A$ and $\ms E$ into a sub-system $\ms A_2$ of $\ms A$, by way of a unitary $U_{\ms A}$ acting only on $\ms A$, and then 
to trace out $\ms A_2$:
\begin{align}
	\tr_{\ms A_2}[U_{\ms A}\rho_{\ms A\ms E} U_{\ms A}^\dagger]\approx \omega_{\ms A_1}\otimes\omega_{\ms E}.
\end{align}
The Hilbert space dimension of the subsystem $\ms A_2$ required to achieve the above with a given precision can then be interpreted as a measure for the amount of correlations between $\ms A$ and $\ms E$ \cite{Dupuis2010}.
In \cite{majenz_catalytic_2017}, the authors defined \emph{catalytic} decoupling by introducing an additional system $\ms C$ in a state $\sigma_{\ms C}$ that is initially uncorrelated to $\ms A\ms E$. 
In other words, we have $\ms A\ms C = \ms A_1\ms A_2$ and 
\begin{align}
	\tr_{\ms A_2}[U_{\ms A\ms C} \rho_{\ms A\ms E}\otimes\sigma_{\ms C} U_{\ms A\ms C}^\dagger]\approx \omega_{\ms A_1}\otimes\omega_{\ms E}.
\end{align}
Naturally, the Hilbert space dimension $d_{\ms A_2}$ of the subsystem $\ms A_2$ that one has to trace out is at most as large as in the case of standard decoupling.
In fact the authors show that $\log(d_{\ms A_2})$ precisely gives an operational meaning to the \emph{smooth max-mutual information}. 
Moreover, they show that one can choose $A_1 = A C_1$, i.e., $A_1$ corresponds to the input system $A$ and part of the catalyst $\ms C$, and $\omega_{\ms A} = \tr_{\ms C_1}[\omega_{\ms A_1}]=\rho_{\ms A}$. 
That is, the marginal on $\ms A$ does not change. 
Note however that the full catalyst itself is not returned in the same state. Additionally, the part $\ms C_2$ that is traced out must in general build up strong correlations to $\ms E$.
In particular, the catalyst may not immediately be reused to help decouple a further system from $\ms E$.

A task closely related to decoupling is the task of erasing resources from a quantum state using random unitary operations from the set of free operations in a resource theory. 
Specifically, one may then ask for the minimal amount of classical randomness that is required to return a resourceful state close to the set of free states. 
If one further allows for an (approximate, correlated) catalyst in a free state during the process, then the \emph{smooth max-relative entropy} with respect to the free states quantifies the amount of randomness required to achieve resource erasure \cite{berta_disentanglement_2018,anshu_quantifying_2018}.  

\section{Acknowledgements}
We thank Nicolas Brunner, Lauritz van Luijk, Martin Plenio, Hoi-Kwan Lau and Hiroyasu Tajima for useful comments on an earlier version of this paper. We also thank countless colleagues for discussions on the topics of this review over the last years -- mentioning all of them individually is unfortunately beyond the scope of this paragraph. 

H.W. acknowledges support by the Deutsche Forschungsgemeinschaft (DFG, German Research Foundation) through SFB 1227 (DQ-mat), Quantum Valley Lower Saxony, and under Germany's Excellence Strategy EXC-2123 QuantumFrontiers 390837967. N.N. is supported by the start-up grant for Nanyang Assistant Professorship of Nanyang Technological University, Singapore. PLB acknowledges the Swiss National Science Foundation for financial support through the NCCR SwissMAP.



\bibliographystyle{apsrmp4-2}

\bibliography{refs}
\appendix

\end{document}